\def\ExtendSymbol#1#2#3#4#5{\ext@arrow 0099{\arrowfill@#1#2#3}{#4}{#5}}
\def\RightExtendSymbol#1#2#3#4#5{\ext@arrow 0359{\arrowfill@#1#2#3}{#4}{#5}}
\def\LeftExtendSymbol#1#2#3#4#5{\ext@arrow 6095{\arrowfill@#1#2#3}{#4}{#5}}
\tikzstyle{abstractbox} = [
\definecolor{mygreen}{rgb}{0,0.6,0}
\definecolor{mygray}{rgb}{0.5,0.5,0.5}
\definecolor{mymauve}{rgb}{0.58,0,0.82}
\tikzstyle{abstracttitle} =[fill=white]
\newtheorem{fact}{Fact}[section]
\newtheorem{theorem}{Theorem}[section]
\newtheorem{definition}{Definition}[section]
\newtheorem{proposition}{Proposition}[section]
\newtheorem{lemma}{Lemma}[section]
\newtheorem{corollary}{Corollary}[section]
\newtheorem{claim}{Claim}[section]
\let\oldReturn\Return
\renewcommand{\Return}{\State\oldReturn}
\tikzset{
    >=stealth',
    punkt/.style={
          rectangle,
           minimum height=2em,
           text centered},
    pil/.style={
           ->,
           thick,
           shorten <=2pt,
           shorten >=2pt,}
}
\begin{document}
%

\title{Optimal Key Consensus in Presence of Noise\thanks{This research was  supported in part  by NSFC Grant No. U1536205.}}


\date{}

\author{Zhengzhong Jin\footnote{School of Mathematical Sciences, Fudan University, Shanghai, China.
\texttt{zzjin13@fudan.edu.cn}. }
 \and Yunlei Zhao\footnote{School of Computer Science,   Fudan University, Shanghai,
 China.    \texttt{ylzhao@fudan.edu.cn}  }}

\date{}


\maketitle

\begin{abstract}

%

In this work, we abstract some key ingredients in previous LWE- and RLWE-based key exchange protocols, by introducing and formalizing the building tool, referred to as key consensus (KC) and its asymmetric variant AKC. KC and AKC allow  two communicating parties to reach consensus from close values obtained by some secure information exchange.
We then discover upper bounds on parameters for any KC and AKC, which bounds what could or couldn't be done  for the various parameters involved. KC and AKC are fundamental to lattice based cryptography,
in the sense that a list of cryptographic primitives based on LWR, LWE and RLWE
(including key exchange, public-key encryption, and more) can be modularly constructed from them.
As a conceptual contribution, this much simplifies the design and analysis of these cryptosystems in the future.

We then design and analyze both general and almost optimal  KC and AKC schemes, which are referred to as OKCN and AKCN respectively for presentation simplicity.
    Based on KC and AKC, we present generic constructions of key exchange  from LWR, LWE and RLWE. 
  The generic construction allows versatile instantiations with our OKCN and AKCN schemes,
  for which we elaborate on evaluating and choosing the concrete parameters in order to achieve an optimally-balanced performance among security, computational cost, bandwidth efficiency,  error rate, and operation simplicity.

%

\begin{itemize}
\item We propose the first  construction  of key exchange based on the learning with rounding (LWR) problem, to the best of our knowledge. The rounding in  LWR much  reduces the bandwidth.
    We provide a delicate approach to calculating  the error rate accurately, which  avoids introducing some cumbersome   inequalities  in the traditional ways of error rate estimation. The accuracy of error rate allows us to choose tight parameters.
    Our LWR-based key exchange costs 16.39 kilo-bytes (kB) with error rate $2^{-35}$ at the level of at least 128-bit quantum security.




\item When  applied to LWE-based cryptosystems, OKCN and AKCN can directly result in more practical schemes of key exchange and CPA-secure public-key encryption. 
To further save bandwidth, we make a thorough analysis of  the variant where  some least significant bits of LWE samples are cut off. For instance, on recommended parameters
  our LWE-based protocol (resp., Frodo) has 18.58kB (resp., 22.57kB)  bandwidth, 887.15kB (resp., 1060.32kB) matrix, at least 134-bit (resp., 130-bit) quantum security,  and  error rate $2^{-39}$ (resp., $2^{-38.9})$.

\item
When applied to RLWE-based cryptosystems,
we make a key observation   by proving that the errors in different positions in the shared-key are essentially  independent.  Based upon this observation, we present an extremely  simple and fast code, referred to as \emph{single-error correction} (SEC) code, to correct at least one bit error.  By equipping OKCN/AKCN with  the SEC code, we achieve the simplest RLWE-based key exchange (up to now)
\emph{with negligible error rate} for much longer shared-key size; for instance,  $837$-bit shared-key with bandwidth of about 0.4kB and  error rate  $2^{-69}$.





\end{itemize}

%
%

\end{abstract}

\newpage
{
\tableofcontents
}
\newpage

\section{Introduction}

Most public-key cryptosystems currently in use,
based on the hardness of solving (elliptic curve) discrete logarithm or factoring large integers, will be broken, if large-scale quantum computers are ever built.
 The arrival of such quantum computers is now believed by many scientists to be merely a significant engineering challenge, and is  estimated by engineers at IBM to be within the
next two  decades or so. Historically, it has taken almost two decades to deploy the modern public key cryptography infrastructure.  Therefore, regardless of whether we can estimate the exact time of the arrival of the quantum computing era, we must begin now to prepare our information security systems to be able to resist quantum computing \cite{NIST16}. In addition, for the content we want to protect over a period of 15 years or longer,  it becomes  necessary
to switch to post-quantum cryptography today. This has been recognized not only by the cryptography research community,  but also by standardization bodies and leading information companies, for example,  NSA \cite{NSA15},  NIST \cite{NIST16}, U.K. CESG \cite{CESG16},  the Tor project \cite{Tor16}, and  Google \cite{Google16}.

As noted in \cite{newhope15,cryptoeprint:2016:758}, in the majority of contexts the most critical asymmetric primitive to upgrade to post-quantum security is ephemeral key exchange (KE). KE plays a central role in modern cryptography, which bridges public-key cryptography and symmetric-key cryptography and can, in turn,  be used to build  CPA-secure public-key encryption (PKE) as well as  CCA-secure PKE in the random oracle (RO) model via the FO-transformation \cite{FO99b,peikert2014lattice,TU15}, and more.
U.K. CESG has also
expressed their preference for post-quantum algorithms (in particular, post-quantum KE schemes)  over
quantum technologies ``such as Quantum Key Distribution"
to counter the threat of quantum computing \cite{CESG16}.

Lattice-based cryptography is among the major mathematical  approaches to achieving security resistant to quantum attacks. 	For cryptographic usage,
	compared with the classic hard lattice problems such as SVP and CVP,
	the learning with errors (LWE) problem is proven to be much more versatile \cite{regev2009lattices}. Nevertheless,    LWE-based cryptosystems are usually less efficient, which   was then resolved by the introduction of the ring-LWE (RLWE) problem \cite{LYUBASHEVSKY2013ON}. In recent years,  large numbers of impressive works are developed from   LWE and RLWE, with (ephemeral) key exchange and public-key encryption being the study focus of this work  \cite{DXL14,peikert2014lattice,BCNS14,newhope15,bcd16,regev2009lattices,gentry2008trapdoors,lindner2010better,LYUBASHEVSKY2013ON,lyubashevsky2013toolkit,
poppelmann2013towards}. For an excellent survey of lattice-based cryptography, the reader is referred to \cite{adecade}.



%
%
%
%
%
%
%
%
%
%

Some celebrating progresses on  achieving practical LWE- and RLWE-based key exchange are made in recent years.
The performance of RLWE-based key exchange is significantly improved with NewHope \cite{newhope15}, which achieves 256-bit shared-key with error rate about $2^{-61}$.
The negligible error rate of  NewHope is achieved  by decoding the four-dimensional lattice $\tilde{D}_4$. Decoding the 24-dimensional Leech lattice is also recently considered in  \cite{leech}. But  decoding the four-dimensional  lattice $\tilde{D}_4$ has already been  relatively complicated and computationally less efficient.
Compared to LWE, the additional ring structure of RLWE helps to improve the efficiency of cryptosystems,
but the concrete hardness of RLWE remains less clear. The work
\cite{bcd16} proposes a key exchange protocol Frodo only based on LWE, and demonstrates that LWE-based key exchange can be practical as well.
Nevertheless, bandwidth of Frodo is relatively large, as Frodo uses about 22kB bandwidth for its recommended parameter set.
In addition, Frodo has relatively large error rates, and  cannot be directly used for PKE.
Whether further improvements on LWE- and RLWE-based key exchange, as well as CPA-secure PKE,  can be achieved remains an interesting  question of practical significance.

{One of the main technical  contributions in the works \cite{newhope15, bcd16, poppelmann2013towards}, among others, is the improvement and generalization of the key  reconciliation mechanisms \cite{peikert2014lattice,DXL14}.\footnote{{To our knowledge, the key reconciliation mechanism in \cite{peikert2014lattice} is the first that fits our KC definition (the mechanism in \cite{DXL14} requires  the distance be  of special types). The Lindner-Peikert mechanism implicitly presented for PKE \cite{lindner2010better} is the first that fits our AKC definition. The reader is referred to  \cite{newhope-simple} for a detailed survey on key exchange from LWE and RLWE.}} 
But the  key reconciliation mechanisms  were only previously used and analyzed, for both KE and PKE, in a \emph{non-black-box} way. This means,  for new  key reconciliation mechanisms developed in the future to be used for constructing lattice-based cryptosystems,  we need to make analysis  from scratch. Also, for the various parameters involved in key reconciliation,  the bounds on what could or couldn't be achieved are unclear. }

\subsection{Our Contributions}

In this work, we abstract some key ingredients in previous LWE- and RLWE-based key exchange protocols, by introducing and formalizing the building tool, referred to as key consensus (KC) and its asymmetric variant AKC. KC and AKC allow  two communicating parties to reach consensus from close values obtained by some secure information exchange,
such as exchanging their LWE/RLWE samples.
We then discover upper bounds on parameters for any KC and AKC, and  make comparisons between KC/AKC and fuzzy extractor \cite{fuzzyextractor}.
KC and AKC are fundamental to lattice based cryptography,
in the sense that a list of cryptographic primitives based on LWE or RLWE
(including key exchange, public-key encryption, and more) can be modularly constructed from them. As a conceptual contribution, this much simplifies the design and analysis of these cryptosystems in the future.

We then design and analyze both general and highly practical KC and AKC schemes, which are referred to as OKCN and AKCN respectively for presentation simplicity.
Our OKCN and AKCN schemes are optimal in a sense of achieving optimal balance among security, (computational and bandwidth) efficiency, and operation simplicity.  
    Firstly, the correctness constraints on parameters are almost the same as the upper bounds we discovered.
    Secondly, the generality of our schemes allows us to take optimal balance among parameters in order to choose parameters towards different goals. Thirdly, the operations involved are  simple.

Based on KC and AKC, we present generic constructions of key exchange from LWR,LWE and RLWE with delicate analysis of error rates. Then, for the instantiations of these generic constructions with our OKCN and AKCN schemes, we elaborate on   evaluating and choosing the concrete parameters in order to achieve an optimally-balanced performance among security, computational efficiency, bandwidth efficiency,  error rate, and operation simplicity.
\begin{itemize}

\item We propose the first  construction  of key exchange based on the learning with rounding (LWR) problem, to the best of our knowledge.  The rounding in  LWR much  reduces the bandwidth.
     We provide a delicate approach to calculating  the error rate accurately, which  avoids introducing some cumbersome   inequalities  in the traditional ways of error rate estimation. The accuracy of error rate allows us to choose tight parameters.
    Our LWR-based key exchange costs 16.39kB with error rate $2^{-35}$ at the level of at least 128-bit quantum security.





\item When  applied to LWE-based cryptosystems, OKCN and AKCN can directly result in more practical or well-balanced schemes of key exchange and CPA-secure public-key encryption.
To further save bandwidth, we make a thorough analysis of  the variant where  some least significant bits of LWE samples are cut off. 
We remark that cutting off  some least bits of LWE samples can only improve the actual security guarantee  in reality, but complicates the analysis of error rates. For instance, on recommended parameters
  our LWE-based protocol (resp., Frodo) has 18.58kB (resp., 22.57kB)  bandwidth, 887.15kB (resp., 1060.32kB) matrix, at least 134-bit (resp., 130-bit) quantum security,  and  error rate $2^{-39}$ (resp., $2^{-38.9})$.

\item
When applied to RLWE-based cryptosystems, to the best of our knowledge,
AKCN can lead to more (bandwidth) efficient schemes of CPA-secure PKE
and AKC-based KE with negligible error rate. In order to achieve 256-bit shared-key with  negligible error rate, we use the technique of NewHope by encoding and decoding  the four-dimensional lattice $\tilde{D}_4$, which results in a protocol named AKCN-4:1 that is (slightly) better than NewHope-simple \cite{newhope-simple} in bandwidth expansion.

However, encoding and decoding  $\tilde{D}_4$ is  relatively  complicated and computationally less efficient.  In this work, we make a key observation  on RLWE-based key exchange, by proving that the errors in different positions in the shared-key are almost independent.
Based upon this observation, we present an super simple and fast code, referred to as \emph{single-error correction} (SEC) code, to correct at least one bit error.  By equipping OKCN/AKCN with  the SEC code, we achieve the simplest (up to now) RLWE-based key exchange, from both OKCN and AKCN, with negligible error rate for much longer shared-key size; for instance,   $837$-bit shared-key with bandwidth of about 0.4kB and error rate  $2^{-69}$.

\end{itemize}

Finally, we briefly   discuss the applications of OKCN and AKCN  to public-key encryption, authenticated key exchange, key transport, and TLS. As a fundamental building tool for lattice-based cryptographic schemes, we suggest OKCN, AKCN and the various KE protocols based on them are of independent value. They   may possibly  find more applications in  more advanced cryptographic primitives (e.g., oblivious transfer,  IBE, ABE, FHE) from LWR, LWE and RLWE, by simplifying their design and analysis with versatile performance optimization or balance.


All the {main}  protocols developed in this work are implemented. The code and scripts, together with those  for evaluating concrete security and error rates,  are  available from  Github \url{http://github.com/OKCN}.

\subsection{Subsequent or Concurrent Works}
A PKE scheme, named Lizard, is recently presented  in \cite{CKLS16}. It is easy to see that the underlying key consensus mechanism of Lizard is just   instantiated from  our AKCN  scheme presented in Algorithm \ref{kcs:2} (see more details in Appendix \ref{sec-lizard}).\footnote{AKCN was introduced at  the second  Asian PQC Forum (APQC 2016) on November 28, 2016, in Seoul, Korea,  and was  filed and  publicly available even  earlier. The original version of Lizard was also presented at APQC 2016. But we note that, according to the video presentation  available from \url{http://www.pqcforum.org/}, the underlying key consensus mechanism of Lizard  presented as APQC 2016 was based on  the Lindner-Peikert mechanism \cite{lindner2010better}, not our AKCN mechanism as appeared  in the later ePrint report \cite{CKLS16}.}

 Recently, we notice that an AKC-based variant  of NewHope, named NewHope-simple, was presented in a  note \cite{newhope-simple}. In comparison,  NewHope-simple is still slightly inferior to  AKCN4:1-RLWE in bandwidth expansion (specifically, 256 vs. 1024 bits), and our SEC-equipped protocols are simpler, have lower error rates and  much longer shared-key sizes.

\section{Preliminaries}\label{sec-pre}
A string or value $\alpha$ means a binary one, and  $|\alpha|$ is its binary  length.
 For any real number $x$, $\lfloor x \rfloor$ denotes the largest integer that less than or equal to $x$, and
    $\lfloor x \rceil = \lfloor x + 1/2 \rfloor$.
For any positive integers $a$ and $b$, denote by $\mathsf{lcm}(a, b)$ the least common multiple of them. For any $i,j\in \mathbb{Z}$ such that $i<j$, denote by $[i,j]$ the set of integers $\{i,i+1,\cdots, j-1,j\}$. For any positive integer $t$,  we let $\mathbb{Z}_t$ denote $\mathbb{Z} / t\mathbb{Z}$. The  elements of   $\mathbb{Z}_t$ are represented, by default,  as  $[0, t-1]$.
Nevertheless, sometimes,  $\mathbb{Z}_t$ is explicitly specified to be represented as
$\left[-\lfloor (t-1)/2\rfloor, \lfloor t/2\rfloor \right]$.

If $S$ is a finite set then  $|S|$
is its cardinality,  and $x\leftarrow S$ is the operation of
 picking an element uniformly at random from $\mathcal{S}$.
For two sets $A,B\subseteq \mathbb{Z}_q$, define $A+B\triangleq \{a+b|a\in A, b\in B\}$.
For an addictive group $(G, +)$,  an element $x \in G$ and a subset $S \subseteq G$,
denote by $x + S$ the set containing $x + s$ for all $s \in S$. For a set $S$, denote by $\mathcal{U}(S)$ the uniform distribution over $S$.
For any discrete random variable $X$ over $\mathbb{R}$, denote $\mathsf{Supp}(X) = \{x \in \mathbb{R} \mid \Pr[X = x] > 0\}$.

We use standard notations and conventions below for writing
probabilistic algorithms, experiments and interactive protocols.
If $\mathcal{D}$ denotes a probability distribution, $x\leftarrow \mathcal{D}$ is the operation of
 picking an element according to $\mathcal{D}$.
  If $\alpha$ is neither an algorithm nor a set then $x\leftarrow \alpha$ is a
 simple assignment statement.
If \emph{A} is a probabilistic algorithm, then $A(x_1, x_2, \cdots; r)$ is the result of running \emph{A} on inputs
 $x_1, x_2, \cdots$ and coins $r$. We let $y\leftarrow A(x_1, x_2, \cdots)$ denote the experiment of picking $r$ at
 random and letting $y$ be $A(x_1, x_2, \cdots; r)$.
  By $\Pr[R_1; \cdots; R_n: E]$ we denote
 the probability of event $E$, after the ordered execution of
 random processes $R_1, \cdots, R_n$.

{We say that a function $f(\lambda)$ is
\emph{negligible}, if for every $c>0$ there exists an $\lambda_c$
such that $f(\lambda)<1/\lambda^c$ for all $\lambda>\lambda_c$. Two distribution ensembles $\{X(\lambda,z)\}_{\lambda\in N,z\in \{0,1\}^*}$ and  $\{Y(\lambda,z)\}_{\lambda\in N,z\in \{0,1\}^*}$ are computationally indistinguishable, if for any probabilistic polynomial-time (PPT) algorithm $D$, and  for sufficiently large $\lambda$ and any $z\in \{0,1\}^*$, it holds $|\Pr[D(\lambda,z,X)=1]-\Pr[D(\lambda,z,Y)=1]|$ is negligible in $\lambda$.
}

\subsection{The LWE, LWR, and RLWE problems}
Given positive \emph{continuous} $\alpha > 0$, define the real Gaussian function $\rho_{\alpha}(x) \triangleq \exp(-x^2 / 2\alpha^2) / \sqrt{2\pi \alpha^2}$ for $x \in \mathbb{R}$.
Let $D_{\mathbb{Z}, \alpha}$ denote the one-dimensional \emph{discrete} Gaussian distribution over $\mathbb{Z}$,
which is determined by its probability density function $D_{\mathbb{Z}, \alpha} (x) \triangleq \rho_{\alpha} (x) / \rho_{\alpha}(\mathbb{Z}), x\in \mathbb{Z}$.
Finally, let $D_{\mathbb{Z}^n, \alpha}$ denote the $n$-dimensional \emph{spherical} discrete Gaussian distribution over $\mathbb{Z}^n$,
where each coordinate is drawn \emph{independently} from $D_{\mathbb{Z}, \alpha}$.

Given positive integers $n$ and $q$ that are both polynomial in the security parameter $\lambda$,
an integer vector $\mathbf{s} \in \mathbb{Z}_q^n$,
and a probability distribution $\chi$ on $\mathbb{Z}_q$,
let $A_{q, \mathbf{s}, \chi}$ be the distribution over $\mathbb{Z}_q^n \times \mathbb{Z}_q$ obtained by choosing $\mathbf{a} \in \mathbb{Z}_q^n$ uniformly at random, and an error term $e\gets \chi$, and outputting the pair $(\mathbf{a}, b = \mathbf{a}^T \mathbf{s} + e) \in \mathbb{Z}_q^n \times \mathbb{Z}_q$.
The error distribution $\chi$ is typically taken to be the discrete Gaussian probability distribution $D_{\mathbb{Z}, \alpha}$ defined previously;
However, as suggested in \cite{bcd16} and as we shall see in Section \ref{subsection:failure-probability}, other alternative distributions of $\chi$ can be taken.
Briefly speaking, the  (decisional) \emph{learning with errors} (LWE) assumption \cite{regev2009lattices} says that, for sufficiently large security parameter $\lambda$,  no probabilistic polynomial-time (PPT) algorithm  can distinguish, with non-negligible probability, $A_{q, \mathbf{s}, \chi}$ from the uniform distribution over $\mathbb{Z}_q^n \times \mathbb{Z}_q$.
This holds even if $\mathcal{A}$ sees polynomially many  samples, and even if the secret vector $\mathbf{s}$ is drawn randomly from  $\chi^n$ \cite{applebaum2009fast}.

The LWR problem \cite{BPR12} is a ``derandomized'' variant of the LWE problem. Let  $\mathcal{D}$ be some distribution over $\mathbb{Z}_q^n$, and  $\mathbf{s} \gets \mathcal{D}$. For  integers $q \ge p \ge 2$ and any $x\in \mathbb{Z}_q$, denote  $\lfloor x \rceil_p=\lfloor \frac{p}{q}x \rceil$. Then, for positive integers $n$ and $q \ge p \ge 2$, the LWR distribution $A_{n, q, p}(\mathbf{s})$
 over $\mathbb{Z}_q^n \times \mathbb{Z}_p$ is obtained by  sampling
$\mathbf{a}$ from $\mathbb{Z}_q^n$ uniformly at random, and outputting
$\left(\mathbf{a}, \left\lfloor 
\mathbf{a}^T \mathbf{s} \right\rceil_p \right) \in \mathbb{Z}_q^n \times \mathbb{Z}_p$.
The search LWR problem is to recover the hidden secret $\mathbf{s}$ given
polynomially many  samples of $A_{n, q, p}(\mathbf{s})$.
Briefly speaking, the  (decisional) LWR assumption  says that, for sufficiently large security parameter,  no PPT algorithm $\mathcal{A}$ can distinguish, with non-negligible probability, the
distribution $A_{n, q, p}(\mathbf{s})$
from the distribution $(\mathbf{a} \gets \mathbb{Z}_q^n, \lfloor u \rceil_p)$  where $u \gets \mathbb{Z}_q$.
This holds even if $\mathcal{A}$ sees  polynomially many   samples.


 An efficient reduction from the LWE problem
to the LWR problem, for super-polynomial large  $q$, is provided in \cite{BPR12}.
{Let $B$ denote the bound for any  component in  the secret $\mathbf{s}$.} 
{It is recently shown   that, when  $q \ge 2mBp$ (equivalently, $m\le q/2Bp$),  the LWE problem can be reduced to the (decisional) LWR assumption with $m$ independently random samples \cite{BGMRR16}. Moreover, the reduction from LWE to LWR is actually  independent of the  distribution of the secret $\mathbf{s}$.}



For the positive integer $m$ that is polynomial in the security parameter $\lambda$,
let $n\triangleq \varphi(m)$ denote the totient of $m$, and $\mathcal{K}\triangleq \mathbb{Q}(\zeta_m)$ be the number field obtained by adjoining an abstract element $\zeta_m$ satisfying $\Phi_m(\zeta_m) = 0$, where $\Phi_m(x)\in \mathbb{Z}[x]$ is the $m$-th cyclotomic polynomial of degree $n$.
Moreover, let $\mathcal{R}\triangleq \mathcal{O}_{\mathcal{K}}$ be the ring of integers in $\mathcal{K}$.
Finally, given a positive prime $q=\mathrm{poly}(\lambda)$ such that $q\equiv 1\pmod{m}$,
define the quotient ring $\mathcal{R}_q\triangleq \mathcal{R}/q\mathcal{R}$.

We briefly review the RLWE problem, and its hardness result \cite{LYUBASHEVSKY2013ON,lyubashevsky2013toolkit,DD12}.
As we shall see, it suffices in this work to consider a \emph{special} case of the original ring-LWE problem defined in \cite{LYUBASHEVSKY2013ON}.
Let $n \geq 16$ be a power-of-two and $q =\mathrm{poly}(\lambda)$ be a positive prime such that $q\equiv 1\pmod{2n}$.
Given  $\mathbf{s} \gets \mathcal{R}_q$, a sample drawn from the RLWE distribution  $A_{n, q, \alpha, \mathbf{s}}$ over $\mathcal{R}_q \times \mathcal{R}_q$ is generated by first choosing $\mathbf{a} \gets \mathcal{R}_q, \mathbf{e} \gets D_{\mathbb{Z}^n, \alpha}$, and then outputting
$(\mathbf{a}, \mathbf{a} \cdot \mathbf{s} + \mathbf{e}) \in \mathcal{R}_q \times \mathcal{R}_q$.  Roughly speaking, the (decisional) RLWE assumption  says that,
for sufficiently large security parameter $\lambda$,  no PPT algorithm $\mathcal{A}$ can distinguish, with non-negligible probability, $A_{n, q, \alpha, \mathbf{s}}$ from the uniform distribution over $\mathcal{R}_q\times \mathcal{R}_q$. This holds even if $\mathcal{A}$ sees polynomially many samples, and even if the secret $\mathbf{s}$ is drawn randomly from the same distribution of the error polynomial $\mathbf{e}$ \cite{DD12,applebaum2009fast}. Moreover, as suggested in \cite{newhope15}, alternative distributions for the error polynomials can be taken for the sake of efficiency while without essentially reducing security.

\section{Key Consensus with Noise}

Before presenting the definition of key consensus (KC) scheme, we first introduce a new function  $|\cdot|_t$ relative to  arbitrary positive integer $t\geq 1$: 
$$|x|_{t} = \min\{x \bmod t, t - x \bmod t\}, \quad \forall x\in \mathbb{Z},$$
where the result of modular operation is represented in $\{0,...,(t-1)\}$. For instance,  $|-1|_t = \min\{-1 \mod t, (t+1) \mod t\}=\min\{t-1, 1\}=1$.
In the following description, we use $|\sigma_1-\sigma_2|_q$
{to measure  the distance between two elements $\sigma_1,\sigma_2\in \mathbb{Z}_q$}. {In this work, such a distance is caused by small noises, and is relatively small compared to $q$.}

\begin{figure}[H]
\centering
\includegraphics[scale=0.9]{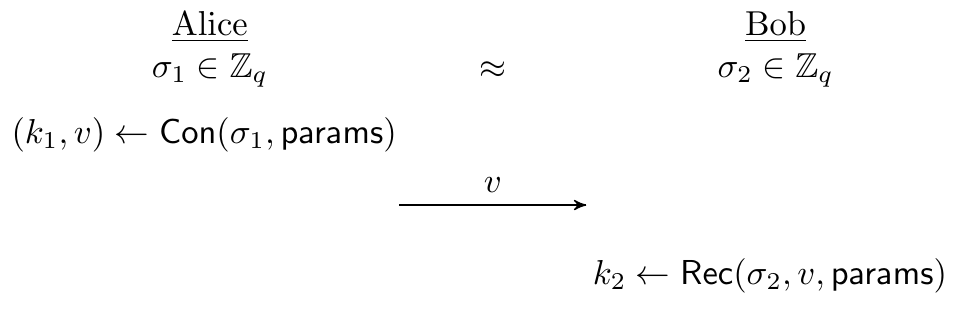}
\caption{Brief depiction of KC, where
$k_1,k_2\in \mathbb{Z}_m$, $v\in \mathbb{Z}_g$ and $|\sigma_1-\sigma_2|_q \le d$.}
\label{figure:kc}
\end{figure}

\begin{definition}
    A key consensus  scheme $KC=(\textsf{params}, \textsf{Con}, \textsf{Rec})$, briefly depicted in  \figurename~\ref{figure:kc},
    is  specified as follows.
\begin{itemize}

\item $\textsf{params}=(q,m,g,d,aux)$ denotes the system parameters,\footnote{{In practice, the system parameters are usually set by the higher-level protocol that calls KC.}}
    where   $q,m,g,d$ are positive integers satisfying $2\leq m,g\leq q, 0\leq d\leq \lfloor \frac{q}{2}\rfloor$ (which  dominate  security, correctness and bandwidth of the KC scheme), and
        $aux$ denotes some   auxiliary    values  that are usually determined by $(q,m,g,d)$ and could  be set to be a special symbol $\emptyset$ indicating ``empty".


    \item \textsf{$(k_1, v) \gets \textsf{Con}(\sigma_1, \textsf{\textsf{params}})$:} On input of $(\sigma_1\in \mathbb{Z}_q,\textsf{params})$, the \emph{probabilistic} polynomial-time  conciliation algorithm $\textsf{Con}$ outputs $(k_1,v)$, where $k_1 \in
        \mathbb{Z}_m$ is the shared-key, and $v \in \mathbb{Z}_g$ is a hint signal that will
        be publicly delivered to the communicating peer  to help the two
        parties reach consensus.
    \item $k_2 \gets \textsf{Rec}(\sigma_2, v, \textsf{params})$: On input of  $(\sigma_2\in\mathbb{Z}_q,v,\textsf{params})$, the \emph{deterministic} polynomial-time reconciliation  algorithm
        $\textsf{Rec}$ outputs $k_2 \in \mathbb{Z}_m$.
\end{itemize}
\begin{description}
\item [Correctness:] A KC scheme  is \emph{correct}, if   it holds  $k_1=k_2$ for
any $\sigma_1,\sigma_2\in \mathbb{Z}_q$ such that  $|\sigma_1-\sigma_2|_q \le d$.

\item [Security:]
A KC scheme is \emph{secure}, if
$k_1$ and $v$ are independent, and $k_1$ is uniformly distributed over $\mathbb{Z}_m$, whenever $\sigma_1 \gets \mathbb{Z}_q$ \emph{(}i.e.,  $\sigma_1$ is taken uniformly at random from $\mathbb{Z}_q$\emph{)}. The probability is taken over the sampling of $\sigma_1$ and  the random coins used by \textsf{Con}. 

\end{description}

\end{definition}

\subsection{Efficiency Upper Bound of KC}

For fixed $q, g, d$, we expect the two communicating parties to reach as more consensus bits as possible,
so the range of consensus key $m$ can be regarded as an indicator of efficiency.
The following theorem reveals an upper bound on the range of consensus key of a KC
with parameters $q$, $g$ (parameterize bandwidth), and $d$ (parameterize correctness).
Its proof also divulges some intrinsic properties of any \emph{correct} and \emph{secure} KC scheme. 

\begin{theorem}\label{th-optimal}
    If $KC=(\mathsf{params}, \mathsf{Con}, \mathsf{Rec})$ is a
    \emph{correct} and \emph{secure} key consensus scheme,
    and $\mathsf{params} = (q, m, g, d, aux)$,
    then
    \begin{equation*}
        2md \le q\left(1 - \frac{1}{g}\right).
    \end{equation*}
\end{theorem}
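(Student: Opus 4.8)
The plan is to exploit the two defining properties—correctness (consensus from any $\sigma_1,\sigma_2$ with $|\sigma_1-\sigma_2|_q\le d$) and security (for uniform $\sigma_1$, the key $k_1$ is uniform on $\mathbb{Z}_m$ and independent of the hint $v$)—by a counting argument over the fibers of the hint signal. First I would fix any realization of the internal randomness of $\textsf{Con}$, so that $\textsf{Con}$ becomes a deterministic map $\sigma_1\mapsto(k_1(\sigma_1),v(\sigma_1))$; it suffices to derive the inequality under this conditioning, since the bound does not depend on the coins. For each hint value $v\in\mathbb{Z}_g$, let $S_v=\{\sigma\in\mathbb{Z}_q : v(\sigma)=v\}$ be the preimage set, and for each key $a\in\mathbb{Z}_m$ let $S_{v,a}=\{\sigma\in S_v : k_1(\sigma)=a\}$.

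The key observation is that correctness forces the sets $S_{v,a}$ to be ``well-separated'' inside $\mathbb{Z}_q$ for distinct $a$. Concretely, suppose $\sigma\in S_{v,a}$ and $\sigma'\in S_{v,a'}$ with $a\ne a'$. If $|\sigma-\sigma'|_q\le d$, then running $\textsf{Rec}(\sigma',v,\textsf{params})$ must return $k_1(\sigma)=a$ (apply correctness to the pair $(\sigma,\sigma')$), but it must also return $k_1(\sigma')=a'$ (apply correctness to the pair $(\sigma',\sigma')$, which are trivially within distance $d$); this contradicts $a\ne a'$ since $\textsf{Rec}$ is a function. Hence within the fiber $S_v$, points carrying different keys are at cyclic distance $>d$ in $\mathbb{Z}_q$. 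Thus each $S_{v,a}$ (for $a$ actually occurring in $S_v$) can be enclosed in, or ``charged to,'' an arc of the cycle $\mathbb{Z}_q$ so that the arcs for distinct keys are separated by gaps of length $>d$ (more carefully: thicken each occupied point by a half-open interval of length $2d$, or argue that between consecutive occupied-key-clusters around the circle one must insert $\ge d$ ``free'' integers). Summing the lengths of the occupied clusters plus the mandatory gaps around the full circle of length $q$ yields $2d\cdot(\text{number of distinct keys in }S_v)\le q$, roughly; a cleaner bookkeeping gives $|S_v|\ge$ (something), but what I really want is a lower bound relating $|S_v|$ to the number of keys present.

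To get the clean constant, I would instead double count differently: let $N_v$ be the number of keys $a$ with $S_{v,a}\ne\emptyset$. The separation property says the $N_v$ occupied key-classes, viewed as subsets of the cycle $\mathbb{Z}_q$, are pairwise at distance $>d$; placing $N_v$ objects on a cycle of circumference $q$ with pairwise gaps exceeding $d$ requires $N_v\cdot d< q$—but to capture the factor $m$ and the $1/g$, I use security: for uniform $\sigma_1$, $\Pr[k_1=a]=1/m$ for every $a$, and $k_1\perp v$, so $\Pr[k_1=a\mid v]=1/m$ for every $v$ with $\Pr[v]>0$; in particular every such fiber $S_v$ must contain every key value, i.e.\ $N_v=m$ whenever $S_v\ne\emptyset$. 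Combining $N_v=m$ with the circle-packing bound, and accounting for the $g$ possible hint values tiling $\mathbb{Z}_q$ (the $S_v$ partition $\mathbb{Z}_q$, there are at most $g$ nonempty ones, and $\sum_v|S_v|=q$), I expect the precise inequality $2md\le q(1-1/g)$ to fall out: each fiber needs ``room'' $2md$ essentially, but the $g$ fibers share the circle of length $q$, and one unit of slack per fiber accumulates to the $q/g$ correction. The main obstacle is making this last packing/averaging step rigorous—carefully defining the ``interval of influence'' of each $(v,a)$ so that intervals for the same $v$ but different $a$ are disjoint, intervals for the same $a$ but different $v$ tile cleanly, and the endpoint/rounding effects produce exactly the $-q/g$ term rather than an off-by-one slack; I would handle this by working with the $2d$-neighborhoods $S_{v,a}+[-d,d]$ (or $[0,2d-1]$-thickenings) and summing indicator functions over $\mathbb{Z}_q$, then integrating/summing and using $\sum_v 1=g$ on the nonempty fibers.
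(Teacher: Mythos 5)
Your overall strategy is the one the paper uses: pigeonhole to find a hint value $v_0$ whose fiber has size at least $q/g$, correctness to force the key classes inside that fiber to be separated, security to force all $m$ key classes to be nonempty, and a thickening/counting argument on the cycle $\mathbb{Z}_q$ to extract $|S_{v_0}|+2md\le q$. However, there are two concrete gaps. The first is your opening move of fixing the internal randomness of $\mathsf{Con}$ and claiming ``it suffices to derive the inequality under this conditioning.'' Correctness survives that conditioning (it is a worst-case property), but \emph{security does not}: the independence and uniformity of $(k_1,v)$ is a statement about the joint distribution over $\sigma_1$ \emph{and} the coins, and it can fail for each individual coin value. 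Yet you later invoke exactly this conditional statement (``$\Pr[k_1=a\mid v]=1/m$ for every $v$ with $\Pr[v]>0$, so $N_v=m$'') on the fixed-coin fibers. Under a fixed coin string $r^*$, the fiber $S_{v_0}^{r^*}$ is only a subset of the set of $\sigma_1$ that can emit $v_0$ under \emph{some} coins, and it may miss key values entirely. The paper avoids this by never fixing the coins: it defines $S_{v}$ and the key classes $K_i$ via ``outputs $(i,v)$ with \emph{positive probability},'' first proving (its Proposition on determinism) that $k_1$ is a deterministic function of $(\sigma_1,v)$ so that the $K_i$ still partition $S_{v_0}$, and then security directly gives $\Pr[k_1=i\mid v=v_0]=1/m>0$, hence $K_i\neq\emptyset$ for all $i$. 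The pigeonhole step still works for positive-probability fibers by running $\mathsf{Con}$ once per $\sigma_1$.

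The second gap is in the separation constant. Applying correctness to the pair $(\sigma,\sigma')$ with $\sigma\in S_{v,a}$, $\sigma'\in S_{v,a'}$ only shows that distinct key classes are at cyclic distance $>d$, which is \emph{not} enough for the $[-d,d]$-thickenings $S_{v,a}+[-d,d]$ to be pairwise disjoint (two points at distance $d+1$ have overlapping thickenings). You need separation $>2d$, which you get either by applying correctness to a midpoint $\tau$ with $|\tau-\sigma|_q\le d$ and $|\tau-\sigma'|_q\le d$, or, more cleanly as in the paper, by observing that correctness gives $S_{v,a}+[-d,d]\subseteq K'_a:=\{\sigma_2:\mathsf{Rec}(\sigma_2,v)=a\}$ and the sets $K'_a$ are automatically disjoint because $\mathsf{Rec}$ is a deterministic function. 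With that, $|S_{v,a}+[-d,d]|\ge |S_{v,a}|+2d$ for each of the $m$ nonempty classes (one must also rule out the degenerate case where a thickened class wraps around all of $\mathbb{Z}_q$), and summing the disjoint $K'_a$ inside $\mathbb{Z}_q$ gives $|S_{v_0}|+2md\le q$; combined with $|S_{v_0}|\ge q/g$ this is exactly $2md\le q(1-1/g)$. Your closing worry about how the $-q/g$ term emerges is resolved by this single-largest-fiber accounting --- no averaging over all $g$ fibers is needed.
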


\textsf{Remark:} Some comments are in order.
Theorem~\ref{th-optimal} divulges an efficiency upper bound on the system parameters of KC schemes,
and allows us to take balance on these parameters according to different priorities among security, computational efficiency and bandwidth consumption.
When balancing these parameters, we are mainly concerned with the parameters $(q,d,m)$, with a focus on the parameter $q$ that dominates the security and  efficiency of the underlying KC scheme. The parameter $g$  is mainly related to bandwidth. But the bandwidth reduction  with a smaller $g$ can be  overtaken by the overall  efficiency gains with a smaller $q$.


Before proceeding to prove Theorem \ref{th-optimal}, we first prove the following propositions.

\begin{proposition} \label{lemma:determine-con}
    Given $\textsf{params}=(q,m,g,d,aux)$ for a \emph{correct} and \emph{secure} KC scheme.
    For any arbitrary fixed $\sigma_1 \in \mathbb{Z}_q$,
    if $\mathsf{Con}(\sigma_1, \textsf{params})$ outputs $(k_1, v)$ with positive prabability,
    then the value $k_1$ is fixed w.r.t. the $(v,\sigma_1)$. That is, for any random coins $(r,r^\prime)$, if $\textsf{Con}(\sigma_1, \textsf{params}, r)=(k_1,v)$ and $\textsf{Con}(\sigma_1, \textsf{params}, r^\prime)=(k^\prime_1,v)$, then $k_1=k^\prime_1$.
\end{proposition}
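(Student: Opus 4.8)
The plan is to argue by contradiction using the correctness and security properties together. Suppose, toward a contradiction, that there is a fixed $\sigma_1 \in \mathbb{Z}_q$, a hint value $v \in \mathbb{Z}_g$, and two distinct key values $k_1 \neq k_1'$ in $\mathbb{Z}_m$ such that both $(k_1, v)$ and $(k_1', v)$ occur with positive probability as outputs of $\textsf{Con}(\sigma_1, \textsf{params})$. The key idea is that the receiver's algorithm $\textsf{Rec}$ is \emph{deterministic} and sees only $(\sigma_2, v, \textsf{params})$; in particular, on input $(\sigma_1, v, \textsf{params})$ it produces one single value $k_2 = \textsf{Rec}(\sigma_1, v, \textsf{params})$. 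Since $|\sigma_1 - \sigma_1|_q = 0 \le d$, correctness forces $k_1 = k_2$ whenever $\textsf{Con}(\sigma_1,\textsf{params})$ outputs $(k_1,v)$; applying this to both executions gives $k_1 = k_2 = k_1'$, contradicting $k_1 \ne k_1'$.

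First I would restate the two hypotheses precisely: fix random coins $r, r'$ with $\textsf{Con}(\sigma_1,\textsf{params},r) = (k_1,v)$ and $\textsf{Con}(\sigma_1,\textsf{params},r') = (k_1',v)$. Then I would invoke the correctness definition with the choice $\sigma_2 = \sigma_1$, noting that this is a legitimate instance since $|\sigma_1 - \sigma_2|_q = 0 \le d$ always holds (recall $d \ge 0$). Correctness asserts $k_1 = k_2$ for the pair produced by $\textsf{Con}$, i.e. $k_1 = \textsf{Rec}(\sigma_1, v, \textsf{params})$; the same reasoning applied to the second execution yields $k_1' = \textsf{Rec}(\sigma_1, v, \textsf{params})$. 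Because $\textsf{Rec}$ is deterministic, the right-hand sides are literally the same value, so $k_1 = k_1'$.

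I do not expect a serious obstacle here; the proposition is essentially an observation that determinism of $\textsf{Rec}$ plus correctness at distance zero pins down $k_1$ as a function of $(\sigma_1, v)$. The only subtlety worth spelling out is that correctness is stated as a universal (not probabilistic) guarantee — ``$k_1 = k_2$ for any $\sigma_1,\sigma_2$ with $|\sigma_1-\sigma_2|_q \le d$'' — so it applies to \emph{every} positive-probability output of $\textsf{Con}$, which is exactly what lets me compare the two executions through the common value $\textsf{Rec}(\sigma_1,v,\textsf{params})$. Note that the security property is not actually needed for this proposition; it will be used in the subsequent propositions leading to Theorem~\ref{th-optimal}. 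I would close by remarking that this shows $k_1$ may be written as $f(\sigma_1, v)$ for a well-defined function $f$, a fact that will be reused when counting the possible $(k_1,v)$ pairs.
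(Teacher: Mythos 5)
Your argument is correct and is essentially identical to the paper's proof: both set $\sigma_2=\sigma_1$, invoke correctness at distance zero to force $k_1=\textsf{Rec}(\sigma_1,v,\textsf{params})$, and use the determinism of $\textsf{Rec}$ to conclude that $k_1$ is pinned down by $(\sigma_1,v)$. Your added remark that security is not actually needed here is accurate and a fine observation.
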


\begin{proof}
    Let $\sigma_2=\sigma_1$, then $|\sigma_1-\sigma_2|_q=0\leq d$.
    Then, according to the \emph{correctness} of KC, we have that $k_1=k_2=\textsf{Rec}(\sigma_2,v)=\textsf{Rec}(\sigma_1,v)$. However, as $\textsf{Rec}$ is a deterministic algorithm,  $k_2$ is fixed w.r.t. $(\sigma_1,v)$. As a consequence, $k_1$ is also fixed w.r.t. $(\sigma_1,v)$, {no matter what randomness is used by \textsf{Con}}.
  \end{proof}

\begin{proposition} \label{lemma:selectv0}
    Given $\textsf{params}=(q,m,g,d,aux)$ for a KC scheme, for any  $v \in \mathbb{Z}_g$, let  $S_{v}$ be the set containing all $\sigma_1$
    such that $\textsf{Con}(\sigma_1, \textsf{params})$ outputs $v$ with positive probability. Specifically,
     $$S_{v} = \left\{\sigma_1 \in \mathbb{Z}_q \mid
            \Pr\left[(k_1, v^\prime) \gets \textsf{Con}(\sigma_1, \textsf{params}): v^\prime= v\right] > 0\right\}.$$
      Then, there exists $v_0 \in \mathbb{Z}_g$ such that  $|S_{v_0}| \ge q/g$.
\end{proposition}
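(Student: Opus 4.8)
The plan is to prove this by a direct counting (pigeonhole) argument, and no properties of the KC scheme beyond ``$\textsf{Con}$ is an algorithm that produces an output'' are actually needed. The key observation is that the family $\{S_v\}_{v\in\mathbb{Z}_g}$ covers all of $\mathbb{Z}_q$. Indeed, fix any $\sigma_1\in\mathbb{Z}_q$. Since $\textsf{Con}$ is a (probabilistic) polynomial-time algorithm, on input $(\sigma_1,\textsf{params})$ it always outputs some pair $(k_1,v)$ with $v\in\mathbb{Z}_g$; as $\mathbb{Z}_g$ is finite, there is at least one value $v\in\mathbb{Z}_g$ that is output with positive probability, i.e.\ $\sigma_1\in S_v$. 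Hence $\bigcup_{v\in\mathbb{Z}_g}S_v=\mathbb{Z}_q$.

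From this covering it follows that $\sum_{v\in\mathbb{Z}_g}|S_v|\ge\bigl|\bigcup_{v\in\mathbb{Z}_g}S_v\bigr|=|\mathbb{Z}_q|=q$. The left-hand side is a sum of $g$ nonnegative integers, so by averaging at least one of them is at least the average: there exists $v_0\in\mathbb{Z}_g$ with $|S_{v_0}|\ge q/g$, which is exactly the claim.

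There is essentially no hard step here; the only points requiring a word of care are (i) spelling out why every $\sigma_1$ lies in some $S_v$ (finiteness of $\mathbb{Z}_g$ together with the fact that every run of $\textsf{Con}$ terminates and produces an output), and (ii) that we only obtain a covering, not a partition, so the averaging must start from the inequality $\sum_v|S_v|\ge q$ rather than from equality. Neither correctness nor security of the KC scheme is invoked in this proposition; those will be used in Proposition~\ref{lemma:determine-con} and in the main argument. This proposition simply isolates the ``heavy'' hint value $v_0$ whose associated preimage set $S_{v_0}$ the subsequent proof of Theorem~\ref{th-optimal} will analyze.
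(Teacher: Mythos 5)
Your proof is correct and follows essentially the same route as the paper's: the paper also just observes that every $\sigma_1$ lands in some $S_v$ (by running $\textsf{Con}$ once) and then invokes the pigeonhole principle. You merely spell out the covering/averaging step that the paper leaves implicit.
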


\begin{proof}
    For each $\sigma_1 \in \mathbb{Z}_q$,
    we run $\textsf{Con}(\sigma_1,\textsf{params})$ and get a pair $(k_1, v) \in \mathbb{Z}_{m} \times \mathbb{Z}_{g}$
    satisfying $\sigma_1 \in S_{v}$. Then,  the proposition is
    clear by the pigeonhole principle.
  \end{proof}

\begin{proof}[Proof of Theorem ~\ref{th-optimal}]
    From Proposition~\ref{lemma:selectv0},
    there exists a $v_0 \in \mathbb{Z}_g$ such that  $|S_{v_0}| \ge q / g$.
    Note that, for any $\sigma_1\in S_{v_0}$,
    $\mathsf{Con}(\sigma_1, \mathsf{params})$ outputs $v_0$ with positive probability.

    For each $i \in \mathbb{Z}_m$,
    let $K_i$ denote the set containing all $\sigma_1$
    such that $\textsf{Con}(\sigma_1, \textsf{params})$ outputs $(k_1 = i,v=v_0$) with positive probability.
    From Proposition~\ref{lemma:determine-con},
    $K_i$'s form a disjoint partition of $S_{v_0}$.
    From the independence between  $k_1$ and $v$, and the uniform distribution of $k_1$, (as we assume the underlying KC is \emph{secure}), we know $\Pr[k_1 = i \mid v = v_0] = \Pr[k_1 = i] > 0$, and so $K_i$ is  non-empty for each $i \in \mathbb{Z}_m$.
    Now, for each $i \in \mathbb{Z}_m$, denote by $K^\prime_i$ the set containing all $\sigma_2 \in \mathbb{Z}_q$
    such that $\textsf{Rec}(\sigma_2, v_0, \textsf{params})=i$.
    As $\textsf{Rec}$ is  deterministic, $K^\prime_i$'s are well-defined and are disjoint.

    From the \emph{correctness} of KC,
    for every $\sigma_1 \in K_i, |\sigma_2 - \sigma_1|_q \le d$, we have $\sigma_2 \in K_i'$. That is, $K_i + [-d, d] \subseteq K'_i$.

    We shall prove that $K_i + [-d, d]$ contains at least $|K_i| + 2d$ elements.
    If $K_i + [-d, d] = \mathbb{Z}_m$, then $m = 1$, which is a contradiction
    (we exclude the case of $m = 1$ in the definition of KC as it is a trivial case).
    If there exists an $x \in \mathbb{Z}_m$ such that $x \notin K_i + [-d, d]$,
    we can see $\mathbb{Z}_m$ as a segment starting from the point $x$ by
    arranging its elements as $x, (x + 1) \bmod m, (x + 2) \bmod m, \dots, (x + m - 1) \bmod m$.
    Let $l$ be the left most element in $K_i + [-d, d]$ on the segment,
    and $r$ be the right most such element.
    Then $K_i + [-d, d]$ contains at least $|K_i|$ elements between $l$ and $r$ inclusively on the segment.
    Since $l + [-d, 0]$ and $r + [0, d]$ are subset of $K_i + [-d, d]$, and are not overlap
    (because $x \notin K_i + [-d, d]$),
    the set $K_i + [-d, d]$ contains at least $|K_i| + 2d$ elements.

    Now we have $|K_i| + 2d \le |K_i'|$. When we add up on both sides for all $i \in \mathbb{Z}_m$,
    then we derive $|S_{v_0}| + 2md \le q$. By noticing that $|S_{v_0}| \ge q / g$, the theorem is established. 
  \end{proof}

\subsection{Construction and Analysis of OKCN}
The key consensus  scheme, named ``optimally-balanced  key consensus with noise (OKCN)", is presented  in Algorithm~\ref{kcs:1}, followed with  some explanations for implementation details.


Define $\sigma^\prime_A=\alpha \sigma_1+e$. Note that it always holds
$\sigma^\prime_A<q^\prime$. However, in some rare cases, $\sigma^\prime_A$ could
be a negative value; for example, for the case that $\sigma_1=0$ and $e\in
\left[-\lfloor(\alpha-1)/2\rfloor,-1\right]$. Setting $\sigma_A=\sigma^\prime_A
\mod q^\prime$, in line~\ref{op-sigmaA},  is to ensure that $\sigma_A$ is always a non-negative  value in $\mathbb{Z}_{q^\prime}$, which can be simply implemented as follows: if $\sigma^\prime_A<0$ then set $\sigma_A=\sigma^\prime_A+q^\prime$, otherwise set $\sigma_A=\sigma^\prime_A$. {Considering potential timing attacks, conditional
statement judging whether $\sigma^\prime_A$ is negative or not
can be avoided  by a bitwise operation extracting the sign bit of $\sigma^\prime_A$.
In specific, suppose $\sigma^\prime_A$ is a 16-bit signed or unsigned integer,
then one can code $\sigma_A = \sigma^\prime_A + ((\sigma^\prime_A >> 15) \&
1)*q'$ in C language. The same techniques can also  be applied to the calculation in line~\ref{op-k2}.} 

In lines \ref{op-div} and \ref{op-divmod},  $(k_1,v')$ can actually  be calculated simultaneously by a
single command $div$ in assembly language.
In line~\ref{op-k2}, the floating point arithmetic can be replaced by integer
arithmetic. If $m$ is small enough, such as $2$ or $3$, the slow
complex integer division operation can be replaced by relative faster conditional statements.

\noindent
\begin{algorithm}[H]

    \caption{OKCN: Optimally-balanced  KC with Noise}\label{kcs:1}
\begin{algorithmic}[1]
    \State{$\textsf{params}=(q,m,g,d,aux)$,  $aux =\{q'=\mathsf{lcm}(q,m),\alpha=q'/q,\beta=q'/m\}$}
    \Procedure {$\textsf{Con}$}{$(\sigma_1, \textsf{params})$} \Comment{{$\sigma_1\in [0, q-1]$}}
    \State{$e \gets \left[-\lfloor(\alpha-1)/2\rfloor, \lfloor\alpha/2\rfloor\right]$} \label{op-e}
    \State{$\sigma_A=(\alpha \sigma_1+e) \bmod q'$} \label{op-sigmaA}
    \State{$k_1=\lfloor\sigma_A/\beta\rfloor\in \mathbb{Z}_m$}\label{op-div}
    \State{$v'=\sigma_A \mod \beta$}
    \label{op-divmod}
    \State{$v = \lfloor v' g / \beta \rfloor$}\Comment{$v\in \mathbb{Z}_g$}
    \Return{$(k_1, v)$}
\EndProcedure

\Procedure{\textsf{Rec}}{$\sigma_2, v, \textsf{params}$}\Comment{$\sigma_2\in [0,q-1]$}
\State{$k_2 = \lfloor \alpha \sigma_2 / \beta - (v + 1/2)/g \rceil \bmod m$} \label{op-k2}
    \Return{$k_2$}
\EndProcedure
\end{algorithmic}
\end{algorithm}

The value $v+1/2$, in line~\ref{op-k2},  estimates  the exact value of $v'g/\beta$. Such an estimation
can be more accurate, if one chooses to use the average value of all $v'g/\beta$'s such that
$\lfloor v'g / \beta \rfloor = v$. Though such accuracy can improve the bound on
correctness slightly, the formula calculating $k_2$ becomes more complicated.

\subsubsection{Correctness and Security of OKCN}
Recall that, for  arbitrary positive integer $t\geq 1$ and any $x\in \mathbb{Z}$,  $|x|_{t} = \min\{x \bmod t, t - x \bmod t\}$. Then, the following fact is direct from the definition of $|\cdot|_t$.

\begin{fact} \label{lemma:cir}
   For any  $x,y,t,l\in \mathbb{Z}$ where $t\geq 1$ and $l\geq 0$,
   if $|x - y|_q \le l$,
   then there exists $\theta \in
    \mathbb{Z}$ and $\delta \in [-l, l]$ such that  $x=y+\theta t+\delta$.
\end{fact}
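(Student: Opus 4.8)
The plan is to prove this by a direct case analysis on the residue $r \triangleq (x-y) \bmod t \in \{0,1,\dots,t-1\}$, i.e.\ the same representative of $(x-y)$ modulo $t$ that is used in the very definition of $|\cdot|_t$. (I read the hypothesis as $|x-y|_t \le l$; the subscript $q$ appearing in the statement is a typo, since $q$ is not among the quantified variables $x,y,t,l$ and the parameter $t$ must enter the hypothesis for the conclusion to make sense.) Set $z \triangleq x-y$. Writing $z = \theta_0 t + r$ with $\theta_0 \triangleq (z-r)/t \in \mathbb{Z}$, the definition gives $|z|_t = \min\{r,\, t-r\} \le l$.

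I would then split according to which of the two terms attains the minimum. If $r \le t-r$, then $|z|_t = r$, hence $0 \le r \le l$; taking $\theta \triangleq \theta_0$ and $\delta \triangleq r$ yields $x = y + \theta t + \delta$ with $\delta \in [0,l] \subseteq [-l,l]$. If instead $t-r < r$, then $|z|_t = t-r$, hence $0 \le t-r \le l$; rewriting $z = (\theta_0+1)t - (t-r)$ and taking $\theta \triangleq \theta_0+1$ and $\delta \triangleq -(t-r)$ yields $x = y + \theta t + \delta$ with $\delta \in [-l,0] \subseteq [-l,l]$. In either case the asserted decomposition holds, which completes the proof.

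There is essentially no real obstacle here; the only point requiring a little care is to respect the sign convention built into the definition of $|\cdot|_t$ (namely that $z \bmod t$ is taken in $\{0,\dots,t-1\}$, not in a symmetric range) and to handle the ``wrap-around'' case by absorbing an extra copy of $t$ into $\theta$. Conceptually the fact is nothing more than the statement that $|z|_t \le l$ means precisely that $z$ lies within additive distance $l$ of some integer multiple of $t$.
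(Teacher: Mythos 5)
Your proof is correct; the paper states this as a Fact with no proof at all, remarking only that it is ``direct from the definition of $|\cdot|_t$,'' and your two-case analysis on whether $r=(x-y)\bmod t$ or $t-r$ attains the minimum is exactly that direct verification (including the wrap-around absorption of one extra $t$ into $\theta$). You are also right that the subscript $q$ in the hypothesis is a typo for $t$, consistent with how the fact is invoked later with $t=q$.
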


\begin{theorem} \label{th:kcs1-correct}
    Suppose that the system parameters satisfy
    $(2d + 1)m < q\left(1 - \frac{1}{g}\right)$ where $m\ge 2$ and $g \ge 2$. Then,   the \emph{OKCN} scheme is \emph{correct}.
\end{theorem}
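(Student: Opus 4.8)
The plan is to fix arbitrary $\sigma_1,\sigma_2\in\mathbb{Z}_q$ with $|\sigma_1-\sigma_2|_q\le d$, run $\textsf{Con}(\sigma_1)$ to obtain $(k_1,v)$, and show that $\textsf{Rec}(\sigma_2,v)$ recovers $k_1$. The natural strategy is to track the quantity $\alpha\sigma_2/\beta$ appearing in line~\ref{op-k2} and compare it with $k_1+ (v+1/2)/g$ modulo $m$, showing the rounding $\lfloor\cdot\rceil$ lands on $k_1$. First I would unwind the definitions: $\sigma_A=(\alpha\sigma_1+e)\bmod q'$, $k_1=\lfloor\sigma_A/\beta\rfloor$, $v'=\sigma_A\bmod\beta$, so that $\sigma_A = \beta k_1 + v'$ with $0\le v'<\beta$, and $v=\lfloor v'g/\beta\rfloor$, which gives $v'/\beta\in[v/g,(v+1)/g)$, hence $|v'/\beta-(v+1/2)/g|< 1/(2g)$. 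Combining, $\sigma_A/\beta = k_1 + v'/\beta$, so $\sigma_A/\beta - (v+1/2)/g$ lies within $1/(2g)$ of $k_1$ (as a real number, before reduction).

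Next I would relate $\alpha\sigma_2$ to $\sigma_A$. Since $|\sigma_1-\sigma_2|_q\le d$, by Fact~\ref{lemma:cir} there exist $\theta\in\mathbb{Z}$ and $\delta\in[-d,d]$ with $\sigma_1=\sigma_2+\theta q+\delta$, so $\alpha\sigma_1 = \alpha\sigma_2+\theta q' + \alpha\delta$. Recalling $\sigma_A \equiv \alpha\sigma_1 + e \pmod{q'}$ and $e\in[-\lfloor(\alpha-1)/2\rfloor,\lfloor\alpha/2\rfloor]$, we get $\alpha\sigma_2 \equiv \sigma_A - \alpha\delta - e \pmod{q'}$. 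Dividing by $\beta$ and using $q'/\beta=m$: $\alpha\sigma_2/\beta \equiv \sigma_A/\beta - (\alpha\delta+e)/\beta \pmod m$. Therefore $\alpha\sigma_2/\beta - (v+1/2)/g$ is congruent mod $m$ to a real number within $\frac{1}{2g} + \frac{|\alpha\delta+e|}{\beta}$ of $k_1$. The bound $|\alpha\delta+e|\le \alpha d + \lfloor\alpha/2\rfloor \le \alpha d + \alpha/2 = \alpha(d+1/2) = \alpha(2d+1)/2$, so the total deviation is at most $\frac{1}{2g} + \frac{\alpha(2d+1)}{2\beta}$. Since $\alpha/\beta = (q'/q)/(q'/m) = m/q$, this equals $\frac{1}{2g} + \frac{m(2d+1)}{2q}$. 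The hypothesis $(2d+1)m < q(1-1/g)$ rearranges exactly to $\frac{m(2d+1)}{2q} < \frac12 - \frac{1}{2g}$, i.e. $\frac{1}{2g}+\frac{m(2d+1)}{2q} < \frac12$. So the deviation is strictly less than $1/2$, hence $\lfloor \alpha\sigma_2/\beta - (v+1/2)/g\rceil \equiv k_1 \pmod m$, which is precisely $k_2=k_1$.

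The main obstacle I anticipate is handling the reduction $\bmod\, q'$ carefully when passing from $\alpha\sigma_1+e$ to $\sigma_A$ — in particular making sure the ``wraparound'' term $\theta q'$ genuinely disappears modulo $m$ (this uses $m\mid q'$) and that rounding a real number that is only defined modulo $m$ is well-posed, which it is precisely because the deviation is $<1/2<m/2$. A secondary delicate point is the edge case where $\sigma_A$ itself is small or $v'$ near $0$ or $\beta$, but since all the estimates are in terms of real-valued distances that are preserved under the relevant congruences, no case analysis beyond the inequality bookkeeping should be needed. I would also double-check the asymmetric rounding in the choice of $e$ (the interval $[-\lfloor(\alpha-1)/2\rfloor,\lfloor\alpha/2\rfloor]$ has length $\alpha$ and $|e|\le\lfloor\alpha/2\rfloor\le\alpha/2$), which is exactly what the bound above needs.
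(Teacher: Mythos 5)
Your proposal is correct and follows essentially the same route as the paper's proof: decompose $\sigma_2-\sigma_1$ via Fact~\ref{lemma:cir}, write $\sigma_A=\beta k_1+v'$, bound $|v'/\beta-(v+1/2)/g|$ by $1/(2g)$ and the remaining term by $\frac{\alpha}{\beta}(d+1/2)=\frac{m(2d+1)}{2q}$, and conclude the total deviation is below $1/2$ from the hypothesis. The only nitpick is that the bound on $|v'/\beta-(v+1/2)/g|$ should be $\le 1/(2g)$ rather than strict (equality occurs when $v'g/\beta$ is an integer), which is harmless since the other term supplies the strict inequality.
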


\begin{proof}
    Suppose  $|\sigma_1 - \sigma_2|_q \le d$. By Fact~\ref{lemma:cir},
    there exist $\theta \in \mathbb{Z}$ and $\delta \in [-d,d]$  such that  $\sigma_2=\sigma_1 + \theta q + \delta$.
    From line~\ref{op-sigmaA} and~\ref{op-divmod} in Algorithm~\ref{kcs:1}, we know that
    there is a $\theta^\prime\in \mathbb{Z}$, such that
    $\alpha \sigma_1 + e + \theta^\prime q^\prime= \sigma_A = k_1 \beta + v'$.
    And from the definition of $\alpha, \beta$, we have $\alpha/\beta = m / q$.
    Taking these into the formula of $k_2$ in $\textsf{Rec}$ (line~\ref{op-k2} in Algorithm \ref{kcs:1}), we have
    \begin{align}
        k_2 &= \lfloor \alpha \sigma_2 / \beta - (v + 1/2)/g \rceil \bmod m \\
        &= \lfloor \alpha (\theta q + \sigma_1 + \delta) / \beta - (v + 1/2)/g \rceil \bmod m \\
        & = \left\lfloor m (\theta-\theta^\prime) + \frac{1}{\beta} (k_1\beta + v' - e) +
            \frac{\alpha\delta}{\beta}- \frac{1}{g}(v + 1/2)\right\rceil \bmod m \\
        &= \left\lfloor k_1 + \left(\frac{v'}{\beta} - \frac{v +
                1/2}{g}\right) - \frac{e}{\beta} + \frac{\alpha \delta}{\beta}
    \right\rceil \bmod m \label{formula:kcs1-correct}
    \end{align}

    Notice that $|v'/\beta - (v + 1/2)/g| = |v'g - \beta(v + 1/2)|/\beta g \le
    1/2g$.
    So
    \begin{equation*}
        \left|\left(\frac{v'}{\beta} - \frac{v + 1/2}{g}\right)
            - \frac{e}{\beta} + \frac{\alpha \delta}{\beta} \right|
        \le \frac{1}{2g} + \frac{\alpha}{\beta}(d + 1/2).
    \end{equation*}
    From the assumed condition $(2d+1)m<q(1-\frac{1}{g})$, we get that the right-hand side is
    strictly smaller than $1/2$; Consequently,  after the rounding, $k_2=k_1$.
  \end{proof}



\begin{theorem}\label{th:kcs1-secure} OKCN is secure. Specifically, when $\sigma_1\gets \mathbb{Z}_q$, $k_1$ and $v$ are independent, and $k_1$ is uniform over $\mathbb{Z}_m$, where  the probability is taken over the sampling of $\sigma_1$ and the random
    coins used by $\textsf{Con}$.
\end{theorem}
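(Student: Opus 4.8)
The plan is to analyze the structure of \textsf{Con} directly from its algorithmic description in Algorithm~\ref{kcs:1}, tracking the joint distribution of $(k_1, v)$ as $\sigma_1$ ranges uniformly over $\mathbb{Z}_q$ and $e$ ranges over its sampling interval. The key observation is that the map $\sigma_1 \mapsto \alpha\sigma_1$ sends $\mathbb{Z}_q$ bijectively onto the multiples of $\alpha$ in $\mathbb{Z}_{q'}$ (recall $q' = \mathsf{lcm}(q,m)$ and $\alpha = q'/q$), and then adding the uniform noise $e \in [-\lfloor(\alpha-1)/2\rfloor, \lfloor\alpha/2\rfloor]$ — an interval of exactly $\alpha$ consecutive integers — ``fills in the gaps'', so that $\sigma_A = (\alpha\sigma_1 + e) \bmod q'$ is uniformly distributed over all of $\mathbb{Z}_{q'}$. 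This is the crucial reduction: once $\sigma_A \sim \mathcal{U}(\mathbb{Z}_{q'})$, everything downstream is a deterministic function of $\sigma_A$.

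First I would make that uniformity claim precise: since $\gcd(\alpha, q') = \alpha$ wait — more carefully, the residues $\{\alpha\sigma_1 \bmod q' : \sigma_1 \in \mathbb{Z}_q\}$ are exactly $\{0, \alpha, 2\alpha, \dots, (q-1)\alpha\}$, which are $q$ equally spaced points, and the $\alpha$-element noise window around each one tiles $\mathbb{Z}_{q'}$ without overlap; hence $(\sigma_1, e)$ uniform implies $\sigma_A$ uniform over $\mathbb{Z}_{q'}$. Next, recall $\beta = q'/m$, and from lines~\ref{op-div}--\ref{op-divmod} we have $k_1 = \lfloor \sigma_A/\beta \rfloor$ and $v' = \sigma_A \bmod \beta$, i.e. $\sigma_A = k_1\beta + v'$ with $k_1 \in \mathbb{Z}_m$ and $v' \in \mathbb{Z}_\beta$. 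Since $\sigma_A$ is uniform over $\mathbb{Z}_{q'} = \mathbb{Z}_{m\beta}$, the pair $(k_1, v')$ is uniform over $\mathbb{Z}_m \times \mathbb{Z}_\beta$; in particular $k_1$ and $v'$ are independent and $k_1$ is uniform over $\mathbb{Z}_m$. Finally, $v = \lfloor v'g/\beta \rfloor$ is a deterministic function of $v'$ alone, so $(k_1, v)$ inherits independence, and $k_1$ remains uniform over $\mathbb{Z}_m$. That completes the argument.

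The one technical point that needs care — and which I expect to be the main (mild) obstacle — is the claim that the noise intervals centered at consecutive multiples of $\alpha$ tile $\mathbb{Z}_{q'}$ exactly. The interval $[-\lfloor(\alpha-1)/2\rfloor, \lfloor\alpha/2\rfloor]$ has cardinality $\lfloor\alpha/2\rfloor + \lfloor(\alpha-1)/2\rfloor + 1 = \alpha$ for every integer $\alpha \ge 1$ (checking the even and odd cases separately), so the $q$ windows each of size $\alpha$ partition a set of size $q\alpha = q'$; since they sit around the $q$ distinct residues $0, \alpha, \dots, (q-1)\alpha$ and each window is an interval of length $\alpha$ straddling its center, consecutive windows abut exactly and the union is all of $\mathbb{Z}_{q'}$ (wrap-around at $0$ and $q' - 1$ works because the arithmetic is mod $q'$ and $\alpha \mid q'$). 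Once this counting is pinned down, the rest is bookkeeping with the division identities already established in the proof of Theorem~\ref{th:kcs1-correct}. No further machinery beyond Algorithm~\ref{kcs:1} and elementary modular arithmetic is needed.
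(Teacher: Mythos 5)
Your proposal is correct and follows essentially the same route as the paper's proof: establish that $\sigma_A$ is uniform over $\mathbb{Z}_{q'}$ (the paper does this by asserting that $f(\sigma,e)=(\alpha\sigma+e)\bmod q'$ is a bijection from $\mathbb{Z}_q\times\mathbb{Z}_\alpha$ to $\mathbb{Z}_{q'}$, which is exactly your tiling argument made explicit), then read off that $(k_1,v')$ is uniform over $\mathbb{Z}_m\times\mathbb{Z}_\beta$ via the bijection $\sigma_A = k_1\beta+v'$, and conclude since $v$ depends only on $v'$. Your careful verification that the noise interval has cardinality exactly $\alpha$ in both parity cases is the detail the paper leaves as ``easy to check,'' so nothing is missing.
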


\begin{proof} Recall that $q'=\mathsf{lcm}(q,m),\alpha=q'/q,\beta=q'/m$.
    We first demonstrate that $\sigma_A$ is subject to uniform distribution over
    $\mathbb{Z}_{q'}$.
    Consider the  map $f: \mathbb{Z}_{q} \times \mathbb{Z}_{\alpha} \rightarrow
    \mathbb{Z}_{q'}$; $f(\sigma, e) = (\alpha\sigma + e) \bmod q'$, where the elements in
    $\mathbb{Z}_q$ and $\mathbb{Z}_{\alpha}$ are represented in the same way as specified
    in Algorithm~\ref{kcs:1}. It is easy to check that $f$ is an one-to-one map.
    Since $\sigma_1\gets \mathbb{Z}_q$ and $e\gets \mathbb{Z}_\alpha$ are subject to uniform distributions, and they are
    independent, $\sigma_A = (\alpha\sigma_1 + e) \bmod q' = f(\sigma_1,e)$ is also subject to uniform
    distribution over $\mathbb{Z}_{q'}$.

    In the similar way, defining $f':\mathbb{Z}_{m} \times \mathbb{Z}_{\beta} \rightarrow
    \mathbb{Z}_{q'}$ such that $f'(k_1, v') = \beta k_1 + v'$, then $f'$ is obviously a one-to-one map.
    From line~\ref{op-divmod} of Algorithm~\ref{kcs:1}, $f'(k_1, v') = \sigma_A$.
    As  $\sigma_A$ is distributed uniformly over $\mathbb{Z}_{q'}$,
    $(k_1, v')$ is uniformly distributed over $\mathbb{Z}_m \times \mathbb{Z}_{\beta}$,
    and so $k_1$ and $v'$ are independent.
    As $v$ only depends on $v'$, $k_1$ and $v$ are independent.
  \end{proof}

\subsubsection{Special Parameters, and Performance Speeding-Up}\label{OKCN-special}

The first and the second line of \textsf{Con} (line~\ref{op-e} and~\ref{op-sigmaA} in Algorithm \ref{kcs:1}) play the  role in transforming a
uniform distribution over $\mathbb{Z}_q$ to a uniform distribution over $\mathbb{Z}_{q'}$.
If one chooses $q, g, m$ to be power of $2$, i.e.,
$q = 2^{\bar{q}},g = 2^{\bar{g}},m = 2^{\bar{m}}$ where $\bar{q},\bar{g},\bar{m}\in \mathbb{Z}$, then such transformation is
not necessary. In this case $\textsf{Con}$ and $\textsf{Rec}$ can be simplified as follows:

\noindent
\begin{algorithm}[H]
    \caption{OKCN power 2}\label{kcs:1-power2-g}
\begin{algorithmic}[1]
\State{$\textsf{params}:$ $q=2^{\bar{q}},g=2^{\bar{g}},m=2^{\bar{m}},d,$ $aux =\{(\beta=q/m=2^{\bar{q}-\bar{m}}, \gamma=\beta/g=2^{\bar{q}-\bar{m}-\bar{g}})\}$}
\Procedure{\textsf{Con}}{$\sigma_1, \textsf{params}$}
    \State{$k_1 = \left\lfloor \sigma_1/ \beta \right\rfloor$}
    \State{$v = \lfloor (\sigma_1 \bmod \beta) / \gamma \rfloor$}
    \Return{$(k_1, v)$}
\EndProcedure

\Procedure{\textsf{Rec}}{$\sigma_2, v, \textsf{params}$}
\State{$k_2 = \left\lfloor \sigma_2/\beta - (v + 1/2) / g \right\rceil \bmod m$}
    \Return{$k_2$}
\EndProcedure{}
\end{algorithmic}
\end{algorithm}

Since the random noise $e$ used in calculating $\sigma_A$ in Algorithm \ref{kcs:1} is avoided,
the correctness constraint on parameters
can be relaxed, and we have the following corollary.
\begin{corollary}\label{claim:1-power2-g-correct}
   If $q$ and $m$ are power of $2$, and $d, g, m$ satisfy $2md < q\left(1 -
           \frac{1}{g}\right)$,
           then the KC scheme
       described in Algorithm \ref{kcs:1-power2-g} is both \emph{correct} and \emph{secure}.
\end{corollary}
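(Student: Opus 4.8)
The plan is to prove Corollary~\ref{claim:1-power2-g-correct} by handling correctness and security separately, treating Algorithm~\ref{kcs:1-power2-g} as the special case of OKCN in which $q' = q$, so that $\alpha = 1$ and the internal noise $e$ collapses to $0$. First I would observe that when $q = 2^{\bar q}$ and $m = 2^{\bar m}$ with $m \le q$, we have $\mathsf{lcm}(q,m) = q$, hence $q' = q$, $\alpha = q'/q = 1$, $\beta = q'/m = q/m = 2^{\bar q - \bar m}$, and (since $g = 2^{\bar g}$) $\beta$ is divisible by $g$, giving $\gamma = \beta/g = 2^{\bar q - \bar m - \bar g}$. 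With $\alpha = 1$, line~\ref{op-e} of Algorithm~\ref{kcs:1} samples $e$ from the singleton range $[-\lfloor 0/2\rfloor, \lfloor 1/2\rfloor] = \{0\}$, so $\sigma_A = \sigma_1$, and lines~\ref{op-div}--\ref{op-divmod} become exactly $k_1 = \lfloor \sigma_1/\beta\rfloor$ and $v' = \sigma_1 \bmod \beta$. Moreover $v = \lfloor v' g/\beta\rfloor = \lfloor v'/\gamma\rfloor = \lfloor(\sigma_1 \bmod \beta)/\gamma\rfloor$, and the \textsf{Rec} formula is unchanged. Thus Algorithm~\ref{kcs:1-power2-g} is literally OKCN instantiated with these power-of-$2$ parameters.

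For correctness, I would then invoke the proof of Theorem~\ref{th:kcs1-correct}, but redo the final estimate with $\alpha = 1$ and $e = 0$. Tracking through the derivation of~\eqref{formula:kcs1-correct}, the error term inside the rounding becomes
\begin{equation*}
  \left|\left(\frac{v'}{\beta} - \frac{v + 1/2}{g}\right) + \frac{\delta}{\beta}\right| \le \frac{1}{2g} + \frac{d}{\beta} = \frac{1}{2g} + \frac{md}{q},
\end{equation*}
using $|v'/\beta - (v+1/2)/g| \le 1/(2g)$, $|\delta| \le d$, and $\beta = q/m$. The hypothesis $2md < q(1 - 1/g)$ rearranges to $md/q < 1/2 - 1/(2g)$, i.e. the right-hand side above is strictly less than $1/2$, so the rounding recovers $k_1 = k_2$. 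This is the point where the relaxed constraint $2md$ (rather than $(2d+1)m$) appears: it is precisely the absence of the $+1/2$ coming from the now-vanished internal noise $e$ that lets us replace $d + 1/2$ by $d$ in the bound.

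For security, I would appeal directly to Theorem~\ref{th:kcs1-secure}, whose statement and proof cover all valid OKCN parameters and in particular the power-of-$2$ case; alternatively one gives the one-line argument that $\sigma_1 \gets \mathbb{Z}_q$ maps bijectively onto $(k_1, v') \in \mathbb{Z}_m \times \mathbb{Z}_\beta$ via $\sigma_1 \mapsto (\lfloor\sigma_1/\beta\rfloor, \sigma_1 \bmod \beta)$, so $(k_1,v')$ is uniform and its coordinates independent, whence $k_1$ is uniform over $\mathbb{Z}_m$ and independent of $v$ (as $v$ is a deterministic function of $v'$). I do not anticipate a genuine obstacle here; the only thing to be careful about is the bookkeeping that confirms the degenerate instantiation ($q' = q$, $\alpha = 1$, empty noise range) really does reduce Algorithm~\ref{kcs:1} to Algorithm~\ref{kcs:1-power2-g} line for line, and that all the divisibility needed for the parameters in $aux$ to be integers indeed follows from $\bar q \ge \bar m + \bar g$, which is implied by $m, g \le q$ together with $2md < q(1-1/g)$.
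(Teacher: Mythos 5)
Your proposal matches the paper's own proof essentially line for line: the paper likewise observes that the power-of-two setting forces $q'=q$, $\alpha=1$, $e=0$, substitutes this into Formula~(\ref{formula:kcs1-correct}) so the error bound drops from $\frac{1}{2g}+\frac{\alpha}{\beta}(d+\frac{1}{2})$ to $\frac{1}{2g}+\frac{md}{q}$, which the hypothesis $2md<q(1-\frac{1}{g})$ makes strictly less than $\frac{1}{2}$, and inherits security from Theorem~\ref{th:kcs1-secure}. The only quibble is your closing parenthetical: $\bar q\ge \bar m+\bar g$ is \emph{not} actually implied by $m,g\le q$ and $2md<q(1-\frac{1}{g})$ (e.g.\ $q=2^{10}$, $m=2^4$, $g=2^8$, $d=1$), but this only affects whether $\gamma$ in $aux$ is an integer, not the correctness or security argument itself.
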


\begin{proof}
For correctness, as no additional noise $e$ is added, one can take $e = 0$ into
Formula~\ref{formula:kcs1-correct}, and then the correctness of Algorithm~\ref{kcs:1-power2-g} directly follows from the  proof of
Theorem~\ref{th:kcs1-correct}.
For security, as a variation of the generic structure of  Algorithm~\ref{kcs:1}, the security of Algorithm~\ref{kcs:1-power2-g} inherits from that of  Algorithm~\ref{kcs:1}.
  \end{proof}

If we take $\bar{g} + \bar{m} = \bar{q}$,
Algorithm~\ref{kcs:1-power2-g} can be further simplified into  the variant depicted in Algorithm~\ref{kcs:1-power2},  with the constraint on parameters is further relaxed.

\begin{corollary} \label{claim:1-power2-correct}
If $ m, g$ are power of $2$,  $q=m \cdot g$, and  $2md < q$,
then the KC scheme described in Algorithm \ref{kcs:1-power2} is \emph{correct} and \emph{secure}.  Notice that the constraint on parameters is further simplified to $2md < q$  in this case.

\end{corollary}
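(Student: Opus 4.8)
The plan is to treat Corollary~\ref{claim:1-power2-correct} as the last link in the chain Algorithm~\ref{kcs:1} $\to$ Algorithm~\ref{kcs:1-power2-g} $\to$ Algorithm~\ref{kcs:1-power2}, and to re-run the correctness computation of Theorem~\ref{th:kcs1-correct} while tracking which error contributions collapse under the extra hypotheses. Since $q = 2^{\bar q}$, $m = 2^{\bar m}$ with $\bar q = \bar g + \bar m$, we have $q' = \mathsf{lcm}(q,m) = q$, hence $\alpha = q'/q = 1$, $\beta = q'/m = q/m = g$ and $\gamma = \beta/g = 1$. In particular no randomizing noise $e$ is added (one sets $e = 0$), and the hint $v = \sigma_1 \bmod \beta$ records the low-order part of $\sigma_1$ \emph{exactly}, not merely up to a quantization of width $\beta/g$; correspondingly the $\textsf{Rec}$ formula of Algorithm~\ref{kcs:1-power2} should use the exact value $v/g$ in place of the estimate $(v+1/2)/g$. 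The first thing I would do is pin down this precise form of $\textsf{Rec}$, since it is exactly the feature that drives the improved constant.

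For correctness I would start from the identity used in the proof of Theorem~\ref{th:kcs1-correct}: by Fact~\ref{lemma:cir} write $\sigma_2 = \sigma_1 + \theta q + \delta$ with $\theta \in \mathbb{Z}$, $\delta \in [-d,d]$, and use $\sigma_1 = k_1\beta + v$. Substituting $\alpha = 1$, $e = 0$, $\gamma = 1$ into Formula~\ref{formula:kcs1-correct}, the reconciliation output becomes $k_2 = \lfloor k_1 + (\text{error}) \rceil \bmod m$, where the composite error term, which was bounded by $\frac{1}{2g} + \frac{\alpha}{\beta}(d + 1/2)$ in the general analysis, now loses the $\frac{1}{2g}$ summand (because $v$ is recorded exactly and $\gamma = 1$) and the additive $1/2$ inside the parenthesis (because $e = 0$), leaving just $\frac{\alpha}{\beta}\,d = \frac{m}{q}\,d = \frac{d}{g}$. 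Thus the rounding returns $k_1$ exactly when $\frac{d}{g} < \frac12$, i.e. when $2d < g$, which is precisely $2md < mg = q$. I would check the boundary case $\delta = \pm d$ to make sure the resulting condition is the strict inequality $2md < q$ and not $2md \le q$.

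Security requires no new argument: Algorithm~\ref{kcs:1-power2} is an instance of the generic structure of Algorithm~\ref{kcs:1}, so Theorem~\ref{th:kcs1-secure} applies verbatim; alternatively one can note directly that for $\sigma_1 \gets \mathbb{Z}_q$ the pair $(k_1, v)$ is nothing but the $(\bar m, \bar g)$-bit split of $\sigma_1$, so $k_1$ is uniform on $\mathbb{Z}_m$ and independent of $v$. I do not anticipate a genuine obstacle here: the entire content is in the constant-chasing of the previous paragraph, so the main care-point is bookkeeping — confirming that every term that ought to vanish does vanish once $\alpha = 1$, $e = 0$, $\gamma = 1$ are substituted, and that the strictness of $2md < q$ is respected.
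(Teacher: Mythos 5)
Your proposal is correct and matches the paper's argument in substance: both reduce to writing $\sigma_2=\sigma_1+\theta q+\delta$, observing that with $\alpha=1$, $e=0$, $\gamma=1$ the only surviving error term in the reconciliation is $\delta/g$, and concluding $k_2=k_1$ from $|\delta/g|\le d/g<1/2$, with security inherited from Theorem~\ref{th:kcs1-secure}. The paper simply carries out this two-line computation directly from the formulas of Algorithm~\ref{kcs:1-power2} rather than by specializing Formula~\ref{formula:kcs1-correct}, but the calculation is the same.
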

\begin{proof}
For correctness, supposing $|\sigma_1 - \sigma_2|_q \le d$, by Fact~\ref{lemma:cir},
there exist $\theta \in \mathbb{Z}$ and $\delta \in [-d, d]$
such that $\sigma_2 = \sigma_1 + \theta q + \delta$.
Taking  this into line~\ref{op-optimal-k2} of Algorithm~\ref{kcs:1-power2}, i.e.,
the formula computing $k_2$,  we have
\begin{align*}
k_2 &= \lfloor (\sigma_1 - v + \theta q + \delta) / g \rceil \bmod m \\
&= (k_1 + \theta m + \lfloor \delta/g \rceil) \bmod m.
\end{align*}
If $2md < q$, then $|\delta / g| \le d/g < 1/2$, so that $k_2 = k_1 \bmod m = k_1$.

For security, as a special case of generic scheme described in Algorithm~\ref{kcs:1},
the security of Algorithm~\ref{kcs:1-power2} follows  directly from that of Algorithm \ref{kcs:1}.    \end{proof}

\begin{algorithm}[H]
    \caption{OKCN simple}\label{kcs:1-power2}
\begin{algorithmic}[1]
\State{$\textsf{params}:$ $q=2^{\bar{q}},g=2^{\bar{g}},m=2^{\bar{m}},d,$ where $\bar{g} + \bar{m} = \bar{q}$}
    \Procedure{\textsf{Con}}{$\sigma_1, \textsf{params}$}
    \State{$k_1 = \left\lfloor \frac{\sigma_1}{g}\right\rfloor$}
    \State{$v = \sigma_1 \bmod g $}
    \Return{$(k_1, v)$}
\EndProcedure

\Procedure{\textsf{Rec}}{$\sigma_2, v, \textsf{params}$}
\State{$k_2 = \left\lfloor \frac{\sigma_2 - v}{g}\right\rceil \bmod m$} \label{op-optimal-k2}
    \Return{$k_2$}
\EndProcedure
\end{algorithmic}
\end{algorithm}

\section{Asymmetric Key Consensus with Noise}

\begin{figure}[H]
\centering
\includegraphics[scale=.9]{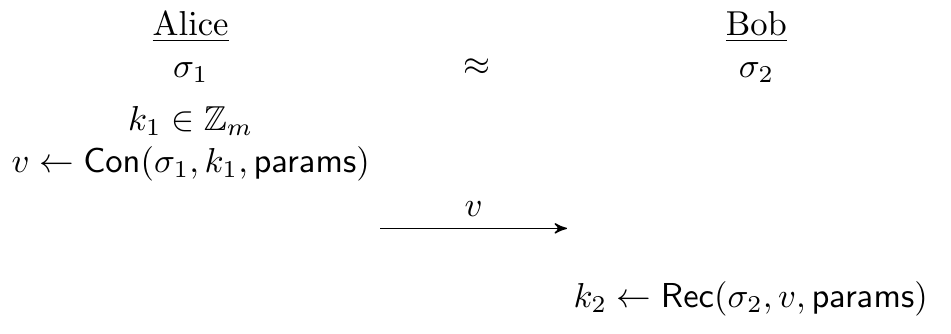}
\caption{Brief depiction of AKC}
\label{figure:akc}
\end{figure}

  When OKCN is used as the building tool in constructing  key exchange (KE) protocols,  the party    who sends the hint signal   $v$ actually plays the role of the responder. When cast into secure transport protocols in the client/server setting, e.g., the next generation of TLS 1.3 \cite{TLS},    the client (corresponding to  Bob) plays the role of the  initiator, and the server (corresponding to Alice) plays the role of the responder. {For OKCN-based key exchange,  both the initiator and the  responder play a \emph{symmetric} role in outputting the shared-key, in the sense that no one can pre-determine the session-key before the KE protocol run. Specifically, the responder can calculate  the session-key only after receiving the ephemeral message from the initiator (unless the message from the initiator is public or static).}
However, in  some application scenarios  particularly in the client/server setting like those based upon TLS1.3,
 it is desirable to render the server  \emph{asymmetric} power in predetermining  session-keys before the protocol run, e.g., in order to balance workloads and security or to  resist the more and more vicious  DDoS attacks.
 {For example, this can much reduce the online burden faced by the server, by offline or parallel computing the following ciphertexts or MACs  with the pre-determined session-keys.}

  Another motivation is that OKCN-based key exchange, with negligible failure probability,  cannot be directly transformed into a  CPA-secure public-key encryption (PKE) scheme without additionally employing a CPA-secure symmetric-key encryption  (SKE) scheme. 
These motivate us to introduce \emph{asymmetric key consensus with noise} (AKCN), as depicted in Figure \ref{figure:akc} and specified below. 


\begin{definition}
    An asymmetric key consensus scheme $AKC=(\textsf{params},\textsf{Con},
    \textsf{Rec})$ 
    is specified as follows:

\begin{itemize}
    \item $\textsf{params}=(q, m, g, d, aux)$ denotes the system
        parameters, where q, $2\leq m,g\leq q, 1\leq d\leq \lfloor \frac{q}{2}\rfloor$ are positive integers, and
        $aux$ denotes some   auxiliary    values  that are usually determined by $(q,m,g,d)$ and could  be set to be empty.

    \item $v \gets \textsf{Con}(\sigma_1, k_1, \textsf{params})$: On input of
        $(\sigma_1 \in \mathbb{Z}_q, k_1 \in \mathbb{Z}_m, \textsf{params})$,
        the \emph{probabilistic} polynomial-time conciliation algorithm
        $\textsf{Con}$ outputs the public hint signal  $v\in \mathbb{Z}_g$.

    \item $k_2 \gets \textsf{Rec}(\sigma_2, v, \textsf{params})$: On input of $(\sigma_2, v, \textsf{params})$,
        the \emph{deterministic} polynomial-time algorithm $\textsf{Rec}$ outputs $k_2 \in \mathbb{Z}_m$.
\end{itemize}
\begin{description}
\item [Correctness:] An  AKC scheme  is \emph{correct}, if   it holds  $k_1=k_2$ for
 any $\sigma_1,\sigma_2\in \mathbb{Z}_q$ such that $|\sigma_1-\sigma_2|_q \leq d$.

\item [Security:]
An  AKC scheme is \emph{secure}, if $v$ is independent of $k_1$
whenever $\sigma_1$ is uniformly distributed over $\mathbb{Z}_q$. Specifically, for arbitrary $\tilde{v}\in \mathbb{Z}_g$ and arbitrary $\tilde{k}_1,\tilde{k}^\prime_1\in \mathbb{Z}_m$, it holds that $\Pr[v=\tilde{v}|k_1=\tilde{k}_1]=\Pr[v=\tilde{v}|k_1=\tilde{k}^\prime_1]$, where the probability is taken over  $\sigma_1\gets \mathbb{Z}_q$ and the random coins used by \textsf{Con}.

\end{description}

\end{definition}

When AKC is used as a building tool for key exchange, $k_1$ is taken uniformly at random from $\mathbb{Z}_q$. However, when AKC is used for public-key encryption, $k_1$ can be arbitrary value from the space of  plaintext messages.

For the efficiency upper bound of AKC, the following theorem divulges bounds on bandwidth (parameterized by $g$),
consensus range (parameterized by $m$), and correctness (parameterized by $d$)
for any AKC scheme. 

\begin{theorem} \label{ookc-bnd}
    Let $AKC = (\textsf{params}, \textsf{Con}, \textsf{Rec})$ be an asymmetric
    key consensus scheme with $\mathsf{params} = (q, m, d, g, aux)$.
    If $AKC$ is \emph{correct} and \emph{secure}, then
    \begin{equation*}
        2md \le q\left(1 - \frac{m}{g}\right).
    \end{equation*}
\end{theorem}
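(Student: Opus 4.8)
The plan is to mirror the proof of Theorem~\ref{th-optimal}, but to exploit the asymmetric structure — that $k_1$ is now an \emph{input} to $\textsf{Con}$ — in order to gain the extra factor that improves $1-\tfrac1g$ to $1-\tfrac mg$. First I would introduce, for every $i\in\mathbb{Z}_m$ and $v\in\mathbb{Z}_g$, the sets
$$S_{i,v}=\left\{\sigma_1\in\mathbb{Z}_q\mid \Pr[\textsf{Con}(\sigma_1,i,\textsf{params})=v]>0\right\},\qquad K'_i(v)=\left\{\sigma_2\in\mathbb{Z}_q\mid \textsf{Rec}(\sigma_2,v,\textsf{params})=i\right\}.$$
Since $\textsf{Rec}$ is deterministic, for each fixed $v$ the sets $\{K'_i(v)\}_{i\in\mathbb{Z}_m}$ are pairwise disjoint, so $\sum_i|K'_i(v)|\le q$. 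The \emph{correctness} of AKC gives $S_{i,v}+[-d,d]\subseteq K'_i(v)$ for all $i,v$. The \emph{security} of AKC, unwound, says that $\Pr_{\sigma_1\gets\mathbb{Z}_q,\,\text{coins}}[\textsf{Con}(\sigma_1,i,\textsf{params})=v]$ does not depend on $i$; hence, for a fixed $v$, either $S_{i,v}\neq\emptyset$ for \emph{every} $i\in\mathbb{Z}_m$ or $S_{i,v}=\emptyset$ for \emph{every} $i$. Let $V\subseteq\mathbb{Z}_g$ collect the hints $v$ of the first type; note $V\neq\emptyset$, since $\textsf{Con}(\sigma_1,0,\textsf{params})$ always outputs some $v$, and of course $|V|\le g$.

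Next I would fix any $v\in V$ and bound $\sum_i|K'_i(v)|$ from below. For each $i$ the set $S_{i,v}$ is non-empty, and $S_{i,v}+[-d,d]$ cannot equal all of $\mathbb{Z}_q$: otherwise $K'_i(v)=\mathbb{Z}_q$, forcing $K'_j(v)=\emptyset$ for every $j\neq i$, which contradicts $\emptyset\neq S_{j,v}\subseteq K'_j(v)$ (here $m\ge2$ is used). So the same ``arrange $\mathbb{Z}_q$ as a segment starting from a hole'' argument as in Theorem~\ref{th-optimal} yields $|S_{i,v}+[-d,d]|\ge|S_{i,v}|+2d$. Summing over $i\in\mathbb{Z}_m$,
$$q\ \ge\ \sum_{i}|K'_i(v)|\ \ge\ \sum_i\bigl(|S_{i,v}|+2d\bigr)\ =\ 2md+\sum_i|S_{i,v}|,$$
so $\sum_i|S_{i,v}|\le q-2md$ for every $v\in V$ (in particular $q-2md\ge m>0$).

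Finally I would sum this over $v\in V$ and interchange the order of summation. On one side, $\sum_{v\in V}\sum_i|S_{i,v}|\le|V|(q-2md)\le g(q-2md)$. On the other side, $S_{i,v}=\emptyset$ whenever $v\notin V$, so
$$\sum_{v\in V}\sum_i|S_{i,v}|=\sum_i\sum_{v\in\mathbb{Z}_g}|S_{i,v}|=\sum_i\sum_{\sigma_1\in\mathbb{Z}_q}\#\{v:\Pr[\textsf{Con}(\sigma_1,i,\textsf{params})=v]>0\}\ \ge\ \sum_i q\ =\ mq,$$
because for each $(\sigma_1,i)$ the algorithm $\textsf{Con}$ outputs at least one value $v$. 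Combining the two estimates gives $mq\le g(q-2md)$, which rearranges to $2md\le q\bigl(1-\tfrac mg\bigr)$, as claimed.

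I expect the main obstacle to be using the AKC security definition precisely: one has to distil from it exactly the statement that the set of \emph{achievable} hint signals is independent of the input $k_1$, so that for each active $v$ \emph{all} $m$ of the sets $S_{i,v}$ are non-empty — this simultaneous non-emptiness across all $m$ consensus slots is what produces the $\tfrac mg$ term rather than $\tfrac1g$. Once that is in place, the remaining work is the double-counting bookkeeping together with the already-established segment-with-a-hole expansion argument.
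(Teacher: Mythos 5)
Your proof is correct, and it takes a genuinely different route from the paper's. The paper first proves a pigeonhole lemma (Proposition~\ref{lemma:pigeonhole-ookc}): viewing AKC with $k_1\gets\mathbb{Z}_m$ as a KC scheme, it counts all $qm$ input pairs $(\sigma_1,k_1)$, finds a single \emph{heavy} hint $v_0$ with $|S_{v_0}'|\ge mq/g$, uses the determinism of $k_1$ given $(\sigma_1,v)$ (Proposition~\ref{lemma:determine-con}) to conclude $|S_{v_0}|\ge mq/g$, and then plugs this into the final inequality $|S_{v_0}|+2md\le q$ already derived in the proof of Theorem~\ref{th-optimal}. You instead establish the per-hint bound $\sum_{i}|S_{i,v}|\le q-2md$ for \emph{every} active $v$ (using security to get simultaneous non-emptiness of all $m$ slots, and the same segment-with-a-hole expansion argument), and then recover the factor $m/g$ by a global double count: the total mass $\sum_{i,v}|S_{i,v}|\ge mq$ is spread over at most $g$ hints, each carrying at most $q-2md$. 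Both proofs hinge on the same two ingredients — the expansion lemma $|S+[-d,d]|\ge|S|+2d$ and the use of security to force all $m$ key values to be represented at a relevant hint — so the mathematical content is the same; your averaging argument avoids the separate pigeonhole lemma and makes the stronger per-$v$ statement explicit, while the paper's reduction-to-KC is more economical given that Theorem~\ref{th-optimal} has already been proved in full. One small point worth keeping in your write-up is the observation that $q-2md>0$ before multiplying by $|V|\le g$; you did note this, and it is needed for the final step to be valid.
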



Comparing the formula $2md \le q(1 - m/g)$ in Theorem~\ref{ookc-bnd} with the
formula $2md \le q(1 - 1/g)$ in Theorem~\ref{th-optimal}, we see that the only
difference is a factor $m$ in $g$. This indicates that, on the same values of $(q,m,d)$, an AKC scheme has to use a bigger bandwidth parameter  $g$ compared to KC.
 Before proving  Theorem~\ref{ookc-bnd},
we first adjust Proposition~\ref{lemma:selectv0} to  the AKC setting, {as following}.

\begin{proposition} \label{lemma:pigeonhole-ookc}
    Given $\mathsf{params} = (q, m, g, d, aux)$ for an \emph{correct} and \emph{secure} AKC scheme,
    then there exists $v_0 \in \mathbb{Z}_g$ such that  $|S_{v_0}| \ge mq /g$.
\end{proposition}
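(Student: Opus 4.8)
The plan is to mimic the structure of Proposition~\ref{lemma:selectv0}, but to exploit the richer structure of AKC: here \textsf{Con} takes \emph{two} inputs $(\sigma_1, k_1)$, so for each fixed $k_1 \in \mathbb{Z}_m$ a pigeonhole argument over $\sigma_1$ gives a $v$-value covered by many $\sigma_1$'s, and then a second averaging (now over $k_1$) pins down a single $v_0$ that works simultaneously for all $k_1$. Concretely, for each $k_1 \in \mathbb{Z}_m$ and each $v \in \mathbb{Z}_g$, I would define $S_{v}^{(k_1)} = \{\sigma_1 \in \mathbb{Z}_q \mid \Pr[\textsf{Con}(\sigma_1, k_1, \textsf{params}) = v] > 0\}$, and note that for every $\sigma_1$ there is at least one $v$ with $\sigma_1 \in S_v^{(k_1)}$, so $\sum_{v} |S_v^{(k_1)}| \ge q$.

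The crucial move — and this is where the security of AKC enters — is to show that the set $S_v \triangleq \{\sigma_1 \mid \Pr[v \gets \textsf{Con}(\sigma_1, k_1, \textsf{params})] > 0 \text{ for some, equivalently any, } k_1\}$ is essentially independent of $k_1$. By the security definition, $\Pr[v = \tilde v \mid k_1 = \tilde k_1] = \Pr[v = \tilde v \mid k_1 = \tilde k_1']$ for all $\tilde v, \tilde k_1, \tilde k_1'$; since this conditional probability (over the random choice $\sigma_1 \gets \mathbb{Z}_q$ and \textsf{Con}'s coins) is positive exactly when $S_v^{(k_1)}$ is nonempty, and more precisely equals $\frac{1}{q}\sum_{\sigma_1 \in \mathbb{Z}_q}\Pr[\textsf{Con}(\sigma_1, k_1) = v]$, the security condition forces $\sum_{\sigma_1}\Pr[\textsf{Con}(\sigma_1,k_1)=v]$ to be the same for every $k_1$. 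Summing the inequality $\sum_v |S_v^{(k_1)}| \ge q$ over all $m$ values of $k_1$ gives $\sum_v \sum_{k_1} |S_v^{(k_1)}| \ge mq$, hence by pigeonhole there is a $v_0$ with $\sum_{k_1} |S_{v_0}^{(k_1)}| \ge mq/g$. Finally, using the $k_1$-independence just established, $|S_{v_0}^{(k_1)}|$ is the same for all $k_1$ (or at least their common union/value yields $|S_{v_0}| \ge mq/g$ after identifying $S_{v_0}$ with any single $S_{v_0}^{(k_1)}$), giving the claim $|S_{v_0}| \ge mq/g$.

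The main obstacle is making the "$S_v^{(k_1)}$ is independent of $k_1$" step fully rigorous from the \emph{qualitative} security statement: security as stated is about equality of conditional probabilities, and I must carefully translate positivity of supports and equality of summed masses into an actual statement about the \emph{sizes} of the support sets $S_{v_0}^{(k_1)}$, rather than merely their nonemptiness. One clean way around this is to avoid claiming set-equality and instead argue directly: define $S_{v_0}$ as any one of the (provably equinumerous) $S_{v_0}^{(k_1)}$, or alternatively bound $mq/g \le \sum_{k_1}|S_{v_0}^{(k_1)}| = m\,|S_{v_0}^{(k_1)}|$ for the common value, yielding $|S_{v_0}| \ge q/g \cdot$ — wait, that loses the factor $m$ — so the argument must genuinely use that $S_{v_0}$ is defined as $\bigcup_{k_1} S_{v_0}^{(k_1)}$ together with the fact that, by security, each $S_{v_0}^{(k_1)}$ already equals this union, so $|S_{v_0}| = |S_{v_0}^{(k_1)}|$ and $\sum_{k_1}|S_{v_0}^{(k_1)}| = m|S_{v_0}| \ge mq/g$. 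Pinning down this identification cleanly is the delicate point; everything else is a straightforward double-counting/pigeonhole computation analogous to the proof of Proposition~\ref{lemma:selectv0}.
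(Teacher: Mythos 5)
Your double-counting setup is fine and matches the paper's count: summing $\sum_{v}|S_v^{(k_1)}|\ge q$ over the $m$ values of $k_1$ and applying pigeonhole does give a $v_0$ with $\sum_{k_1}|S_{v_0}^{(k_1)}|\ge mq/g$ (this is exactly the paper's bound $|S'_{v_0}|\ge mq/g$ on the set of \emph{pairs} $(\sigma_1,k_1)$). The gap is in the last step, where you must convert $\sum_{k_1}|S_{v_0}^{(k_1)}|$ into a bound on $|S_{v_0}|=\bigl|\bigcup_{k_1}S_{v_0}^{(k_1)}\bigr|$. You try to do this by arguing from \emph{security} that the sets $S_{v_0}^{(k_1)}$ all coincide; this is wrong on two counts. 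First, even granting it, the arithmetic fails: if each $S_{v_0}^{(k_1)}$ equals the union, then $\sum_{k_1}|S_{v_0}^{(k_1)}|=m|S_{v_0}|\ge mq/g$ yields only $|S_{v_0}|\ge q/g$, not $mq/g$ --- the factor $m$ you are worried about losing is genuinely lost. Second, the premise is false for any correct AKC with $m\ge 2$: the sets $S_{v_0}^{(k_1)}$ for distinct $k_1$ are pairwise \emph{disjoint}, not equal. Indeed, if $\sigma_1\in S_{v_0}^{(k_1)}\cap S_{v_0}^{(k_1')}$ then, by correctness applied with $\sigma_2=\sigma_1$ (i.e., the argument of Proposition~\ref{lemma:determine-con} carried over to the AKC viewed as a KC with $k_1\gets\mathbb{Z}_m$), the deterministic value $\textsf{Rec}(\sigma_1,v_0,\textsf{params})$ must equal both $k_1$ and $k_1'$, forcing $k_1=k_1'$.

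That disjointness is precisely the missing idea, and it is what the paper uses: because for each fixed $\sigma_1$ and $v_0$ at most one $k_1$ can put $(\sigma_1,k_1)$ into $S'_{v_0}$, the projection onto the $\sigma_1$-coordinate is injective on $S'_{v_0}$, so $|S_{v_0}|=\sum_{k_1}|S_{v_0}^{(k_1)}|\ge mq/g$. Note that the resource you reach for (security, i.e., the equality of conditional distributions of $v$ given $k_1$) is not what powers this step --- security only constrains probability \emph{masses}, not support sizes, and cannot deliver the set identification you want. It is \emph{correctness} that does the work here; security enters the overall upper-bound theorem elsewhere (in showing each consensus class $K_i$ is non-empty), not in this proposition. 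Replace the ``the $S_{v_0}^{(k_1)}$ are all equal'' step with ``the $S_{v_0}^{(k_1)}$ are pairwise disjoint, by correctness and the determinism of $\textsf{Rec}$,'' and your proof goes through and coincides with the paper's.
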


\begin{proof}
    If $k_1$ is taken uniformly at random from $\mathbb{Z}_m$, 
    AKC can be considered as a special  KC scheme by treating
    $k_1 \gets \mathbb{Z}_m; v\gets \textsf{Con}(\sigma_1,k_1,\textsf{params})$ as $(k_1, v) \gets \textsf{Con}(\sigma_1,  \textsf{params})$. Consequently,   Proposition~\ref{lemma:determine-con} holds for this case.
    Denote $S^\prime_{v} \overset{\triangle}{=} \left\{(\sigma_1,k_1) \in \mathbb{Z}_q \times \mathbb{Z}_m  \mid \Pr\left[v' \gets \textsf{Con}(\sigma_1, k_1, \textsf{params}): v' = v \right] > 0\right\}$.
    Then,  $S_v$ defined in Proposition~\ref{lemma:selectv0} equals to the set {containing all the values of $\sigma_1$ appeared in  
     $(\sigma_1, \cdot) \in S^\prime_v$.}
    We run $\textsf{Con}(\sigma_1, k_1, \textsf{params})$
    for each pair of $(\sigma_1, k_1)\in   \mathbb{Z}_q\times \mathbb{Z}_m$.
    By  the pigeonhole principle, there must exist a $v_0 \in \mathbb{Z}_g$ such that $|S^\prime_{v_0}| \ge qm / g$.
   For any two pairs $(\sigma_1, k_1)$ and $(\sigma_1', k_1')$ in $S^\prime_{v_0}$,
        if $\sigma_1 = \sigma_1'$, from Proposition~\ref{lemma:determine-con}
        we derive that $k_1 = k_1'$, and then $(\sigma_1, k_1) = (\sigma_1', k_1')$.
        Hence, if $(\sigma_1, k_1)$ and $(\sigma_1', k_1')$ are different, then
        $\sigma_1 \neq \sigma_1'$, and so $|S_{v_0}| = |S_{v_0}'| \ge mq / g$.
  \end{proof}

\begin{proof} [Proof of Theorem \ref{ookc-bnd}]
    By viewing  AKC, with $k_1\gets \mathbb{Z}_q$, as a special KC scheme,
    all the  reasoning in the proof of Theorem~\ref{th-optimal} holds true now.
    At the end of the proof of Theorem~\ref{th-optimal}, we derive $|S_{v_0}| + 2md
    \le q$. By taking $|S_{v_0}|\geq mq/g$ according to  Proposition~\ref{lemma:pigeonhole-ookc}, the proof is finished.
 \end{proof}



\subsection{Construction and Analysis of AKCN}

\begin{algorithm}[H]
\caption{AKCN: Asymmetric KC with Noise}\label{kcs:2}
\begin{algorithmic}[1]
    \State{$\textsf{params} = (q, m, g, d, aux)$, where $aux =\emptyset$.}
\Procedure{Con}{$\sigma_1, k_1, \textsf{params}$} \Comment{$\sigma_1\in [0, q-1]$}
    \State{$v = \left\lfloor g \left(\sigma_1 + \left\lfloor k_1 q/m\right\rceil\right) / q \right\rceil \bmod g$} \label{op-akcn-v}
    \Return{$v$}
\EndProcedure
\Procedure{Rec}{$\sigma_2, v, \textsf{params}$}\Comment{$\sigma_2\in [0, q-1]$}
\State{$k_2 = \left\lfloor m (v / g - \sigma_2 / q)\right\rceil \bmod m$}
    \Return{$k_2$}
\EndProcedure
\end{algorithmic}
\end{algorithm}

The AKC scheme, referred to as asymmetric KC with noise (AKCN),  is depicted in Algorithm~\ref{kcs:2}. {We note that, in some sense, AKCN could be viewed as the generalization and optimization of the consensus mechanism proposed in \cite{LYUBASHEVSKY2013ON} for CPA-secure public-key encryption.}
For AKCN, we can offline compute and store 
$k_1$ and $g \lfloor k_1 q/m \rceil$ in order to accelerate online performance.


%

\subsubsection{Correctness and Security of AKCN}
\begin{theorem}
    Suppose  the  parameters of AKCN satisfy $(2d + 1)m < q\left(1 - \frac{m}{g}\right)$. Then,  the AKCN scheme described in Algorithm \ref{kcs:2}  is \emph{correct}.
\end{theorem}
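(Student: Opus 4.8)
The plan is to follow exactly the template of the proof of Theorem~\ref{th:kcs1-correct}, since AKCN is structurally the asymmetric mirror of OKCN. First I would apply Fact~\ref{lemma:cir}: from $|\sigma_1 - \sigma_2|_q \le d$ we get $\theta \in \mathbb{Z}$ and $\delta \in [-d, d]$ with $\sigma_2 = \sigma_1 + \theta q + \delta$. Then I would unwind the two rounding operations hidden in line~\ref{op-akcn-v}. Write $\lfloor k_1 q/m \rceil = k_1 q/m + \varepsilon_2$ with $|\varepsilon_2| \le 1/2$, and note that the outer $\lfloor\cdot\rceil \bmod g$ produces some $\theta' \in \mathbb{Z}$ and some $\varepsilon_1$ with $|\varepsilon_1| \le 1/2$ such that $v = g\bigl(\sigma_1 + \lfloor k_1 q/m \rceil\bigr)/q + \varepsilon_1 - \theta' g$. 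Combining the two gives
\[
\frac{v}{g} = \frac{\sigma_1}{q} + \frac{k_1}{m} + \frac{\varepsilon_2}{q} + \frac{\varepsilon_1}{g} - \theta'.
\]

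Next I would substitute this expression and $\sigma_2 = \sigma_1 + \theta q + \delta$ into the formula for $k_2$ in \textsf{Rec}. The $\sigma_1/q$ terms cancel, the integer quantities $m\theta'$ and $m\theta$ vanish under the reduction modulo $m$, and one is left with
\[
k_2 = \left\lfloor k_1 + \frac{m\varepsilon_2}{q} + \frac{m\varepsilon_1}{g} - \frac{m\delta}{q} \right\rceil \bmod m .
\]
So it remains only to show that the error term $\frac{m\varepsilon_2}{q} + \frac{m\varepsilon_1}{g} - \frac{m\delta}{q}$ has absolute value strictly below $1/2$, for then the rounding returns $k_1$, and since $k_1 \in \mathbb{Z}_m$ the reduction modulo $m$ leaves it unchanged, i.e.\ $k_2 = k_1$.

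Finally I would estimate: $\bigl|\frac{m\varepsilon_2}{q}\bigr| \le \frac{m}{2q}$, $\bigl|\frac{m\varepsilon_1}{g}\bigr| \le \frac{m}{2g}$, and $\bigl|\frac{m\delta}{q}\bigr| \le \frac{md}{q}$, so the error term is bounded in absolute value by $\frac{(2d+1)m}{2q} + \frac{m}{2g}$. The hypothesis $(2d+1)m < q\bigl(1 - \frac{m}{g}\bigr)$ is precisely equivalent to $\frac{(2d+1)m}{q} + \frac{m}{g} < 1$, which makes this bound strictly less than $\frac12$, completing the argument.

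I do not expect a genuine obstacle: the proof is a routine unwinding of nested floor/round operations, the AKC analogue of Theorem~\ref{th:kcs1-correct}. The only point deserving care is the bookkeeping of the two independent integer ambiguities — the $-\theta' g$ coming from reducing $v$ modulo $g$ and the $\theta q$ coming from Fact~\ref{lemma:cir} — and verifying that after multiplying by $m$ and reducing modulo $m$ neither contributes anything; one must also be careful that the two rounding-error constants are exactly $1/2$ so that the resulting inequality matches the stated parameter constraint tightly (and, as in Theorem~\ref{th:kcs1-correct}, one could note the slack: replacing $2d+1$ by $2d$ would suffice if no extra noise were involved, which is why the $+1$ appears here).
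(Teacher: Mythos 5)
Your proof is correct and follows essentially the same route as the paper's: decompose the two roundings in \textsf{Con} as $v/g = \sigma_1/q + k_1/m + \epsilon_1/q + \epsilon_2/g - \theta'$, substitute together with Fact~\ref{lemma:cir} into \textsf{Rec}, kill the integer multiples of $m$ under the modular reduction, and bound the residual error by $\frac{(2d+1)m}{2q} + \frac{m}{2g} < \frac{1}{2}$, which is exactly the stated hypothesis. The only nitpick is your closing aside: the $+1$ in $(2d+1)$ here stems from the deterministic rounding error of $\lfloor k_1 q/m\rceil$ (it vanishes when $m \mid q$), not from added noise as in OKCN, but this does not affect the argument.
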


\begin{proof}
    From the formula generating $v$, we know that there exist
    $\epsilon_1, \epsilon_2 \in \mathbb{R}$ and $\theta \in \mathbb{Z}$, where $|\epsilon_1| \le 1/2$
    and $|\epsilon_2| \le 1/2$, such that
    \begin{align*}
        v = \frac{g}{q}\left(\sigma_1 + \left(\frac{k_1q}{m} + \epsilon_1
            \right)\right) + \epsilon_2 + \theta g
     \end{align*}

     Taking this into the formula computing $k_2$ in \textsf{Rec}, we have
    \begin{align*}
        k_2 &= \left\lfloor m(v / g - \sigma_2 / q) \right\rceil \bmod m \\
        &= \left\lfloor m \left(\frac{1}{q}(\sigma_1 + k_1 q / m + \epsilon_1) +
           \frac{\epsilon_2}{g} + \theta - \frac{\sigma_2}{q}\right)\right\rceil \bmod m\\
    &= \left\lfloor k_1 + \frac{m}{q}\left(\sigma_1 - \sigma_2 \right)+
        \frac{m}{q} \epsilon_1 + \frac{m}{g} \epsilon_2 \right\rceil \bmod m
    \end{align*}
    By Fact~\ref{lemma:cir} (page \pageref{lemma:cir}), there exist  $\theta' \in \mathbb{Z}$ and $\delta \in [-d, d]$ such that
    $\sigma_1 = \sigma_2 + \theta' q + \delta$.  Hence,
    \begin{align*}
        k_2 = \left\lfloor k_1 + \frac{m}{q} \delta +
        \frac{m}{q} \epsilon_1 + \frac{m}{g} \epsilon_2 \right\rceil \bmod m
    \end{align*}
    Since $|m\delta / q + m \epsilon_1/q + m \epsilon_2 / g| \le md/q + m/2q +
    m/2g < 1/2$, $k_1 = k_2$.
  \end{proof}

\begin{theorem}\label{th-akcn-secure}
    The AKCN scheme is \emph{secure}. Specifically,
    $v$ is independent of $k_1$ when $\sigma_1\gets \mathbb{Z}_q$.
\end{theorem}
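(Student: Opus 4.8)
The plan is to show that for a uniformly random $\sigma_1 \gets \mathbb{Z}_q$, the hint $v$ produced by line~\ref{op-akcn-v} of Algorithm~\ref{kcs:2} has a distribution that does not depend on the input $k_1$. The key observation is that, once $k_1$ is fixed, the quantity $\sigma_1 + \lfloor k_1 q/m \rceil$ is simply a deterministic \emph{shift} of $\sigma_1$ modulo $q$, and $\sigma_1 \mapsto (\sigma_1 + c) \bmod q$ is a bijection of $\mathbb{Z}_q$ for any fixed integer shift $c$. Hence, if $\sigma_1$ is uniform over $\mathbb{Z}_q$, then so is $\sigma_1 + \lfloor k_1 q/m \rceil \bmod q$, \emph{regardless of the value of $k_1$}. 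First I would make precise that the expression inside the floor in line~\ref{op-akcn-v} depends on $\sigma_1$ only through $\left(\sigma_1 + \lfloor k_1 q/m\rceil\right) \bmod q$: because of the outer ``$\bmod\ g$'' and the factor $g/q$, adding a multiple of $q$ to the argument changes $\lfloor g(\cdot)/q \rceil$ by a multiple of $g$, which is then killed by the reduction mod $g$. So $v = \phi\!\left(\left(\sigma_1 + \lfloor k_1 q/m\rceil\right)\bmod q\right)$ for a fixed function $\phi:\mathbb{Z}_q \to \mathbb{Z}_g$ that does not involve $k_1$.

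Next I would combine these two facts. Fix arbitrary $\tilde v \in \mathbb{Z}_g$ and arbitrary $\tilde k_1 \in \mathbb{Z}_m$. Then
\begin{align*}
\Pr\left[v = \tilde v \mid k_1 = \tilde k_1\right]
 &= \Pr_{\sigma_1 \gets \mathbb{Z}_q}\!\left[\phi\!\left(\left(\sigma_1 + \lfloor \tilde k_1 q/m\rceil\right)\bmod q\right) = \tilde v\right] \\
 &= \Pr_{\tau \gets \mathbb{Z}_q}\!\left[\phi(\tau) = \tilde v\right] = \frac{|\phi^{-1}(\tilde v)|}{q},
\end{align*}
where the middle equality uses the change of variable $\tau = (\sigma_1 + \lfloor \tilde k_1 q/m\rceil)\bmod q$, which is uniform on $\mathbb{Z}_q$ by the shift-bijection remark. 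Since the final expression $|\phi^{-1}(\tilde v)|/q$ is manifestly independent of $\tilde k_1$, we get $\Pr[v=\tilde v \mid k_1 = \tilde k_1] = \Pr[v=\tilde v \mid k_1 = \tilde k_1']$ for all $\tilde k_1, \tilde k_1' \in \mathbb{Z}_m$, which is exactly the security condition in the definition of AKC. (Note that \textsf{Con} for AKCN is in fact deterministic given $(\sigma_1,k_1)$, so no averaging over internal coins is needed, though the argument would be unchanged if it were randomized.)

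I do not expect a serious obstacle here; the only point requiring a little care is the first step — justifying rigorously that $v$ is a function of $(\sigma_1 + \lfloor k_1 q/m\rceil) \bmod q$ alone and not of the ``unreduced'' integer $\sigma_1 + \lfloor k_1 q/m\rceil$. This is where one must track how an additive shift by $q$ interacts with the scaling by $g/q$, the outer rounding $\lfloor\cdot\rceil$, and the reduction mod $g$; writing $\sigma_1 + \lfloor k_1 q/m\rceil = \tau + \ell q$ with $\tau \in [0,q-1]$ and checking $\lfloor g(\tau+\ell q)/q\rceil \bmod g = \lfloor g\tau/q\rceil \bmod g$ closes this gap. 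Everything else is the bijection-of-a-cyclic-group argument above. This mirrors, in the asymmetric setting, the role played by the one-to-one maps $f$ and $f'$ in the proof of Theorem~\ref{th:kcs1-secure}.
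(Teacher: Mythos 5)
Your proof is correct and takes essentially the same route as the paper's: both arguments reduce to the facts that $v$ is a fixed function $\phi$ of $\left(\sigma_1 + \lfloor k_1 q/m\rceil\right)\bmod q$ (the multiple-of-$q$ shift contributing a multiple of $g$ that is killed by the outer $\bmod\ g$) and that the shift by $\lfloor k_1 q/m\rceil$ is a bijection of $\mathbb{Z}_q$, so the preimage count you call $|\phi^{-1}(\tilde v)|$ is exactly the paper's $|\Sigma(\tilde v,\tilde k)|=|\Sigma(\tilde v,0)|$, independent of $\tilde k$. The only cosmetic difference is that the paper phrases the bijection as a map between the preimage sets $\Sigma(\tilde v,0)\to\Sigma(\tilde v,\tilde k)$ rather than as a change of variable on $\sigma_1$.
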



\begin{proof}
For arbitrary $\tilde{v}\in \mathbb{Z}_g$ and arbitrary  $\tilde{k}_1,\tilde{k}^\prime_1\in \mathbb{Z}_m$, we prove that $\Pr[v=\tilde{v}|k_1=\tilde{k}_1]=\Pr[v=\tilde{v}|k_1=\tilde{k}^\prime_1]$ when   $\sigma_1\gets \mathbb{Z}_q$.

    For any $(\tilde{k}, \tilde{v})$ in $\mathbb{Z}_m \times \mathbb{Z}_g$,
    the event $(v =\tilde{v} \mid k_1 = \tilde{k})$ is equivalent to the event that there
    exists $\sigma_1 \in \mathbb{Z}_q$ such that
    $\tilde{v} = \lfloor g(\sigma_1 + \lfloor \tilde{k} q / m \rceil)/q \rceil \bmod g$.
    Note that $\sigma_1 \in \mathbb{Z}_q$ satisfies $\tilde{v} = \lfloor g(\sigma_1 + \lfloor \tilde{k} q / m \rceil)/q \rceil \bmod g$,
    if and only if  there exist $\epsilon \in (-1/2, 1/2]$ and $\theta \in \mathbb{Z}$ such that
    $\tilde{v} = g(\sigma_1 + \lfloor \tilde{k} q / m \rceil) / q + \epsilon - \theta g$.
    That is, $\sigma_1 = (q (\tilde{v} - \epsilon) / g - \lfloor \tilde{k} q / m\rceil) \bmod q$,
    for some $\epsilon \in (-1/2, 1/2]$.
    Let $\Sigma(\tilde{v}, \tilde{k}) =
    \{\sigma_1 \in \mathbb{Z}_q \mid \exists \epsilon \in (-1/2, 1/2] \  s.t. \
        \sigma_1 = (q (\tilde{v} - \epsilon) / g - \lfloor \tilde{k} q / m\rceil)   \bmod q \}$.
      Defining   the map $\phi: \Sigma(\tilde{v}, 0) \rightarrow \Sigma(\tilde{v}, \tilde{k})$,  by setting $\phi(x) = \left(x -  \lfloor \tilde{k} q / m \rceil \right) \bmod q$. Then $\phi$ is obviously a one-to-one map.
    Hence, the cardinality of $\Sigma(\tilde{v}, \tilde{k})$ is irrelevant to $\tilde{k}$. Specifically, for arbitrary $\tilde{v}\in \mathbb{Z}_g$ and arbitrary  $\tilde{k}_1,\tilde{k}^\prime_1\in \mathbb{Z}_m$, it holds that  $\left|\Sigma(\tilde{v}, \tilde{k}_1)\right|=\left|\Sigma(\tilde{v}, \tilde{k}^\prime_1)\right|=\left|\Sigma(\tilde{v}, 0)\right|$

    Now, for arbitrary $\tilde{v}\in \mathbb{Z}_g$ and arbitrary $\tilde{k}\in \mathbb{Z}_m$,  when $\sigma_1\gets \mathbb{Z}_q$  we have that $\Pr[v = \tilde{v} \mid k_1 = \tilde{k}] =
    \Pr\left[\sigma_1 \in \Sigma(\tilde{v}, \tilde{k}) \mid k_1
        = \tilde{k}\right]=|\Sigma(\tilde{v}, \tilde{k})| / q = |\Sigma(\tilde{v}, 0)|/q$.  The right-hand side only depends on $\tilde{v}$, and so $v$ is independent of  $k_1$. \end{proof}

\subsubsection{A Simplified Variant of AKCN for Special Parameters} \label{section-special}
We consider the parameters $q = g = 2^{\bar{q}}, m = 2^{\bar{m}}$ for positive integers $\bar{q}, \bar{m}$.
Then the two rounding operations in line~\ref{op-akcn-v} of \textsf{Con} (in
Algorithm~\ref{kcs:2}) can be directly eliminated, since only integers are involved in the computation.
We have the following variant described in Algorithm \ref{kcs:2-power2}. Note that, in Algorithm~\ref{kcs:2-power2}, the modular and multiplication/division operations can be implemented by simple bitwise operations.


\begin{algorithm}[H]
\caption{AKCN power 2 }\label{kcs:2-power2}
\begin{algorithmic}[1]
\State{$\textsf{params}: q = g = 2^{\bar{q}}, m = 2^{\bar{m}}, aux = \{G = q/m\}$}
\Procedure{Con}{$\sigma_1, k_1, \textsf{params}$}
\State{$v = \left(\sigma_1 + k_1 \cdot G\right) \bmod q$, where $k_1\cdot G$ can be offline computed}
\Return{$v$}
\EndProcedure
\Procedure{Rec}{$\sigma_2, v, \textsf{params}$}
\State{$k_2 = \left\lfloor (v - \sigma_2) / G \right\rceil \bmod m$}
\label{op-akcn-power2-k2}
\Return{$k_2$}
\EndProcedure
\end{algorithmic}
\end{algorithm}

For the protocol variant presented in Algorithm \ref{kcs:2-power2}, its correctness and security can be proved with a relaxed constraint on the parameters of $(q,d,m)$, as shown in the following corollary.

\begin{corollary} \label{claim:2-power2-correct}
    If $q$ and $m$ are power of $2$, and $d$, $m$ and $q$ satisfy $2md < q$, then the AKCN scheme described in
    Algorithm~\ref{kcs:2-power2} is \emph{correct} and \emph{secure}.
\end{corollary}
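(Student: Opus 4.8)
The plan is to treat the two assertions of the corollary — correctness and security — separately, in both cases reducing to the analysis already carried out for the generic AKCN scheme in Algorithm~\ref{kcs:2}.

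For correctness, I would first pin down the exact behavior of \textsf{Con} under the special parameters. When $q = g = 2^{\bar q}$ and $m = 2^{\bar m}$ with $\bar m \le \bar q$, we have $m \mid q$, so $\lfloor k_1 q/m \rceil = k_1 G$ exactly, and the outer rounding $\lfloor g(\cdot)/q\rceil$ in line~\ref{op-akcn-v} of Algorithm~\ref{kcs:2} is the identity on integer arguments since $g = q$; hence the \textsf{Con} of Algorithm~\ref{kcs:2-power2} outputs precisely $v = (\sigma_1 + k_1 G)\bmod q$, i.e.\ $v = \sigma_1 + k_1 G + \theta q$ for some $\theta \in \mathbb{Z}$. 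Next I would invoke Fact~\ref{lemma:cir} to write $\sigma_1 = \sigma_2 + \theta' q + \delta$ with $\theta' \in \mathbb{Z}$ and $\delta \in [-d,d]$, substitute into $k_2 = \lfloor (v-\sigma_2)/G\rceil \bmod m$, and use $G = q/m$ to get
\[
k_2 = \left\lfloor k_1 + (\theta + \theta')m + \frac{\delta}{G}\right\rceil \bmod m = \left(k_1 + \left\lfloor \frac{\delta}{G}\right\rceil\right)\bmod m .
\]
The final step is the estimate $|\delta/G| = |\delta|\,m/q \le dm/q < 1/2$, which holds exactly because $2md < q$; this forces $\lfloor \delta/G\rceil = 0$, so $k_2 = k_1 \bmod m = k_1$ (recalling $k_1 \in \mathbb{Z}_m$).

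For security, I would simply note that Algorithm~\ref{kcs:2-power2} is the instantiation of the generic Algorithm~\ref{kcs:2} with $q = g = 2^{\bar q}$ and $m = 2^{\bar m}$: as just explained, the two rounding operations in \textsf{Con} become vacuous because $m \mid q$ and $g = q$, so the map $(\sigma_1, k_1) \mapsto v$ computed by Algorithm~\ref{kcs:2-power2} coincides identically with that of Algorithm~\ref{kcs:2}. Hence the security conclusion follows verbatim from Theorem~\ref{th-akcn-secure}: when $\sigma_1 \gets \mathbb{Z}_q$, $v$ is independent of $k_1$.

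I do not anticipate a genuine obstacle; the one point deserving care is the claim that eliminating the rounding operations yields an \emph{identical} algorithm rather than merely a closely related one — that is, verifying $m \mid q$ makes $\lfloor k_1 q/m\rceil$ exact and that $g = q$ trivializes the outer rounding. Once that identification is made, correctness is just the computation in the proof of the general AKCN correctness theorem with the noise terms $\epsilon_1, \epsilon_2$ set to zero (they disappeared together with the roundings), which is precisely why the constraint relaxes from $(2d+1)m < q(1 - m/g)$ to $2md < q$, and security is immediate by inheritance.
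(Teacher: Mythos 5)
Your proposal is correct and follows essentially the same route as the paper: a direct substitution of $\sigma_1 = \sigma_2 + \theta' q + \delta$ (via Fact~\ref{lemma:cir}) into the \textsf{Rec} formula, yielding $k_2 = (k_1 \pm \lfloor \delta/G\rceil)\bmod m$ and concluding from $|\delta|m/q < 1/2$, with security inherited from the generic Algorithm~\ref{kcs:2}. Your additional care in verifying that the roundings in the generic \textsf{Con} become exact when $m \mid q$ and $g = q$ (so the special-case algorithm is literally an instantiation, not merely an analogue) is a worthwhile explicit check that the paper leaves implicit.
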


\begin{proof}
    For correctness, suppose $|\sigma_1 - \sigma_2|_q \le d$, then there exit
    $\delta \in [-d, d]$ and $\theta \in \mathbb{Z}$
    such that $\sigma_2 = \sigma_1 +  \theta q + \delta$.
    From the formula calculating $v$, there exists $\theta' \in \mathbb{Z}$ such
    that $v = \sigma_1 + k_1 2^{\bar{q} - \bar{m}} + \theta' q$.
    Taking these into the formula computing $k_2$, line~\ref{op-akcn-power2-k2}
    of \textsf{Rec} in Algorithm~\ref{kcs:2-power2},
    we have
    \begin{align*}
        k_2 &= \lfloor \left(v - \sigma_1 - \delta - \theta q\right)/2^{\bar{q} - \bar{m}} \rceil \bmod m \\
        &= \lfloor (k_1 2^{\bar{q} - \bar{m}} - \delta) / 2^{\bar{q} - \bar{m}} \rceil \bmod m \\
        &= \left(k_1 - \lfloor \delta / 2^{\bar{q} - \bar{m}} \rceil \right) \bmod m
    \end{align*}
    If $2md < q$, then $|\delta / 2^{\bar{q} - \bar{m}}| < 1/2$, so that $k_1 = k_2$.

    For security, as a special case of the generic AKCN scheme in  Algorithm~\ref{kcs:2}, the
    security of the AKCN scheme in  Algorithm~\ref{kcs:2-power2} directly follows from that of
    Algorithm~\ref{kcs:2}.   \end{proof}

\subsection{On KC/AKC vs.  Fuzzy Extractor}
Our formulations of KC and AKC are abstractions of the core ingredients of previous constructions of KE and PKE from LWE/RLWE. As we shall see in the subsequent sections, the design and analysis of KE and PKE from LWE, LWR and RLWE can be reduced to KC and AKC. We also note that KC and AKC are similar to fuzzy extractor proposed in \cite{fuzzyextractor},
which extracts shared-keys from biometrics and noisy data.
In this section, we make some discussions on the relationship between KC/AKC and fuzzy extractor.

The differences between the  definitions of KC/AKC and that of fuzzy extractor lie mainly in the following  ways. Firstly, AKC was not considered within the definitional framework of fuzzy extractor.  Secondly, the metric $|\cdot|_q$ we use in defining  KC and AKC was not considered for fuzzy extractor. 
Thirdly,  in the definitions of KC and AKC, the algorithm $\textsf{Rec}$ (corresponding $\textsf{Rep}$ for fuzzy extractor) is mandated to be \emph{deterministic}, while in the formulation of fuzzy extractor it is probabilistic.  Fourthly, in the formulation of fuzzy extractor \cite{fuzzyextractor},  $w$, $R$ and $P$ (corresponding $\sigma_1$, $k$ and $v$ in KC/AKC) are binary strings; while in the  definitions of KC/AKC, the corresponding values $\sigma_1\in \mathbb{Z}_q$,  $k\in \mathbb{Z}_m$ and  $v\in \mathbb{Z}_g$  have  more structured ranges, which are helpful in deriving the exact upper bound.
  Finally, for the security of KC and AKC, we require that the signal value $v$ be independent of the shared-key $k_1$ (that can be subject to arbitrary distribution for AKC);
   roughly speaking, in the definition of fuzzy extractor \cite{fuzzyextractor},  it is required that the joint distribution $(R,P)$ be statistically close to $(U_l,P)$ where $R\in \{0,1\}^l$ and $U_l$ is the uniform distribution over $\{0,1\}^l$. 



A generic upper bound on the length of key  extracted by  fuzzy extractor is proposed in \cite[Appendix C]{fuzzyextractor}. In comparison, the  upper bounds for KC and AKC proved in this work are  more versatile and precise w.r.t. the  metric $|\cdot|_q$. For example, the effect of the length of the signal $v$, i.e., the bandwidth parameter $g$,   is not considered in the upper bound for fuzzy extractor, but is taken into account  in the  upper bounds for KC and AKC.

%

A  generic construction of fuzzy extractor from \emph{secure sketch}, together with a generic construction of \emph{secure sketch} for \emph{transitive metric spaces}, is proposed in \cite{fuzzyextractor}. We note  that $(\mathbb{Z}_q, |\cdot|_q)$ can be naturally seen  as a \emph{transitive matric space}.
Compared to the secure sketch based  generic constructions of fuzzy extractor, our constructions of KC and AKC are direct and more efficient.

{In spite of some similarities between KC/AKC and fuzzy extractors, we remark that before our this work the relation between fuzzy extractor and  KE from LWR and its variants is actually opaque. 
 Explicitly identifying and formalizing KC/AKC and reducing lattice-based  KE  to KC/AKC in a \emph{black-box} modular way,  with inherent bounds  on  what could or couldn't be done, cut the complexity of future design and analysis of these  cryptosystems.
}


%
%

\section{LWR-Based Key Exchange from KC and AKC}\label{sec-lwrke}
In this section, we present the applications of OKCN and AKCN to key exchange protocols based on LWR.
The LWR-based key exchange (KE) is depicted in \figurename~\ref{kex:lwr}.
Denote by $(n, l_A, l_B, q, p, KC, \chi)$ the system parameters,
where $p | q$, and  $p$ and $q$ are chosen to be power of $2$.
Let $KC = (\mathsf{params} = (p, m, g, d, aux),
\mathsf{Con}, \mathsf{Rec})$ be a \emph{correct} and \emph{secure}
key consensus scheme, $\chi$ be a small noise distribution over $\mathbb{Z}_q$, and
$\mathsf{Gen}$ be a pseudo-random generator (PRG) generating the matrix $\mathbf{A}$ from a small seed.
For presentation simplicity, we assume $\mathbf{A}\in \mathbb{Z}_q^{n\times n}$ to be square matrix.
The length of the random seed, i.e., $\kappa$, is typically set to be $256$.

The actual session-key is derived from $\mathbf{K}_1$ and $\mathbf{K}_2$ via some key derivation function $KDF$. 
For presentation simplicity, the functions \textsf{Con} and \textsf{Rec} are applied to
matrices, meaning that they are applied to each of the coordinates respectively.
We shall see that the corresponding elements of $\mathbf{\Sigma}_1$ and $\mathbf{\Sigma}_2$ are close to each other.
Then $\mathsf{Con}$ and $\mathsf{Rec}$ are applied to them to reach consensus bits $\mathbf{K}_1$ and $\mathbf{K}_2$.

For presentation simplicity, we describe the LWR-based key exchange protocol from a KC scheme. But it can be trivially adapted to work on any \emph{correct} and \emph{secure} AKC scheme. In this case, the responder user Bob simply chooses $\mathbf{K}_2\gets \mathbb{Z}_m^{l_A\times l_B}$, and the output of $\textsf{Con}(\boldsymbol{\Sigma}_2, \mathbf{K}_2, \textsf{params})$ is simply {defined to be $\mathbf{V}$}.  For presentation simplicity, in the following security definition and analysis we also simply assume that the output of the PRG $\textsf{Gen}$ is truly random (which is simply assumed to be a random oracle in \cite{newhope15}).

\noindent
\begin{figure}[t]
\centering
\includegraphics[scale=1]{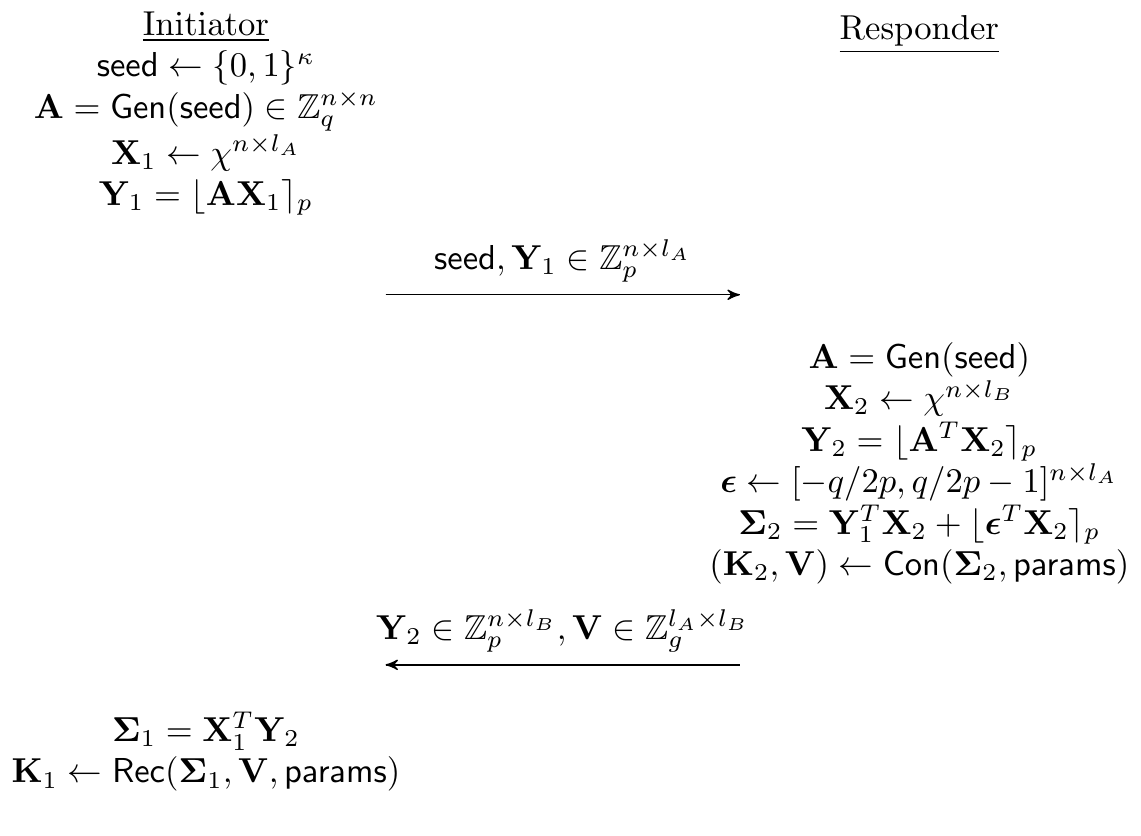}\label{ke-lwr}
\caption{LWR-based key exchange from KC and AKC,
where $\mathbf{K}_1,\mathbf{K}_2\in \mathbb{Z}_m^{l_A\times l_B}$ and $\left|\mathbf{K}_1\right|=\left|\mathbf{K}_2\right|=l_A\l_B|m|$.}
\label{kex:lwr}
\end{figure}

\subsection{Correctness Analysis}\label{SEC-LWR-1}
In this subsection, we analyze the correctness (specifically, the error probability) of the LWR-based KE protocol.
For any integer $x$, let $\{x\}_p$ denote $x - \frac{q}{p} \lfloor x \rceil_p$, where  $\lfloor x \rceil_p=\lfloor \frac{p}{q}x \rceil$.
Then, for any integer $x$, $\{x\}_p \in [-q/2p, q/2p-1]$,
hence $\{x\}_p$ can be naturally regarded as an element in $\mathbb{Z}_{q/p}$.
In fact, $\{x\}_p$ is equal  to  $x \bmod q/p$, where the result is represented in $[-q/2p, q/2p-1]$.
When the notation $\{\cdot\}_p$ is applied to a matrix, it means $\{\cdot\}_p$ applies to every element of the matrix respectively.

We have $\boldsymbol{\Sigma}_2 = \mathbf{Y}_1^T \mathbf{X}_2 + \lfloor \boldsymbol{\epsilon}^T \mathbf{X}_2 \rceil_p = \lfloor \mathbf{A} \mathbf{X}_1 \rceil_p^T \mathbf{X}_2 + \lfloor \boldsymbol{\epsilon}^T \mathbf{X}_2 \rceil_p
= \frac{p}{q} (\mathbf{A} \mathbf{X}_1 - \{\mathbf{A} \mathbf{X}_1\}_p)^T \mathbf{X}_2 + \lfloor \boldsymbol{\epsilon}^T \mathbf{X}_2 \rceil_p$. And $\boldsymbol{\Sigma}_1 = \mathbf{X}_1^T \mathbf{Y}_2 = \mathbf{X}_1^T \lfloor \mathbf{A}^T \mathbf{X}_2 \rceil_p
= \frac{p}{q} (\mathbf{X}_1^T \mathbf{A}^T \mathbf{X}_2 - \mathbf{X}_1^T \{\mathbf{A}^T \mathbf{X}_2\}_p)$.
Hence,
\begin{align*} \boldsymbol{\Sigma}_2 - \boldsymbol{\Sigma}_1 &=
\frac{p}{q}(\mathbf{X}_1^T \{\mathbf{A}^T \mathbf{X}_2\}_p - \{\mathbf{A} \mathbf{X}_1\}_p^T \mathbf{X}_2) +
\lfloor \mathbf{\epsilon}^T \mathbf{X}_2 \rceil_p \mod p \\
&= \left\lfloor\frac{p}{q}(\mathbf{X}_1^T \{\mathbf{A}^T \mathbf{X}_2\}_p - \{\mathbf{A} \mathbf{X}_1\}_p^T \mathbf{X}_2 +
 \mathbf{\epsilon}^T \mathbf{X}_2 )\right\rceil \mod p
\end{align*}

The general idea is that $\mathbf{X}_1, \mathbf{X}_2, \boldsymbol{\epsilon}, \{\mathbf{A}^T \mathbf{X}_2\}_p$ and
$\{\mathbf{A} \mathbf{X}_1\}_p$ are small enough,
so that $\boldsymbol{\Sigma}_1$ and $\boldsymbol{\Sigma}_2$ are close.
If $|\boldsymbol{\Sigma}_1 - \boldsymbol{\Sigma}_2|_p \le d$,
the \emph{correctness} of the underlying $KC$ guarantees $\mathbf{K}_1 = \mathbf{K}_2$.
For given concrete parameters,
we numerically derive the probability of $|\boldsymbol{\Sigma}_2 - \boldsymbol{\Sigma}_1|_p > d$
by numerically calculating the distribution of
$\mathbf{X}_1^T \{\mathbf{A}^T \mathbf{X}_2\}_p - (\{\mathbf{A} \mathbf{X}_1\}_p^T \mathbf{X}_2 - \epsilon^T\mathbf{X}_2)$
 for the case of $l_A = l_B = 1$, then applying the \emph{union bound}.
The independency between variables indicated by the following  Theorem~\ref{th:independent} can greatly simplify the calculation.

Let $\mathsf{Inv}(\mathbf{X}_1, \mathbf{X}_2)$ denote the event that there exist invertible elements of ring $\mathbb{Z}_{q/p}$
in both vectors $\mathbf{X}_1$ and $\mathbf{X}_2$.
$\mathsf{Inv}(\mathbf{X}_1, \mathbf{X}_2)$ happens with \emph{overwhelming} probability in our application.

\begin{lemma} \label{prop-s} Consider the  case of $l_A = l_B = 1$.
For any $a \in \mathbb{Z}_{q/p}, \mathbf{x} \in \mathbb{Z}^n_{q/p}$,
denote $S_{\mathbf{x}, a} = \{\mathbf{y} \in \mathbb{Z}_{q/p}^n \mid \mathbf{x}^T \mathbf{y} \bmod (q/p) = a\}$.
For any fixed $a \in \mathbb{Z}_{q/p}$, conditioned on $\mathsf{Inv}(\mathbf{X}_1, \mathbf{X}_2)$ and
$\mathbf{X}_1^T \mathbf{A}^T \mathbf{X}_2 \bmod (q/p) = a$,
the random vectors $\{\mathbf{A}^T \mathbf{X}_2\}_p$ and $\{\mathbf{A} \mathbf{X}_1\}_p$ are independent,
and are subjected to uniform distribution over $S_{\mathbf{X}_1, a}, S_{\mathbf{X}_2, a}$ respectively.
\end{lemma}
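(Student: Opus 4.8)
The plan is to reduce everything modulo $Q\triangleq q/p$ and then exhibit the joint law of the pair $\bigl(\{\mathbf{A}\mathbf{X}_1\}_p,\{\mathbf{A}^T\mathbf{X}_2\}_p\bigr)$ as the image of the uniform distribution on $\mathbb{Z}_Q^{n\times n}$ under a single linear map. First I would observe that, since $p\mid q$ and $\mathbf{A}$ is (assumed) uniform over $\mathbb{Z}_q^{n\times n}$, its reduction $\bar{\mathbf{A}}\triangleq\mathbf{A}\bmod Q$ is uniform over $\mathbb{Z}_Q^{n\times n}$; moreover, as elements of $\mathbb{Z}_Q^n$ one has $\{\mathbf{A}\mathbf{X}_1\}_p=\bar{\mathbf{A}}\bar{\mathbf{X}}_1$ and $\{\mathbf{A}^T\mathbf{X}_2\}_p=\bar{\mathbf{A}}^T\bar{\mathbf{X}}_2$ with $\bar{\mathbf{X}}_i\triangleq\mathbf{X}_i\bmod Q$, and the conditioning quantity $\mathbf{X}_1^T\mathbf{A}^T\mathbf{X}_2\bmod Q=a$ depends only on these reductions. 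Since $\mathsf{Inv}(\mathbf{X}_1,\mathbf{X}_2)$ asserts exactly that $\bar{\mathbf{X}}_1$ and $\bar{\mathbf{X}}_2$ each have a coordinate that is a unit of $\mathbb{Z}_Q$, it suffices to fix such admissible values $\mathbf{X}_1=\mathbf{x}_1,\mathbf{X}_2=\mathbf{x}_2$ and argue over the randomness of $\mathbf{A}$ alone. Set $\Phi:\mathbb{Z}_Q^{n\times n}\to\mathbb{Z}_Q^n\times\mathbb{Z}_Q^n$, $\Phi(\mathbf{B})=(\mathbf{B}\bar{\mathbf{x}}_1,\mathbf{B}^T\bar{\mathbf{x}}_2)$.

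Next I would determine the image of $\Phi$. The inclusion $\mathrm{Im}(\Phi)\subseteq T\triangleq\{(\mathbf{u},\mathbf{w}):\bar{\mathbf{x}}_2^T\mathbf{u}=\bar{\mathbf{x}}_1^T\mathbf{w}\}$ is immediate from $\bar{\mathbf{x}}_2^T\mathbf{B}\bar{\mathbf{x}}_1=\bar{\mathbf{x}}_1^T\mathbf{B}^T\bar{\mathbf{x}}_2$. For the reverse inclusion --- the only place where $\mathsf{Inv}$ is essential --- I would pick $\mathbf{p},\mathbf{q}\in\mathbb{Z}_Q^n$ with $\mathbf{p}^T\bar{\mathbf{x}}_1=1$ and $\mathbf{q}^T\bar{\mathbf{x}}_2=1$ (possible because a unit coordinate can be inverted), and check by direct substitution that, for any $(\mathbf{u},\mathbf{w})\in T$, the matrix $\mathbf{B}=\mathbf{u}\mathbf{p}^T+\mathbf{q}\mathbf{w}^T-(\bar{\mathbf{x}}_1^T\mathbf{w})\,\mathbf{q}\mathbf{p}^T$ satisfies $\mathbf{B}\bar{\mathbf{x}}_1=\mathbf{u}$ and $\mathbf{B}^T\bar{\mathbf{x}}_2=\mathbf{w}$. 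Hence $\Phi$ is a surjective homomorphism of finite abelian groups onto $T$, so each of its fibers is a coset of $\ker\Phi$ and thus has the same cardinality; therefore $\Phi(\bar{\mathbf{A}})$ is uniformly distributed on $T$ when $\bar{\mathbf{A}}$ is uniform.

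Finally I would do the counting and the conditioning. Because $\bar{\mathbf{x}}_i$ has a unit coordinate, the linear form $\mathbf{y}\mapsto\bar{\mathbf{x}}_i^T\mathbf{y}$ is onto $\mathbb{Z}_Q$, so $|S_{\mathbf{x}_i,a'}|=Q^{n-1}$ for every $a'\in\mathbb{Z}_Q$; consequently $|T|=Q^{2n-1}$ and, for the fixed $a$, the slice $\{(\mathbf{u},\mathbf{w})\in T:\bar{\mathbf{x}}_2^T\mathbf{u}=a\}$ has cardinality $Q^{2n-2}$. Conditioning the uniform distribution on $T$ on the event $\bar{\mathbf{x}}_2^T\mathbf{u}=a$ (which, for a pair in $T$, is the same as $\bar{\mathbf{x}}_1^T\mathbf{w}=a$) therefore yields the uniform distribution on $S_{\mathbf{x}_2,a}\times S_{\mathbf{x}_1,a}$, a product distribution; this gives simultaneously the independence of $\{\mathbf{A}^T\mathbf{X}_2\}_p$ and $\{\mathbf{A}\mathbf{X}_1\}_p$ and their claimed uniform marginals over $S_{\mathbf{X}_1,a}$ and $S_{\mathbf{X}_2,a}$, and the statement for the event $\mathsf{Inv}(\mathbf{X}_1,\mathbf{X}_2)$ follows since it holds for every admissible $(\mathbf{x}_1,\mathbf{x}_2)$. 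The one step I expect to require a moment of thought is guessing the rank-two-plus-correction form of the preimage $\mathbf{B}$ in the surjectivity argument; once it is written down the verification is a one-line computation, and everything else is routine bookkeeping with finite abelian groups.
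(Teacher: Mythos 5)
Your proof is correct and follows essentially the same route as the paper's: both fix $(\mathbf{X}_1,\mathbf{X}_2)$, show that $\mathbf{A}\mapsto(\{\mathbf{A}\mathbf{X}_1\}_p,\{\mathbf{A}^T\mathbf{X}_2\}_p)$ is a surjective module homomorphism onto the compatibility set $T=\{(\mathbf{u},\mathbf{w}):\bar{\mathbf{x}}_2^T\mathbf{u}=\bar{\mathbf{x}}_1^T\mathbf{w}\}$, and conclude uniformity on $T$ from the equal size of all fibers. The only differences are cosmetic: you exhibit a closed-form rank-two preimage $\mathbf{u}\mathbf{p}^T+\mathbf{q}\mathbf{w}^T-(\bar{\mathbf{x}}_1^T\mathbf{w})\mathbf{q}\mathbf{p}^T$ where the paper builds one coordinate-by-coordinate from an invertible entry of $\mathbf{X}_1$, and you make explicit the final counting/conditioning step (uniform on $T$ restricted to the slice $\bar{\mathbf{x}}_2^T\mathbf{u}=a$ is the product of uniforms on $S_{\mathbf{X}_2,a}\times S_{\mathbf{X}_1,a}$), which the paper leaves implicit.
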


\begin{proof}
Under the condition of $\mathsf{Inv}(\mathbf{X}_1, \mathbf{X}_2)$,
for any fixed $\mathbf{X}_1$ and $\mathbf{X}_2$,
define the  map $\phi_{\mathbf{X}_1, \mathbf{X}_2}$:
$\mathbb{Z}_q^{n \times n} \rightarrow \mathbb{Z}_{q/p}^n \times \mathbb{Z}_{q/p}^n$, such that $\mathbf{A} \mapsto (\{\mathbf{A} \mathbf{X}_1\}_p, \{\mathbf{A}^T \mathbf{X}_2\}_p)$.

We shall prove that the image of $\phi_{\mathbf{X}_1, \mathbf{X}_2}$ is
$S = \{(\mathbf{y}_1, \mathbf{y}_2) \in \mathbb{Z}_{q/p}^n \times \mathbb{Z}_{q/p}^n
\mid \mathbf{X}_2^T \mathbf{y}_1 = \mathbf{X}_1^T \mathbf{y}_2 \mod (q/p) \}$.
Denote $\mathbf{X}_1 = (x_1, \mathbf{X}_1'^T)^T$
and $\mathbf{y}_2 = (y_2, \mathbf{y}_2'^T)^T$.
Without loss of generality,
we assume $x_1$ is invertible in the ring $\mathbb{Z}_{q/p}$.
For any $(\mathbf{y}_1, \mathbf{y}_2) \in S$, we need to find an $\mathbf{A}$
such that $\phi_{\mathbf{X}_1, \mathbf{X}_2}(\mathbf{A}) = (\mathbf{y}_1, \mathbf{y}_2)$.

From the condition $\mathsf{Inv}(\mathbf{X}_1, \mathbf{X}_2)$,
we know that there exists an  $\mathbf{A}' \in \mathbb{Z}^{(n-1) \times n}$ such that
$\{\mathbf{A}' \mathbf{X}_2\}_p = \mathbf{y}_2'$. Then, we let $\mathbf{a}_1 = x_1^{-1} (\mathbf{y}_1 - \mathbf{A}'^T \mathbf{X}_1') \bmod (q/p)$, and $\mathbf{A} = (\mathbf{a}_1, \mathbf{A}'^T)$.
Now we check that $\phi_{\mathbf{X}_1, \mathbf{X}_2}(\mathbf{A}) = (\mathbf{y}_1, \mathbf{y}_2)$.
\begin{align*}
    \{\mathbf{A} \mathbf{X}_1\}_p &=
    \left\{\begin{pmatrix}
        \mathbf{a}_1 & \mathbf{A}'^T
    \end{pmatrix}
    \begin{pmatrix}
        x_1 \\
        \mathbf{x}_1'
    \end{pmatrix}\right\}_p
    = \{ x_1 \mathbf{a}_1 + \mathbf{A}'^T \mathbf{X}_1' \}_p = \mathbf{y}_1  \\
    \{\mathbf{A}^T \mathbf{X}_2\}_p &=
    \left\{\begin{pmatrix}
        \mathbf{a}_1^T \\ \mathbf{A}'
    \end{pmatrix}
    \mathbf{X}_2 \right\}_p
    = \left\{ \begin{pmatrix}
     \mathbf{a}_1^T \mathbf{X}_2 \\ \mathbf{A}' \mathbf{X}_2
    \end{pmatrix} \right\}_p
    = \left\{ \begin{pmatrix}
     x_1^{-1}(\mathbf{y}_1^T - \mathbf{X}_1'^T \mathbf{A}) \mathbf{X}_2 \\
     \mathbf{A}' \mathbf{X}_2
    \end{pmatrix} \right\}_p \\
    &= \left\{ \begin{pmatrix}
     x_1^{-1}(\mathbf{X}_1^T \mathbf{y}_2 - \mathbf{X}_1'^T \mathbf{y}_2') \\
     \mathbf{y}_2'
    \end{pmatrix} \right\}_p
    = \left\{ \begin{pmatrix}
     y_2 \\
     \mathbf{y}_2'
    \end{pmatrix} \right\}_p
    = \mathbf{y}_2
\end{align*}

Hence, if we treat $\mathbb{Z}_q^{n \times n}$ and $S$ as $\mathbb{Z}$-modules,
then $\phi_{\mathbf{X}_1, \mathbf{X}_2}: \mathbb{Z}_q^{n \times n} \rightarrow S$ is a surjective homomorphism.
Then, for any fixed $(\mathbf{X}_1, \mathbf{X}_2)$,
$(\{\mathbf{A} \mathbf{X}_1\}_p, \{\mathbf{A}^T \mathbf{X}_2\}_p)$ is uniformly distributed over $S$. This completes the proof. 
  \end{proof}

\begin{theorem} \label{th:independent}
Under the condition $\mathsf{Inv}(\mathbf{X}_1, \mathbf{X}_2)$, the following two distributions are identical:

\begin{itemize}
\item $(a, \mathbf{X}_1, \mathbf{X}_2, \{\mathbf{A} \mathbf{X}_1\}_p, \{\mathbf{A}^T \mathbf{X}_2\}_p)$, where
 $\mathbf{A} \gets \mathbb{Z}_q^{n \times n}$, $\mathbf{X}_1 \gets \chi^{n}$,
$\mathbf{X}_2 \gets \chi^{n}$,
and  $a = \mathbf{X}_1^T \mathbf{A}^T \mathbf{X}_2 \bmod (q / p)$.

\item  $(a, \mathbf{X}_1, \mathbf{X}_2, \mathbf{y}_1, \mathbf{y}_2)$,  where
$a \gets \mathbb{Z}_{q/p}, \mathbf{X}_1 \gets \chi^{n}$, $
\mathbf{X}_2 \gets \chi^{n}$, $\mathbf{y}_1 \gets S_{\mathbf{X}_2, a}$, and
$\mathbf{y}_2 \gets S_{\mathbf{X}_1, a}$.
\end{itemize}

\end{theorem}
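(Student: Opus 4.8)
The plan is to derive Theorem~\ref{th:independent} as an almost immediate consequence of Lemma~\ref{prop-s}, by decomposing the joint distribution according to the value of $a$ and checking that the two descriptions agree conditionally on each value of $a$ and on $(\mathbf{X}_1,\mathbf{X}_2)$. So first I would fix $\mathbf{X}_1,\mathbf{X}_2$ satisfying $\mathsf{Inv}(\mathbf{X}_1,\mathbf{X}_2)$; since $\mathbf{X}_1,\mathbf{X}_2$ are sampled identically and independently of $\mathbf{A}$ in both distributions, it suffices to show that, conditioned on any such fixed $(\mathbf{X}_1,\mathbf{X}_2)$, the triples $(a,\{\mathbf{A}\mathbf{X}_1\}_p,\{\mathbf{A}^T\mathbf{X}_2\}_p)$ and $(a,\mathbf{y}_1,\mathbf{y}_2)$ have the same law.

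Next I would compute the marginal of $a=\mathbf{X}_1^T\mathbf{A}^T\mathbf{X}_2\bmod(q/p)$ in the first distribution. The key observation is that $\mathbf{X}_1^T\mathbf{A}^T\mathbf{X}_2\bmod(q/p)$ depends on $\mathbf{A}$ only through $(\{\mathbf{A}\mathbf{X}_1\}_p,\{\mathbf{A}^T\mathbf{X}_2\}_p)$ — indeed one has $\mathbf{X}_2^T\{\mathbf{A}\mathbf{X}_1\}_p\equiv\mathbf{X}_1^T\{\mathbf{A}^T\mathbf{X}_2\}_p\pmod{q/p}$, and this common residue is exactly $\mathbf{X}_1^T\mathbf{A}^T\mathbf{X}_2\bmod(q/p)$ up to sign, because $\{\mathbf{A}\mathbf{X}_1\}_p \equiv \mathbf{A}\mathbf{X}_1 \pmod{q/p}$. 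From the proof of Lemma~\ref{prop-s}, $(\{\mathbf{A}\mathbf{X}_1\}_p,\{\mathbf{A}^T\mathbf{X}_2\}_p)$ is uniform over $S=\{(\mathbf{y}_1,\mathbf{y}_2): \mathbf{X}_2^T\mathbf{y}_1=\mathbf{X}_1^T\mathbf{y}_2\bmod(q/p)\}$; hence $a$ inherits the pushforward of the uniform distribution on $S$ under the map $(\mathbf{y}_1,\mathbf{y}_2)\mapsto \mathbf{X}_2^T\mathbf{y}_1\bmod(q/p)$. I would check that, under $\mathsf{Inv}(\mathbf{X}_1,\mathbf{X}_2)$, the fibers $S_{\mathbf{X}_2,a}\times_{?} S_{\mathbf{X}_1,a}$ — more precisely $\{(\mathbf{y}_1,\mathbf{y}_2)\in S:\mathbf{X}_2^T\mathbf{y}_1=a\}=S_{\mathbf{X}_2,a}\times S_{\mathbf{X}_1,a}$ — all have the same cardinality $(q/p)^{2n-2}$ (each $S_{\mathbf{x},a}$ is a coset of the rank-$(n-1)$ submodule $\{\mathbf{y}:\mathbf{x}^T\mathbf{y}=0\}$ when some coordinate of $\mathbf{x}$ is invertible). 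Consequently $a$ is uniform over $\mathbb{Z}_{q/p}$, matching the second distribution's marginal of $a$.

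Then I would condition on a fixed value of $a$. In the first distribution, uniformity of $(\{\mathbf{A}\mathbf{X}_1\}_p,\{\mathbf{A}^T\mathbf{X}_2\}_p)$ over $S$ restricts, on the event $a$ fixed, to the uniform distribution over the fiber $S_{\mathbf{X}_2,a}\times S_{\mathbf{X}_1,a}$ — this is exactly the statement of Lemma~\ref{prop-s}, namely that conditioned on $a$ (and on $\mathsf{Inv}$) the two vectors are independent and uniform over $S_{\mathbf{X}_1,a}$ and $S_{\mathbf{X}_2,a}$ respectively. (Note the index swap $\mathbf{y}_1\gets S_{\mathbf{X}_2,a}$, $\mathbf{y}_2\gets S_{\mathbf{X}_1,a}$ in the theorem statement corresponds to $\mathbf{y}_1=\{\mathbf{A}\mathbf{X}_1\}_p$ lying in $S_{\mathbf{X}_2,a}$ since $\mathbf{X}_2^T\mathbf{y}_1=a$, and symmetrically.) This is precisely how the second distribution is defined: $a\gets\mathbb{Z}_{q/p}$, then $\mathbf{y}_1\gets S_{\mathbf{X}_2,a}$ and $\mathbf{y}_2\gets S_{\mathbf{X}_1,a}$ independently. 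Hence the conditional laws agree for every $a$, and combined with the agreement of the marginals of $(\mathbf{X}_1,\mathbf{X}_2)$ and of $a$, the two joint distributions are identical.

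The only real subtlety — and the step I would be most careful about — is the cardinality/uniformity bookkeeping: verifying that under $\mathsf{Inv}(\mathbf{X}_1,\mathbf{X}_2)$ each $S_{\mathbf{x},a}$ is nonempty with $|S_{\mathbf{x},a}|=(q/p)^{n-1}$ independent of $a$ (so that slicing the uniform distribution on $S$ by the value of $a$ yields a uniform marginal on $\mathbb{Z}_{q/p}$ and a product-uniform conditional), keeping in mind that $q/p$ is a power of $2$ and ``invertible'' means odd. Everything else is just unwinding definitions and reusing Lemma~\ref{prop-s}; no new computation is needed.
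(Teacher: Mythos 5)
Your proposal is correct and follows essentially the same route as the paper: both condition on $(\mathbf{X}_1,\mathbf{X}_2)$, invoke Lemma~\ref{prop-s} for the conditional law of $(\{\mathbf{A}\mathbf{X}_1\}_p,\{\mathbf{A}^T\mathbf{X}_2\}_p)$ given $a$, and separately establish that $a$ is uniform over $\mathbb{Z}_{q/p}$ and independent of $(\mathbf{X}_1,\mathbf{X}_2)$. The only cosmetic difference is that the paper gets uniformity of $a$ directly from the surjective module homomorphism $\mathbf{A}\mapsto\mathbf{X}_1^T\mathbf{A}^T\mathbf{X}_2\bmod(q/p)$, whereas you deduce it by slicing the uniform distribution on $S$ into equal-cardinality fibers $S_{\mathbf{X}_2,a}\times S_{\mathbf{X}_1,a}$; both are valid and your cardinality bookkeeping is right.
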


\begin{proof}
For any $\tilde{a}\in \mathbb{Z}_{q/p}$, $\tilde{\mathbf{X}}_1, \tilde{\mathbf{X}}_2 \in \mathsf{Supp}(\chi^{n})$,
$\tilde{\mathbf{y}}_1, \tilde{\mathbf{y}}_2\in \mathbb{Z}_{q/p}^n$,
we have
\begin{align*}
&\Pr[a = \tilde{a}, \mathbf{X}_1 = \tilde{\mathbf{X}}_1,
\mathbf{X}_2 = \tilde{\mathbf{X}}_2,
\{\mathbf{A} \mathbf{X}_1\}_p = \tilde{\mathbf{y}}_1,
\{\mathbf{A}^T \mathbf{X}_2\}_p = \tilde{\mathbf{y}}_2 \mid \mathsf{Inv}(\mathbf{X}_1, \mathbf{X}_2)] \\
=& \Pr[ \{\mathbf{A} \mathbf{X}_1\}_p = \tilde{\mathbf{y}}_1,
\{\mathbf{A}^T \mathbf{X}_2\}_p = \tilde{\mathbf{y}}_2
\mid a = \tilde{a}, \mathbf{X}_1 = \tilde{\mathbf{X}}_1,
\mathbf{X}_2 = \tilde{\mathbf{X}}_2, \mathsf{Inv}(\mathbf{X}_1, \mathbf{X}_2) ] \\
&\Pr[a = \tilde{a}, \mathbf{X}_1 = \tilde{\mathbf{X}}_1,
\mathbf{X}_2 = \tilde{\mathbf{X}}_2 \mid \mathsf{Inv}(\mathbf{X}_1, \mathbf{X}_2) ]
\end{align*}

From Lemma~\ref{prop-s}, the first term equals to $\Pr[\mathbf{y}_1 \gets S_{\tilde{\mathbf{X}}_2, \tilde{a}};
\mathbf{y}_2 \gets S_{\tilde{\mathbf{X}}_1, \tilde{a}}: \mathbf{y}_1 = \tilde{\mathbf{y}}_1,
\mathbf{y}_2 = \tilde{\mathbf{y}}_2 \mid a = \tilde{a}, \mathbf{X}_1 = \tilde{\mathbf{X}}_1,
\mathbf{X}_2 = \tilde{\mathbf{X}}_2, \mathsf{Inv}(\mathbf{X}_1, \mathbf{X}_2)]$.

For the second term, we shall prove that $a$ is independent of $(\mathbf{X}_1, \mathbf{X}_2)$,
and is uniformly distributed over $\mathbb{Z}_{q/p}$.
Under the condition of  $\mathsf{Inv}(\mathbf{X}_1, \mathbf{X}_2)$, the
map $\mathbb{Z}_q^{n \times n} \rightarrow \mathbb{Z}_{q/p}$, such that $\mathbf{A} \mapsto \mathbf{X}_1^T \mathbf{A}^T \mathbf{X}_2 \bmod (q/p)$,  is a surjective homomorphism
between the two $\mathbb{Z}$-modules.
Then, $\Pr[a = \tilde{a} \mid \mathbf{X}_1 =
\tilde{\mathbf{X}}_1, \mathbf{X}_2 = \tilde{\mathbf{X}}_2, \mathsf{Inv}(\mathbf{X}_1, \mathbf{X}_2)] = p/q$.
Hence, under the condition of  $\mathsf{Inv}(\mathbf{X}_1, \mathbf{X}_2)$,
$a$ is independent of $(\mathbf{X}_1, \mathbf{X}_2)$, and is distributed uniformly at random.
So the two ways of sampling result in  the same distribution.
  \end{proof}

We design and implement the following algorithm to numerically calculate the distribution of
$\boldsymbol{\Sigma}_2 - \boldsymbol{\Sigma}_1$ efficiently.
For any $c_1, c_2 \in \mathbb{Z}_q, a \in \mathbb{Z}_{q/p}$,
we numerically calculate $\Pr[ \mathbf{X}_1^T\{\mathbf{A}^T \mathbf{X}_2\}_p = c_1]$ and
$\Pr[ \{\mathbf{A} \mathbf{X}_1\}_p^T \mathbf{X}_2 - \boldsymbol{\epsilon}^T \mathbf{X}_2 = c_2,
\mathbf{X}_1^T \mathbf{A}^T \mathbf{X}_2 \bmod (q/p) = a]$,
then derive the distribution of $\mathbf{\Sigma}_2 - \mathbf{\Sigma}_1$.

As $\mathsf{Inv}(\mathbf{X}_1, \mathbf{X}_2)$ occurs with \emph{overwhelming} probability,
for any event $E$, we have
$|\Pr[E] - \Pr[E | \mathsf{Inv}(\mathbf{X}_1, \mathbf{X}_2)]| < negl$.
For simplicity, we ignore the effect of $\mathsf{Inv}(\mathbf{X}_1, \mathbf{X}_2)$
in the following calculations.
{
By Theorem~\ref{th:independent},
$\Pr[\mathbf{X}_1^T\{\mathbf{A}^T \mathbf{X}_2\}_p = c_1] =
\Pr[\mathbf{X}_1 \gets \chi^n, \mathbf{y}_2 \gets \mathbb{Z}_{q/p}^n; \mathbf{X}_1^T \mathbf{y}_2 = c_1]$.
This probability can be numerically calculated by computer programs.
The probability $\Pr[ \{\mathbf{A} \mathbf{X}_1\}_p^T \mathbf{X}_2 - \boldsymbol{\epsilon}^T \mathbf{X}_2 = c_2,
\mathbf{X}_1^T \mathbf{A}^T \mathbf{X}_2 \bmod (q/p) = a]$ can also be calculated by the similar way.
}Then, for arbitrary $c\in \mathbb{Z}_q$, 
\begin{align*}
&\Pr[\mathbf{\Sigma}_1 - \mathbf{\Sigma}_2 = c] =
\Pr[\mathbf{X}_1^T\{\mathbf{A}^T \mathbf{X}_2\}_p -
\{\mathbf{A} \mathbf{X}_1\}_p^T \mathbf{X}_2 + \boldsymbol{\epsilon}^T \mathbf{X}_2 = c] \\
=& \sum_{\substack{ c_1 - c_2 = c \\ a \in \mathbb{Z}_{q/p}}} \substack{
\Pr[\mathbf{X}_1^T\{\mathbf{A}^T \mathbf{X}_2\}_p = c_1,
\{\mathbf{A} \mathbf{X}_1\}_p^T \mathbf{X}_2 - \boldsymbol{\epsilon}^T \mathbf{X}_2 = c_2 \mid
\mathbf{X}_1^T \mathbf{A}^T \mathbf{X}_2 \bmod (q/p) = a] \cdot \\
\Pr[\mathbf{X}_1^T \mathbf{A}^T \mathbf{X}_2 \bmod (q/p) = a] } \\
=& \sum_{\substack{ c_1 - c_2 = c \\ a \in \mathbb{Z}_{q/p}}} \substack{
\Pr[\mathbf{X}_1^T\{\mathbf{A}^T \mathbf{X}_2\}_p = c_1 \mid \mathbf{X}_1^T \mathbf{A}^T \mathbf{X}_2 \bmod (q/p) = a ] \cdot \\
\Pr[\{\mathbf{A} \mathbf{X}_1\}_p^T \mathbf{X}_2 - \boldsymbol{\epsilon}^T \mathbf{X}_2 = c_2 \mid
\mathbf{X}_1^T \mathbf{A}^T \mathbf{X}_2 \bmod (q/p) = a]
\Pr[\mathbf{X}_1^T \mathbf{A}^T \mathbf{X}_2 \bmod (q/p) = a]} \\
=& \sum_{\substack{a \in \mathbb{Z}_{q/p} \\ c_1 - c_2 = c}}
\frac{\Pr[\mathbf{X}_1^T\{\mathbf{A}^T \mathbf{X}_2\}_p = c_1, c_1 \bmod (q/p) = a]
\Pr[\{\mathbf{A} \mathbf{X}_1\}_p^T \mathbf{X}_2 - \boldsymbol{\epsilon}^T \mathbf{X}_2 = c_2,
\mathbf{X}_1^T \mathbf{A}^T \mathbf{X}_2 \bmod (q/p) = a]}
{\Pr[\mathbf{X}_1^T \mathbf{A}^T \mathbf{X}_2 \bmod (q/p) = a]} \\
=& \sum_{\substack{a \in \mathbb{Z}_{q/p} \\ c_1 - c_2 = c \\ c_1 \bmod (q/p) = a}}
\frac{\Pr[\mathbf{X}_1^T\{\mathbf{A}^T \mathbf{X}_2\}_p = c_1]
\Pr[\{\mathbf{A} \mathbf{X}_1\}_p^T \mathbf{X}_2 - \boldsymbol{\epsilon}^T \mathbf{X}_2 = c_2,
\mathbf{X}_1^T \mathbf{A}^T \mathbf{X}_2 \bmod (q/p) = a]}
{\Pr[\mathbf{X}_1^T \mathbf{A}^T \mathbf{X}_2 \bmod (q/p) = a]}
\end{align*}


By Theorem~\ref{th:independent},
conditioned on $\mathsf{Inv(\mathbf{X}_1, \mathbf{X}_2)}$ and
$\mathbf{X}_1^T \mathbf{A}^T \mathbf{X}_2 \bmod (q/p) = a$,
$\mathbf{X}_1^T\{\mathbf{A}^T \mathbf{X}_2\}_p$ is independent of $\{\mathbf{A} \mathbf{X}_1\}_p^T \mathbf{X}_2 - \boldsymbol{\epsilon}^T \mathbf{X}_2$, which implies the second equality.
Our code and scripts are  available from  Github \url{http://github.com/OKCN}.


\subsection{Security Proof}\label{Sec-LWRproof}
\begin{definition} A KC or AKC based key exchange protocol from LWR is \emph{secure},
    if for any sufficiently large security parameter $\lambda$ and any PPT adversary $\mathcal{A}$,
    $\left|\Pr[b^\prime=b]-\frac{1}{2}\right|$ is negligible,  as defined w.r.t. game  $G_0$ specified in Algorithm \ref{G0}.\footnote{For presentation simplicity, we simply assume $\mathbf{K}_2^0 \gets \mathbb{Z}_m^{l_A \times l_B}$ when the key exchange protocol is implemented with AKC. However, when the AKC-based protocol is interpreted as a public-key encryption scheme,  $\mathbf{K}_2^0$ and $\mathbf{K}_2^1$  correspond to the plaintexts, which are taken independently at random from the same (arbitrary) distribution over $\mathbb{Z}_m^{l_A \times l_B}$.}

\begin{algorithm}[H]
\caption{Game $G_0$}\label{G0}
\begin{algorithmic}[1]
\State{$\mathbf{A} \gets \mathbb{Z}^{n \times n}_q$}
\State{$\mathbf{X}_1 \gets \chi^{n \times l_A}$}
\State{$\mathbf{Y}_1 = \lfloor \mathbf{A} \mathbf{X}_1 \rceil_p$}
\State{$\mathbf{X}_2 \gets \chi^{n \times l_B}$}
\State{$\boldsymbol{\epsilon} \gets \{-q/2p \dots q/2p-1\}^{n \times l_A}$}
\State{$\mathbf{Y}_2 = \lfloor \mathbf{A}^T \mathbf{X}_2 \rceil_p$}
\State{$\boldsymbol{\Sigma}_2 = \lfloor (\frac{q}{p}\mathbf{Y}_1 + \boldsymbol{\epsilon})^T \mathbf{X}_2 \rceil_p$}\Comment{$\boldsymbol{\Sigma}_2 = \mathbf{Y}_1^T \mathbf{X}_2 + \lfloor \boldsymbol{\epsilon}^T \mathbf{X}_2 \rceil_{p}=\lfloor (\frac{q}{p}\mathbf{Y}_1 + \boldsymbol{\epsilon})^T \mathbf{X}_2 \rceil_p$}
\State{$\left(\mathbf{K}_2^0, \mathbf{V}\right) \gets \textsf{Con}(\boldsymbol{\Sigma}_2, \textsf{params})$}
\State{$\mathbf{K}_2^1 \gets \mathbb{Z}_m^{l_A \times l_B}$}
\State{$b \gets \{0, 1\}$}
\State{$b' \gets \mathcal{A}(\mathbf{A}, \mathbf{Y}_1, \mathbf{Y}_2, \mathbf{K}_2^b, \mathbf{V})$}
\end{algorithmic}
\end{algorithm}
\end{definition}

Before starting to prove the security, we first recall some basic properties of the LWR assumption.
{The following lemma is derived by a hybrid argument,  similar to that of  LWE \cite{PVW08,bcd16}}.


\begin{lemma}[LWR problem in the matrix form]
    For positive integer parameters $(\lambda,n, q \ge 2, l, t)$, where $n,q,l,t$ all are polynomial in $\lambda$
    satisfying $p | q$,
    and a distribution $\chi$ over $\mathbb{Z}_q$,
    denote by $L_{\chi}^{(l, t)}$ the distribution over
    $\mathbb{Z}_q^{t \times n} \times \mathbb{Z}_p^{t \times l}$
    generated by taking
    $\mathbf{A} \gets \mathbb{Z}_q^{t \times n},
    \mathbf{S} \gets \chi^{n \times l}$ and outputting $(\mathbf{A}, \lfloor \mathbf{A} \mathbf{S} \rceil_p)$.
    Then, under the assumption on indistinguishability between
    $A_{q, \mathbf{s}, \chi}$ (with $\mathbf{s} \gets \chi^n$) and
    $\mathcal{U}(\mathbb{Z}_q^{n} \times \mathbb{Z}_p)$ within $t$ samples,
    no PPT distinguisher $\mathcal{D}$ can distinguish, with non-negligible probability,
    between the  distribution $L_{\chi}^{(l, t)}$ and
    $\mathcal{U}(\mathbb{Z}_q^{t \times n} \times \mathbb{Z}_p^{t \times l})$ for sufficiently large $\lambda$.
\end{lemma}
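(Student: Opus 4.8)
The plan is to prove this ``LWR in matrix form'' lemma by the standard hybrid argument over the $l$ columns of the secret matrix $\mathbf{S}$, so I will first set up the right sequence of hybrid distributions and then show that any two consecutive hybrids are indistinguishable by a reduction to the single-secret LWR assumption (with $\mathbf{s}\gets\chi^n$) that allows $t$ samples and is itself known to hold regardless of the distribution of the secret.

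Concretely, I would define hybrid $H_i$, for $0\le i\le l$, to be the distribution of $(\mathbf{A},\mathbf{B})$ where $\mathbf{A}\gets\mathbb{Z}_q^{t\times n}$, the first $i$ columns of $\mathbf{B}\in\mathbb{Z}_p^{t\times l}$ are uniform in $\mathbb{Z}_p^{t}$, and the remaining $l-i$ columns are $\lfloor\mathbf{A}\mathbf{s}_j\rceil_p$ for independent $\mathbf{s}_j\gets\chi^n$. Then $H_0=L_\chi^{(l,t)}$ and $H_l=\mathcal{U}(\mathbb{Z}_q^{t\times n}\times\mathbb{Z}_p^{t\times l})$. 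Given a PPT distinguisher $\mathcal{D}$ for $H_0$ vs.\ $H_l$ with advantage $\varepsilon$, a standard averaging argument produces an index $i$ such that $\mathcal{D}$ distinguishes $H_{i-1}$ from $H_i$ with advantage at least $\varepsilon/l$, which is still non-negligible since $l=\mathrm{poly}(\lambda)$.

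Next I would build the reduction $\mathcal{B}$ to the $t$-sample single-secret LWR problem: on input a challenge pair $(\mathbf{A},\mathbf{c})\in\mathbb{Z}_q^{t\times n}\times\mathbb{Z}_p^{t}$ (which is either $(\mathbf{A},\lfloor\mathbf{A}\mathbf{s}\rceil_p)$ with $\mathbf{s}\gets\chi^n$ or $(\mathbf{A},\lfloor u\rceil_p)$ with $u\gets\mathbb{Z}_q^t$, and note the latter is within negligible statistical distance of uniform over $\mathbb{Z}_p^t$ because $p\mid q$), $\mathcal{B}$ samples the first $i-1$ columns of $\mathbf{B}$ uniformly, places $\mathbf{c}$ as the $i$-th column, samples $\mathbf{s}_{i+1},\dots,\mathbf{s}_l\gets\chi^n$ and sets the last $l-i$ columns to $\lfloor\mathbf{A}\mathbf{s}_j\rceil_p$, and runs $\mathcal{D}(\mathbf{A},\mathbf{B})$. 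When $\mathbf{c}$ is a real LWR sample this simulates $H_{i-1}$; when $\mathbf{c}$ is uniform it simulates $H_i$ (up to the negligible $p\mid q$ rounding discrepancy), so $\mathcal{B}$ inherits $\mathcal{D}$'s non-negligible advantage, contradicting the assumption. I would also remark that we are really using the decisional LWR assumption with secret drawn from $\chi^n$ rather than uniform; the paper has already recorded (citing \cite{BGMRR16}) that the LWE-to-LWR reduction is independent of the secret distribution, so invoking the $\chi^n$-secret version is legitimate.

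The only mildly delicate point — and the place I would be most careful — is bookkeeping the distributional sloppiness introduced by the rounding $\lfloor\cdot\rceil_p$ when $\mathbf{c}$ is the ``uniform'' challenge: one must argue that $\lfloor u\rceil_p$ for $u\gets\mathbb{Z}_q^t$ is \emph{exactly} uniform over $\mathbb{Z}_p^t$ when $p\mid q$ (each fiber of the rounding map has the same size $q/p$), so in fact there is no loss at all and the hybrids match perfectly rather than only statistically. Everything else is routine; there is no genuine obstacle, just the need to state the hybrid carefully and keep the reduction's column placement straight.
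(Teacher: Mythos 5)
Your proof is correct and is exactly the argument the paper has in mind: the paper only remarks that the lemma ``is derived by a hybrid argument'' over the columns of $\mathbf{S}$, and you have filled in that hybrid, the column-embedding reduction, and the (exact, since $p\mid q$) uniformity of $\lfloor u\rceil_p$ for $u\gets\mathbb{Z}_q$. No gaps.
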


\begin{theorem}\label{LWR-security}
    If $(\textsf{params}, \textsf{Con}, \textsf{Rec})$ is a \emph{correct} and  \emph{secure} KC or AKC scheme, the key exchange protocol described in \figurename~\ref{kex:lwr} is \emph{secure} under the (matrix form of) LWR assumption.
\end{theorem}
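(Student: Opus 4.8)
The proof proceeds by a short sequence of game hops, starting from $G_0$ and ending in a game in which the challenge value $\mathbf{K}_2^b$ is distributed identically for $b=0$ and $b=1$, so that every (even unbounded) adversary guesses $b$ with probability exactly $1/2$. Only the \emph{security} of the underlying KC/AKC scheme is needed here; the \emph{correctness} hypothesis is used elsewhere (for the error-rate analysis) and is harmless to carry along.

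\emph{First hop} ($G_0\to G_1$): replace $\mathbf{Y}_1=\lfloor\mathbf{A}\mathbf{X}_1\rceil_p$ by a uniformly random $\mathbf{Y}_1\gets\mathbb{Z}_p^{n\times l_A}$, leaving everything else unchanged (in particular $\boldsymbol{\Sigma}_2$ is still computed as $\lfloor(\tfrac{q}{p}\mathbf{Y}_1+\boldsymbol{\epsilon})^T\mathbf{X}_2\rceil_p$). This is an immediate application of the matrix form of LWR with parameters $t=n$, $l=l_A$ and secret $\mathbf{X}_1$: a reduction holding a challenge $(\mathbf{A},\mathbf{Y}_1)$ (real or uniform) samples $\mathbf{X}_2$ and $\boldsymbol{\epsilon}$ itself, computes $\mathbf{Y}_2$, $\boldsymbol{\Sigma}_2$, runs $\textsf{Con}$, picks $\mathbf{K}_2^1,b$, invokes $\mathcal{A}(\mathbf{A},\mathbf{Y}_1,\mathbf{Y}_2,\mathbf{K}_2^b,\mathbf{V})$, and outputs $[\,b'=b\,]$; the real case is $G_0$ and the uniform case is $G_1$, so the gap is at most the LWR distinguishing advantage.

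\emph{Reparametrization.} In $G_1$, since $\mathbf{Y}_1$ is uniform over $\mathbb{Z}_p$ (coordinate-wise) and $\boldsymbol{\epsilon}$ is uniform over $\{-q/2p,\dots,q/2p-1\}$, the map $(y,\epsilon)\mapsto \tfrac{q}{p}y+\epsilon\bmod q$ is a bijection onto $\mathbb{Z}_q$; hence $\mathbf{X}_1':=\tfrac{q}{p}\mathbf{Y}_1+\boldsymbol{\epsilon}\bmod q$ is uniform over $\mathbb{Z}_q^{n\times l_A}$ and $\mathbf{Y}_1$ is a deterministic function of $\mathbf{X}_1'$. So $G_1$ is identical to the game that first samples $\mathbf{A}\gets\mathbb{Z}_q^{n\times n}$ and $\mathbf{X}_1'\gets\mathbb{Z}_q^{n\times l_A}$, sets $\mathbf{Y}_1$ from $\mathbf{X}_1'$, and then uses $\mathbf{Y}_2=\lfloor\mathbf{A}^T\mathbf{X}_2\rceil_p$ and $\boldsymbol{\Sigma}_2=\lfloor\mathbf{X}_1'^T\mathbf{X}_2\rceil_p$. (This is exactly the purpose of Bob's extra noise $\boldsymbol{\epsilon}$: it turns the lifted first message into a uniform element of $\mathbb{Z}_q$, which is what makes the next LWR step applicable.) \emph{Second hop} ($G_1\to G_2$): let $\mathbf{A}''$ be the $(n+l_A)\times n$ matrix obtained by stacking $\mathbf{A}^T$ on top of $\mathbf{X}_1'^T$; it is uniform over $\mathbb{Z}_q^{(n+l_A)\times n}$, and the pair $(\mathbf{Y}_2,\boldsymbol{\Sigma}_2)$ is exactly $\lfloor\mathbf{A}''\mathbf{X}_2\rceil_p$. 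Replacing it by a uniform matrix over $\mathbb{Z}_p^{(n+l_A)\times l_B}$ costs at most one more LWR distinguishing advantage (matrix form with $t=n+l_A$, $l=l_B$, secret $\mathbf{X}_2$). After this hop $\mathbf{Y}_2$ is uniform, and crucially $\boldsymbol{\Sigma}_2$ is uniform over $\mathbb{Z}_p^{l_A\times l_B}$ and independent of $(\mathbf{A},\mathbf{Y}_1,\mathbf{Y}_2)$.

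\emph{Conclusion in $G_2$.} Now apply the security of the consensus scheme coordinate-wise, using that $\textsf{Con}$ is run with independent fresh randomness on each of the $l_Al_B$ entries and that the entries of $\boldsymbol{\Sigma}_2$ are i.i.d.\ uniform over $\mathbb{Z}_p$ (which is the ``$q$'' of the embedded KC/AKC scheme). For KC, the security definition gives that $(\mathbf{K}_2^0,\mathbf{V})=\textsf{Con}(\boldsymbol{\Sigma}_2,\textsf{params})$ has $\mathbf{K}_2^0$ uniform over $\mathbb{Z}_m^{l_A\times l_B}$ and independent of $\mathbf{V}$; for AKC, $\mathbf{K}_2^0\gets\mathbb{Z}_m^{l_A\times l_B}$ is uniform by construction and the AKC security gives that $\mathbf{V}=\textsf{Con}(\boldsymbol{\Sigma}_2,\mathbf{K}_2^0,\textsf{params})$ is independent of $\mathbf{K}_2^0$. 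In either case $(\mathbf{A},\mathbf{Y}_1,\mathbf{Y}_2,\mathbf{K}_2^0,\mathbf{V})$ and $(\mathbf{A},\mathbf{Y}_1,\mathbf{Y}_2,\mathbf{K}_2^1,\mathbf{V})$ are identically distributed, so $\Pr[b'=b]=1/2$ in $G_2$. Summing the two negligible gaps yields $|\Pr[b'=b]-1/2|\le\mathrm{negl}(\lambda)$, proving the theorem.

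\textbf{Main obstacle.} The only non-routine point is the reparametrization step: recognizing that $\tfrac{q}{p}\mathbf{Y}_1+\boldsymbol{\epsilon}$ is genuinely uniform over $\mathbb{Z}_q$ and hence can serve as the fresh uniform ``public'' block $\mathbf{X}_1'^T$ stacked under $\mathbf{A}^T$, so that a \emph{single} further LWR invocation simultaneously randomizes $\mathbf{Y}_2$ and $\boldsymbol{\Sigma}_2$. The remaining care is bookkeeping: checking that each reduction perfectly simulates the relevant game, that the stacked matrix is uniform, and that the KC/AKC security statement lifts from a single coordinate to matrices by independence.
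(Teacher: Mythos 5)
Your proposal is correct and follows essentially the same route as the paper's proof: the same two LWR game hops (first randomizing $\mathbf{Y}_1$, then stacking $\mathbf{A}^T$ over the uniform lift $\frac{q}{p}\mathbf{Y}_1+\boldsymbol{\epsilon}$ to randomize $\mathbf{Y}_2$ and $\boldsymbol{\Sigma}_2$ with a single further LWR invocation), followed by the coordinate-wise application of KC/AKC security in the final game. The reparametrization you flag as the key point is exactly the observation the paper makes inside its second reduction.
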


\begin{proof} 
The proof is analogous  to   that in \cite{peikert2014lattice,bcd16}.
The general idea  is that we construct a sequence of games: $G_0$,
$G_1$ and $G_2$, where $G_0$ is the original game for defining security. In every move from game $G_i$
to $G_{i+1}$, $0\leq i\leq 1$,  we change a little.
All games $G_i$'s share the  same PPT adversary $\mathcal{A}$, whose goal is to distinguish between the  matrices chosen uniformly at random and the matrices generated in the actual  key exchange protocol.
Denote  by $T_i$, $0\leq i\leq 2$, the event that $b = b'$ in   Game $G_i$. Our goal is to prove that  $\Pr[T_0] < 1/2 + negl$, where $negl$ is a negligible function in $\lambda$.  For ease of readability, we re-produce game $G_0$ below. For presentation simplicity, in the subsequent analysis, we always assume the underlying KC or AKC is \emph{correct}.
The proof can be trivially extended to the case that correctness holds with \emph{overwhelming} probability (i.e., failure occurs with negligible probability).

\noindent
\begin{minipage}[H]{0.5\textwidth}
\null
\begin{algorithm}[H]
\caption{Game $G_0$}
\begin{algorithmic}[1]
\State{$\mathbf{A} \gets \mathbb{Z}^{n \times n}_q$}

\State{$\mathbf{X}_1 \gets \chi^{n \times l_A}$}
\State{$\mathbf{Y}_1 = \lfloor \mathbf{A} \mathbf{X}_1 \rceil_p$}

\State{$\mathbf{X}_2 \gets \chi^{n \times l_B}$}
\State{$\boldsymbol{\epsilon} \gets \{-q/2p \dots q/2p-1\}^{n \times l_A}$}
\State{$\mathbf{Y}_2 = \lfloor \mathbf{A}^T \mathbf{X}_2 \rceil_p$}
\State{$\boldsymbol{\Sigma}_2 = \lfloor (\frac{q}{p}\mathbf{Y}_1 + \boldsymbol{\epsilon})^T \mathbf{X}_2 \rceil_p$}
\State{$\left(\mathbf{K}_2^0, \mathbf{V}\right) \gets \textsf{Con}(\boldsymbol{\Sigma}_2, \textsf{params})$}
\State{$\mathbf{K}_2^1 \gets \mathbb{Z}_m^{l_A \times l_B}$}
\State{$b \gets \{0, 1\}$}
\State{$b' \gets \mathcal{A}(\mathbf{A}, \mathbf{Y}_1, \mathbf{Y}_2, \mathbf{K}_2^b, \mathbf{V})$}
\end{algorithmic}
\end{algorithm}
\end{minipage}
\begin{minipage}[H]{0.5\textwidth}
\null
\begin{algorithm}[H]
\caption{Game $G_1$}
\begin{algorithmic}[1]
\State{$\mathbf{A} \gets \mathbb{Z}^{n \times n}_q$}

\State{$\mathbf{X}_1 \gets \chi^{n \times l_A}$}
\State{{\color{red} $\mathbf{Y}_1 \gets \mathbb{Z}_p^{n \times l_A}$}}

\State{$\mathbf{X}_2 \gets \chi^{n \times l_B}$}
\State{$\boldsymbol{\epsilon} \gets \{-q/2p \dots q/2p-1\}^{n \times l_A}$}
\State{$\mathbf{Y}_2 = \lfloor \mathbf{A}^T \mathbf{X}_2 \rceil_p$}
\State{$\boldsymbol{\Sigma}_2 = \lfloor (\frac{q}{p}\mathbf{Y}_1 + \boldsymbol{\epsilon})^T \mathbf{X}_2 \rceil_p$}
\State{$\left(\mathbf{K}_2^0, \mathbf{V}\right) \gets \textsf{Con}(\boldsymbol{\Sigma}_2, \textsf{params})$}
\State{$\mathbf{K}_2^1 \gets \mathbb{Z}_m^{l_A \times l_B}$}
\State{$b \gets \{0, 1\}$}
\State{$b' \gets \mathcal{A}(\mathbf{A}, \mathbf{Y}_1, \mathbf{Y}_2, \mathbf{K}_2^b, \mathbf{V})$}
\end{algorithmic}
\end{algorithm}
\end{minipage}

\begin{lemma}\label{th-reduction-1}
       $|\Pr[T_0] - \Pr[T_1]| < negl$, under the indistinguishability between $L_{\chi}^{(l_A, n)}$ and
    {$\mathcal{U}(\mathbb{Z}_q^{n \times n} \times \mathbb{Z}_p^{n \times l_A})$}.
\end{lemma}
\begin{proof}
    Construct a distinguisher $\mathcal{D}$, in Algorithm \ref{D-1}, who tries to distinguish  $L_{\chi}^{(l_A, n)}$ from
    $\mathcal{U}(\mathbb{Z}_q^{n \times n} \times \mathbb{Z}_p^{n \times l_A})$.

\begin{algorithm}[H]
\caption{Distinguisher $\mathcal{D}$}\label{D-1}
\begin{algorithmic}[1]
    \Procedure{$\mathcal{D}$}{$\mathbf{A}, \mathbf{B}$}
\Comment{$\mathbf{A} \in \mathbb{Z}_q^{n \times n}, \mathbf{B} \in \mathbb{Z}_p^{n \times l_A}$}
    \State{$\mathbf{Y}_1 = \mathbf{B}$}
    \State{$\mathbf{X}_2 \gets \chi^{n \times l_B}$}
    \State{$\boldsymbol{\epsilon} \gets \{-q/2p \dots q/2p-1\}^{n \times l_A}$}
    \State{$\mathbf{Y}_2 = \lfloor \mathbf{A}^T \mathbf{X}_2 \rceil_p$}
    \State{$\boldsymbol{\Sigma}_2 = \lfloor (\frac{q}{p}\mathbf{Y}_1 + \boldsymbol{\epsilon})^T \mathbf{X}_2 \rceil_p$}
    \State{$\left(\mathbf{K}_2^0, \mathbf{V}\right) \gets \textsf{Con}(\boldsymbol{\Sigma}_2, \textsf{params})$}
    \State{$\mathbf{K}_2^1 \gets \mathbb{Z}_m^{l_A \times l_B}$}
    \State{$b \gets \{0, 1\}$}
    \State{$b' \gets \mathcal{A}(\mathbf{A}, \mathbf{Y}_1, \mathbf{Y}_2, \mathbf{K}_2^b, \mathbf{V})$}
    \If {$b' = b$}
        \Return $1$
    \Else
        \Return $0$
    \EndIf
\EndProcedure
\end{algorithmic}
\end{algorithm}

If $(\mathbf{A}, \mathbf{B})$ is subjected to $L_\chi^{(l_A, n)}$, then
$\mathcal{D}$ perfectly simulates $G_0$. Hence,
$\Pr\left[\mathcal{D}\left(L_\chi^{(l_A, n)}\right) = 1\right] = \Pr[T_0]$.
On the other hand, if $(\mathbf{A}, \mathbf{B})$ is chosen uniformly at random  from
$\mathbb{Z}_q^{n \times n} \times \mathbb{Z}_p^{n \times l_A}$, which is denoted  as $(\mathbf{A}^\mathcal{U},
\mathbf{B}^\mathcal{U})$,
then $\mathcal{D}$ perfectly  simulates $G_1$.
So $\Pr[\mathcal{D}(\mathbf{A}^\mathcal{U}, \mathbf{B}^\mathcal{U}) = 1] = \Pr[T_1]$.
Hence,
$\left|\Pr[T_0] - \Pr[T_1]\right| =
\left|\Pr[\mathcal{D}(L_\chi^{(l_A, n)}) = 1] -
    \Pr[\mathcal{D}(\mathbf{A}^\mathcal{U}, \mathbf{B}^\mathcal{U}) = 1]\right| < negl$.
  \end{proof}

\noindent
\begin{minipage}[H]{0.5\textwidth}
\null
\begin{algorithm}[H]
\caption{Game $G_1$}
\begin{algorithmic}[1]
\State{$\mathbf{A} \gets \mathbb{Z}^{n \times n}_q$}
\State{$\mathbf{X}_1, \mathbf{E}_1 \gets \chi^{n \times l_A}$}
\State{$\mathbf{Y}_1 \gets \mathbb{Z}_q^{n \times l_A}$}
\State{$\mathbf{X}_2 \gets \chi^{n \times l_B}$}
\State{$\boldsymbol{\epsilon} \gets \{-q/2p \dots q/2p-1\}^{n \times l_A}$}
\State{$\mathbf{Y}_2 = \lfloor \mathbf{A}^T \mathbf{X}_2 \rceil_p$}
\State{$\boldsymbol{\Sigma}_2 = \lfloor (\frac{q}{p}\mathbf{Y}_1 + \boldsymbol{\epsilon})^T \mathbf{X}_2 \rceil_p$}
\State{$\left(\mathbf{K}_2^0, \mathbf{V}\right) \gets \textsf{Con}(\boldsymbol{\Sigma}_2, \textsf{params})$}
\State{$\mathbf{K}_2^1 \gets \mathbb{Z}_m^{l_A \times l_B}$}
\State{$b \gets \{0, 1\}$}
\State{$b' \gets \mathcal{A}(\mathbf{A}, \mathbf{Y}_1, \mathbf{Y}_2, \mathbf{K}_2^b, \mathbf{V})$}
\end{algorithmic}
\end{algorithm}
\end{minipage}
\begin{minipage}[H]{0.5\textwidth}
\null
\begin{algorithm}[H]
\caption{Game $G_2$}
\begin{algorithmic}[1]
\State{$\mathbf{A} \gets \mathbb{Z}^{n \times n}_q$}

\State{$\mathbf{X}_1, \mathbf{E}_1 \gets \chi^{n \times l_A}$}
\State{$\mathbf{Y}_1 \gets \mathbb{Z}_q^{n \times l_A}$}

\State{$\mathbf{X}_2 \gets \chi^{n \times l_B}$}

\State{$\boldsymbol{\epsilon} \gets \{-q/2p \dots q/2p-1\}^{n \times l_A}$}
\State{{\color{red}$\mathbf{Y}_2 \gets \mathbb{Z}_p^{n \times l_B}$}}
\State{{\color{red}$\boldsymbol{\Sigma}_2 \gets \mathbb{Z}_p^{l_A \times l_B}$}}
\State{$\left(\mathbf{K}_2^0, \mathbf{V}\right) \gets \textsf{Con}(\boldsymbol{\Sigma}_2, \textsf{params})$}
\State{$\mathbf{K}_2^1 \gets \mathbb{Z}_m^{l_A \times l_B}$}
\State{$b \gets \{0, 1\}$}
\State{$b' \gets \mathcal{A}(\mathbf{A}, \mathbf{Y}_1, \mathbf{Y}_2, \mathbf{K}_2^b, \mathbf{V})$}
\end{algorithmic}
\end{algorithm}
\end{minipage}

\begin{lemma}\label{th-reduction-2}
$|\Pr[T_1] - \Pr[T_2]| < negl$, under the indistinguishability between $L_\chi^{(l_B, n + l_A)}$ and
$\mathcal{U}(\mathbb{Z}_q^{(n+l_A) \times n} \times \mathbb{Z}_p^{(n+l_A) \times l_B})$.
\end{lemma}
\begin{proof}
    As $\mathbf{Y}_1$ and $\boldsymbol{\epsilon}$ are subjected to uniform distribution in $G_1$,
    $\frac{p}{q}\mathbf{Y}_1 + \boldsymbol{\epsilon}$ is subjected to uniform distribution
    over $\mathbb{Z}_q^{n \times l_A}$.
    Based on this observation, we construct the following distinguisher $\mathcal{D}^\prime$.

First observe that 
    $\mathbf{Y}_1' = (\frac{q}{p} \mathbf{Y}_1 + \boldsymbol{\epsilon})
    \in \mathbb{Z}_q^{n\times l_A}$ follows the uniform distribution $\mathcal{U}(\mathbb{Z}_q^{n \times l_A})$, where $\mathbf{Y}_1 \gets \mathbb{Z}_q^{n \times l_A}$ and  $\boldsymbol{\epsilon} \gets [-q/2p, q/2p-1]^{n \times l_A}$.
{If $(\mathbf{A'}, \mathbf{B})$ is subject to $L_\chi^{(l_B, n + l_A)}$,
$\mathbf{A'} \gets \mathbb{Z}_q^{(n+l_A) \times n}$ 
corresponds to $\mathbf{A} \gets \mathbb{Z}_q^{n \times n}$ and $\mathbf{Y}_1' = \frac{q}{p} \mathbf{Y}_1 + \boldsymbol{\epsilon}$ 
in $G_1$;
And $\mathbf{S}\gets \chi^{n \times l_B}$ in generating
$(\mathbf{A'}, \mathbf{B})$
corresponds to $\mathbf{X}_2 \gets \chi^{n \times l_B}$ in $G_1$.
 In this case, we re-write}
\begin{align*}
\mathbf{B} &= \lfloor \mathbf{A'} \mathbf{S} \rceil_p =
\left\lfloor \left(
  \begin{array}{c}
    \mathbf{A}^T \\
    \mathbf{Y}_1'^T
  \end{array}
\right)
\mathbf{X}_2 \right\rceil_p \\
&=
\left(
  \begin{array}{c}
    \lfloor \mathbf{A}^T \mathbf{X}_2 \rceil_p \\
    \lfloor \mathbf{Y}_1'^T \mathbf{X}_2 \rceil_p
  \end{array}
\right)
=
\left(
  \begin{array}{c}
    \mathbf{Y}_2 \\
    \mathbf{\Sigma}_2
  \end{array}
\right)
\end{align*}
Hence $\Pr\left[\mathcal{D}^\prime\left(L_\chi^{(l_B, n + l_A)}\right) = 1\right] = \Pr[T_1]$.

{On the other hand, if $(\mathbf{A'}, \mathbf{B})$ is subject to uniform distribution $\mathcal{U}(\mathbb{Z}_q^{(n+l_A) \times n} \times \mathbb{Z}_p^{(n+l_A) \times l_B})$,
then $\mathbf{A}, \mathbf{Y}_1', \mathbf{Y}_2, \boldsymbol{\Sigma}_2$ 
all are also
uniformly random; So,  the view of  $\mathcal{D}^\prime$ in this case is the same as that in  game $G_2$. Hence,
$\Pr\left[\mathcal{D}^\prime\left(\mathbf{A'},
        \mathbf{B}\right) = 1\right] = \Pr[T_2]$ in this case.
%
Then,  $|\Pr[T_1] - \Pr[T_2]| =|\Pr[\mathcal{D}^\prime(L_\chi^{(l_B, n + l_A)}) = 1] -
\Pr[\mathcal{D}^\prime(\mathcal{U}(\mathbb{Z}_q^{(n+l_A) \times n} \times \mathbb{Z}_p^{(n+l_A) \times l_B})) = 1]| < negl$.}
  \end{proof}

\noindent
\begin{algorithm}[H]
    \caption{Distinguisher $\mathcal{D}^\prime$} \label{D-12}
\begin{algorithmic}[1]
    \Procedure{$\mathcal{D}^\prime$}{$\mathbf{A'}, \mathbf{B}$}
where {$\mathbf{A'} \in \mathbb{Z}_q^{(n + l_A) \times n}, \mathbf{B} \in \mathbb{Z}_p^{(n + l_A) \times l_B}$}
\State{Denote $\mathbf{A'} = \left(
      \begin{array}{c}
        \mathbf{A}^T \\
        \mathbf{Y}_1'^T
      \end{array}
    \right)
    $}
    \Comment{$\mathbf{A} \in \mathbb{Z}_q^{n \times n},
    \mathbf{Y}_1'^T = (\frac{q}{p} \mathbf{Y}_1 + \boldsymbol{\epsilon})^T
    \in \mathbb{Z}_q^{l_A \times n}$}
    \State{Denote $\mathbf{B} = \left(
      \begin{array}{c}
        \mathbf{Y}_2 \\
        \boldsymbol{\Sigma}_2
      \end{array}
        \right)$}
        \Comment{$\mathbf{Y}_2 \in \mathbb{Z}_p^{n \times l_B},
            \boldsymbol{\Sigma}_2 \in \mathbb{Z}_p^{l_A \times l_B}$}
        \State{$\left(\mathbf{K}_2^0, \mathbf{V}\right) \gets \textsf{Con}(\mathbf{\Sigma}_2, \textsf{params})$}
        \State{$\mathbf{K}_2^1 \gets \mathbb{Z}_m^{l_A \times l_B}$}
        \State{$b \gets \{0, 1\}$}
        \State{$b' \gets \mathcal{A}(\mathbf{A}, \lfloor \mathbf{Y}_1' \rceil_p, \mathbf{Y}_2, \mathbf{K}_2^b, \mathbf{V})$}
        \If {$b' = b$}
            \Return $1$
        \Else
            \Return $0$
        \EndIf
    \EndProcedure
\end{algorithmic}
\end{algorithm}

\begin{lemma}
    If the underlying KC or AKC is \emph{secure}, $\Pr[T2]=\frac{1}{2}$.
\end{lemma}

\begin{proof}
    Note that, in Game $G_2$,  for any
    $1 \le i \le l_A$ and $1 \le j \le l_B$, $\left(\mathbf{K}^0_2[i, j], \mathbf{V}[i, j]\right)$
    only depends on $\boldsymbol{\Sigma}_2[i, j]$, and $\boldsymbol{\Sigma}_2$
    is subject to uniform distribution.
    By the \emph{security} of KC, we have that, for each pair $(i,j)$,  $\mathbf{K}^0_2[i, j]$ and $\mathbf{V}[i, j]$ are independent, and
    $\mathbf{K}^0_2[i, j]$ is uniform distributed. Hence,  $\mathbf{K}^0_2$ and
    $\mathbf{V}$ are independent, and $\mathbf{K}^0_2$ is uniformly distributed, which implies that
    $\Pr[T_2] = 1/2$.
  \end{proof}

    This finishes the proof of Theorem \ref{LWE-security}.
  \end{proof}



\subsection{Parameter Selection and Evaluation} \label{param-select}

In this subsection, the concrete parameters for key exchange protocol from LWR are evaluated and chosen.


\subsubsection{Discrete Distribution}
It is suggested in \cite{newhope15,bcd16} that  rounded Gaussian distribution can be replaced
by discrete distribution that is very close to rounded Gaussian in the
sense of R{\'e}nyi divergence \cite{bai2015improved}.

\begin{definition}[\cite{bai2015improved}]
   For two discrete distributions $P, Q$ satisfying $\mathsf{Supp}(P) \subseteq \mathsf{Supp}(Q)$,   their $a$-order R{\'e}nyi divergence, for some $a > 1$, is
\begin{equation*}
R_a(P || Q) = \left( \sum_{x \in \mathsf{Supp}(P)}
\frac{P(x)^a}{Q(x)^{a-1}}\right)^{\frac{1}{a-1}}.
\end{equation*}
\end{definition}

\begin{lemma}[\cite{bai2015improved}]\label{lemma-bai15}
    Letting $a > 1$, $P$ and $Q$ are two discrete distributions satisfying $\mathsf{Supp}(P)
    \subseteq \mathsf{Supp}(Q)$, then we have
\begin{description}
\item [Multiplicativity:]
Let $P$ and $Q$  be two distributions of random variable $(Y_1, Y_2)$. For $i \in \{1, 2\}$,
let $P_i$ and $Q_i$ be the margin distribution of $Y_i$ over $P$ and $Q$ respectively.
If $Y_1$ and $Y_2$, under $P$ and $Q$ respectively, are independent, then
\begin{equation*}
R_a(P || Q) = R_a(P_1 || Q_1) \cdot R_a(P_2 || Q_2).
\end{equation*}

\item [Probability Preservation:]
Let $A \subseteq \mathsf{Supp}(Q)$ be an event, then
\begin{equation*}
Q(A) \ge P(A)^{\frac{a}{a-1}}/R_a(P || Q).
\end{equation*}
\end{description}
\end{lemma}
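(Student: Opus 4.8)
The plan is to derive both parts directly from the definition of $R_a(P\|Q)$, handling them separately; no machinery beyond elementary manipulation and Hölder's inequality is needed. Throughout I will use that $\mathsf{Supp}(P) \subseteq \mathsf{Supp}(Q)$ guarantees every ratio $P(x)^a/Q(x)^{a-1}$ appearing below is well-defined (the denominator never vanishes where the numerator is nonzero), and that terms with $P(x) = 0$ may be dropped from any sum.

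For \textbf{Multiplicativity}, I would start from independence: on $\mathsf{Supp}(P) = \mathsf{Supp}(P_1) \times \mathsf{Supp}(P_2)$ we have $P(y_1, y_2) = P_1(y_1) P_2(y_2)$ and likewise $Q(y_1, y_2) = Q_1(y_1) Q_2(y_2)$. Substituting into the sum defining $R_a(P\|Q)^{a-1}$, the double sum factors as a product of single sums:
\[
  \sum_{y_1, y_2} \frac{P_1(y_1)^a P_2(y_2)^a}{Q_1(y_1)^{a-1} Q_2(y_2)^{a-1}}
  = \Bigl(\sum_{y_1} \frac{P_1(y_1)^a}{Q_1(y_1)^{a-1}}\Bigr) \Bigl(\sum_{y_2} \frac{P_2(y_2)^a}{Q_2(y_2)^{a-1}}\Bigr)
  = R_a(P_1\|Q_1)^{a-1}\, R_a(P_2\|Q_2)^{a-1}.
\]
Taking $(a-1)$-th roots (legitimate since $a > 1$ and all quantities are nonnegative) yields the claim. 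This step is a routine computation.

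For \textbf{Probability Preservation}, the key tool is Hölder's inequality with the conjugate exponents $a$ and $\tfrac{a}{a-1}$ (indeed $\tfrac1a + \tfrac{a-1}{a} = 1$). For an event $A \subseteq \mathsf{Supp}(Q)$, write $P(A) = \sum_{x \in A} \bigl(P(x)/Q(x)^{(a-1)/a}\bigr) \cdot Q(x)^{(a-1)/a}$ and apply Hölder:
\[
  P(A) \le \Bigl(\sum_{x \in A} \frac{P(x)^a}{Q(x)^{a-1}}\Bigr)^{1/a} \Bigl(\sum_{x \in A} Q(x)\Bigr)^{(a-1)/a}
  \le R_a(P\|Q)^{(a-1)/a}\, Q(A)^{(a-1)/a},
\]
where the last inequality uses that the partial sum over $A$ is at most the full sum $R_a(P\|Q)^{a-1}$. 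Raising both sides to the power $\tfrac{a}{a-1} > 0$ and rearranging gives $Q(A) \ge P(A)^{a/(a-1)} / R_a(P\|Q)$, as desired. I do not expect a genuine obstacle here: the only points requiring attention are lining up the exponents in Hölder's inequality correctly and confirming the support condition makes every term meaningful, both of which are straightforward; the edge case $A = \emptyset$ (hence $Q(A)=0$) is trivial since then $P(A) = 0$ as well.
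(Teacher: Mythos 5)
Your proof is correct. The paper itself gives no proof of this lemma --- it is imported verbatim from the cited reference \cite{bai2015improved} --- and your two arguments (factoring the defining sum under independence for multiplicativity, and H\"older's inequality with conjugate exponents $a$ and $\tfrac{a}{a-1}$ for probability preservation) are exactly the standard ones appearing there, with the exponents and the support bookkeeping handled correctly.
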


Note that, when the underlying key derivation function $KDF$ is modelled as a random oracle (as in \cite{bcd16,newhope15}),
 an attacker is considered to be successful only if it can recover the entire consensus bits.
Denote by $E$ the event that a PPT attacker can successfully and entirely recover the bits  of $\mathbf{K}_1=\mathbf{K}_2$. By Lemma \ref{lemma-bai15}, we have that
$\Pr_{\text{rounded Gaussian}}[E] > \Pr_{\text{discrete}}[E]^{a/(a-1)} /
R_a^{n\cdot (l_A + l_B) + l_A \cdot l_B}(\chi || \bar{\phi})$,  where $\bar{\phi}$ is the rounded Gaussian distribution, and $\chi$ is the discrete distribution.

\subsubsection{Proposed Parameters}

\noindent
\begin{table}[H]
\centering
\begin{tabular}{c|r|r|r r r r r r|c  c}
\hline
\multirow{2}{*}{dist.} & \multirow{2}{*}{bits} & \multirow{2}{*}{var.}
& \multicolumn{6}{c|}{probability of } & \multirow{2}{*}{order} & \multirow{2}{*}{divergence} \\
& &  & $0$ & $\pm{1}$ & $\pm{2}$ & $\pm{3}$ & $\pm{4}$ & $\pm{5}$ \\ \hline
$D_{R}$ & 16 & 1.70 & $19572$ & $14792$ & $6383$ & $1570$ & $220$ & $17$ &  500.0 & 1.0000396 \\
$D_{P}$ & 16 & 1.40 & $21456$ & $15326$ & $5580$ & $1033$ & $97$  & $4$  & 500.0 & 1.0000277 \\
\hline
\end{tabular}

\vspace{2mm}
\caption{Discrete distributions of every component in the LWR secret. We choose the standard variances  large enough to prevent  potential combinational attacks.}
\label{distribution-LWR}
\end{table}

\begin{table}[H]
\centering
\begin{tabular}{c rrr rrr c r rr}
\hline
            & $n$ & $q$ & $p$ & $l$ & $m$ & $g$ & distr. & bw.  & err. & $|\mathbf{K}|$ \\ \hline
Recommended & 680 & $2^{15}$ & $2^{12}$ & 8 & $2^4$ & $2^8$ & $D_{R}$& 16.39 & $2^{-35}$ & 256 \\ \hline
Paranoid    & 832 & $2^{15}$ & $2^{12}$ & 8 & $2^4$ & $2^8$ & $D_{P}$ & 20.03 & $2^{-34}$ & 256 \\ \hline
\end{tabular}
\vspace{2mm}
\caption{Parameters for LWR-Based key exchange.
``bw.'' refers to the bandwidth in kilo-bytes.
``err.'' refers to the overall error probability that is calculated by the algorithm developed in Section \ref{SEC-LWR-1}.
``$|\mathbf{K}|$'' refers to the length of consensus bits.}
\label{lwr-params}
\end{table}

\subsubsection{Security Estimation}

 Similar to  \cite{newhope15,bcd16,CKLS16}, we only consider  the  primal and dual  attacks \cite{Chen2011BKZ,Schnorr1994Lattice} adapted to the LWR problem.

The dual attack tries to distinguish the distribution of LWE samples and the uniform distribution.
Suppose $(\mathbf{A}, \mathbf{b} = \mathbf{A} \mathbf{s} + \mathbf{e}) \in \mathbb{Z}_q^{m \times n} \times \mathbb{Z}_q^m$ is a LWE sample,
where $\mathbf{s}$ and $\mathbf{e}$ are drawn from discrete Gaussian of variance $\sigma_s^2$ and $\sigma_e^2$ respectively.
Then we choose a positive real $c \in \mathbb{R}, 0 < c \le q$, and construct
$L_c(\mathbf{A}) = \left\{ (\mathbf{x}, \mathbf{y}/c) \in \mathbb{Z}^m \times (\mathbb{Z}/c)^n
\mid \mathbf{x}^T \mathbf{A} = \mathbf{y}^T \mod q \right\}$, which is a lattice with dimension $m + n$
and determinant $(q/c)^n$.
For a short vector $(\mathbf{x}, \mathbf{y}) \in L_c(\mathbf{A})$ found by the BKZ algorithm,
we have $\mathbf{x}^T \mathbf{b} = \mathbf{x}^T (\mathbf{A} \mathbf{s} + \mathbf{e}) = c \cdot \mathbf{y}^T \mathbf{s} + \mathbf{x}^T \mathbf{e} \mod q$.
If $(\mathbf{A}, \mathbf{b})$ is an LWE sample, the distribution of the right-hand side will be very close to a Gaussian of standard deviation
$\sqrt{c^2 \|\mathbf{y}\|^2 \sigma_s^2 + \|\mathbf{x}\|^2 \sigma_e^2}$, otherwise the distribution will be uniform.
$\|(\mathbf{x}, \mathbf{y})\|$ is about $\delta_0^{m + n} (q/c)^{\frac{n}{m + n}}$, where $\delta_0$ is the root Hermite factor.
We heuristically assume that $\|\mathbf{x}\| = \sqrt{\frac{m}{m + n}} \left\|(\mathbf{x}, \mathbf{y})\right\|$,
and $\|\mathbf{y}\| = \sqrt{\frac{n}{m + n}} \left\|(\mathbf{x}, \mathbf{y})\right\|$.
Then we can choose $c = \sigma_e/\sigma_s$ that minimizes the standard deviation of $\mathbf{x}^T \mathbf{b}$.
The advantage of distinguishing $\mathbf{x}^T \mathbf{b}$ from uniform distribution is $\epsilon = 4 \exp(-2\pi^2 \tau^2)$,
where $\tau = \sqrt{c^2 \|\mathbf{y}\|^2 \sigma_s^2 + \|\mathbf{x}\|^2 \sigma_e^2} / q$.
This attack must be  repeated $R = \max\{1, 1/(2^{0.2075b} \epsilon^2)\}$ times to be successful.

The primal attack reduces the LWE problem to the unique-SVP problem. Let $\Lambda_{w}(\mathbf{A}) = \{(\mathbf{x}, \mathbf{y}, z)
\in \mathbb{Z}^n \times (\mathbb{Z}^m / w) \times \mathbb{Z} \mid
\mathbf{A} \mathbf{x} + w \mathbf{y} = z \mathbf{b} \mod q \}$, and a vector $\mathbf{v}= (\mathbf{s}, \mathbf{e}/w, 1) \in \Lambda_{w}(\mathbf{A})$.
$\Lambda_{w}(\mathbf{A})$ is a lattice of  $d = m + n + 1$ dimensions, and its determinant is $(q/w)^m$.
From geometry series assumption, we can derive $\|\mathbf{b}_i^*\| \approx \delta_0^{d - 2i - 1} \det(\Lambda_w(\mathbf{A}))^{1/d}$.
We heuristically assume that the length of projection of $\mathbf{v}$ onto the vector space spanned by
the last $b$ Gram-Schmidt vectors is about $\sqrt{\frac{b}{d}} \left\|(\mathbf{s}, \mathbf{e}/w, 1)\right\|
\approx \sqrt{\frac{b}{d} \left(n \sigma_s^2 + m \sigma_e^2 / w^2 + 1\right)}$.
If this length is shorter than $\|\mathbf{b}_{d - b}^*\|$, this attack can be successful.
Hence, the successful condition is $\sqrt{\frac{b}{d} \left(n \sigma_s^2 + m \sigma_e^2 / w^2 + 1\right)} \le \delta_0^{2b - d - 1} \left(\frac{q}{w}\right)^{m/d}$.
We know that the optimal $w$ balancing the secret $\mathbf{s}$ and the noise $\mathbf{e}$ is about $\sigma_e / \sigma_s$.

   We aim at providing parameter sets for long term security, and  estimate the concrete security \emph{in a more conservative way} than \cite{concrete-hardness}
from the defender's point of view.
We first consider the attacks of LWE whose secret and noise have different variances.
Then,  we treat the LWR problem as a special LWE problem whose noise is uniformly distributed
over $[-q/2p, q/2p - 1]$.
In our security estimation, we simply ignore the difference between  the discrete distribution and
the rounded Gaussian, on the following grounds:  the  dual attack and the  primal attack only concern  about
the standard deviation, and the R{\'e}nyi divergence between the two distributions is very small.

\begin{table}[H]
\centering
\begin{tabular}{c|l|rrrrr}
\hline
Scheme & Attack & $m'$ & $b$ & C & Q & P  \\ \hline
\multirow{2}{*}{Recommended} & Primal & 667 & 461 & 143 & \bf{131} & 104 \\
& Dual & 631 & 458 & 142 & \bf{130} & 103 \\
\hline
\multirow{2}{*}{Paranoid} & Primal & 768 & 584 &  180 & 164 & \bf{130} \\
& Dual & 746 & 580 & 179 & 163 & \bf{129} \\
\hline
\end{tabular}
\vspace{2mm}
\caption{Security estimation of the parameters described in Table \ref{lwr-params}. ``C, Q, P'' stand for ``Classical, Quantum, Plausible'' respectively.
    Numbers under these columns are the binary logarithm of running
    time of the corresponding attacks.
    Numbers under ``$m', b$'' are the best parameters for the attacks.
}
\label{table:lwr-security}
\end{table}

\section{LWE-Based Key Exchange  from KC and AKC}\label{Sec-LWE}
In this section, following the protocol structure in \cite{peikert2014lattice,newhope15,bcd16}, we present the applications of OKCN and AKCN to key exchange  protocols based on LWE.  Denote by $(\lambda, n, q, \chi, KC, l_A, l_B, t)$ the underlying parameters, where $\lambda$ is the security parameter,  
  $q \ge 2$, $n$, $l_A$ and  $l_B$ are positive integers that  are  polynomial in $\lambda$ (for protocol symmetry, $l_A$ and $l_B$ are usually set to be equal and are actually small constant).
  To save bandwidth, we cut off $t$ least significant bits of $\mathbf{Y}_2$ before sending it to Alice.


Let $KC=(\textsf{params}, \textsf{Con}, \textsf{Rec})$ be a \emph{correct} and \emph{secure} KC scheme, where $\textsf{params}$ is set to be $(q,g,m,d)$.  The KC-based  key exchange protocol from LWE is depicted in \figurename~\ref{kex:lwe}, and the actual session-key is derived from $\mathbf{K}_1$ and $\mathbf{K}_2$ via some key derivation function $KDF$. There, for presentation simplicity, the 
 \textsf{Con} and \textsf{Rec} functions  are applied to
matrices, meaning they are applied to each of the coordinates separately.
Note that $2^t \mathbf{Y}_2' + 2^{t-1} \mathbf{1}$ is an approximation of $\mathbf{Y}_2$,
so we have
$\boldsymbol{\Sigma}_1 \approx \mathbf{X}_1^T \mathbf{Y}_2 =
\mathbf{X}_1^T \mathbf{A}^T \mathbf{X}_2 + \mathbf{X}_1^T \mathbf{E}_2$,
$\boldsymbol{\Sigma}_2 = \mathbf{Y}_1^T \mathbf{X}_2 + \mathbf{E}_{\sigma}
= \mathbf{X}_1^T \mathbf{A}^T \mathbf{X}_2 + \mathbf{E}_1^T \mathbf{X}_2 +
\mathbf{E}_{\sigma}$.
As we choose $\mathbf{X}_1, \mathbf{X}_2, \mathbf{E}_1, \mathbf{E}_2,
\mathbf{E}_{\sigma}$ according to a
small noise distribution $\chi$, the main part of $\boldsymbol{\Sigma}_1$ and that of
$\boldsymbol{\Sigma}_2$ are the same $\mathbf{X}_1^T \mathbf{A}^T \mathbf{X}_2$.
Hence, the corresponding coordinates of $\boldsymbol{\Sigma}_1$ and
$\boldsymbol{\Sigma}_2$ are close in the sense of $|\cdot|_q$, from which some key consensus can be reached. The failure probability depends upon the number of bits we cut $t$, the underlying distribution $\chi$ and the distance parameter $d$, which will be analyzed in detail in subsequent sections.   In the following security definition and analysis,  we  simply assume that the output of the PRG $\textsf{Gen}$ is truly random. For presentation simplicity, we have described the LWE-based key exchange protocol from a KC scheme. But it can be straightforwardly adapted to work on any correct and secure AKC scheme, as clarified in Section \ref{sec-lwrke}.




\begin{figure}[t]
\centering
\includegraphics{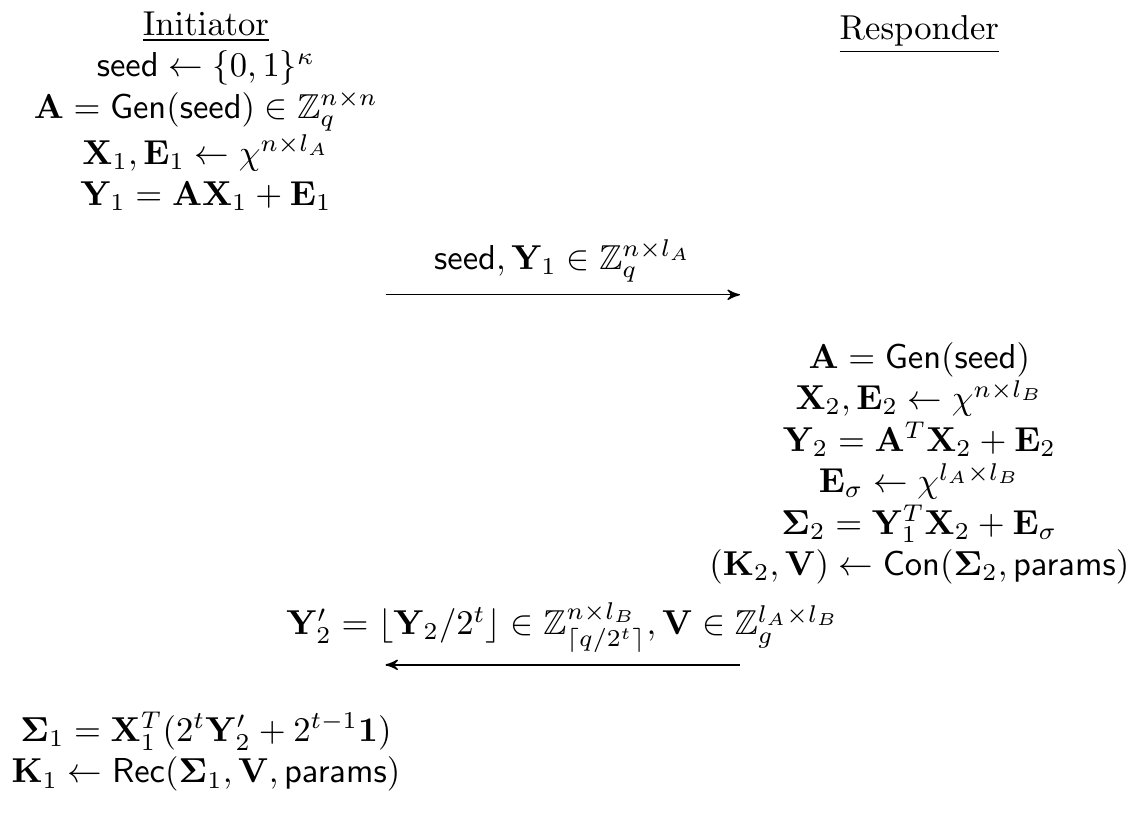}\label{ke-lwe}
\caption{LWE-based key exchange from KC and AKC, where $\mathbf{K}_1,\mathbf{K}_2\in \mathbb{Z}_m^{l_A\times l_B}$ and $\left|\mathbf{K}_1\right|=\left|\mathbf{K}_2\right|=l_A\l_B|m|$.
$\mathbf{1}$ refers to the matrix which every elements are 1.
}
\label{kex:lwe}
\end{figure}

By a straightforward adaption (actually simplification) of the security proof of  LWR-based key exchange protocol in Section \ref{Sec-LWRproof}, we have the following theorem. The detailed proof of Theorem \ref{Sec-LWE-security} is presented in Appendix \ref{App-LWE}.
\begin{theorem}\label{Sec-LWE-security}
    If $(\textsf{params}, \textsf{Con}, \textsf{Rec})$ is a \emph{correct} and  \emph{secure} KC or AKC scheme, the key exchange protocol described in \figurename~\ref{kex:lwe} is \emph{secure} under the (matrix form of) LWE assumption \cite{PVW08,bcd16}.
\end{theorem}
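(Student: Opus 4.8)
The plan is to replay, in a simplified form, the game-hopping argument behind Theorem~\ref{LWR-security} of Section~\ref{Sec-LWRproof}. The simplification comes from the fact that for LWE the noise terms $\mathbf{E}_1,\mathbf{E}_2,\mathbf{E}_{\sigma}$ are explicit, so none of the rounding-residue bookkeeping of the LWR analysis (the $\{\cdot\}_p$ terms, Lemma~\ref{prop-s}, Theorem~\ref{th:independent}) is needed. First I would introduce three games. Game $G_0$ is the real security experiment of \figurename~\ref{kex:lwe}: sample $\mathbf{A}\gets\mathbb{Z}_q^{n\times n}$ and $\mathbf{X}_1,\mathbf{E}_1\gets\chi^{n\times l_A}$, set $\mathbf{Y}_1=\mathbf{A}\mathbf{X}_1+\mathbf{E}_1$; sample $\mathbf{X}_2,\mathbf{E}_2\gets\chi^{n\times l_B}$ and $\mathbf{E}_{\sigma}\gets\chi^{l_A\times l_B}$, set $\mathbf{Y}_2=\mathbf{A}^T\mathbf{X}_2+\mathbf{E}_2$ together with its truncation $\mathbf{Y}_2'$, and $\boldsymbol{\Sigma}_2=\mathbf{Y}_1^T\mathbf{X}_2+\mathbf{E}_{\sigma}$; then $(\mathbf{K}_2^0,\mathbf{V})\gets\textsf{Con}(\boldsymbol{\Sigma}_2,\textsf{params})$, $\mathbf{K}_2^1\gets\mathbb{Z}_m^{l_A\times l_B}$, $b\gets\{0,1\}$, and finally $b'\gets\mathcal{A}(\mathbf{A},\mathbf{Y}_1,\mathbf{Y}_2',\mathbf{K}_2^b,\mathbf{V})$. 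Game $G_1$ is the same except $\mathbf{Y}_1$ is replaced by a uniformly random matrix over $\mathbb{Z}_q^{n\times l_A}$. Game $G_2$ additionally replaces the vertically stacked pair $(\mathbf{Y}_2,\boldsymbol{\Sigma}_2)$ by a uniformly random matrix over $\mathbb{Z}_q^{(n+l_A)\times l_B}$, from which $\mathbf{Y}_2'$ is recomputed by truncation. Writing $T_i$ for the event $b'=b$ in $G_i$, the target is to show that $\Pr[T_0]-1/2$ is negligible.

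For $|\Pr[T_0]-\Pr[T_1]|$ I would build a distinguisher that, on a matrix-LWE challenge $(\mathbf{A},\mathbf{B})$ in which $\mathbf{B}$ is either $\mathbf{A}\mathbf{X}_1+\mathbf{E}_1$ (with $\mathbf{X}_1,\mathbf{E}_1\gets\chi$) or uniform, sets $\mathbf{Y}_1:=\mathbf{B}$ and runs the remainder of the experiment honestly, outputting $1$ iff $b'=b$; it perfectly simulates $G_0$ on a genuine sample and $G_1$ on a uniform one, so this gap is bounded by the matrix-LWE advantage with $l_A$ secret columns and $n$ samples. For $|\Pr[T_1]-\Pr[T_2]|$, the key observation is that in $G_1$ the stacked matrix $\mathbf{A}'=\bigl(\begin{smallmatrix}\mathbf{A}^T\\ \mathbf{Y}_1^T\end{smallmatrix}\bigr)\in\mathbb{Z}_q^{(n+l_A)\times n}$ is uniformly random, so that
\[
\begin{pmatrix}\mathbf{Y}_2\\ \boldsymbol{\Sigma}_2\end{pmatrix}=\mathbf{A}'\mathbf{X}_2+\begin{pmatrix}\mathbf{E}_2\\ \mathbf{E}_{\sigma}\end{pmatrix}
\]
is exactly a matrix-LWE sample with secret $\mathbf{X}_2\gets\chi^{n\times l_B}$ and componentwise-$\chi$ noise of shape $(n+l_A)\times l_B$. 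Hence a distinguisher that parses its challenge $(\mathbf{A}',\mathbf{B})$ into $\mathbf{A},\mathbf{Y}_1$ and $\mathbf{Y}_2,\boldsymbol{\Sigma}_2$, truncates to obtain $\mathbf{Y}_2'$, and then finishes the experiment simulates $G_1$ on a genuine sample and $G_2$ on a uniform one, bounding the gap by the matrix-LWE advantage with $l_B$ secret columns and $n+l_A$ samples. Finally, in $G_2$ the matrix $\boldsymbol{\Sigma}_2$ is uniform over $\mathbb{Z}_q$ in each coordinate and is the only input to $\textsf{Con}$, so the \emph{security} of the underlying KC (resp.\ AKC) gives, coordinate by coordinate, that $\mathbf{K}_2^0[i,j]$ is uniform over $\mathbb{Z}_m$ and independent of $\mathbf{V}[i,j]$ (resp.\ that $\mathbf{V}[i,j]$ is independent of $\mathbf{K}_2^0[i,j]$, whichever distribution the latter is drawn from); since the coordinates are generated independently, $\mathbf{K}_2^b$ is independent of the adversary's remaining view $(\mathbf{A},\mathbf{Y}_1,\mathbf{Y}_2',\mathbf{V})$ for both $b\in\{0,1\}$, hence $\Pr[T_2]=1/2$. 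Summing the three bounds yields the theorem; the case where correctness holds only with overwhelming probability is absorbed into one more negligible term.

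I do not expect a genuine obstacle, since this is strictly a simplification of the LWR proof. The two points needing a little care are: (i) the truncation producing $\mathbf{Y}_2'$ from $\mathbf{Y}_2$ is a fixed deterministic map, so once $\mathbf{Y}_2$ is made uniform its truncation contributes nothing further to $\mathcal{A}$'s advantage, and one must make sure every reduction hands $\mathbf{Y}_2'$ --- not $\mathbf{Y}_2$ --- to $\mathcal{A}$ exactly as in \figurename~\ref{kex:lwe}; and (ii) matching dimensions in the stacked-LWE step, i.e.\ checking that $\mathbf{E}_2$ and $\mathbf{E}_{\sigma}$ are drawn from $\chi$ entrywise so that the matrix form of the LWE assumption is invoked with precisely $n+l_A$ samples and $l_B$ secrets. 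The full argument is deferred to Appendix~\ref{App-LWE}.
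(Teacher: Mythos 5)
Your proposal is correct and follows essentially the same route as the paper's proof in Appendix~\ref{App-LWE}: the identical three-game sequence ($\mathbf{Y}_1$ made uniform, then the stacked pair $(\mathbf{Y}_2,\boldsymbol{\Sigma}_2)$ made uniform via the matrix-LWE instance with $n+l_A$ samples and $l_B$ secret columns), with the final game settled by the security of the underlying KC/AKC applied coordinatewise to the uniform $\boldsymbol{\Sigma}_2$. The two points of care you flag (handing $\lfloor\mathbf{Y}_2/2^t\rfloor$ rather than $\mathbf{Y}_2$ to the adversary, and the dimension bookkeeping for the stacked noise) are handled exactly as you describe.
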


The correctness of the protocol depends upon the underlying error distributions, which are discussed  in the next subsection.

\subsection{Noise Distributions and Correctness} \label{subsection:failure-probability}

For a $\emph{correct}$ KC with parameter $d$, if the distance of corresponding elements of $\boldsymbol{\Sigma}_1$
and $\boldsymbol{\Sigma}_2$ is less than $d$ in the sense of $|\cdot|_q$,
then the scheme depicted in \figurename~\ref{kex:lwe} is correct.
Denote $\epsilon(\mathbf{Y}_2) = 2^t \lfloor \mathbf{Y}_2/2^t \rfloor + 2^{t-1} \mathbf{1} - \mathbf{Y}_2$.
Then
\begin{align*}
\boldsymbol{\Sigma}_1 - \boldsymbol{\Sigma}_2 &=
\mathbf{X}_1^T(2^t \mathbf{Y}_2' + 2^{t-1} \mathbf{1}) - \mathbf{Y}_1^T \mathbf{X}_2 - \mathbf{E}_{\sigma} \\
&= \mathbf{X}_1^T(\mathbf{Y}_2 + \epsilon(\mathbf{Y}_2)) - \mathbf{Y}_1^T \mathbf{X}_2 - \mathbf{E}_{\sigma} \\
&= \mathbf{X}_1^T(\mathbf{A}^T \mathbf{X}_2 + \mathbf{E}_2 + \epsilon(\mathbf{Y}_2)) -
(\mathbf{A} \mathbf{X}_1 + \mathbf{E}_1)^T \mathbf{X}_2 - \mathbf{E}_{\sigma} \\
&= \mathbf{X}_1^T (\mathbf{E}_2 + \epsilon(\mathbf{Y}_2)) - \mathbf{E}_1^T \mathbf{X}_2 - \mathbf{E}_{\sigma}
\end{align*}

We consider each pair of elements in matrix $\boldsymbol{\Sigma}_1, \boldsymbol{\Sigma}_2$ separately,
then derive the overall error rate by \emph{union bound}.
Now, we only need to consider the case $l_A = l_B = 1$.
In this case, $\mathbf{X}_i, \mathbf{E}_i, \mathbf{Y}_i, (i = 1, 2)$ are column vectors in $\mathbb{Z}_q^n$,
and $\mathbf{E}_{\sigma} \in \mathbb{Z}_q$.

If $\mathbf{Y}_2$ is independent of $(\mathbf{X}_2, \mathbf{E}_2)$,
then we can directly calculate the distribution of $\boldsymbol{\sigma}_1 - \boldsymbol{\sigma}_2$.
But now $\mathbf{Y}_2$ depends on $(\mathbf{X}_2, \mathbf{E}_2)$.
To overcome this difficulty, we show that $\mathbf{Y}_2$ is independent of $(\mathbf{X}_2, \mathbf{E}_2)$
under a condition of $\mathbf{X}_2$ that happens with very high probability.

\begin{theorem}\label{th-full-rank}
For any positive integer $q, n$, and a column vector $\mathbf{s} \in \mathbb{Z}_q^n$,
let $\phi_{\mathbf{s}}$ denote the map $\mathbb{Z}_q^n \rightarrow \mathbb{Z}_q:
\phi_{\mathbf{s}}(\mathbf{x}) = \mathbf{x}^T \mathbf{s}$.
If there exits a coordinate of $\mathbf{s}$ which is not \emph{zero divisor} in ring $\mathbb{Z}_q$,
then map $\phi_{\mathbf{s}}$ is surjective.
\end{theorem}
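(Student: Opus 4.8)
The plan is to show the map $\phi_{\mathbf{s}}$ is surjective by exhibiting a preimage for an arbitrary target $a \in \mathbb{Z}_q$, using the coordinate of $\mathbf{s}$ that is not a zero divisor. First I would recall the elementary ring-theoretic fact that an element $u$ of the finite ring $\mathbb{Z}_q$ is a unit if and only if it is not a zero divisor: indeed, multiplication by $u$ is an injective (hence, by finiteness, bijective) self-map of $\mathbb{Z}_q$ precisely when $u$ is not a zero divisor, and bijectivity of $x \mapsto ux$ means $1$ has a preimage, i.e.\ $u$ is invertible. So the hypothesis that some coordinate $s_i$ of $\mathbf{s}$ is not a zero divisor is equivalent to saying $s_i \in \mathbb{Z}_q^{\times}$.

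With that in hand, the surjectivity argument is a one-liner. Fix any $a \in \mathbb{Z}_q$ and let $i$ be an index with $s_i$ invertible. Define $\mathbf{x} \in \mathbb{Z}_q^n$ by setting its $i$-th coordinate to $a \cdot s_i^{-1} \bmod q$ and all other coordinates to $0$. Then $\phi_{\mathbf{s}}(\mathbf{x}) = \mathbf{x}^T \mathbf{s} = (a s_i^{-1}) s_i = a$ in $\mathbb{Z}_q$, so $a$ is in the image. Since $a$ was arbitrary, $\phi_{\mathbf{s}}$ is surjective.

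There is essentially no obstacle here; the only point requiring care is the equivalence ``not a zero divisor $\iff$ unit,'' which genuinely uses the finiteness of $\mathbb{Z}_q$ (it fails in, e.g., $\mathbb{Z}$), so I would state that finiteness is being used. The natural follow-up (not asked here but presumably the reason for the theorem) is that when $\mathbf{X}_2$ has a coordinate that is a unit mod $q$ — an event of overwhelming probability for the small-noise distribution $\chi$ when $q$ has few small prime factors — the map $\mathbf{A}^T \mapsto \mathbf{A}^T\mathbf{X}_2$ is surjective onto $\mathbb{Z}_q^n$, which forces $\mathbf{Y}_2 = \lfloor \mathbf{A}^T\mathbf{X}_2 \rceil$ to be (conditionally) uniform and hence independent of $(\mathbf{X}_2, \mathbf{E}_2)$, resolving the dependency issue flagged just before the theorem statement.
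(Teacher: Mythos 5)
Your proof is correct and takes essentially the same route as the paper: both arguments reduce to the observation that multiplication by the non-zero-divisor coordinate is an injective, hence (by finiteness of $\mathbb{Z}_q$) surjective, self-map, so every target is hit by a vector supported on that single coordinate. Your version merely makes the preimage explicit via $s_i^{-1}$, which is a harmless elaboration.
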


\begin{proof}
    Let us assume one coordinate of $\mathbf{s}$, say $s$, has no \emph{zero divisor} in ring $\mathbb{Z}_q$.
    Then the $\mathbb{Z}_q \rightarrow \mathbb{Z}_q$ map between the two $\mathbb{Z}_q$-modules deduced by $s$: $x \mapsto sx$, is injective,
    and thus surjective. Hence, $\phi_{\mathbf{s}}$ is surjective.
  \end{proof}

For a column vector $\mathbf{s}$ composed by random variables,
denote by $F(\mathbf{s})$ the event that $\phi_{\mathbf{s}}$ is surjective.
The following theorem gives a lower bound of probability of $F(\mathbf{s})$,
where $\mathbf{s} \gets \chi^{n}$. In our application, this lower bound is very close to $1$.

\begin{theorem} \label{sigularS}
    Let $p_0$ be the probability that $e$ is a \emph{zero divisor} in ring $\mathbb{Z}_q$,
    where $e$ is subject to $\chi$.
    Then $\Pr[\mathbf{s} \gets \chi^{n}: F(\mathbf{s})] \ge 1 - p_0^n$
\end{theorem}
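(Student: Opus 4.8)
The plan is to reduce the statement directly to Theorem~\ref{th-full-rank} via a short complementation-plus-independence argument, so almost no new work is needed. First I would note that Theorem~\ref{th-full-rank} provides a \emph{sufficient} condition for the event $F(\mathbf{s})$ (that $\phi_{\mathbf{s}}$ is surjective): if at least one coordinate of $\mathbf{s}$ is not a zero divisor in $\mathbb{Z}_q$, then $\phi_{\mathbf{s}}$ is surjective. Consequently the complement event $\neg F(\mathbf{s})$ is contained in the event that \emph{every} coordinate of $\mathbf{s}$ is a zero divisor in $\mathbb{Z}_q$.

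Next I would invoke the fact that $\mathbf{s} \gets \chi^{n}$ has its $n$ coordinates drawn \emph{independently} from $\chi$, so the probability that all of them are zero divisors equals exactly $p_0^{n}$, where $p_0 = \Pr_{e \gets \chi}[e \text{ is a zero divisor in } \mathbb{Z}_q]$. Combining this with the inclusion above gives $\Pr[\neg F(\mathbf{s})] \le p_0^{n}$, hence $\Pr[\mathbf{s} \gets \chi^{n} : F(\mathbf{s})] \ge 1 - p_0^{n}$, which is precisely the claim.

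The only point requiring a little care is the convention for ``zero divisor'' in a finite ring: one must count $0$ among the zero divisors of $\mathbb{Z}_q$ (equivalently, the non-zero-divisors of $\mathbb{Z}_q$ are exactly the units, i.e., the $a$ with $\gcd(a,q)=1$), since multiplication by $0$ is plainly not injective, so ``$0$ is not a zero divisor'' would break Theorem~\ref{th-full-rank}. Under that convention $p_0$ is simply $\Pr_{e\gets\chi}[\gcd(e,q)>1]$, and the argument goes through verbatim; this is also what makes the bound useful in the applications, where $q$ is a power of two and $p_0$ is very small. I do not anticipate any real obstacle here — all the substance is in Theorem~\ref{th-full-rank}, which is already proved.
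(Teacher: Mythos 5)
Your proof is correct and follows exactly the paper's argument: take the contrapositive of Theorem~\ref{th-full-rank} so that $\neg F(\mathbf{s})$ forces every coordinate of $\mathbf{s}$ to be a zero divisor, then use the independence of the $n$ coordinates to bound this event by $p_0^n$. Your side remark that $0$ must be counted among the zero divisors is a sensible clarification of the convention but adds nothing beyond what the paper's proof already relies on implicitly.
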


\begin{proof}
From Theorem~\ref{th-full-rank}, if $\phi_{\mathbf{s}}$ is not surjective,
then all coordinates of $\mathbf{s}$ are \emph{zero divisors}.
Then $\Pr[\mathbf{s} \gets \chi^{n}: \neg F(\mathbf{s})] \le p_0^n$,
and the proof is finished.
  \end{proof}

\begin{theorem} \label{th-uniform}
    If $\mathbf{s}, \mathbf{e} \gets \chi^{n}, \mathbf{A} \gets \mathbb{Z}_q^{n \times n},
    \mathbf{y} = \mathbf{A} \mathbf{s} + \mathbf{e} \in \mathbb{Z}_q^n$,
    then under the condition $F(\mathbf{s})$, $\mathbf{y}$ is independent of $(\mathbf{s}, \mathbf{e})$,
    and is uniformly distributed over $\mathbb{Z}_q^{n}$.
\end{theorem}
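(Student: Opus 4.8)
The plan is to fix the values of $(\mathbf{s},\mathbf{e})$, analyze the conditional distribution of $\mathbf{y}$, and show that it is the \emph{same} uniform distribution for every pair $(\tilde{\mathbf{s}},\tilde{\mathbf{e}})$ for which $F(\tilde{\mathbf{s}})$ holds; independence of $\mathbf{y}$ from $(\mathbf{s},\mathbf{e})$ together with uniformity of $\mathbf{y}$ then follows at once. Note first that $F(\mathbf{s})$ is an event depending only on $\mathbf{s}$, so conditioning on it is compatible with the pointwise conditioning below.

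First I would fix an arbitrary $\tilde{\mathbf{s}}\in\mathbb{Z}_q^n$ with $F(\tilde{\mathbf{s}})$ (i.e.\ $\phi_{\tilde{\mathbf{s}}}$ surjective, which exists by Theorem~\ref{th-full-rank}) and an arbitrary $\tilde{\mathbf{e}}\in\mathsf{Supp}(\chi^n)$, and condition on $\mathbf{s}=\tilde{\mathbf{s}}$, $\mathbf{e}=\tilde{\mathbf{e}}$. Since $\mathbf{A}\gets\mathbb{Z}_q^{n\times n}$ is independent of $(\mathbf{s},\mathbf{e})$, after conditioning $\mathbf{A}$ is still uniform, and $\mathbf{y}=\mathbf{A}\tilde{\mathbf{s}}+\tilde{\mathbf{e}}$. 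Writing $\mathbf{a}_1^T,\dots,\mathbf{a}_n^T$ for the rows of $\mathbf{A}$, these are i.i.d.\ uniform over $\mathbb{Z}_q^n$, and the $i$-th coordinate of $\mathbf{A}\tilde{\mathbf{s}}$ equals $\mathbf{a}_i^T\tilde{\mathbf{s}}=\phi_{\tilde{\mathbf{s}}}(\mathbf{a}_i)$.

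The key step is then the observation that $\phi_{\tilde{\mathbf{s}}}$ is a homomorphism of finite abelian groups $\mathbb{Z}_q^n\to\mathbb{Z}_q$ which, under $F(\tilde{\mathbf{s}})$, is surjective; hence its fibers are the cosets of $\ker\phi_{\tilde{\mathbf{s}}}$, all of cardinality $q^{n-1}$, so the pushforward of the uniform distribution on $\mathbb{Z}_q^n$ under $\phi_{\tilde{\mathbf{s}}}$ is the uniform distribution on $\mathbb{Z}_q$. Therefore each $\phi_{\tilde{\mathbf{s}}}(\mathbf{a}_i)$ is uniform over $\mathbb{Z}_q$, and by independence of the rows $\mathbf{A}\tilde{\mathbf{s}}$ is uniform over $\mathbb{Z}_q^n$; adding the fixed shift $\tilde{\mathbf{e}}$ preserves uniformity. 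Thus the law of $\mathbf{y}$ conditioned on $(\mathbf{s}=\tilde{\mathbf{s}},\mathbf{e}=\tilde{\mathbf{e}})$ is $\mathcal{U}(\mathbb{Z}_q^n)$, independently of which $(\tilde{\mathbf{s}},\tilde{\mathbf{e}})$ with $F(\tilde{\mathbf{s}})$ was chosen. Averaging over $(\tilde{\mathbf{s}},\tilde{\mathbf{e}})$ conditioned on $F(\mathbf{s})$ gives that $\mathbf{y}$ is independent of $(\mathbf{s},\mathbf{e})$ and uniform over $\mathbb{Z}_q^n$, as claimed.

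I do not expect a real obstacle here: the only things to get right are the elementary ``surjective group homomorphism $\Rightarrow$ uniform pushforward'' fact and the careful pointwise-conditioning argument that upgrades uniformity of the marginal to genuine independence from $(\mathbf{s},\mathbf{e})$. The substantive content (surjectivity of $\phi_{\mathbf{s}}$ whenever some coordinate of $\mathbf{s}$ is not a zero divisor) has already been supplied by Theorems~\ref{th-full-rank} and~\ref{sigularS}, so this theorem is essentially a clean repackaging of that fact as a probabilistic independence statement, to be used later to decouple $\mathbf{Y}_2$ from $(\mathbf{X}_2,\mathbf{E}_2)$ in the error-rate analysis.
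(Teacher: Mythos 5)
Your proposal is correct and follows essentially the same route as the paper: condition pointwise on $(\mathbf{s},\mathbf{e})=(\tilde{\mathbf{s}},\tilde{\mathbf{e}})$, decompose $\mathbf{A}$ into its i.i.d.\ uniform rows, and use surjectivity of $\phi_{\tilde{\mathbf{s}}}$ to conclude each coordinate of $\mathbf{A}\tilde{\mathbf{s}}$ is uniform (the paper states this as the fiber count $|\mathrm{Ker}\,\phi_{\mathbf{s}}|=q^{n-1}$, which is exactly your ``surjective homomorphism $\Rightarrow$ uniform pushforward'' observation). Nothing further is needed.
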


\begin{proof}
    For all $\tilde{\mathbf{y}}, \tilde{\mathbf{s}}, \tilde{\mathbf{e}}$,
    $\Pr[\mathbf{y} = \tilde{\mathbf{y}} \mid \mathbf{s} = \tilde{\mathbf{s}}, \mathbf{e} = \tilde{\mathbf{e}}, F(\mathbf{s})]
    = \Pr[\mathbf{A}\tilde{\mathbf{s}} = \tilde{\mathbf{y}} - \tilde{\mathbf{e}}
    \mid \mathbf{s} = \tilde{\mathbf{s}}, \mathbf{e} = \tilde{\mathbf{e}}, F(\mathbf{s})]$.
    Let $\mathbf{A} = (\mathbf{a}_1, \mathbf{a}_2, \dots, \mathbf{a}_n)^T,
    \tilde{\mathbf{y}} - \tilde{\mathbf{e}} = (c_1, c_2, \dots, c_n)^T$,
    where $\mathbf{a}_i \in \mathbb{Z}_q^n$, and $c_i \in \mathbb{Z}_q$, for every $1 \le i \le n$.
    Since $\phi_{\mathbf{s}}$ is surjective, the number of possible choices of $\mathbf{a}_i$,  satisfying
    $\mathbf{a}_i^T \cdot \tilde{\mathbf{s}} = c_i$,  is $|\text{Ker} \phi_{\mathbf{s}}| = q^{n - 1}$.
    Hence, $\Pr[\mathbf{A}\tilde{\mathbf{s}} = \tilde{\mathbf{y}} - \tilde{\mathbf{e}}
    \mid \mathbf{s} = \tilde{\mathbf{s}}, \mathbf{e} = \tilde{\mathbf{e}}, F(\mathbf{s})] = (q^{n-1})^n/q^{n^2} = 1/q^n$.
    Since the right-hand side is the constant $1/q^{n}$, the distribution of $\mathbf{y}$ is uniform over $\mathbb{Z}_q^{n}$, and is irrelevant of $(\mathbf{s}, \mathbf{e})$.
  \end{proof}

We now begin to analyze the error probability of the scheme presented in \figurename~\ref{kex:lwe}.

Denote by $E$ the event $|\mathbf{X}_1^T (\mathbf{E}_2 + \epsilon(\mathbf{Y}_2)) - \mathbf{E}_1^T \mathbf{X}_2 - \mathbf{E}_{\sigma}|_q > d$.
Then $\Pr[E] = \Pr[E | F(\mathbf{S})] \Pr[F(\mathbf{S})] + \Pr[E | \neg  F(\mathbf{S})] \Pr[\neg  F(\mathbf{S})]$.
From Theorem~\ref{th-uniform}, we replace $\mathbf{Y}_2 = \mathbf{A}^T \mathbf{X}_2 + \mathbf{E}_2$ in the event $E | F(\mathbf{S})$ with
uniformly distributed $\mathbf{Y}_2$. Then,
\begin{align*}
    \Pr[E] &= \Pr[\mathbf{Y}_2 \gets \mathbb{Z}_q^{n}: E | F(\mathbf{S})] \Pr[F(\mathbf{S})] +
                \Pr[E | \neg  F(\mathbf{S})] \Pr[\neg  F(\mathbf{S})] \\
     &= \Pr[\mathbf{Y}_2 \gets \mathbb{Z}_q^{n}: E | F(\mathbf{S})] \Pr[F(\mathbf{S})] +
        \Pr[\mathbf{Y}_2 \gets \mathbb{Z}_q^{n}: E | \neg F(\mathbf{S})] \Pr[\neg  F(\mathbf{S})] \\
        &\quad +\Pr[E | \neg  F(\mathbf{S})] \Pr[\neg  F(\mathbf{S})]
                - \Pr[\mathbf{Y}_2 \gets \mathbb{Z}_q^{n}: E | \neg  F(\mathbf{S})] \Pr[\neg  F(\mathbf{S})] \\
                &= \Pr[\mathbf{Y}_2 \gets \mathbb{Z}_q^{n}: E] + \epsilon
\end{align*}
where $|\epsilon| \le \Pr[\neg  F(\mathbf{S})]$.
{In our application, $p_0$ is far from $1$, and $n$ is very large,
by Theorem~\ref{sigularS}, $\epsilon$ is very small, so we simply ignore $\epsilon$.}
If $\mathbf{Y}_2$ is uniformly distributed, then $\epsilon(\mathbf{Y}_2)$ is a centered uniform distribution.
Then, the  distribution of $\mathbf{X}_1^T (\mathbf{E}_2 + \epsilon(\mathbf{Y}_2)) - \mathbf{E}_1^T \mathbf{X}_2 - \mathbf{E}_{\sigma}$
can be directly computed by programs.

As noted in \cite{newhope15,bcd16}, sampling from rounded Gaussian distribution (i.e., sampling from a discrete Gaussian distribution to a high precision) constitutes one of major efficiency bottleneck. In this work, for LWE-based key exchange, we are mainly concerned with the following two kinds of efficiently sampleable distributions.

\subsubsection{Binary Distribution} 

Binary-LWE is a variation of LWE, where the noise distribution is set to be  $\chi =
\mathcal{U}(\{0, 1\})$. With respect to   $m=n \cdot (1 + \Omega(1/\log{n}))$ samples and large enough polynomial
$q \ge n^{O(1)}$, the hardness of binary-LWE is established in \cite{micciancio2013hardness}, with a reduction {from} some approximation lattice problem in dimension $\Theta(n/\log{n})$.
Concrete error probability can be calculated on the concrete parameters by the method in previous section.

For KC-based key exchange from binary-LWE, we have the following theorem,
which means that it is correct when the underlying parameter $d$  satisfies $d\ge n+1$. For LWE-based KE from OKCN, where $2md<q$, we get that it is correct when $q>2m(n+1)$.    Actually, this theorem has already been implied in  the above analysis.

\begin{theorem}
If $\chi = \mathcal{U}(\{0, 1\})$, and $(\textsf{params}, \textsf{Con}, \textsf{Rec})$ is a \emph{correct} KC or AKC
 scheme where $d \ge n + 1$, the key exchange protocol described in Algorithm \ref{kex:lwe} is correct. 
\end{theorem}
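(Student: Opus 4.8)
The plan is to bound, coordinate by coordinate, the magnitude of $\boldsymbol{\Sigma}_1-\boldsymbol{\Sigma}_2$ in the metric $|\cdot|_q$ and then to invoke the \emph{correctness} clause of the underlying KC (or AKC) scheme. First I would recall the identity derived earlier in this section,
\begin{equation*}
\boldsymbol{\Sigma}_1 - \boldsymbol{\Sigma}_2 = \mathbf{X}_1^T\bigl(\mathbf{E}_2 + \epsilon(\mathbf{Y}_2)\bigr) - \mathbf{E}_1^T \mathbf{X}_2 - \mathbf{E}_{\sigma},
\end{equation*}
and then reduce, exactly as in the error-rate analysis, to the case $l_A=l_B=1$, so that $\mathbf{X}_1,\mathbf{X}_2,\mathbf{E}_1,\mathbf{E}_2\in\{0,1\}^n$ and $\mathbf{E}_{\sigma}\in\{0,1\}$ (since every entry is drawn from $\chi=\mathcal{U}(\{0,1\})$), and in the binary setting one takes $t=0$, i.e.\ no truncation of $\mathbf{Y}_2$, so that $\epsilon(\mathbf{Y}_2)=0$.

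Next I would bound a generic entry of $\boldsymbol{\Sigma}_1-\boldsymbol{\Sigma}_2$. Both $\mathbf{X}_1^T\mathbf{E}_2$ and $\mathbf{E}_1^T\mathbf{X}_2$ are sums of $n$ products of $\{0,1\}$-values, hence each lies in $\{0,1,\dots,n\}$, while $\mathbf{E}_{\sigma}\in\{0,1\}$; therefore the entry lies in the integer interval $[-(n+1),\,n]$. Now $d\le\lfloor q/2\rfloor$ together with the hypothesis $d\ge n+1$ forces $q\ge 2(n+1)$, and for any integer $x$ with $|x|\le n+1\le\lfloor q/2\rfloor$ one checks directly from the definition $|x|_q=\min\{x\bmod q,\,q-(x\bmod q)\}$ that $|x|_q=|x|$. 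Hence $|\boldsymbol{\Sigma}_1-\boldsymbol{\Sigma}_2|_q\le n+1\le d$ entrywise.

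Finally, applying the correctness of the underlying \emph{correct} KC scheme to each coordinate pair $(\boldsymbol{\Sigma}_1[i,j],\boldsymbol{\Sigma}_2[i,j])$ yields $\mathbf{K}_1[i,j]=\mathbf{K}_2[i,j]$ for all $i,j$, i.e.\ $\mathbf{K}_1=\mathbf{K}_2$; the AKC case is verbatim the same, since its correctness clause has the identical form. There is no genuine obstacle in this proof — it is a routine magnitude count — and the only points that require care are the bookkeeping of the two ``extra'' terms: the reconciliation noise $\mathbf{E}_{\sigma}$ is precisely what pushes the bound from $n$ up to $n+1$, and one must use (or assume) that no bits of $\mathbf{Y}_2$ are cut (otherwise the clean threshold $d\ge n+1$ must be replaced by a larger one, of order $n\cdot 2^{t-1}$). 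I would also record the stated consequence for OKCN-simple: combining $d\ge n+1$ with its correctness condition $2md<q$ gives $q>2m(n+1)$.
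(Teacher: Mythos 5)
Your proof is correct and follows essentially the same route as the paper's: bound each coordinate of $\boldsymbol{\Sigma}_1-\boldsymbol{\Sigma}_2=\mathbf{X}_1^T\mathbf{E}_2-\mathbf{E}_1^T\mathbf{X}_2-\mathbf{E}_{\sigma}$ by $n+1$ in the $|\cdot|_q$ metric (with $t=0$ so that $\epsilon(\mathbf{Y}_2)=0$, an assumption the paper makes implicitly and you make explicit) and then invoke the correctness clause of the KC/AKC scheme coordinatewise. Your extra remarks on why $|x|_q=|x|$ in this range and on the effect of nonzero $t$ are consistent with, and slightly more careful than, the paper's argument.
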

\begin{proof}
We prove that, for any  $(i,j)$,
$1 \le i \le l_A$ and $1 \le j \le l_B$,
$\left|\boldsymbol{\Sigma}_1[i, j]-\boldsymbol{\Sigma}_2[i, j]\right|_q\leq d$ holds true.
Denote $\mathbf{X}_1 = (\hat{\mathbf{X}}_1, \hat{\mathbf{X}}_2, \dots, \hat{\mathbf{X}}_{l_A})$,
$\mathbf{E}_1 = (\hat{\mathbf{E}}_1, \hat{\mathbf{E}}_2, \dots, \hat{\mathbf{E}}_{l_A})$,
and $\mathbf{X}_2 = (\hat{\mathbf{X}}_1', \hat{\mathbf{X}}_2', \dots, \hat{\mathbf{X}}_{l_B}')$,
$\mathbf{E}_2 = (\hat{\mathbf{E}}_1', \hat{\mathbf{E}}_2', \dots, \hat{\mathbf{E}}_{l_B}')$,
where $\hat{\mathbf{X}}_i, \hat{\mathbf{X}}_i', \hat{\mathbf{E}}_i,
\hat{\mathbf{E}}_i' \in \{0, 1\}^n$. Then
\begin{align*}
\left|\boldsymbol{\Sigma}_1[i, j]-\boldsymbol{\Sigma}_2[i, j]\right|_q =&
\left|\hat{\mathbf{X}}_i^T \hat{\mathbf{E}}_j' - \hat{\mathbf{E}}_i^T \hat{\mathbf{X}}_j' -
    \mathbf{E}_\sigma[i, j]\right|_q \\
\le& \left|\hat{\mathbf{X}}_i^T \hat{\mathbf{E}}_j' - \hat{\mathbf{E}}_i^T \hat{\mathbf{X}}_j'\right|_q +
    \left| \mathbf{E}_\sigma[i, j]\right|_q \\
\le& n + 1 \le d
\end{align*}
\end{proof}

However, cautions should be taken when deploying  key exchange protocols based upon binary-LWE.
By noting that any Binary-LWE sample satisfies a quadric equation,  if no less than $n^2$
samples can be used for an adversary, the secret and noise can be recovered easily.
The work \cite{arora2011new} proposes an algorithm for binary-LWE  with time
complexity $2^{\tilde{O}(\alpha q)^2}$. If $\alpha q = o(\sqrt{n})$, this
algorithm is subexponential, but it requires $2^{\tilde{O}((\alpha q)^2)}$ samples. When $m \log{q} / (n + m) = o(n / \log{n})$,
\cite{kirchner2015improved} proposes a distinguishing attack with time
complexity $2^{\frac{n/2 + o(n)}{\ln{(m\log{q}/(n+m) - 1)}}}$,
which results in a  subexponential-time algorithm if  $m$ grows linearly in  $n$.


\subsubsection{Discrete Distributions}
In this work, for LWE-based key exchange, we use the following two classes of discrete distributions, which are specified in Table \ref{table:discrete-distr} and Table \ref{table:frodo-distr} respectively,  where ``bits'' refers to the number of bits required to sample the distribution and
``var.'' means the standard variation of the Gaussian distribution approximated. We remark that the discrete distributions specified in Table \ref{table:frodo-distr} are just those specified and used in  \cite{bcd16} for the LWE-based  Frodo scheme.


\begin{table*}[h!]
\centering
\resizebox{0.7\textwidth}{!}{
\begin{tabular}{c|r|r|r r r r r r |c  c}
\hline
\multirow{2}{*}{dist.} & \multirow{2}{*}{bits} & \multirow{2}{*}{var.}
& \multicolumn{6}{c|}{probability of } & \multirow{2}{*}{order} & \multirow{2}{*}{divergence} \\
& &  & $0$ & $\pm{1}$ & $\pm{2}$ & $\pm{3}$ & $\pm{4}$ & $\pm{5}$ & & \\ \hline
$D_1$ & 8  & 1.10 &    94 &    62 &   17 &    2 &     &    & 15.0  & 1.0015832 \\
$D_2$ & 12 & 0.90 &  1646 &   992 &  216 &   17 &     &    & 75.0  & 1.0003146 \\
$D_3$ & 12 & 1.66 &  1238 &   929 &  393 &   94 &  12 &  1 & 30.0  & 1.0002034 \\
$D_4$ & 16 & 1.66 & 19794 & 14865 & 6292 & 1499 & 200 & 15 & 500.0 & 1.0000274 \\
$D_5$ & 16 & 1.30 & $22218$ & $15490$ & $5242$ & $858$ & $67$ & $2$ &  500.0 & 1.0000337 \\
\hline
\end{tabular}
}
\vspace{2mm}
\caption{Discrete distributions  proposed in this work, and their R{\'e}nyi divergences.}
\label{table:discrete-distr}
\end{table*}

\vspace{-1.5cm}

\begin{table*}[h!]
\centering
\resizebox{0.7\textwidth}{!}{
\begin{tabular}{c|r|r|r r r r r r r|c  c}
\hline
\multirow{2}{*}{dist.} & \multirow{2}{*}{bits} & \multirow{2}{*}{var.}
& \multicolumn{7}{c|}{probability of } & \multirow{2}{*}{order} & \multirow{2}{*}{divergence} \\
& &  & $0$ & $\pm{1}$ & $\pm{2}$ & $\pm{3}$ & $\pm{4}$ & $\pm{5}$ & $\pm{6}$ & & \\ \hline
$\bar{D}_1$ & 8  & 1.25 &    88 &    61 &   20 &    3 &     &    &   & 25.0  & 1.0021674 \\
$\bar{D}_2$ & 12 & 1.00 &  1570 &   990 &  248 &   24 &   1 &    &   & 40.0  & 1.0001925 \\
$\bar{D}_3$ & 12 & 1.75 &  1206 &   919 &  406 &  104 &  15 &  1 &   & 100.0 & 1.0003011 \\
$\bar{D}_4$ & 16 & 1.75 & 19304 & 14700 & 6490 & 1659 & 245 & 21 & 1 & 500.0 & 1.0000146 \\
\hline
\end{tabular}
}
\vspace{2mm}
\caption{Discrete distributions for Frodo \cite{bcd16}, and their R{\'e}nyi divergences}
    \label{table:frodo-distr}
\end{table*}

\begin{table*}[h!]
\centering
\resizebox{1\textwidth}{!}{
\begin{tabular}{c|rrrr|rr|rr|c|cc|r|c|c}
\hline
\multicolumn{1}{l|}{\multirow{2}{*}{}} & \multicolumn{1}{c}{\multirow{2}{*}{$q$}} & \multicolumn{1}{c}{\multirow{2}{*}{$n$}} & \multicolumn{1}{c}{\multirow{2}{*}{$l$}} & \multicolumn{1}{c|}{\multirow{2}{*}{$m$}} & \multicolumn{2}{c|}{$g$} & \multicolumn{2}{c|}{$d$} & \multirow{2}{*}{dist.} & \multicolumn{2}{c|}{error probability} & \multirow{2}{*}{bw. (kB)} & \multirow{2}{*}{$|A|$ (kB)} & \multirow{2}{*}{$|K|$} \\ \cline{6-9} \cline{11-12}
\multicolumn{1}{l|}{} & \multicolumn{1}{c}{} & \multicolumn{1}{c}{} & \multicolumn{1}{c}{} & \multicolumn{1}{c|}{}
& OKCN & Frodo & OKCN & Frodo & & OKCN & Frodo & & & \\ \hline
Challenge& $2^{10}$ & $334$ & $8$ & $2^1$ & $2^{9}$ & $2$ & $255$ & $127$ & $D_1$ & $2^{-47.9}$ & $2^{-14.9}$ & $ 6.75$ &  $139.45$ & $64$\\
Classical& $2^{11}$ & $554$ & $8$ & $2^2$ & $2^{9}$ & $2$ & $255$ & $127$ & $D_2$ & $2^{-39.4}$ & $2^{-11.5}$ & $12.26$ &  $422.01$ & $128$\\
Recommended  & $2^{14}$ & $718$ & $8$ & $2^4$ & $2^{10}$ & $2$ & $511$& $255$& $D_3$ & $2^{-37.9}$ & $2^{-10.2}$ & $20.18$ &  $902.17$ & $256$\\
Paranoid & $2^{14}$ & $818$ & $8$ & $2^4$ & $2^{10}$ & $2$ & $511$& $255$& $D_4$ & $2^{-32.6}$ & $2^{-8.6} $ & $22.98$ & $1170.97$ & $256$\\
Paranoid-512& $2^{12}$ & $700$ & $16$ & $2^2$ & $2^{10}$ & $2$ & $511$ & $255$ & $\bar{D}_4$ & $2^{-33.6}$ & $2^{-8.3}$ & $33.92$ &  $735.00$ & $512$\\
\hline
\end{tabular}
}

\vspace{0.2cm}

\caption{Parameters proposed for OKCN-LWE when $t=0$ (i.e., without cutting off least significant bits).  ``distr.'' refers to  the discrete distributions proposed in
    Table~\ref{table:discrete-distr} and Table \ref{table:frodo-distr}.
     ``bw.'' means bandwidth in kilo-bytes (kB).  ``$|\mathbf{A}|$" refers to the size of the matrix.
     $|\mathbf{K}|=l^2\log m$ denotes the length of consensus bits.}  \label{tabel:parameters-okcn}
\vspace{0.35cm}
\end{table*}

\begin{table*}[h!]
\centering
\resizebox{1\textwidth}{!}{
\begin{tabular}{c|rrrr|rr|rr|c|cc|rr|c|c}
\hline
\multicolumn{1}{l|}{\multirow{2}{*}{}} & \multicolumn{1}{c}{\multirow{2}{*}{$q$}} & \multicolumn{1}{c}{\multirow{2}{*}{$n$}} & \multicolumn{1}{c}{\multirow{2}{*}{$l$}} & \multicolumn{1}{c|}{\multirow{2}{*}{$m$}} & \multicolumn{2}{c|}{$g$} & \multicolumn{2}{c|}{$d$} & \multirow{2}{*}{dist.} & \multicolumn{2}{c|}{error probability} & \multicolumn{2}{c|}{bw. (kB)} & \multirow{2}{*}{$|A|$ (kB)} & \multirow{2}{*}{$|K|$} \\ \cline{6-9} \cline{11-14}
\multicolumn{1}{l|}{} & \multicolumn{1}{c}{} & \multicolumn{1}{c}{} & \multicolumn{1}{c}{} & \multicolumn{1}{c|}{}
& OKCN & Frodo & OKCN & Frodo & & OKCN & Frodo & OKCN & Frodo & & \\ \hline
Challenge& $2^{11}$ & $352$ & $8$ & $2^1$ & $2^{2}$ & $2$ & $383$ & $255$ & $\bar{D}_1$ & $2^{-80.1}$ & $2^{-41.8}$ & $ 7.76$ & $ 7.75$ & $170.37$ & $64$  \\
Classical& $2^{12}$ & $592$ & $8$ & $2^2$ & $2^{2}$ & $2$ & $383$ & $255$ & $\bar{D}_2$ & $2^{-70.3}$ & $2^{-36.2}$ & $14.22$ & $14.22$ & $525.70$ & $128$ \\
Recommended  & $2^{15}$ & $752$ & $8$ & $2^4$ & $2^{3}$ & $2$ & $895$ & $511$ & $\bar{D}_3$ & $2^{-105.9}$ & $2^{-38.9}$ & $22.58$ & $22.57$ & $1060.32$& $256$ \\
Paranoid & $2^{15}$ & $864$ & $8$ & $2^4$ & $2^{3}$ & $2$ & $895$ & $511$ & $\bar{D}_4$ & $2^{-91.9}$ & $2^{-33.8}$ & $25.94$ & $25.93 $& $1399.68$& $256$ \\ \hline
\end{tabular}
}
\vspace{0.3cm}
\caption{Parameters of Frodo, and comparison with OKCN-LWE when $t=0$.
Here, ``distr." refers to the discrete distributions specified in Table \ref{table:frodo-distr}. Note that, on the parameters of Frodo, OKCN-LWE achieves significantly lower error probability, which are negligible and are thus  sufficient for building  CPA-secure public-key encryption schemes. }
\label{tabel:parameters-frodo}
\vspace{0.3cm}
\end{table*}

\begin{table*}[h!]
\centering
\resizebox{0.8\textwidth}{!}{
\begin{tabular}{c|rrrrrrrrrcccrcc}
\hline
& $q$ & $n$ & $l$ & $m$ & $g$ & $t$ & $d$ & dist. & err. & bw. (kB) & $|A|$ (kB) & $|K|$ \\
\hline
OKCN-T2 & $2^{14}$ & $712$ & $8$ & $2^4$ & $2^{8}$ & $2$ & $509$ & $D_5$ & $2^{-39.0}$ & 18.58 & 887.15 & $256$ \\
OKCN-T1 & $2^{14}$ & $712$ & $8$ & $2^4$ & $2^{8}$ & $1$ & $509$ & $D_5$ & $2^{-52.3}$ & 19.29 & 887.15 & $256$ \\
\hline
\end{tabular}
}
\vspace{2mm}
\caption{Parameters proposed for OKCN-LWE with  $t$ least significant bits cut off.
     } \label{tabel:params-okcn-recommended}
     \end{table*}

\vspace{8mm}
\subsubsection{Instantiations, and Comparisons with Frodo}\label{sec-comparefrodo} The comparisons,  between the instantiations of our LWE-based KE protocol and Frodo, are summarized in the following tables \ref{tabel:parameters-okcn}, \ref{tabel:parameters-frodo} and \ref{tabel:params-okcn-recommended}.
 Note that, for presentation simplicity, we take $l_A=l_B=l$ for the sets of parameters under consideration. Also, for presentation simplicity, we use OKCN to denote OKCN-LWE in these tables.
  For both  ``OKCN simple'' proposed in Algorithm \ref{kcs:1-power2} and ``AKCN power 2'' proposed in Algorithm \ref{kcs:2-power2}, they achieve a tight parameter constraint, specifically, $2md<q$. In comparison, the parameter constraint achieved by Frodo is $4md<q$.
    As we shall see, such a difference is one source that allows us to achieve  better  trade-offs among error probability, security, (computational and bandwidth) efficiency,  and consensus range. 
    In particular, it allows us to use $q$ that is one bit shorter than that used in Frodo.
    Beyond saving bandwidth, employing a one-bit shorter $q$ also much improves the computational efficiency (as the matrix $\mathbf{A}$ becomes shorter, and consequently  the cost of generating $\mathbf{A}$ and the related matrix operations are  more efficient), and can render  stronger security levels simultaneously. {Note that, when being  transformed into PKE schemes,  smaller matrix $\mathbf{A}$ means shorter public-key size.}
{Put in other words, on the same parameters of $(q,m)$,  OKCN allows a larger error distance $d$ than Frodo, resulting in significantly lower error probability.}
{Here, we briefly highlight one  performance comparision:
 OKCN-T2  (resp., Frodo-recommended) has 18.58kB (resp., 22.57kB)  bandwidth, 887.15kB (resp., 1060.32kB) matrix $\mathbf{A}$, at least 134-bit (resp., 130-bit) quantum security,  and  error probability $2^{-39}$ (resp., $2^{-38.9})$.}

    The error probability for OKCN-LWE in these tables are derived by computing  $\Pr\left[\left|\boldsymbol{\Sigma}_1[i, j]-\boldsymbol{\Sigma}_2[i, j]\right|_q>d\right]$, for any $1\leq i \leq l_A$ and $1\leq j\leq l_B$,  and  then applying the union bound. The concrete failure probabilities are gotten by running the code  slightly adjusted, actually simplified,   from the open source code of Frodo. The simplified code are  available from  Github \url{http://github.com/OKCN}.




 \subsubsection{Security Estimation} For security evaluation, similar to \cite{newhope15,bcd16}, we only consider the resistance to two kinds of BKZ attacks, specifically  primal attack and dual attack \cite{Chen2011BKZ}\cite{Schnorr1994Lattice}, with respect to the core-SVP hardness. The reader is referred to \cite{newhope15,bcd16} for more details.  The concrete security levels are calculated by running the same code of Frodo.
 The evaluated security levels are summarized in the following tables.

%

\begin{table}[H]
\centering
\begin{tabular}{c|l|rrrrr|rrr}
\hline
\multirow{2}{*}{Scheme}    & \multirow{2}{*}{Attack} &
\multicolumn{5}{c|}{Rounded Gaussian} & \multicolumn{3}{c}{Post-reduction} \\
&        & $m'$   & $b$   & C   & Q   & P & C & Q & P  \\ \hline

\multirow{2}{*}{Challenge} & Primal & 327 & 275 &-- & -- & -- & -- & -- & -- \\
& Dual & 310 & 272 &-- & -- & -- & -- & -- & -- \\
\hline
\multirow{2}{*}{Classical} & Primal & 477 & 444 &138 & 126 & 100 & \bf{132} & 120 & 95 \\
& Dual & 502 & 439 &137 & 125 & 99 & \bf{131} & 119 & 94 \\
\hline
\multirow{2}{*}{Recommended} & Primal & 664 & 500 &155 & 141 & 112 & 146 & \bf{133} & 105 \\
& Dual & 661 & 496 &154 & 140 & 111 & 145 & \bf{132} & 104 \\
\hline
\multirow{2}{*}{Paranoid} & Primal & 765 & 586 &180 & 164 & 130 & 179 & 163 & \bf{130} \\
& Dual & 743 & 582 &179 & 163 & 129 & 178 & 162 & \bf{129} \\
\hline
\multirow{2}{*}{Paranoid-512} & Primal & 643 & 587 &180 & 164 & 131 & 180 & 164 & \bf{130} \\
& Dual & 681 & 581 &179 & 163 & 129 & 178 & 162 & \bf{129} \\
\hline
\multirow{2}{*}{OKCN-T2} & Primal & 638 & 480 &149 & 136 & 108 & 148 & \bf{135} & 107 \\
& Dual & 640 & 476 &148 & 135 & 107 & 147 & \bf{134} & 106 \\
\hline
\end{tabular}
\vspace{2mm}
\caption{Security estimation of the parameters described in Table \ref{tabel:parameters-okcn} and Table \ref{tabel:params-okcn-recommended}.
    ``C, Q, P'' stand for ``Classical, Quantum, Plausible'' respectively.
    Numbers under these columns are the binary logarithm of running
    time of the corresponding attacks.
    Numbers under ``$m', b$'' are the best parameters for the attacks. ``Rounded Gaussian''
    refers to the ideal case that noises and errors  follow the  rounded Gaussian
    distribution. ``Post-reduction'' refers to the case of using  discrete distributions as specified
    in Table~\ref{table:discrete-distr}.}
\label{table:okcn-security}
\end{table}

\begin{table}[H]
\centering
\begin{tabular}{c|l|rrrrr|rrr}
\hline
\multirow{2}{*}{Scheme}    & \multirow{2}{*}{Attack} &
\multicolumn{5}{c|}{Rounded Gaussian} & \multicolumn{3}{c}{Post-reduction} \\
&        & $m'$   & $b$   & C   & Q   & P & C & Q & P  \\ \hline

\multirow{2}{*}{Challenge} & Primal & 338 & 266 &-- & -- & -- & -- & -- & -- \\
& Dual & 331 & 263 &-- & -- & -- & -- & -- & -- \\
\hline
\multirow{2}{*}{Classical} & Primal & 549 & 442 &138 & 126 & 100 & \bf{132} & 120 & 95 \\
& Dual & 544 & 438 &136 & 124 & 99 & \bf{130} & 119 & 94 \\
\hline
\multirow{2}{*}{Recommended} & Primal & 716 & 489 &151 & 138 & 110 & 145 & \bf{132} & 104 \\
& Dual & 737 & 485 &150 & 137 & 109 & 144 & \bf{130} & 103 \\
\hline
\multirow{2}{*}{Paranoid} & Primal & 793 & 581 &179 & 163 & 129 & 178 & 162 & \bf{129} \\
& Dual & 833 & 576 &177 & 161 & 128 & 177 & 161 & \bf{128} \\
\hline
\end{tabular}
\vspace{2mm}
\caption{Security estimation of the parameters proposed for  Frodo in \cite{bcd16},  as specified in Table  \ref{tabel:parameters-frodo}.}
\label{table:frodo-security}
\end{table}


\section{Hybrid Construction of Key Exchange from LWE and LWR in the Public-Key Setting}
By composing a CPA-secure symmetric-key  encryption scheme, the LWE-based  key exchange protocols presented  Section \ref{Sec-LWE} can be used to construct public-key encryption (PKE) schemes, by treating $(\mathbf{A}, \mathbf{Y}_1)$ (resp., $\mathbf{X}_1$) as the static public key (resp., secret key).  Moreover, AKC-based key-exchange protocol  can be directly used as a CPA-secure PKE scheme. To further improve the efficiency of the resultant PKE scheme, the observation here is we can generate the ephemeral $\textbf{Y}_2$ in the ciphertext with  LWR samples. This results in the following hybrid construction of key exchange from LWE and LWR in the public-key setting. For applications to PKE, we focus on the AKC-based protocol construction.  Denote by $(n_A, n_B, l_A, l_B, q, p, KC, \chi)$ the system parameters,
where $p | q$, and we choose $p$ and $q$ to be power of $2$. The AKC-based protocol from LWE and LWR is  presented in Figure \ref{kex:lwe-lwr}.


\begin{figure}[t]
\centering
\includegraphics[width=0.8\textwidth]{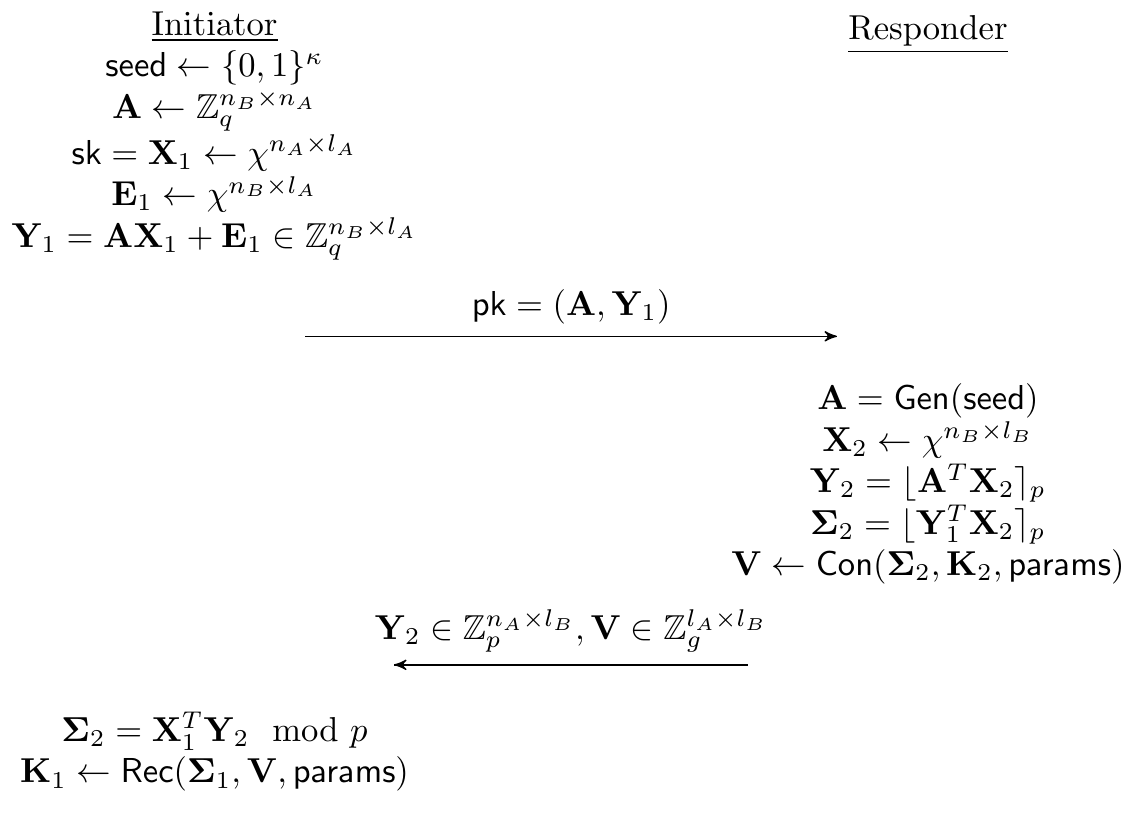}
\caption{AKC-based key exchange from LWE and LWR in the public-key setting,
where $\mathsf{pk} = (\mathbf{A}, \mathbf{Y}_1)$ is fixed once and for all,  $\mathbf{K}_1,\mathbf{K}_2\in \mathbb{Z}_m^{l_A\times l_B}$ and $\left|\mathbf{K}_1\right|=\left|\mathbf{K}_2\right|=l_A\l_B|m|$.}
\label{kex:lwe-lwr}
\end{figure}

The hybird construction of key exchange from LWE and LWR is similar to the underlying protocol in Lizard \cite{CKLS16}. The Lizard PKE scheme uses our AKCN as the underlying reconciliation mechnism (see more details in Appendix \ref{sec-lizard}), while our protocol is a general structure that can be implemented with either  KC or AKC. 
In order to improve efficiency, Lizard
\cite{CKLS16} is based on the variants, referred to as spLWE and spLWR,  of LWE and LWR with sparse secret. We aim at providing parameter sets for long term security, and  estimate the concrete security in a more conservative way than \cite{CKLS16}
from the defender's point of view.

\subsection{Security and Error Rate Analysis}
The security proof is very similar to LWE-basd and LWR-based key exchanges in previous sections, and is omitted here.

For the error probability, we have
\begin{align*}
\boldsymbol{\Sigma}_1 &= \mathbf{X}_1^T \mathbf{Y}_2
= \frac{p}{q}\mathbf{X}_1^T\left(\mathbf{A}^T \mathbf{X}_2 - \{\mathbf{A}^T \mathbf{X}_2\}_p\right)
= \frac{p}{q}\left(\mathbf{X}_1^T \mathbf{A}^T \mathbf{X}_2 - \mathbf{X}_1^T \{\mathbf{A}^T \mathbf{X}_2\}_p \right) \\
\boldsymbol{\Sigma}_2 &= \left\lfloor \mathbf{Y}_1^T \mathbf{X}_2 \right\rceil_p =
\frac{p}{q} \left( \mathbf{Y}_1^T \mathbf{X}_2 - \{\mathbf{Y}_1^T \mathbf{X}_2 \}_p\right)
= \frac{p}{q} (\mathbf{X}_1^T \mathbf{A}^T \mathbf{X}_2 + \mathbf{E}_1^T \mathbf{X}_2 - \{\mathbf{Y}_1^T \mathbf{X}_2\}_p) \\
\boldsymbol{\Sigma}_2 - \boldsymbol{\Sigma}_1 &= \frac{p}{q}
\left(\mathbf{E}_1^T \mathbf{X}_2 + \mathbf{X}_1^T \{\mathbf{A}^T \mathbf{X}_2\}_p -
\{ \mathbf{E}_1^T \mathbf{X}_2 + \mathbf{X}_1^T \mathbf{A}^T \mathbf{X}_2 \}_p \right)
= \lfloor \mathbf{E}_1^T \mathbf{X}_2 + \mathbf{X}_1^T \{\mathbf{A}^T \mathbf{X}_2\}_p \rceil_p
\end{align*}

We can see that the distribution of $\boldsymbol{\Sigma}_2 - \boldsymbol{\Sigma}_1$
can be derived from the distribution of $\mathbf{E}_1 \mathbf{X}_2 + \mathbf{X}_1^T \{\mathbf{A}^T \mathbf{X}_2\}_p$.
From Theorem~\ref{th-uniform}, we know that for almost  all (with \emph{overwhelm} probability)  given $\mathbf{X}_2$,
the distribution of $\{\mathbf{A}^T \mathbf{X}_2\}_p$ is the uniform distribution over $[-q/2p, q/2p)^{n_{A}}$.
The concrete error probability can then  be derived numerically by computer programs. The codes and scripts are available on Github \url{http://github.com/OKCN}.

\subsection{Parameter Selection}
For simplicity, we use the Gaussian distribution of the same variance (denote as $\sigma_s^2$) for the noise $\mathbf{E}_1$,
secrets $\mathbf{X}_1$ and $\mathbf{X}_2$.
We consider the weighted dual attack and weighted primal attack in Section \ref{param-select}.

\begin{table}[H]
\centering
\begin{tabular}{c r rrrr rrr c rr rr}
\hline
& $\sigma_s^2$ & $n_A$ & $n_B$ & $q$ & $p$ & $l$ & $m$ & $g$ & pk & cipher  & err. & $|\mathbf{K}|$ \\ \hline
Recommended & 2.0 & 712 & 704 & $2^{15}$ & $2^{12}$ & 8 & $2^4$ & $2^8$ & 10.56 &  8.61 & $2^{-63}$ & 256 \\ \hline
Paranoid    & 2.0 & 864 & 832 & $2^{15}$ & $2^{12}$ & 8 & $2^4$ & $2^8$ & 12.24 & 10.43 & $2^{-52}$ & 256 \\ \hline
\end{tabular}
\caption{Parameters for the hybrid construction of  key exchange from LWE and LWR.
``err.'' refers to the overall error probability.
``$|\mathbf{K}|$'' refers to the length of consensus bits.
``pk'' refers to the kilo-byte (kB) size of the public key $pk=(\textbf{A},\textbf{Y}_1)$.
``cipher'' refers to the kB size of $(\textbf{Y}_2,\textbf{V})$.
}
\label{lwe-lwr-params}
\end{table}

\begin{table}[H]
\centering
\resizebox{0.8\textwidth}{!}{
\begin{tabular}{c|l|rrrrr|rrrrr}
\hline
\multirow{2}{*}{Scheme} & \multirow{2}{*}{Attack} & \multicolumn{5}{c|}{LWE} & \multicolumn{5}{c}{LWR} \\ \cline{3-12}
 & & $m'$ & $b$ & C & Q & P & $m'$ & $b$ & C & Q & P \\ \hline
\multicolumn{1}{c|}{\multirow{2}{*}{Recommended}} & \multicolumn{1}{l|}{Primal} & 699 & 464
& 144 & \bf{131} & 105 & 664 & 487 & 151 & \bf{138} & 109 \\
\multicolumn{1}{c|}{}                             & \multicolumn{1}{l|}{Dual}   & 672 & 461
& 143 & \bf{131} & 104 & 665 & 483 & 150 & \bf{137} & 109 \\ \hline
\multicolumn{1}{c|}{\multirow{2}{*}{Paranoid}}    & \multicolumn{1}{l|}{Primal} & 808  & 590
& 181 & 165 & \bf{131} & 856 & 585 & 180 & 164 & \bf{130} \\
\multicolumn{1}{c|}{}                             & \multicolumn{1}{l|}{Dual}   & 789  & 583
& 179 & 163 & \bf{130} & 765 & 579 & 178 & 162 & \bf{129} \\ \hline
\end{tabular}}
\caption{Security estimation of the parameters described in Table \ref{lwe-lwr-params}.}
\label{table:lwr-security}
\end{table}

\section{RLWE-Based Key Exchange from KC and AKC}

\begin{figure}[H]
\centering
\includegraphics[scale=1]{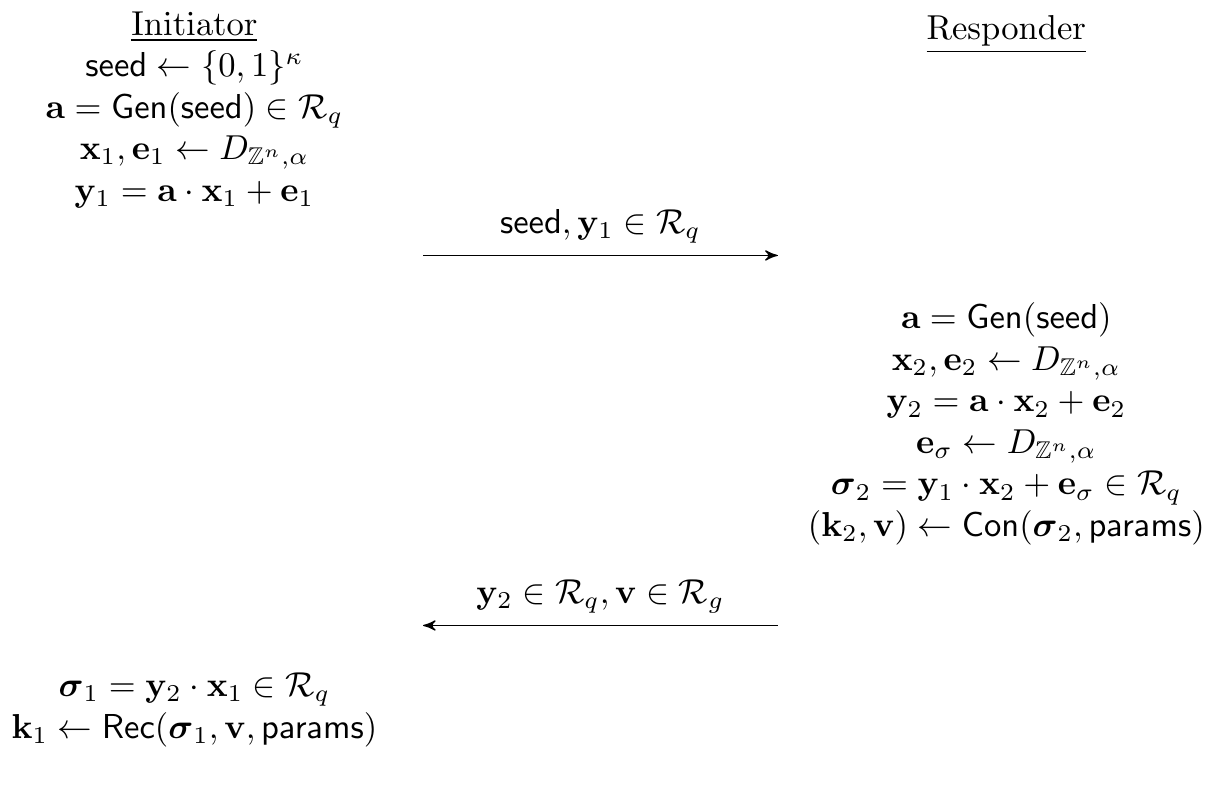}
\caption{RLWE-based key exchange from KC and AKC,
    where {$\mathbf{K}_1, \mathbf{K}_2 \in \mathcal{R}_q$}. The protocol instantiated with OKCN specified in Algorithm \ref{kcs:1} (resp., AKCN in Algorithm \ref{kcs:2}) is referred to as OKCN-RLWE (resp., AKCN-RLWE).}
    \label{rlwe-ke}
\end{figure}

Denote by $(\lambda, n, q, \alpha, KC)$ the system parameters, where $\lambda$ is the security parameter, $q \ge 2$ is a positive prime number, $\alpha$ parameterizes the discrete Gaussian distribution $D_{\mathbb{Z}^n, \alpha}$, $n$ denotes the degree of polynomials in $\mathcal{R}_q$, and  $\mathsf{Gen}$ a PRG generating $\mathbf{a}\in \mathbb{\mathcal{R}}_q$ from a small seed.

Let $KC=(\textsf{params}, \textsf{Con}, \textsf{Rec})$ be a correct and secure KC scheme, where $\mathsf{params}=(q,g,m,d)$.  The KC-based key exchange protocol from RLWE is depicted in \figurename~\ref{rlwe-ke}, where the actual session-key is derived from $\mathbf{K}_1$ and $\mathbf{K}_2$ via some key derivation function $KDF$. There, for presentation simplicity, the \textsf{Con} and \textsf{Rec} functions are applied to polynomials, meaning they are applied to each of the coefficients respectively.
Note that $\boldsymbol{\sigma}_1 = \mathbf{y}_2 \cdot \mathbf{x}_1 =
\mathbf{a} \cdot \mathbf{x}_2 \cdot \mathbf{x}_1 + \mathbf{e}_2 \cdot \mathbf{x}_1$,
$\boldsymbol{\sigma}_2 = \mathbf{y}_1 \cdot \mathbf{x}_2 + \mathbf{e}_{\sigma}
= \mathbf{a} \cdot \mathbf{x}_1 \cdot \mathbf{x}_2 + \mathbf{e}_1 \cdot \mathbf{x}_2 +
\mathbf{e}_{\sigma}$.
As we choose $\mathbf{x}_1, \mathbf{x}_2, \mathbf{e}_1, \mathbf{e}_2, \mathbf{e}_{\sigma}$ according to a
small noise distribution $D_{\mathbb{Z}^n, \alpha}$, the main part of $\boldsymbol{\sigma}_1$ and that of
$\boldsymbol{\sigma}_2$ are the same $\mathbf{a} \cdot \mathbf{x}_1 \cdot \mathbf{x}_2$.
Hence, the corresponding coordinates of $\boldsymbol{\sigma}_1$ and
$\boldsymbol{\sigma}_2$ are close in the sense of $|\cdot|_q$, from which some key consensus can be reached. The error probability depends upon the concrete value of $\alpha$ and the distance parameter $d$. As discussed in Section \ref{sec-lwrke}, a  KC-based key exchange protocol  can be trivially extended to work on any correct and secure AKC scheme. As the bandwidth of RLWE-based KE protocol has already been low, we do not apply the technique of cutting off some least significant bits of each element of the polynomial $\mathbf{y}_2$. 


\textsf{On security analysis.}
 The security definition and proof of the  RLWE-based key exchange protocol can be straightforwardly adapted from those for the KE protocol based on  LWE or LWR. Briefly speaking, from Game $G_0$ to Game $G_1$, we replace $\mathbf{y}_1$ with a
uniformly random polynomial in $\mathcal{R}_q$. From Game $G_1$ to Game $G_2$, we replace
$\mathbf{y}_2$ and $\boldsymbol{\sigma}_2$ with uniformly random polynomials.
Then, from the \emph{security} of the underlying KC or AKC, we get that for any $i$, $1\leq i\leq n$, $\mathbf{K}_2[i]$ and $\mathbf{v}[i]$ are independent, and so the protocol is secure.


\textsf{On  implementations of RLWE-based KE.}
The protocol described in \figurename~\ref{rlwe-ke} works on any hard instantiation of the RLWE problem. But if $n$ is power of $2$, and prime $q$ satisfies $q \bmod 2n = 1$, then  number-theoretic transform (NTT) can be used to speed up polynomial multiplication. 
The performance can be  further improved   by using the Montgomery arithmetic and AVX2 instruction set   \cite{newhope15}, and  by carefully
optimizing   performance-critical routines (in particular,  NTT) in ARM assembly  \cite{cryptoeprint:2016:758}.


For all the implementations considered in this work for RLWE-based KE, we use the same parameters and noise distributions proposed for NewHope in \cite{newhope15}, as described in Table \ref{table-nhcompare}. They  achieve about 281-bit (resp., 255-, 199-)  security against classic (resp., quantum, plausible) attacks. The reader is referred to \cite{newhope15} for details. In particular, the underlying noise distribution is the centered binomial distribution $\Psi_{16}$ (rather than rounded Gaussian distribution with the standard deviation $\sqrt{8}$), which can be rather trivially sampled in hardware and software with much better  protection against timing attacks.


%
%
%

\subsection{RLWE-Based Key Exchange with Negligible Error Rate}
When implemented with the same parameters proposed in \cite{newhope15} for NewHope, as shown in Table \ref{table-nhcompare}  OKCN-RLWE and  AKCN-RLWE  reach 1024 consensus bits, with  a failure probability around $2^{-40}$  that, we suggest, suffices  for most applications of key exchange. In order for reaching a negligible error probability,  particularly for achieving a CPA-secure PKE scheme, we need to further lower the error probability.

A straightforward approach to reducing the error probability is to use the technique of NewHope by  encoding and decoding  the four-dimensional lattice $\tilde{D}_4$. With such an approach, the error probability can be lowered  to about $2^{-61}$, but the shared-key size is reduced from 1024 to 256.  AKCN-RLWE equipped with this approach, referred to as AKCN-4:1, is presented and analyzed in Appendix \ref{app-4:1}. We note that, in comparison with NewHope-simple recently proposed in \cite{newhope-simple}, AKCN-4:1 still has (slight) performance advantage in bandwidth expansion; specifically expanding  256 bits by AKCN-4:1 vs. 1024 bits by NewHope-simple compared to that of NewHope.\footnote{The bandwidth expansion, for both AKCN-4:1 and NewHope-simple, can be further compressed but at the price of losing operation simplicity.}

Decoding the 24-dimensional Leech lattice is also recently considered in \cite{leech}. But  decoding the 4-dimensional  lattice $\tilde{D}_4$ has already been  relatively complicated and computationally less efficient.
Another approach is to employ error correction code (ECC). Unfortunately, in general, the ECC-based approach can be  more inefficient and overburdened  than NewHope's approach.

In this work, we make a key observation  on RLWE-based key exchange,
by proving that the errors in different positions in the shared-key are independent when $n$ is large.
Based upon this observation, we present  a super simple and fast code, referred to as \emph{single-error correction} (SEC) code, to correct at least one bit error.  By equipping OKCN/AKCN with  the SEC code, we achieve the simplest RLWE-based key exchange from both OKCN and AKCN, referred to as OKCN-SEC and AKCN-SEC respectively, with negligible error probability for much longer shared-key size.

\begin{table*}[tph]
\centering
\begin{tabular}{c|rrrr rcr rrr}
\hline
&     $g$    & $d$    & $|\mathbf{K}|$ & bw.(B) & per. & $n_H$ & err. \\ \hline
SKCN-RLWE &     $2^4$ & 2879 & 1024 & 4128 & $2^{-48}$ & -  & $2^{-38}$ \\
SKCN-RLWE &     $2^6$ & 3023 & 1024 & 4384 & $2^{-52}$ & -  & $2^{-42}$ \\
AKCN-RLWE &     $2^4$ & 2687 & 1024 & 4128 & $2^{-42}$ & -  & $2^{-32}$ \\
AKCN-RLWE &     $2^6$ & 2975 & 1024 & 4384 & $2^{-51}$ & -  & $2^{-41}$ \\ \hline
SKCN-SEC      & $2^2$ & 2303 & 765  & 3904 & $2^{-31}$ & 4 & $2^{-48}$ \\
SKCN-SEC      & $2^3$ & 2687 & 765  & 4032 & $2^{-42}$ & 4 & $2^{-70}$ \\
SKCN-SEC      & $2^3$ & 2687 & 837  & 4021 & $2^{-42}$ & 5 & $2^{-69}$ \\
AKCN-SEC      & $2^4$ & 2687 & 765  & 4128 & $2^{-42}$ & 4 & $2^{-70}$ \\
AKCN-SEC      & $2^4$ & 2687 & 837  & 4128 & $2^{-42}$ & 5 & $2^{-69}$ \\ \hline
NewHope       & $2^2$ & -    & 256  & 3872 & $2^{-69}$ & -  & $2^{-61}$ \\
AKCN-4:1-RLWE & $2^2$ & -    & 256  & 3904 & $2^{-69}$ & -  & $2^{-61}$ \\
\hline
\end{tabular}

\vspace{0.5cm}
\caption{Comparisons with NewHope.
    All schemes in the table use the same noise distribution $\Psi_{16}$ used in \cite{newhope15},
    i.e. the sum of $16$ independent centered binomial variables.
    And all schemes in the table share the parameter $(q = 12289, n = 1024, m = 2^1)$.
    $|\mathbf{K}|$ refers to the total binary length of consensus bits.
    bw. (B) refers to the bandwidth in bytes.
    err. refers to failure probability.
    ``$n_H$'' refers to the dimension of SEC code used.
    ``per'' refers to the per bit error rate before applying the SEC code.
    ``err.'' refers to overall error rate.
}\label{table-nhcompare}
\end{table*}

The comparisons among these RLWE-based protocols are summarized in Table \ref{table-nhcompare}.
We note that for RLWE-based KE with negligible error probability, in general, AKC-based protocols are relatively simpler than KC-based protocols.


\subsection{On the Independence of Errors in Different Positions}
Suppose $f(x), g(x)$ are two  polynomials of degree $n$,  whose coefficients are drawn independently from Gaussian.
Let $h(x) = f(x) \cdot g(x) \in \mathbb{R}[x] / (x^n + 1)$.
We show that for every two different integers $0 \le c_1, c_2 < n$,
the joint distribution of $(h[c_1], h[c_2])$ will approach to the two-dimensional Gaussian when $n$ tends to infinity.
Hence, for the basic  construction of RLWE-based key exchange from KC and AKC presented in Figure \ref{rlwe-ke},  it is reasonable to assume that the error probability of any two different positions are independent when $n$ is sufficiently large.

For representation simplicity, for any polynomial $f$,
let $f[i]$ denote the coefficient of $x^i$.
\begin{lemma}
Suppose $f(x), g(x) \in \mathbb{R}[x] / (x^n + 1)$
are two $n$-degree polynomials whose coefficients are drawn independently from $\mathcal{N}(0, \sigma^2)$.
Let $h(x) = f(x) \cdot g(x) \in \mathbb{R}[x] / (x^n + 1)$, where $h(x)$ is represented as an $n$-degree polynomial.
For any two different integers $0 \le c_1, c_2 < n$,
the characteristic function of the two-dimensional random vector $(h[c_1], h[c_2]) \in \mathbb{R}^2$ is
\begin{equation}
\phi_{c_1, c_2}(t_1, t_2) = \mathbb{E}\left[ e^{i\left(t_1 h[c_1] + t_2 h[c_2]\right)} \right] =
\prod_{k = 0}^{n-1}\left(1 + \sigma^4\left(t_1^2 + t_2^2 + 2 t_1 t_2 \cos\left(\pi(c_1 - c_2)\frac{2k+1}{n}\right)\right)\right)^{-\frac{1}{2}}
\label{cf-h}
\end{equation}
\end{lemma}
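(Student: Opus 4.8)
The plan is to compute the characteristic function directly by exploiting the structure of multiplication in $\mathbb{R}[x]/(x^n+1)$ via the complex roots of $x^n+1$. First I would diagonalize the multiplication map: the $n$ roots of $x^n+1$ are $\omega_k = e^{i\pi(2k+1)/n}$ for $k=0,\dots,n-1$, and evaluation at these roots gives a (complex) isomorphism $\mathbb{R}[x]/(x^n+1) \to \mathbb{C}^n$ under which polynomial multiplication becomes coordinatewise multiplication. Thus $h(\omega_k) = f(\omega_k)g(\omega_k)$ for each $k$. The key point is to understand the joint distribution of the evaluation vector $(f(\omega_0),\dots,f(\omega_{n-1}))$: since the coefficients $f[0],\dots,f[n-1]$ are i.i.d. $\mathcal{N}(0,\sigma^2)$, each $f(\omega_k)=\sum_j f[j]\omega_k^j$ is a complex Gaussian, and $f(\overline{\omega_k}) = \overline{f(\omega_k)}$ because the coefficients are real (note $\overline{\omega_k}$ is again a root, paired with $\omega_k$). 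I would record the relevant second moments: $\mathbb{E}[|f(\omega_k)|^2] = n\sigma^2$, and $\mathbb{E}[f(\omega_k)f(\omega_\ell)]$ for $\ell\neq k, \bar k$ vanishes by orthogonality of the characters $\{\omega_k^j\}_j$.

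Next I would express the coefficients $h[c]$ back in terms of the evaluations via the inverse transform: $h[c] = \frac{1}{n}\sum_{k=0}^{n-1} h(\omega_k)\,\omega_k^{-c}$ (up to the standard normalization for $x^n+1$). Then $t_1 h[c_1] + t_2 h[c_2] = \frac{1}{n}\sum_k h(\omega_k)\bigl(t_1\omega_k^{-c_1}+t_2\omega_k^{-c_2}\bigr)$. Pairing the index $k$ with its conjugate partner and using $h(\omega_k)=f(\omega_k)g(\omega_k)$, the exponent becomes a sum over $k$ of terms that are real quadratic-type expressions in the independent complex Gaussians $f(\omega_k), g(\omega_k)$. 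The crucial structural fact is that across distinct conjugate-pairs of indices $k$, the pairs $(f(\omega_k),g(\omega_k))$ are independent, so the characteristic function factors as a product over $k$ (or over conjugate pairs of $k$), which is exactly the product form in \eqref{cf-h}.

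The remaining work is a single two-dimensional Gaussian moment-generating-function computation: for a pair of independent real-or-complex Gaussians, $\mathbb{E}[\exp(i\,\lambda\,\mathrm{Re}(f(\omega_k)g(\omega_k)\cdot \mu_k))]$ where $\mu_k = t_1\omega_k^{-c_1}+t_2\omega_k^{-c_2}$, which after writing $f(\omega_k)g(\omega_k)$ in real coordinates reduces to computing a Gaussian integral of the form $\int e^{i\,(\text{bilinear form})}$; this yields a factor $(1+\sigma^4|\mu_k|^2)^{-1/2}$, and since $|\mu_k|^2 = t_1^2 + t_2^2 + 2t_1 t_2\cos(\pi(c_1-c_2)(2k+1)/n)$ (using $|\omega_k|=1$ and $\omega_k^{c_1-c_2} = e^{i\pi(c_1-c_2)(2k+1)/n}$), the formula \eqref{cf-h} follows. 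I expect the main obstacle to be handling the bookkeeping of the conjugate-pair structure carefully — in particular making sure the normalization constants in the forward/inverse evaluation transforms for $\mathbb{R}[x]/(x^n+1)$ (as opposed to $x^n-1$) are tracked correctly, and justifying the independence across distinct $k$ rigorously from the orthogonality relations — together with correctly evaluating the two-dimensional Gaussian integral that produces the $(1+\sigma^4|\mu_k|^2)^{-1/2}$ factor (as opposed to, say, a factor with $t_1^2+t_2^2$ only). Once those are in place, the product over $k$ assembles immediately into the claimed expression, and the statement about asymptotic joint Gaussianity of $(h[c_1],h[c_2])$ as $n\to\infty$ would follow by analyzing the limit of $\log\phi_{c_1,c_2}(t_1,t_2)$ term by term (each factor contributing $O(1/n)$ after rescaling), though that asymptotic claim is stated in the surrounding text rather than in this lemma.
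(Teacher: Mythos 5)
Your proposal is correct, but it takes a genuinely different route from the paper's proof. The paper stays entirely over the reals: it writes $t_1 h[c_1]+t_2 h[c_2]=\mathbf{f}^T(t_1\mathbf{A}_{c_1}+t_2\mathbf{A}_{c_2})\mathbf{g}$ for explicit symmetric sign matrices $\mathbf{A}_c$, orthogonally diagonalizes this symmetric matrix (orthogonal maps preserve the spherical Gaussian), reduces to a sum of $n$ independent scaled products of one-dimensional Gaussians with known characteristic function $(1+t^2)^{-1/2}$, and extracts the eigenvalues from the identity $(t_1\mathbf{A}_{c_1}+t_2\mathbf{A}_{c_2})^2=(t_1^2+t_2^2)\mathbf{I}+t_1t_2(\mathbf{G}^{c_2-c_1}+\mathbf{G}^{c_1-c_2})$, where $\mathbf{G}$ has characteristic polynomial $x^n+1$. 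You instead diagonalize the ring multiplication itself via evaluation at the roots $\omega_k=e^{i\pi(2k+1)/n}$ (the negacyclic DFT / canonical embedding), use orthogonality to get independence of the evaluations across conjugate pairs, and compute a two-dimensional Gaussian integral per pair; your $|\mu_k|^2=t_1^2+t_2^2+2t_1t_2\cos(\pi(c_1-c_2)(2k+1)/n)$ coincides with the paper's $\lambda_k^2$, and the per-pair factor $(1+\sigma^4|\mu_k|^2)^{-1}$ correctly reassembles into the stated product of $n$ half-power factors since $|\mu_{n-1-k}|=|\mu_k|$. Both arguments are sound; the paper's avoids complex Gaussians and conjugate-pair bookkeeping entirely, while yours makes the spectral structure transparent and aligns with the canonical-embedding machinery the paper itself deploys later (Theorem 9.4) for the $E_8$ error analysis. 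The only loose ends in your sketch are the ones you already flag: the inverse-transform normalization, the independence across distinct conjugate pairs (which follows from the vanishing of both $\mathbb{E}[f(\omega_k)\overline{f(\omega_\ell)}]$ and $\mathbb{E}[f(\omega_k)f(\omega_\ell)]$ off the pairing), and, if one cares about odd $n$, the single self-conjugate real root $-1$, which contributes exactly one half-power factor and still matches the formula.
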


\begin{proof}
One can observe that
\begin{align*}
t_1 h[c_1] + t_2 h[c_2] &= t_1 \left( \sum_{i + j = c_1} f[i] g[j] - \sum_{i + j = c_1 + n} f[i] g[j] \right)
+ t_2 \left( \sum_{i + j = c_2} f[i]g[j] - \sum_{i + j = c_2 + n} f[i]g[j] \right) \\
&= t_1 \mathbf{f}^T \mathbf{A}_{c_1} \mathbf{g} + t_2 \mathbf{f}^T \mathbf{A}_{c_2} \mathbf{g}
= \mathbf{f}^T (t_1 \mathbf{A}_{c_1} + t_2 \mathbf{A}_{c_2})\mathbf{g}
\end{align*}
Where $\mathbf{f} = (f[0], f[1], \dots, f[n-1])^T$, $\mathbf{g} = (g[0], g[1], \dots, g[n-1])^T$, and
the notations $\mathbf{A}_{c_1}, \mathbf{A}_{c_2}$ are defined by
\begin{align*}
\mathbf{A}_c =
\begin{pmatrix}
& & 1 & & & \\
& \reflectbox{$\ddots$} \\
1 & & & & \\
& & & & & -1 \\
& & & & \reflectbox{$\ddots$} & \\
& & & -1 & &
\end{pmatrix}
\end{align*}
The value $1$ in the first row is in the  $c$-th column.

As $t_1\mathbf{A}_{c_1} + t_2\mathbf{A}_{c_2}$ is symmetric, it can be orthogonally diagonalize as
$\mathbf{P}^T \mathbf{\Lambda} \mathbf{P}$, where $\mathbf{P}$ is orthogonal,
and $\mathbf{\Lambda}$ is diagonal. Hence,
$\phi_{c_1, c_2}(t_1, t_2) = \mathbb{E}[\exp(i(\mathbf{P}\mathbf{f})^T \mathbf{\Lambda} (\mathbf{P} \mathbf{g}))]$.
Since $\mathbf{P}$ is orthogonal, it keeps the normal distribution unchanged.
Hence, $(\mathbf{P}\mathbf{f})^T \mathbf{\Lambda} (\mathbf{P} \mathbf{g})$
equals to the sum of $n$ scaled products of two independent one-dimensional Gaussian.

Suppose $\lambda_1, \lambda_2, \dots, \lambda_n$ are the eigenvalues of
$t_1\mathbf{A}_{c_1} + t_2\mathbf{A}_{c_2}$, and $\phi$ is the characteristic function of
the product of two independent one-dimensional standard Gaussian. Then we have
\begin{equation} \label{eq-final}
\phi_{c_1, c_2}(t_1, t_2) = \prod_{k = 0}^{n - 1} \phi(\sigma^2 \lambda_k)
\end{equation}

From \cite{handbook}, $\phi(t) = (1 + t^2)^{-1/2}$.
For $\lambda_{k}$, we further  observe that
\begin{align*}
(t_1 \mathbf{A}_{c_1} + t_2 \mathbf{A}_{c_2})^2 &=
(t_1^2 + t_2^2) \mathbf{I} + t_1 t_2 (\mathbf{A}_{c_1} \mathbf{A}_{c_2} + \mathbf{A}_{c_2} \mathbf{A}_{c_1}) \\
&= (t_1^2 + t_2^2) \mathbf{I} + t_1 t_2 (\mathbf{G}^{c_2 - c_1} + \mathbf{G}^{c_1 - c_2}),
\end{align*}
 where
\begin{equation*}
\mathbf{G} =
\begin{pmatrix}
& 1 & & & \\
& & 1 & & \\
& & & \ddots \\
& & & & & 1 \\
-1
\end{pmatrix}
\end{equation*}
The characteristic polynomial of $\mathbf{G}$ is $x^n + 1$.
Hence, $\lambda_k$ satisfies
\begin{equation*}
\lambda_k^2 = t_1^2 + t_2^2 + 2 t_1 t_2 \cos\left( \pi (c_1 - c_2)\frac{2k + 1}{n} \right)
\end{equation*}

By taking  this into Equation~\ref{eq-final},  we derive the Equation~\ref{cf-h}.
  \end{proof}

\begin{theorem}\label{independent-theorem}
For any fixed integers $0 \le c_1, c_2 < n$, $c_1 \neq c_2$,
when $n$ tends to infinity, the distribution of
$\left(\frac{h[c_1]}{\sigma^2 \sqrt{n}}, \frac{h[c_2]}{\sigma^2 \sqrt{n}}\right)$
\emph{converges (in distribution)}
to the two-dimensional normal distribution $\mathcal{N}(\mathbf{0}, \mathbf{I}_2)$.
\end{theorem}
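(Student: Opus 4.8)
The plan is to apply L\'evy's continuity theorem: it suffices to show that the characteristic function of the rescaled vector $\bigl(h[c_1]/(\sigma^2\sqrt n),\,h[c_2]/(\sigma^2\sqrt n)\bigr)$ converges, as $n\to\infty$, pointwise in $(t_1,t_2)$ to $\exp\bigl(-(t_1^2+t_2^2)/2\bigr)$, which is the characteristic function of $\mathcal{N}(\mathbf{0},\mathbf{I}_2)$. By the preceding lemma, this rescaled characteristic function is obtained from $\phi_{c_1,c_2}$ by the substitution $t_i\mapsto t_i/(\sigma^2\sqrt n)$ (since the arguments enter $\phi_{c_1,c_2}$ only through $\sigma^4 t_i t_j$), which gives
\[
\Phi_n(t_1,t_2)=\prod_{k=0}^{n-1}\left(1+\frac1n\left(t_1^2+t_2^2+2t_1t_2\cos\left(\pi(c_1-c_2)\tfrac{2k+1}{n}\right)\right)\right)^{-1/2}.
\]
Since $t_1^2+t_2^2+2t_1t_2\cos\theta\ge(|t_1|-|t_2|)^2\ge 0$ for every $\theta$, each factor is well defined and positive, so $\log\Phi_n$ is a finite real sum.

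First I would take logarithms. Writing $a_k:=t_1^2+t_2^2+2t_1t_2\cos\bigl(\pi(c_1-c_2)\tfrac{2k+1}{n}\bigr)$, one has $0\le a_k\le(|t_1|+|t_2|)^2=:M$ uniformly in $k$ and $n$, and
\[
\log\Phi_n(t_1,t_2)=-\frac12\sum_{k=0}^{n-1}\log\!\left(1+\frac{a_k}{n}\right)=-\frac1{2n}\sum_{k=0}^{n-1}a_k+R_n ,
\]
where, using the elementary estimate $|\log(1+x)-x|\le x^2$ for $0\le x\le 1$ (valid once $n\ge M$), the remainder obeys $|R_n|\le\frac12\sum_{k}(a_k/n)^2\le M^2/(2n)\to 0$.

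The crucial step is to evaluate $\frac1n\sum_{k=0}^{n-1}a_k$. Its constant part contributes $t_1^2+t_2^2$, and the cross term is $\frac{2t_1t_2}{n}\sum_{k=0}^{n-1}\cos\bigl(\pi(c_1-c_2)\tfrac{2k+1}{n}\bigr)$. Setting $d=c_1-c_2\neq 0$ and passing to the real part of a geometric series,
\[
\sum_{k=0}^{n-1}\cos\!\left(\pi d\,\tfrac{2k+1}{n}\right)=\operatorname{Re}\left(e^{i\pi d/n}\sum_{k=0}^{n-1}e^{2\pi i dk/n}\right),
\]
and $\sum_{k=0}^{n-1}e^{2\pi i dk/n}=0$ whenever $n\nmid d$; as $d$ is a fixed nonzero integer, this holds for all $n>|d|$. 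Hence the cross term vanishes \emph{identically} for all large $n$, so $\frac1n\sum_k a_k\to t_1^2+t_2^2$ and $\log\Phi_n(t_1,t_2)\to-\tfrac12(t_1^2+t_2^2)$. Exponentiating and invoking L\'evy's continuity theorem gives the claimed convergence in distribution.

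The two ingredients — the Taylor bound on $\log(1+x)$ and the closed-form evaluation of the exponential sum — are routine; the only points needing care are that the remainder control is uniform on the fixed point $(t_1,t_2)$ (it is, since the bound depends only on $M=M(t_1,t_2)$), and that the cross-covariance term cancels \emph{exactly} rather than merely in the limit, which is precisely what forces the limiting covariance to be the identity and hence makes the two limiting coordinates independent. An alternative, more probabilistic route would write $h[c_i]=\mathbf{f}^T\mathbf{A}_{c_i}\mathbf{g}$ as bilinear forms in independent Gaussians and invoke a multivariate CLT after computing the $2\times2$ covariance of $(\mathbf{f}^T\mathbf{A}_{c_1}\mathbf{g},\,\mathbf{f}^T\mathbf{A}_{c_2}\mathbf{g})$, but given that the lemma already supplies the exact characteristic function, the direct computation above is cleaner.
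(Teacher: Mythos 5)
Your proof is correct and follows essentially the same route as the paper's: take the logarithm of the product form of the characteristic function from the preceding lemma, Taylor-expand each term with a controlled remainder of order $1/n^2$, and conclude via L\'evy's continuity theorem. The only difference is that you explicitly justify, via the geometric-series identity, why the cosine cross-terms sum to exactly zero for all large $n$ --- a step the paper's proof uses silently --- which is a welcome clarification rather than a new approach.
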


\begin{proof}
Let $\phi(t_1, t_2)$ denote the characteristic function of the random vector
$\left(\frac{h[c_1]}{\sigma^2 \sqrt{n}}, \frac{h[c_2]}{\sigma^2 \sqrt{n}}\right)$.
Then, for fixed $t_1, t_2$,
\begin{align}
\ln(\phi(t_1, t_2)) &= -\frac{1}{2} \sum_{k = 0}^{n - 1} \ln\left(1 +
\frac{1}{n}\left(t_1^2 + t_2^2 + 2t_1t_2 \cos\left(\pi(c_1 - c_2)\frac{2k + 1}{n}\right)\right)\right) \\
&= -\frac{1}{2} \sum_{k = 0}^{n - 1} \left[ \frac{1}{n} \left(t_1^2 + t_2^2 + 2t_1t_2 \cos\left(\pi(c_1 - c_2)\frac{2k + 1}{n}\right)\right)
 + r_k \right] \\
&= -\frac{1}{2}\left(t_1^2 + t_2^2\right) - \frac{1}{2}\sum_{k = 0}^{n - 1} r_k, \label{converge}
\end{align}
 where $r_k$ is the Lagrange remainders. So,  $|r_k| \le \lambda_k^4 / 2n^2$.
Since $\lambda_k^2 \le (|t_1| + |t_2|)^2$, we have $|r_k| \le (|t_1| + |t_2|)^4/2n^2$.

When $n$ tends to infinity,
$\phi(t_1, t_2)$ converges pointwise to $\exp(-(t_1^2 + t_2^2)/2)$,
which is the characteristic function of
the two-dimensional normal distribution $\mathcal{N}(\mathbf{0}, \mathbf{I}_2)$.
From L{\'e}vy's convergence theorem,
we derive that the random vector $\left(\frac{h[c_1]}{\sigma^2 \sqrt{n}}, \frac{h[c_2]}{\sigma^2 \sqrt{n}}\right)$
\emph{converges in distribution} to the normal distribution $\mathcal{N}(\mathbf{0}, \mathbf{I}_2)$.
\end{proof}

\subsection{Reducing Error Rate   with Single-Error Correction  Code} \label{sec:err}

Note that, for  the basic protocol construction of RLWE-based key exchange from KC and AKC presented in Figure \ref{rlwe-ke}, it has already achieved per-bit error probability of about $2^{-42}$. The observation here is that, by Theorem \ref{independent-theorem} on the independence of error in different positions when $n$ is large, if we can correct one bit error the error probability will be  greatly lowered  to be negligible.
 Towards this goal, we present an variant of the Hamming code,
 referred to as \emph{single-error correction} (SEC) code, which can correct one-bit error in a very simple and fast way.

 \subsubsection{Single-Error Correction Code}
All the arithmetic operations in this section are over $\mathbb{Z}_2$.
For a positive integer $n_H$, denote $N_H = 2^{n_H}$,
and define the matrix $\mathbf{H}$ as following, {where for any $i$, $1\leq i\leq N_H-1$, the $i$-th column of $\mathbf{H}$ just  corresponds to the binary presentation of $i$.}
\setcounter{MaxMatrixCols}{20}
\begin{equation*}
\mathbf{H}_{n_H \times (N_H-1)}
=\begin{pmatrix}
1 & 0 & 1 & 0 & 1 & 0 & 1 & \cdots & 0 & 1 & 0 & 1 \\
0 & 1 & 1 & 0 & 0 & 1 & 1 & \cdots & 0 & 0 & 1 & 1 \\
0 & 0 & 0 & 1 & 1 & 1 & 1 & \cdots & 1 & 1 & 1 & 1 \\
  &   &   &   &   &   &   & \cdots \\
0 & 0 & 0 & 0 & 0 & 0 & 0 & \cdots & 1 & 1 & 1 & 1
\end{pmatrix}
\end{equation*}

For arbitrary  $\mathbf{x} = (x_1, \dots, x_{N_H -1}) \in \mathbb{Z}_2^{N_H-1}$,
let $\mathbf{p}^T = \mathbf{H} \mathbf{x}^T$. It is easy to check that the $j$-th element of $\mathbf{p}$ is the exclusive-or of all $x_i$'s
satisfying  the $j$-th least  significant bit of $i$ is $1$, where $1\leq j\leq n_H$ and $1\leq i\leq N_H-1$. Specifically, the first element of $\mathbf{p}$ is the exclusive-or of all $x_i$
that the least significant bit of $i$ is $1$,
and the second element of $\mathbf{p}$ is the exclusive-or of all $x_i$
that the second least significant bit of $i$ is $1$, and so on.
Denote $\mathbf{p} = (p_1, p_2, \dots, p_{n_H})$.
We can combine the bits in $\mathbf{p}$ into a binary number $\overline{\mathbf{p}}=2^0 p_1 + 2^1 p_2 + \dots 2^{n_H-1} p_{n_H}$.
The construction of $\mathbf{H}$ directly leads to the following proposition.

\begin{proposition}\label{prop-H}
If $\mathbf{p}^T = \mathbf{H} \mathbf{x}^T$, and the Hamming weight of $\mathbf{x}$ is $1$,
then $\overline{\mathbf{p}}$ is the subscript index  of the only $1$
in $\mathbf{x}$.
\end{proposition}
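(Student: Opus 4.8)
The claim in Proposition~\ref{prop-H} is essentially a restatement of how the parity-check matrix $\mathbf{H}$ of a Hamming code localizes a single error, so the plan is to unwind the definitions directly. First I would recall that, by construction, the $i$-th column of $\mathbf{H}$ (for $1 \le i \le N_H - 1$) is exactly the length-$n_H$ binary representation of the integer $i$, i.e.\ $\mathbf{H}_{j,i}$ equals the $j$-th least significant bit of $i$. This is stated just before the proposition, so I may take it as given.

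Next I would use the hypothesis that the Hamming weight of $\mathbf{x}$ is $1$: say $x_{i_0} = 1$ and $x_i = 0$ for all $i \ne i_0$. Then $\mathbf{p}^T = \mathbf{H}\mathbf{x}^T = \sum_{i} x_i \cdot (\text{$i$-th column of }\mathbf{H}) = (\text{$i_0$-th column of }\mathbf{H})$, since all other terms vanish over $\mathbb{Z}_2$. Hence for each $j$, $p_j$ equals the $j$-th least significant bit of $i_0$.

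Finally I would combine this with the definition $\overline{\mathbf{p}} = \sum_{j=1}^{n_H} 2^{j-1} p_j$: substituting $p_j = (\text{$j$-th bit of }i_0)$ gives $\overline{\mathbf{p}} = \sum_{j=1}^{n_H} 2^{j-1}(\text{$j$-th bit of }i_0) = i_0$, because $i_0 \in [1, N_H-1]$ fits in $n_H$ bits and this sum is precisely the binary-to-integer reconstruction. Thus $\overline{\mathbf{p}}$ is the index $i_0$ of the unique $1$ in $\mathbf{x}$, which is the assertion.

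There is no real obstacle here; the only point requiring a line of care is making sure the range $1 \le i_0 \le N_H - 1$ guarantees that the binary representation used by $\mathbf{H}$'s columns has no overflow beyond $n_H$ bits and is nonzero, so that the column is well-defined and distinct from the zero vector — this is exactly why the columns of $\mathbf{H}$ enumerate all nonzero vectors of $\mathbb{Z}_2^{n_H}$. I would state this explicitly and then the identification $\overline{\mathbf{p}} = i_0$ is immediate.
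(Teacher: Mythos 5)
Your proof is correct and is exactly the direct unwinding that the paper has in mind: the paper offers no written proof, remarking only that the proposition ``directly'' follows from the construction of $\mathbf{H}$ (whose $i$-th column is the binary representation of $i$), and your argument — that $\mathbf{H}\mathbf{x}^T$ picks out the $i_0$-th column when $\mathbf{x}=\mathbf{e}_{i_0}$, and that $\overline{\mathbf{p}}$ then reassembles $i_0$ from its bits — is precisely that observation made explicit. Your added remark about the range $1\le i_0\le N_H-1$ ensuring the column is a well-defined nonzero vector in $\mathbb{Z}_2^{n_H}$ is a reasonable point of care but not a substantive departure.
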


The single-error correction code $\mathcal{C}$ is defined by
\begin{equation*}
\mathcal{C} = \left\{(x_0, \mathbf{x}, \mathbf{p}) \in
\mathbb{Z}_2 \times \mathbb{Z}_2^{N_H-1}\times \mathbb{Z}_2^{n_H}
\mid x_0 = \oplus_{i = 1}^{N_H-1} x_i,
\mathbf{p}^T = \mathbf{H} \mathbf{x}^T \right\}
\end{equation*}

The encoding algorithm is straightforward and depicted in Algorithm~\ref{ehc-encode}.

We now show that $\mathcal{C}$ can correct one bit error.
Suppose $\mathbf{x}$ is encoded into $\mathbf{c} = (x_0, \mathbf{x}, \mathbf{p})$.
For some reasons, such as the  noise in communication channel, the message $\mathbf{c}$ may be changed into
$\mathbf{c}' = (x_0', \mathbf{x}', \mathbf{p}')$.
We only need to consider the case that at most one bit error occurs.
If $x_0'$ equals to the parity bit of $\mathbf{x}'$, then no error occurs in $x_0$ and $\mathbf{x}$.
Otherwise, there is one bit error in $x_0'$ or $\mathbf{x}'$, but $\mathbf{p}' = \mathbf{p}$ (as we assume there exists at most one bit error that has already occurred in $x_0'$ or $\mathbf{x}'$).
We calculate $\mathbf{p}'' = \mathbf{H} \mathbf{x}'^T \oplus \mathbf{p}'^T$.
In fact, $\mathbf{p}'' = \mathbf{H} \mathbf{x}'^T \oplus \mathbf{p}^T =
\mathbf{H} (\mathbf{x}'^T \oplus \mathbf{x}^T)$.
If the one-bit error occurs in $\mathbf{x}'$,
by Proposition~\ref{prop-H}, $\overline{\mathbf{p}''}$ is the subscript index of the error bit.
If the one-bit error occurs on $x_0'$,
then $\mathbf{x}' = \mathbf{x}$, and $\mathbf{p}'' = \mathbf{H} \mathbf{0} = \mathbf{0}$.
Hence, $\overline{\mathbf{p}''}$ always equals to the subscript index of the error bit.

The decoding algorithm is depicted in Algorithm~\ref{ehc-decode}.
Note that, according to the special form of  $\mathbf{H}$,  the matrix multiplication $\mathbf{H}\mathbf{x}^T$ in both encoding and decoding  can be done with simple bit operations like bit shifts and bitwise exclusive-or (such an implementation is given in Appendix \ref{app-Hxcode}). Moreover,  for AKCN-SEC and OKCN-SEC, the calculations in Lines 2-4 in Algorithm~\ref{ehc-decode} are executed only with probability around $2^{-40}$, so the decoding is extremely fast.

\noindent
\begin{minipage}[H]{0.5\textwidth}
\null
\begin{algorithm}[H]
\caption{$\mathsf{Encode}_{\mathcal{C}}(\mathbf{x} = (x_1,\dots,x_{N_H-1})) $} \label{ehc-encode}
\begin{algorithmic}[1]
\State{$x_0 = \oplus_{i = 1}^{N_H-1} x_i$}
\State{$\mathbf{p}^T = \mathbf{H} \mathbf{x}^T$}
\State{$\mathbf{c} = (x_0, \mathbf{x}, \mathbf{p})$}
\Return{$\mathbf{c}$}
\end{algorithmic}
\end{algorithm}
\end{minipage}
\begin{minipage}[H]{0.5\textwidth}
\null
\begin{algorithm}[H]
\caption{$\mathsf{Decode}_{\mathcal{C}}(x_0, \mathbf{x} = (x_1, \dots, x_{N_H-1}), \mathbf{p})$} \label{ehc-decode}
\begin{algorithmic}[1]
\State{$p = \oplus_{i = 0}^{N_H-1} x_i$}
\If{$p = 1$}
    \State{$i = \overline{\mathbf{H} \mathbf{x}^T} \oplus \overline{\mathbf{p}}$}
    \Comment{bitwise exclusive-or}
    \State{$x_{i} = x_{i} \oplus 1$}
\EndIf
\Return{$\mathbf{x}$}
\end{algorithmic}
\end{algorithm}
\end{minipage}
\vspace{0.3cm}

\subsubsection{AKC and KC with SEC code}

We divide the $n$-bit string $\mathbf{k}_1$ into $\lfloor n/(N_H + n_H) \rfloor$ blocks,
then apply our SEC code in each block.
Note that this approach can also correct more than one bit errors, if  at most one bit error occurs in each block.
\begin{figure}[H]
\centering
\includegraphics{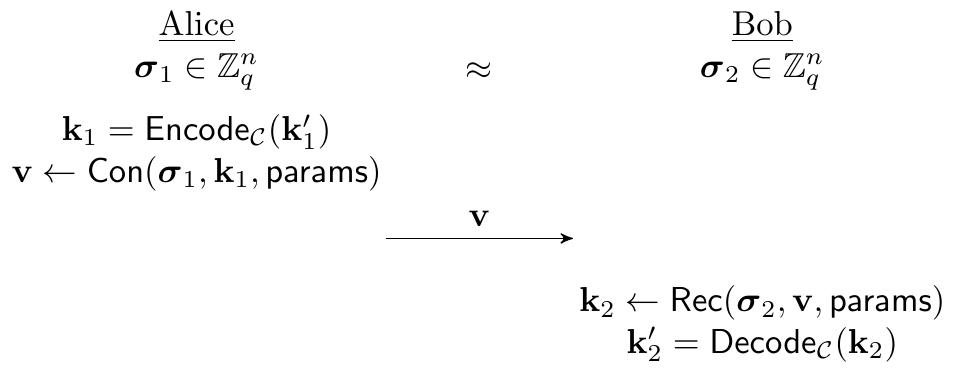}
\caption{Depiction of AKC with SEC code,
where $\mathbf{k}_1, \mathbf{k}_2 \in \mathbb{Z}_2^{N_H + n_H}$, $|\mathbf{k}_1^\prime|=|\mathbf{k}_2^\prime| = N_H-1$.
If the Hamming distance between  $\mathbf{k}_1$ and $\mathbf{k}_2$ is at most  $1$,
then $\mathbf{k}_1' = \mathbf{k}_2'$.}
\label{figure:multi-akc}
\end{figure}

\figurename~\ref{figure:multi-akc} depicts the AKC scheme equipped with the SEC code.
Note that $\mathsf{Encode}_{\mathcal{C}}$ can be calculated off-line.
Suppose the per bit error probability of $\mathbf{k}_1$ and $\mathbf{k}_2$ is $p$,
then under the  assumption that the errors in different positions are independent,
we can estimate that the overall heuristic error probability of $\mathbf{k}_1'$ and $\mathbf{k}_2'$
is no larger than $\lfloor \frac{n}{N_H + n_H} \rfloor C_{N_H + n_H}^{2} p^2$.

\begin{figure}[H]
\centering
\includegraphics{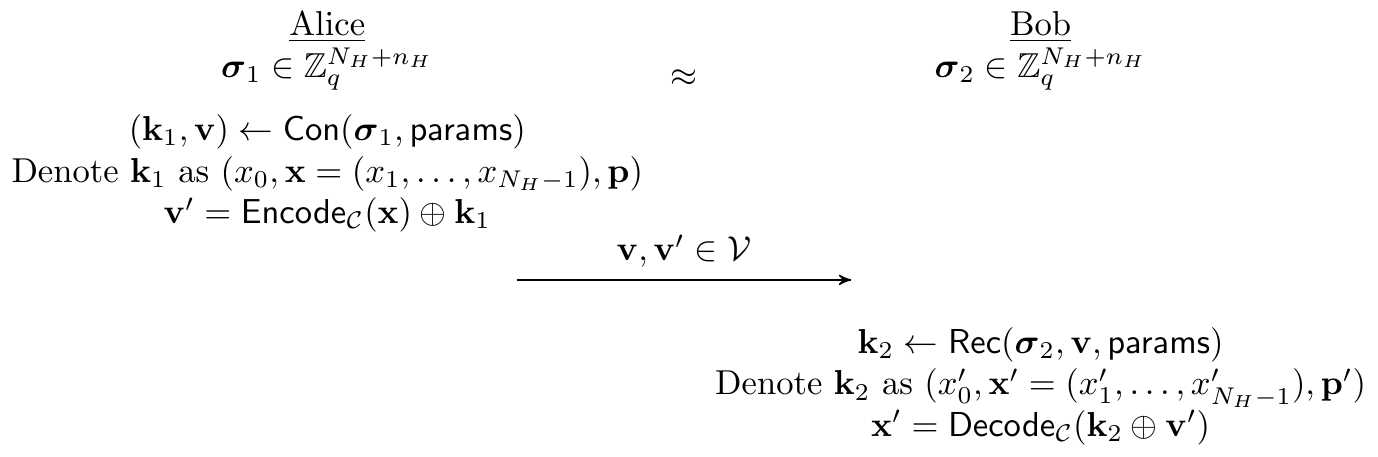}
\caption{Depiction of application of SEC code to KC,
where $\mathbf{k}_1, \mathbf{k}_2 \in \mathbb{Z}_2^{N_H + n_H}$.
If $\mathbf{k}_1$ and $\mathbf{k}_2$ have at most one different bit,
then $\mathbf{x} = \mathbf{x}'$.}
\label{figure:multi-kc-hamming}
\end{figure}

For KC equipped with the SEC code, we propose the algorithm depicted in \figurename~\ref{figure:multi-kc-hamming}.
Note that Alice only needs to send $n_H + 1$ bits of $\mathbf{v}'$, as the second to $N_H$th elements of $\mathbf{v}'$ are all zeros.
Bob calculates $\mathbf{x}' = \mathsf{Decode_{\mathcal{C}}}(\mathbf{k}_2 \oplus \mathbf{v}')$.
In fact, $\mathbf{k}_2 \oplus \mathbf{v}' = \mathsf{Encode}_{\mathcal{C}}(\mathbf{x}) \oplus (\mathbf{k}_1 \oplus \mathbf{k}_2)$. Hence, if the Hamming distance between $\mathbf{k}_1$ and $\mathbf{k}_2$ is $1$,
then $\mathbf{x}' = \mathbf{x}$.
To prove security of the algorithm in \figurename~\ref{figure:multi-kc-hamming},
we need the following theorem.
\begin{theorem} \label{th:decompose}
Let $\mathcal{V} = \mathbb{Z}_2 \times \{\mathbf{0} \in \mathbb{Z}_2^{N_H-1}\} \times \mathbb{Z}_2^{n_H}$,
then $\mathbb{Z}_2^{N_H + n_H} = \mathcal{C} \bigoplus \mathcal{V}$, where $\bigoplus$ denotes direct sum.
\end{theorem}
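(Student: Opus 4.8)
The plan is to exhibit, for every vector $\mathbf{c} = (y_0, \mathbf{y}, \mathbf{q}) \in \mathbb{Z}_2^{N_H + n_H}$ (with $y_0 \in \mathbb{Z}_2$, $\mathbf{y} \in \mathbb{Z}_2^{N_H - 1}$, $\mathbf{q} \in \mathbb{Z}_2^{n_H}$), a \emph{unique} way of writing it as a sum of an element of $\mathcal{C}$ and an element of $\mathcal{V}$. First I would observe that both $\mathcal{C}$ and $\mathcal{V}$ are linear subspaces of $\mathbb{Z}_2^{N_H + n_H}$: for $\mathcal{C}$ this is because the defining conditions $x_0 = \oplus_{i=1}^{N_H-1} x_i$ and $\mathbf{p}^T = \mathbf{H}\mathbf{x}^T$ are $\mathbb{Z}_2$-linear, and $\mathcal{V}$ is visibly a coordinate subspace. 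Hence $\mathcal{C} + \mathcal{V}$ is a subspace, and to prove the direct-sum decomposition it suffices to show (i) $\mathcal{C} + \mathcal{V} = \mathbb{Z}_2^{N_H+n_H}$ and (ii) $\mathcal{C} \cap \mathcal{V} = \{\mathbf{0}\}$.

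For (i), given an arbitrary $\mathbf{c} = (y_0, \mathbf{y}, \mathbf{q})$, set $\mathbf{c}_1 := \mathsf{Encode}_{\mathcal{C}}(\mathbf{y}) = \left(\oplus_{i=1}^{N_H-1} y_i,\ \mathbf{y},\ \mathbf{H}\mathbf{y}^T\right)$, which lies in $\mathcal{C}$ by definition of the code, and set $\mathbf{c}_2 := \mathbf{c} - \mathbf{c}_1 = \mathbf{c} + \mathbf{c}_1$ (arithmetic over $\mathbb{Z}_2$). The middle block of $\mathbf{c}_2$ is $\mathbf{y} + \mathbf{y} = \mathbf{0} \in \mathbb{Z}_2^{N_H - 1}$, so $\mathbf{c}_2 \in \mathcal{V}$. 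Thus $\mathbf{c} = \mathbf{c}_1 + \mathbf{c}_2 \in \mathcal{C} + \mathcal{V}$.

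For (ii), take any $(x_0, \mathbf{x}, \mathbf{p}) \in \mathcal{C} \cap \mathcal{V}$. Membership in $\mathcal{V}$ forces $\mathbf{x} = \mathbf{0}$; membership in $\mathcal{C}$ then forces $x_0 = \oplus_{i=1}^{N_H-1} 0 = 0$ and $\mathbf{p}^T = \mathbf{H}\mathbf{0}^T = \mathbf{0}$, so the element is the zero vector. Combining (i) and (ii) gives $\mathbb{Z}_2^{N_H + n_H} = \mathcal{C} \bigoplus \mathcal{V}$. As a consistency check one may note the dimension count $\dim \mathcal{C} = N_H - 1$ (the block $\mathbf{x}$ is free and determines $x_0$ and $\mathbf{p}$) and $\dim \mathcal{V} = 1 + n_H$, which sum to $N_H + n_H$, as required.

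There is no real obstacle here — the argument is essentially a bookkeeping of which coordinate blocks are free versus determined. The only point requiring a moment's care is to confirm that the ``parity'' and ``syndrome'' blocks of $\mathbf{c}_2$ in step (i) need not vanish (they don't, and they shouldn't, since $\mathcal{V}$ allows arbitrary first and last blocks), so that $\mathbf{c}_2$ genuinely lands in $\mathcal{V}$ rather than in some smaller set; this is immediate once one writes out the three blocks of $\mathbf{c}_2$ explicitly.
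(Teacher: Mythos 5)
Your proof is correct and follows essentially the same route as the paper's: decompose an arbitrary vector as $\mathsf{Encode}_{\mathcal{C}}$ of its middle block plus the remainder (which lands in $\mathcal{V}$ since its middle block vanishes), and show the intersection is trivial because $\mathbf{x}=\mathbf{0}$ forces $x_0=0$ and $\mathbf{p}=\mathbf{0}$. The added dimension count is a harmless extra check.
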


\begin{proof}
For any $\mathbf{k}_1 = (x_0, \mathbf{x} = (x_1, \dots, x_{N_H-1}), \mathbf{p}) \in \mathbb{Z}_2^{N_H + n_H}$,
let $\mathbf{c} = \mathsf{Encode}_{\mathcal{C}}(\mathbf{x})$
and $\mathbf{v}' = \mathsf{c} \oplus \mathbf{k}_1$.
We have the decomposition $\mathbf{k}_1 = \mathbf{c} \oplus \mathbf{v}'$,
where $\mathbf{c} \in \mathcal{C}$ and $\mathbf{v}' \in \mathcal{V}$.

Next, we prove $\mathcal{V} \cap \mathcal{C} = \mathbf{0}$.
If $\mathbf{k} = (x_0, \mathbf{x}, \mathbf{p}) \in \mathcal{V} \cap \mathcal{C}$,
then $\mathbf{x} = 0$, which implies
$x_0 = 0$ and $\mathbf{p}^T = \mathbf{H} \mathbf{0} = \mathbf{0}$. Hence, $\mathbf{k} = \mathbf{0}$.
\end{proof}

When $\mathbf{k}_1$ is subjected to uniform distribution, then by Theorem~\ref{th:decompose},
after the decomposition $\mathbf{k}_1 = \mathbf{c} \oplus \mathbf{v}'$ where $\mathbf{c} \in \mathcal{C}$
and $\mathbf{v}' \in \mathcal{V}$,
$\mathbf{c}$ and $\mathbf{v}'$ are subjected to uniform distribution in $\mathcal{C}$ and $\mathcal{V}$ respectively. And $\mathbf{c}$ and $\mathbf{v}'$ are independent.
As both $\mathbb{Z}_2^{N_H-1} \rightarrow \mathcal{C}$ and $\mathbf{x} \mapsto \mathsf{Encode}_{\mathcal{C}}(\mathbf{x})$
are  one-to-one correspondence, we derive that $\mathbf{x}$ and $\mathbf{v}'$ are independent, and $\mathbf{x}$ is uniformly distributed.

%

\textbf{On the desirability of OKCN-SEC and AKCN-SEC.} OKCN-SEC and AKCN-SEC are desirable on the following grounds:
\begin{itemize}
\item To our knowledge, OKCN-SEC and AKCN-SEC are the simplest RLWE-based KE protocols \emph{with negligible error probability}, which are better suitable for hardware or software implementations than encoding and decoding the four-dimensional lattice $\tilde{D}_4$. Note that SEC can be implemented with simple bit  operations. Moreover, with probability about $1-2^{-40}$, the decoding  only involves the XOR operations in Line-1 of Algorithm \ref{ehc-encode}, which is extremely simple and fast.

\item AKCN-SEC can be
 directly transformed into a CPA-secure PKE scheme  for encrypting $837$-bit messages with error probability about $2^{-69}$, while AKCN4:1-RLWE and NewHope-simple are for encrypting $256$-bit messages  with error probability about $2^{-61}$.

 \item It is  desirable to have KE protocols that directly share or transport keys of larger size, on the following grounds. Firstly, it is commonly expected that, in the post-quantum era, symmetric-key cryptographic primitives like AES need larger key sizes, in view of the quadratic speedup by  Grover's search algorithm and the possibility of more sophisticated quantum attacks \cite{KM10,KLL15} against symmetric-key cryptography.
     {Secondly, in some more critical application areas  than public  commercial usage, larger key size actually has already been mandated nowadays. Thirdly, when KE being used within TLS, actually a pair of keys $(K_{CS},K_{SC})$, and even more, is needed, where $K_{CS}$ (resp., $K_{SC}$) is for generating messages from client (resp., server) to server (resp., client). $(K_{CS},K_{SC})$ is usually derived via a key-derivation function (KDF)  from the session-key output by the KE protocol. With a KE protocol of larger shared-key size, the KDF can be waived, which simplifies the software or hardware implementations (particularly at the client side) and strengthens security simultaneously.}


  \item  As clarified,   the SEC approach fails only when  there are more than one bit errors in some block, and is versatile in the sense:  the smaller (resp., larger) is  the block size $n_H$, the lower the error probability (resp., bandwidth expansion) will be. The performances for some instantiations of OKCN-SEC and AKCN-SEC are summarized   in Table~\ref{table-nhcompare}.


\end{itemize}

\section{Performance Comparison: Benchmark}
The work \cite{liboqs} introduces the Open Quantum Safe Project.
liboqs is one part of this project.
liboqs provides the interface for adding new key exchange schemes,
benchmark, and an easy way to integrate to OpenSSL.

We fork the liboqs on Github and add our OKCN-LWR-Recommended and OKCN-LWE-Recommended.
Most of the source codes are modified from Frodo-Recommended provided in liboqs.
\noindent
\begin{table}[H]
\centering
\begin{tabular}{lrrrrr}
\hline
& time(us) & stdev & cycle & stdev & bw. (B)\\
\hline
\multicolumn{5}{l}{RLWE BCNS15} \\
\hline
Alice 0   & 1006.497 & 10.909 & 2309859 & 24996 & 4096 \\
Bob       & 1623.353 & 14.045 & 3725552 & 32220 & 4244 \\
Alice 1   & 201.335  & 1.268  & 461976  & 2805 & - \\
\hline
\multicolumn{5}{l}{RLWE NewHope} \\
\hline
Alice 0   & 81.430   & 26.947& 186797  & 61827 & 1824 \\
Bob       & 123.714  & 4.504 & 283841  & 10279 & 2048 \\
Alice 1   & 26.631   & 1.878 & 61048   & 4173  & - \\
\hline
\multicolumn{5}{l}{LWE Frodo recommended} \\
\hline
Alice 0   & 1443.915 & 10.990 & 3313704 & 25236 & 11280 \\
Bob       & 1940.616 & 12.809 & 4453734 & 29439 & 11288 \\
Alice 1   & 170.109  & 3.655  & 390331  & 8317  & - \\
\hline
\multicolumn{5}{l}{LWR OKCN recommended} \\
\hline
Alice 0   & 1161.154 & 11.839 & 2664789 & 27129 & 9968 \\
Bob       & 1722.525 & 12.401 & 3953182 & 28400 & 8224 \\
Alice 1   & 133.984  & 3.980  & 307404  & 9065  & - \\
\hline
\multicolumn{5}{l}{LWE OKCN recommended} \\
\hline
Alice 0   & 1335.453 & 13.460& 3064789 & 30871 & 9968 \\
Bob       & 1753.240 & 14.293& 4023632 & 32851 & 8608 \\
Alice 1   & 146.162  & 3.528 & 335380  & 8035  & - \\
\hline
\end{tabular}
\vspace{0.3cm}
\caption{Benchmark of liboqs integrated with OKCN-LWE-Recommended.
    ``time(us)" refers to mean time that spent on each iteration.
    ``cycle" refers to mean number of cpu cycles.
    ``stdev" refers to population standard deviation of time or cpu cycles.
    ``bw. (B)" refers to bandwidth, counted in bytes.
}
\label{benchmark}
\end{table}

We run benchmark of liboqs on Ubuntu Linux 16.04, GCC 5.4.0,
Intel Core i7-4712MQ 2.30GHz, with hyperthreading and TurboBoost disabled,
and the CPU frequency fixed to 2.30GHz (by following the instructions on \url{http://bench.cr.yp.to/supercop.html}).
The benchmark result (Table~\ref{benchmark}) shows that OKCN-LWR-Recommended and OKCN-LWE-Recommended are  faster than Frodo, and use smaller bandwidth.

We modify the source code of NewHope.
We only change the reconciliation part of NewHope to our OKCN/AKCN,
and keep other parts (e.g. NTT) unchanged.
The benchmark of our RLWE-AKCN-SEC and RLWE-OKCN-SEC are shown in Table~\ref{benchmark-SEC}.
The environment is the same as above.
We can see that the simplicity of our reconciliation mechanisms result in significant improvement of this part in the key exchange scheme.
\noindent
\begin{table}[H]
\centering
\begin{tabular}{c|l|rrrr}
\hline
                      &         & AKCN-SEC & OKCN-SEC & NewHope \\ \hline
\multirow{2}{*}{con}  & median  & 8032 & 12020 & 15276 \\
                      & average & 8169 & 12093 & 15448 \\ \hline
\multirow{2}{*}{rec}  & median  & 7420 & 7356  & 10548 \\
                      & average & 7625 & 7369  & 10498 \\ \hline
\end{tabular}
\vspace{0.3cm}
\caption{Benchmark of AKCN/OKCN-SEC and NewHope. Numbers in the table are median and average of cpu cycles.}
\label{benchmark-SEC}
\end{table}

\section{More Applications and Future Works}

The composition of a secure KE protocol (\emph{with negligible error probability}) and a CPA-secure symmetric-key encryption scheme (SKE) yields a CPA-secure PKE scheme. And any CPA-secure PKE can be transformed into a CCA-secure one via the FO-transformation \cite{FO99b,FO99a,peikert2014lattice,TU15} in the quantum  random oracle model. If we view $2^{-60}$ to be negligible, then  OKCN-LWE and AKCN-LWE (on the same parameters of Frodo),  AKCN-4:1-RLWE (on the same parameters of NewHope), and OKCN-SEC and AKCN-SEC can be used to build CPA-secure PKE schemes. Moreover, AKCN-LWE,  AKCN-4:1-RLWE, and AKCN-SEC can be used directly  for CPA-secure PKE scheme (without composing SKE).

One particularly important application of public-key cryptography is key transport (i.e., public-key encryption of a random symmetric key), which is in particular demanded by the Tor project \cite{Tor16} and NIST \cite{NIST16}. We note that our AKC-based KE protocols can just be used for key transport.

 Any secure KE protocol can be transformed, in a black-box way, into an authenticated key exchange (AKE) protocol by additionally using a secure signature scheme via the SIGMA paradigm  \cite{K03}. SIGMA   is just the basis of authentication mechanism for   the secure transport protocol  TLS in the client/server setting. Recently, the next generation of TLS, specifically TLS1.3, is now under development \cite{TLS}; And developing post-quantum secure TLS  protocol is now receiving more and more  efforts or attention both from research community and from standardization agencies. Compared to the currently deployed TLS1.2 standard, one salient change (among others) made in TLS1.3  is that the server now plays the role of  the  responder.
 The heavy workload in the server, particularly at peak time,   is one of the major sources that causes slower server responsiveness or causes the server an  easy target of more and more vicious DDoS attacks.   We suggest that the predicament faced by the server can be  mitigated with  AKC-based KE protocols like AKCN-LWE, AKCN-4:1-RLWE,  and AKCN-SEC, by taking advantage of  the session-key predetermination and online/offline parallel computability enjoyed by them. {The integration of the developed protocols into TLS is left to future works.}

 Finally, as a fundamental building tool for lattice-based cryptography, we suggest OKCN, AKCN, and the various KE protocols based on them, are of independent value. They   may possibly  find more applications in  more advanced cryptographic primitives (e.g., oblivious transfer,  IBE, ABE, FHE) from LWR, LWE and RLWE, by simplifying their design and analysis with versatile performance optimization or balance.

 \section{ZarZar: RLWE-Based Key Exchange from KC and AKC}\label{zarzar}
Previously, \cite{newhope15} proposes practical RLWE key exchange NewHope,
whose reconciliation mechanism uses lattice $D_2$.
\cite{newhope-simple} proposes a variant NewHope-Simple,
which simplifies the reconciliation mechanism of NewHope.
However, the slowest part of NewHope and NewHope-simple is the $1024$-dimension Number Theoretic Transform (NTT).
Hence, the simplification of the reconciliation mechanism does not bring much improvement of the overall efficiency.
\cite{newhope15} also provides JarJar, which is a key exchange based on $512$-dimension RLWE,
and its reconciliation mechanism is based on $D_2$.
However, JarJar can only provide $118$-bit security on best known quantum attack from \cite{newhope15}.

In this section, we propose a lightweight RLWE key exchange named ZarZar.
ZarZar uses lattice $E_8$, which is in higher dimension and thus denser than $D_2$.
This allows us to choose smaller dimension of RLWE with larger noise.
NTT of 512-dimension in ZarZar costs nearly half of that of NewHope.
Although the dimension of $E_8$ is higher than $D_2$,
based on the structure of $E_8$, we propose fast encoding and decoding algorithms.
To improve the efficiency of sampling noise of large variance,
we propose a fast sample method.

In our parameter set, $n = 512$.
We divide the $512$ coefficients of the polynomial $\boldsymbol{\sigma}_1$ and $\boldsymbol{\sigma}_2$ into $64$ groups,
where each group is composed by $8$ coefficients.
In specific, denote $R = \mathbb{Z}[x] / (x^8 + 1), R_q = R/qR, K = \mathbb{Q}[x] / (x^8 + 1)$ and
$K_{\mathbb{R}} = K \otimes \mathbb{R} \simeq \mathbb{R}[x] / (x^8 + 1)$.
Then the polynomial $\boldsymbol{\sigma}_1$ can be represented as
$\boldsymbol{\sigma}_1(x) = \sigma_0(x^{64}) + \sigma_1(x^{64}) x + \dots + \sigma_{63}(x^{64}) x^{63}$,
where $\sigma_i(x) \in R_q$ for $i = 0, 1, \dots 63$. $\boldsymbol{\sigma}_2$ can be divided in the same way.
Then we only need to construct the reconciliation mechanism for each $\sigma_i(x)$, and finally combine the keys together.
To do this, we first need to introduce the lattice $E_8$ and its encoding and decoding.

\subsection{$E_8$ and its Encoding and Decoding}
We construct lattice $E_8$ from the Extended Hamming Code in dimension $8$, $H_8$.
$H_8$ refers to the 4-dimension linear subspace of 8-dimension linear space $\mathbb{Z}_2^8$.
\begin{equation*}
H_8 = \{ \mathbf{c} \in \mathbb{Z}_2^8 \mid \mathbf{c} = \mathbf{z} \mathbf{H} \bmod 2, \mathbf{z} \in \mathbb{Z}^4\}
\end{equation*}
where
\begin{equation*}
    \mathbf{H} = \begin{pmatrix}
        1 & 1 & 1 & 1 & 0 & 0 & 0 & 0 \\
        0 & 0 & 1 & 1 & 1 & 1 & 0 & 0 \\
        0 & 0 & 0 & 0 & 1 & 1 & 1 & 1 \\
        0 & 1 & 0 & 1 & 0 & 1 & 0 & 1
    \end{pmatrix}
\end{equation*}

The encoding algorithm is straightforward:
given a 4-bit string $\mathbf{k}_1$,
calculate $\mathbf{k}_1 \mathbf{H}$.
This operation can be done efficiently by bitwise operations.
We combine this encoding with AKCN. The complete algorithm is shown in Algorithm~\ref{zarzar-con}.

\noindent
\begin{algorithm}[H]
\caption{Con with encoding in $E_8$}\label{zarzar-con}
\begin{algorithmic}[1]
\Procedure {$\mathsf{Con}$}{$\boldsymbol{\sigma}_1 \in \mathbb{Z}_q^8, \mathbf{k}_1 \in \mathbb{Z}_2^4, \textsf{params}$}
\State{$\mathbf{v} = \left\lfloor \frac{g}{q}\left(\boldsymbol{\sigma}_1 + \frac{q-1}{2}(\mathbf{k}_1 \mathbf{H} \bmod 2)\right) \right\rceil \bmod g$}
    \Return{$\mathbf{v}$}
\EndProcedure
\end{algorithmic}
\end{algorithm}

The decoding algorithm finds the solution of the closest vector problem (CVP) for the lattice $E_8$.
For any given $\mathbf{x} \in \mathbb{R}^8$, CVP asks which lattice point in $E_8$ is closest to $\mathbf{x}$.
Based on the structure of $E_8$, we propose an efficient decoding algorithm.
Let $C = \{(x_1, x_1, x_2, x_2, x_3, x_3, x_4, x_4) \in \mathbb{Z}_2^8 \mid x_1 + x_2 + x_3 + x_4 = 0 \bmod 2\}$.
In fact, $C$ is spanned by the up most three rows of $\mathbf{H}$.
Hence, $E_8 = C \cup (C + \mathbf{c})$, where $\mathbf{c} = (0, 1, 0, 1, 0, 1, 0, 1)$ is the last row of $\mathbf{H}$.
For a given $\mathbf{x} \in \mathbb{R}^8$,
to solve CVP of $\mathbf{x}$ in $E_8$, we solve CVP of $\mathbf{x}$ and $\mathbf{x} - \mathbf{c}$ in $C$,
and then choose the one that has smaller distance.

\noindent
\begin{algorithm}[H]
\caption{Rec with decoding in $E_8$}\label{zarzar-rec}
\begin{algorithmic}[1]
\Procedure {$\mathsf{Rec}$}{$\boldsymbol{\sigma}_2 \in \mathbb{Z}_q^8, \mathbf{v} \in \mathbb{Z}_g^8, \textsf{params}$}
    \State{$\mathbf{k}_2 = \mathsf{Decode_{E_8}}\left(\left\lfloor \frac{q}{g} \mathbf{v} \right\rceil - \boldsymbol{\sigma}_2\right)$}
    \Return{$\mathbf{k}_2$}
\EndProcedure
\end{algorithmic}
\end{algorithm}

\noindent
\begin{algorithm}
\caption{Decoding in $E_8$ and $C$}\label{zarzar-decode}
\begin{algorithmic}[1]
\Procedure {$\mathsf{Decode}_{E_8}$}{$\mathbf{x} \in \mathbb{Z}_q^8$}
    \For{$i = 0 \dots 7$}
    \State{$\mathsf{cost}_{i, 0} = |x_i|_q^2$}
    \State{$\mathsf{cost}_{i, 1} = |x_i + \frac{q-1}{2}|_q^2$}
    \EndFor
    \State{$(\mathbf{k}^{00}, \mathsf{TotalCost}^{00})\gets \mathsf{Decode}_C^{00}(\mathsf{cost}_{i \in 0\dots7, b\in\{0, 1\}})$}
    \State{$(\mathbf{k}^{01}, \mathsf{TotalCost}^{01})\gets \mathsf{Decode}_C^{01}(\mathsf{cost}_{i \in 0\dots7, b\in\{0, 1\}})$}
    \If {$\mathsf{TotalCost}^{00} < \mathsf{TotalCost}^{01}$}
        \State{$b = 0$}
    \Else
        \State{$b = 1$}
    \EndIf
    \State{$(k_0, k_1, k_2, k_3) \gets \mathbf{k}^{0b}$}
    \State{$\mathbf{k}_2 = (k_0, k_1 \oplus k_0, k_3, b)$}
    \Return{$\mathbf{k}_2$}
\EndProcedure
\Procedure {$\mathsf{Decode}_C^{b_0 b_1}$}{$\mathsf{cost}_{i \in 0\dots7, b\in\{0, 1\}} \in \mathbb{Z}^{8 \times 2}$}
    \State{$min_d = +\infty$}
    \State{$min_i = 0$}
    \State{$\mathsf{TotalCost} = 0$}
    \For{$j = 0 \dots 3$}
        \State{$c_0 \gets \mathsf{cost}_{2j, b_0} + \mathsf{cost}_{2j+1, b_1}$}
        \State{$c_1 \gets \mathsf{cost}_{2j, 1-b_0} + \mathsf{cost}_{2j+1, 1-b_1}$}
        \If {$c_0 < c_1$}
            \State{$k_i \gets 0$}
        \Else
            \State{$k_i \gets 1$}
        \EndIf
        \State{$\mathsf{TotalCost} \gets \mathsf{TotalCost} + c_{k_i}$}
        \If {$c_{1-k_i} - c_{k_i} < min_d$}
            \State{$min_d \gets c_{1-k_i} - c_{k_i}$}
            \State{$min_i \gets i$}
        \EndIf
    \EndFor
    \If {$k_0 + k_1 + k_2 + k_3 \bmod 2 = 1$}
        \State{$k_{min_i} \gets 1 - k_{min_i}$}
        \State{$\mathsf{TotalCost} \gets \mathsf{TotalCost} + min_d$}
    \EndIf
    \State{$\mathbf{k} = (k_0, k_1, k_2, k_3)$}
    \Return {$(\mathbf{k}, \mathsf{TotalCost})$}
\EndProcedure
\end{algorithmic}
\end{algorithm}

Then we consider how to solve CVP in $C$.
For an $\mathbf{x} \in \mathbb{R}^8$, we choose $(x_1, x_2, x_3, x_4) \in \mathbb{Z}_2^4$,
such that $(x_1, x_1, x_2, x_2, x_3, x_3, x_4, x_4)$ is closest to $\mathbf{x}$.
However, $x_1 + x_2 + x_3 + x_4 \bmod 2$ may equal to $1$.
In such cases, we choose the 4-bit string $(x_1', x_2', x_3', x_4')$ such that
$(x_1', x_1', x_2', x_2', x_3', x_3', x_4', x_4')$ is secondly closest to $\mathbf{x}$.
Note that $(x_1', x_2', x_3', x_4')$ has at most one-bit difference from $(x_1, x_2, x_3, x_4)$.
The detailed algorithm is depicted in Algorithm~\ref{zarzar-decode}. Considering potential timing attack,
all the ``if'' conditional statements can be implemented by constant time bitwise operations.
In practice, $\mathsf{Decode}_C^{00}$ and $\mathsf{Decode}_C^{01}$ are implemented as two subroutines.

For algorithm~\ref{zarzar-decode},
in $\mathsf{Decode}_{E_8}$, we calculate $\mathsf{cost}_{i, b}$, where $i = 0, 1, \dots, 7, b \in \{0, 1\}$,
which refer to the contribution to the total 2-norm when $x_i = b$.
$\mathsf{Decode}_C^{00}$ solves the CVP in lattice $C$, and $\mathsf{Decode}_C^{01}$ solves the CVP in lattice $C + \mathbf{c}$.
Then we choose the one that has smaller distance.
$\mathsf{Decode}_C^{b_0 b_1}$ calculates the $k_i, i = 0, 1, 2, 3$ such that
$\frac{q-1}{2}(k_0 \oplus b_0, k_0 \oplus b_1, k_1 \oplus b_0, k_1 \oplus b_1,
k_2 \oplus b_0, k_2 \oplus b_1, k_3 \oplus b_0, k_3 \oplus b_1)$ is closest to $\mathbf{x}$.
We use $min_d$ and $min_i$ to find the second closest vector.
Finally, we check the parity to decide which one should be returned.

The following theorem gives a condition of success of the encoding and decoding algorithm
in Algorithm~\ref{zarzar-con} and Algorithm~\ref{zarzar-rec}.
For simplicity, for any $\boldsymbol{\sigma} = (x_0, x_1, \dots, x_7) \in \mathbb{Z}_q^8$,
we define $\|\boldsymbol{\sigma}\|_{q, 2}^2 = \sum_{i = 0}^{7} |x_i|_q^2$.

\begin{theorem}\label{th-correct}
   If $\|\boldsymbol{\sigma}_1 - \boldsymbol{\sigma}_2\|_{q, 2} \le (q - 1)/2 - \sqrt{2}\left(\frac{q}{g} + 1\right)$,
   then $\mathbf{k}_1$ and $\mathbf{k}_2$ calculated by $\mathsf{Con}$
   and $\mathsf{Rec}$ are equal.
\end{theorem}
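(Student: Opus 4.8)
The plan is to show that the lattice point encoded by $\mathsf{Con}$ is exactly the one recovered by the CVP-decoder inside $\mathsf{Rec}$, so that correctness reduces to a packing-radius estimate for the $\mathbb{Z}_q$-embedding of $E_8$. Write $\mathbf{c}_1 = \tfrac{q-1}{2}\,(\mathbf{k}_1 \mathbf{H} \bmod 2) \in \mathbb{Z}_q^8$ for the codeword produced inside Algorithm~\ref{zarzar-con}. First I would record two structural facts: (i) the map $\mathbf{k}_1 \mapsto \mathbf{k}_1 \mathbf{H} \bmod 2$ is injective, since the four rows of $\mathbf{H}$ are linearly independent over $\mathbb{Z}_2$ (inspect columns $1,2,5,7$), so recovering $\mathbf{k}_1\mathbf{H}\bmod 2$ recovers $\mathbf{k}_1$; and (ii) the $16$ points $\{\tfrac{q-1}{2}\mathbf{c} : \mathbf{c}\in H_8\}$ are exactly the constellation that $\mathsf{Decode}_{E_8}$ searches over (Construction~A of $E_8$ from $H_8$, reduced mod $q$), and $\mathsf{Decode}_{E_8}$ is meant to return the $4$-bit label of the closest such point in the $\|\cdot\|_{q,2}$ metric.

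\textbf{Step 1 (unfolding Con and Rec).} Let $\boldsymbol{\theta}_1$ be the componentwise rounding error in line~2 of Algorithm~\ref{zarzar-con}, so $\|\boldsymbol{\theta}_1\|_\infty \le 1/2$ and $\mathbf{v} \equiv \tfrac{g}{q}(\boldsymbol{\sigma}_1 + \mathbf{c}_1) + \boldsymbol{\theta}_1 \pmod g$. Then $\tfrac{q}{g}\mathbf{v} \equiv \boldsymbol{\sigma}_1 + \mathbf{c}_1 + \tfrac{q}{g}\boldsymbol{\theta}_1 \pmod q$, and since $\boldsymbol{\sigma}_1,\mathbf{c}_1$ are integral, $\lfloor \tfrac{q}{g}\mathbf{v}\rceil - \boldsymbol{\sigma}_2 \equiv \mathbf{c}_1 + \mathbf{e} \pmod q$ with $\mathbf{e} = (\boldsymbol{\sigma}_1-\boldsymbol{\sigma}_2) + \lfloor \tfrac{q}{g}\boldsymbol{\theta}_1\rceil$. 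Each coordinate of $\lfloor \tfrac{q}{g}\boldsymbol{\theta}_1\rceil$ has absolute value at most $\tfrac{q}{2g}+\tfrac12$, so its $\ell_2$-norm is at most $\sqrt{8}\,(\tfrac{q}{2g}+\tfrac12) = \sqrt{2}\,(\tfrac{q}{g}+1)$ (and strictly less when $q$ is odd and $g$ a power of two, since then $\tfrac{q}{2g}$ is not a half-integer, removing the boundary case below). By the triangle inequality for $|\cdot|_q$ applied coordinatewise, $\|\mathbf{e}\|_{q,2} \le \|\boldsymbol{\sigma}_1-\boldsymbol{\sigma}_2\|_{q,2} + \sqrt{2}\,(\tfrac{q}{g}+1) \le \tfrac{q-1}{2}$ by hypothesis.

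\textbf{Step 2 (packing radius).} For distinct $\mathbf{c}\neq\mathbf{c}'$ in $H_8$, the codewords differ in at least $4$ coordinates (the extended Hamming code is $[8,4,4]$); in each such coordinate $\tfrac{q-1}{2}\mathbf{c}$ and $\tfrac{q-1}{2}\mathbf{c}'$ differ by $\pm\tfrac{q-1}{2}$, and $|\pm\tfrac{q-1}{2}|_q = \tfrac{q-1}{2}$. Hence $\|\tfrac{q-1}{2}\mathbf{c} - \tfrac{q-1}{2}\mathbf{c}'\|_{q,2}^2 \ge 4\big(\tfrac{q-1}{2}\big)^2$, so the minimum $\|\cdot\|_{q,2}$-distance of the constellation is $\ge q-1$ and its packing radius is $\ge \tfrac{q-1}{2}$. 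Combining with Step~1, $\mathbf{c}_1 + \mathbf{e}$ lies (strictly, in the regime above) inside the decoding cell of $\mathbf{c}_1$, so the unique closest constellation point to $\lfloor\tfrac{q}{g}\mathbf{v}\rceil-\boldsymbol{\sigma}_2$ is $\mathbf{c}_1$.

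\textbf{Step 3 (the explicit decoder), and the main obstacle.} It remains to verify that Algorithm~\ref{zarzar-decode} actually outputs the $4$-bit label of that closest point, i.e.\ that it is exact maximum-likelihood decoding of $E_8$ under $\|\cdot\|_{q,2}$. I would check: the decomposition $E_8 = C \cup (C+\mathbf{c})$ with $\mathbf{c}=(0,1,0,1,0,1,0,1)$ is a disjoint cover; the subroutine $\mathsf{Decode}_C^{b_0b_1}$ solves CVP within a coset by choosing, in each of the four doubled coordinate pairs, the cheaper bit ($c_0$ vs.\ $c_1$) and then, if the resulting word violates the even-weight constraint of $C$, flipping the single pair with the smallest cost gap $\mathsf{min}_d$ (the standard ML rule for an even-weight subcode); $\mathsf{Decode}_{E_8}$ then picks the cheaper of the two cosets via $\mathsf{TotalCost}$; and finally $\mathbf{k}_2 = (k_0,\ k_1\oplus k_0,\ k_3,\ b)$ is precisely the inverse of the encoding $\mathbf{k}_1 \mapsto \mathbf{k}_1\mathbf{H}\bmod 2$ expressed through the $C/(C+\mathbf{c})$ parametrization. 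Together with injectivity (fact (i)) this yields $\mathbf{k}_2 = \mathbf{k}_1$. I expect this last step---reconciling the bit-level bookkeeping of Algorithm~\ref{zarzar-decode} with the encoder and confirming the parity-fix-by-cheapest-flip is genuinely ML for $C$---to be the main obstacle; Steps~1--2 (the rounding estimate and the minimum-distance count) are routine.
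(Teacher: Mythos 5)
Your proposal is correct and takes essentially the same route as the paper's proof: your Steps 1--2 reproduce exactly the paper's argument, namely unfolding \textsf{Con}/\textsf{Rec} to write $\lfloor \tfrac{q}{g}\mathbf{v}\rceil - \boldsymbol{\sigma}_2$ as the codeword $\tfrac{q-1}{2}\mathbf{k}_1\mathbf{H}$ plus a perturbation of $\|\cdot\|_{q,2}$-norm at most $\|\boldsymbol{\sigma}_1-\boldsymbol{\sigma}_2\|_{q,2}+\sqrt{2}\left(\tfrac{q}{g}+1\right)$, and comparing this against the packing radius $(q-1)/2$ derived from the minimum Hamming distance $4$ of $H_8$. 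The decoder-exactness verification you defer to Step 3 is simply asserted without proof in the paper (``the decoding will be correct''), so your write-up is, if anything, more careful on that point.
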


\begin{proof}

The minimal Hamming distance of the Extended Hamming code $H_8$ is $4$.
Hence, the minimal distance in the lattice we used is $\frac{1}{2}\sqrt{\left(\frac{q-1}{2}\right)^2 \times 4} = (q-1)/2$.

We can find $\boldsymbol{\epsilon}, \boldsymbol{\epsilon}_1 \in [-1/2, 1/2]^8, \boldsymbol{\theta} \in \mathbb{Z}^8$ such that
\begin{align*}
\left\lfloor \frac{q}{g} \mathbf{v} \right\rceil - \boldsymbol{\sigma}_2 &= \frac{q}{g} \mathbf{v} + \boldsymbol{\epsilon} - \boldsymbol{\sigma}_2
= \frac{q}{g} \left( \frac{g}{q}\left(\boldsymbol{\sigma}_1 + \frac{q-1}{2} \mathbf{k}_1 \mathbf{H}\right) + \boldsymbol{\epsilon} + \boldsymbol{\theta} g \right) +
\boldsymbol{\epsilon}_1 - \boldsymbol{\sigma}_2 \\
&= (\boldsymbol{\sigma}_1 - \boldsymbol{\sigma}_2) + \frac{q-1}{2} \mathbf{k}_1 \mathbf{H} + \frac{q}{g} \boldsymbol{\epsilon} +
\boldsymbol{\epsilon}_1 + \boldsymbol{\theta} q
\end{align*}
Hence, the bias from $\frac{q-1}{2} \mathbf{k}_1 \mathbf{H}$ is no larger than
$\|\boldsymbol{\sigma}_1 - \boldsymbol{\sigma}_2\|_{q, 2} + \frac{q}{g} \|\boldsymbol{\epsilon}\| + \sqrt{2}
\le \|\boldsymbol{\sigma}_1 - \boldsymbol{\sigma}_2\|_{q, 2} + \sqrt{2}\left(\frac{q}{g} + 1\right)$.
If this value is less than the minimal distance $(q-1)/2$, the decoding will be correct,
which implies $\mathbf{k}_1 = \mathbf{k}_2$.

\end{proof}

\subsection{Parameters and Implementation}

We shall explain why our key exchange based on $E_8$ outperforms JarJar,
which uses $D_2$ in \cite{newhope15}.
$E_8$ is densitest in 8-dimension, and it is denser than $D_2$.
Specifically, when $E_8$ is embeded into $\mathbb{Z}_q^8$,
the minimal distance of $\frac{q}{2} E_8$ is $\frac{q}{2}$,
while the minimal distance of $q D_2$ is $\frac{\sqrt{2}}{4} q$.
Both $E_8$ and $D_2$ can extract $256$-bit key from polynomial whose degree is $512$.
As we shall see in \figurename~\ref{fig:error-clt},
the logarithm of probability is almost propotional to the square of the distance
between $\boldsymbol{\sigma}_1$ and $\boldsymbol{\sigma}_2$.
Hence, the smaller minimal distance allows us to choose Gaussian of larger deviation in secrets and noises,
which garantees at least $128$-bit post-quantum safty of the protocol.

Encoding and decoding in $E_8$ may be more slower than $D_2$,
but we half the time of numerical theoretic transform (NTT) compared to NewHope by setting $n = 512$.
Note that the efficiency of NTT dominates the overall efficiency.
Hence, the trade off is worthwhile.

\subsubsection{Parameters}

We choose the parameter $(q = 12289, n = 512, \sigma^2 = 22)$.
See Table~\ref{table:params-zarzar} for the security estimation.
The attack time estimation of NewHope seems much larger than our ZarZar.
However, as NewHope only reach 256-bit shared key.
The post-quantum attacker may attack its symmetric encryption in 128-bit time.
In addition, the security level is estimated in a very conservative way.
The number $129$ in column $Q$ does not mean best known attack can achieve $2^{129}$ time.
\begin{table}[H]
\centering
\begin{tabular}{rrrrrr}
\hline
          & $m$ & $b$ & C   & Q   & P \\ \hline
Primal    & 646 & 491 & 143 & 130 & 101 \\
Dual      & 663 & 489 & 143 & 129 & 101 \\ \hline
\end{tabular}
\caption{Security estimation for ZarZar.} \label{table:params-zarzar}
\end{table}

\subsubsection{Noise distribution}
NewHope uses the centered binary distribution $\Psi_{16}$ as the secrete and noise distribution.
However, the deviation in our parameter set $(\sigma = \sqrt{22})$ is much larger than NewHope,
this method requires too many random bits.
Note that the generation of random bits costs a lot of time.
Frodo uses a table to generate a discrete distribution that is very close to the rounded Gaussian.
However, in our parameter set, the table will be too large to sample efficiently.
Hence, we propose the distribution $B^{a, b}$, where $a$ and $b$ are two integers.

\begin{algorithm}
    \caption{Sample $r$ from $B^{a, b}$} \label{algo-sample}
\begin{algorithmic}[1]
    \State{$r \gets \sum_{i = 1}^{a}\mathsf{getOneRandomBit}() + 2*\sum_{i = 1}^{b}\mathsf{getOneRandomBit}() - \left(\frac{a}{2} + b\right)$}
\end{algorithmic}
\end{algorithm}

The variation of $r$ in Algorithm~\ref{algo-sample} is $\frac{a}{4} + b$, and the expect value of $r$ is $0$.
By the \emph{central limit theorem}, the distribution of $r$ is close to a discrete Gaussian.
In our implementation, we choose $a = 24, b = 16$, and the summation of the random bits are calculated by fast bit counting.

\begin{table}[H]
\centering
\begin{tabular}{rrrrrrr}
\hline
        & $n$  & $a$ & $P$ & $\sigma^2$ & $R_a(P || Q)$ & $R_a(P || Q)^{5n}$ \\ \hline
ZarZar  & 512  & 30  & $B^{24,16}$ & 22 & 1.0016 & 65.71 \\
NewHope & 1024 & 9   & $\Psi_{16}$ & 8  & 1.00063 & 26 \\ \hline
\end{tabular}
\caption{Comparison of Renyi divergence. $Q$ refers to the rounded Gaussian of variance $\sigma^2$.} \label{table:b24-16}
\end{table}

The closeness of $B^{24, 16}$ and the rounded Gaussian of variance $22$ is measured in Table~\ref{table:b24-16}.
Recall that the Renyi divergence increase as $a$ increase.
Hence, $B^{24, 16}$ and rounded Gaussian of variance $22$ are more close compared to $\Psi_{16}$
and rounded Gaussian of variance $8$.
We use a larger $a$ than NewHope so that the potential security decline can be smaller,
although no attacks known make use of the information of different noise distributions.

\subsubsection{Benchmark}

\noindent
\begin{table}[H]
\centering
\begin{tabular}{c|r|rrr}
\hline
                      &         & ZarZar & NewHope \\ \hline
\multirow{2}{*}{NTT}  & median  & 26864 & 56056 \\
                      & average & 27449 & 56255 \\ \hline
\multirow{2}{*}{NTT$^{-1}$}   & median  & 28368 & 59356 \\
                            & average & 28447 & 59413 \\ \hline
\multirow{2}{*}{sample noise}      & median  & 20876 & 33100 \\
                            & average & 20909 & 33112 \\ \hline
\multirow{2}{*}{con/HelpRec}      & median  & 5760 & 15180 \\
                            & average & 5786 & 15165 \\ \hline
\multirow{2}{*}{rec}        & median  & 10920 & 10516 \\
                            & average & 10990 & 10515 \\ \hline
\multirow{2}{*}{Alice0}    & median  & 133646 & 249476 \\
                            & average & 135550 & 250264 \\
bandwidth (B)               & -       & 928 & 1824 \\ \hline
\multirow{2}{*}{Bob0}    & median  & 196478 & 386248 \\
                            & average & 197840 & 387104 \\
bandwidth (B)               & -       & 1280 & 2048 \\ \hline
\multirow{2}{*}{Alice1}    & median  & 48248 & 84880 \\
                            & average & 48332 & 84940 \\ \hline
\end{tabular}
\vspace{0.3cm}
\caption{Benchmark of ZarZar and NewHope. Numbers in the table are median and average of cpu cycles.}
\label{benchmark-zarzar}
\end{table}
We run benchmark of ZarZar and NewHope on Ubuntu Linux 16.04, GCC 5.4.0, Intel Core i7-4712MQ
2.30GHz, with hyperthreading and TurboBoost disabled. The benchmark
result (Table~\ref{benchmark-zarzar}) shows that ZarZar essentially  outperforms NewHope.
%
%

\subsection{Error Rate Analysis}

In this section, we propose a delicate analysis of the error probability.
From the depiction of the protocol, we know that $\boldsymbol{\sigma}_1 - \boldsymbol{\sigma}_2 =
\mathbf{y}_2 \mathbf{x}_1 - (\mathbf{y}_1 \mathbf{x}_2 + \mathbf{e}_{\sigma}) =
\mathbf{e}_2 \mathbf{x}_1 - \mathbf{e}_1 \mathbf{x}_2 - \mathbf{e}_{\sigma}$.

\begin{claim}\label{claim-bnd}
If $e_{\epsilon}$ is a real random variable subjected to Gaussian of variance $\sigma^2$, then
$\Pr[|e_{\epsilon}| > 10\sigma] < 2^{-70}$.
\end{claim}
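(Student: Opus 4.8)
The plan is to reduce the claim to a classical one-dimensional Gaussian tail estimate. First I would normalize: setting $Z = e_{\epsilon}/\sigma$, the random variable $Z$ is standard normal $\mathcal{N}(0,1)$ (here I use that the statement concerns the \emph{continuous} Gaussian, as indicated by the phrase ``real random variable''), and the event $\{|e_{\epsilon}| > 10\sigma\}$ coincides exactly with $\{|Z| > 10\}$. So it suffices to bound $\Pr[|Z| > 10]$.

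Next I would invoke the standard Mills-ratio / Chernoff-type bound on the Gaussian tail: for every $t > 0$,
\[
\Pr[Z > t] \;\le\; \frac{1}{t\sqrt{2\pi}}\, e^{-t^2/2},
\]
which follows from $\int_t^{\infty} e^{-x^2/2}\,dx \le \int_t^{\infty} \tfrac{x}{t}\, e^{-x^2/2}\,dx = \tfrac{1}{t}\, e^{-t^2/2}$. By symmetry of the Gaussian density this gives $\Pr[|Z| > t] \le \tfrac{2}{t\sqrt{2\pi}}\, e^{-t^2/2}$.

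Finally I would substitute $t = 10$ and verify the numerics: since $\tfrac{2}{10\sqrt{2\pi}} < 0.08 < 2^{-3.6}$ and $e^{-50} = 2^{-50/\ln 2} < 2^{-72}$, we obtain $\Pr[|e_{\epsilon}| > 10\sigma] < 2^{-3.6}\cdot 2^{-72} < 2^{-70}$ (the true value is around $2^{-75.8}$, so there is comfortable slack). There is essentially no obstacle here; the only point requiring a word of care is the distinction between the continuous Gaussian and a discrete Gaussian $D_{\mathbb{Z},\alpha}$ — for the latter one would instead dominate the discrete tail by the continuous one via a standard sub-Gaussianity argument — but the wording of the claim makes clear the continuous case is intended, so the short computation above suffices.
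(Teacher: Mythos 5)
Your proof is correct: the normalization to a standard normal, the Mills-ratio bound $\Pr[Z>t]\le \frac{1}{t\sqrt{2\pi}}e^{-t^2/2}$, and the numerical check at $t=10$ (giving roughly $2^{-75.8}$, well under $2^{-70}$) are all sound. The paper states this claim without any proof at all, so there is nothing to compare against; your short argument is exactly the standard one that the authors evidently had in mind, and your remark distinguishing the continuous Gaussian from the discrete case is a sensible precaution given the paper's loose wording.
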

We can bound the last term $\mathbf{e}_{\sigma}$ by the Claim~\ref{claim-bnd}.
For other terms, we divide the $512$ coefficients of the polynomial $\boldsymbol{\sigma}_1$ and $\boldsymbol{\sigma}_2$ into $64$ groups by the same way in Section ~\ref{zarzar}.
From Theorem~\ref{th-correct}, we only need to calculate the distribution of $\|(\mathbf{e}_2 \mathbf{x}_1 - \mathbf{e}_1 \mathbf{x}_2)_k\|_{q, 2}$
for every $k$.

We shall make use of the symmetry of multidimensional continual Gaussian to derive a simple form of
the distribution of $\|(\mathbf{e}_2 \mathbf{x}_1 - \mathbf{e}_1 \mathbf{x}_2)_k\|_{q, 2}^2$, and then  calculate the distribution numerically by computer programs.
One may assume $n$ is large enough to use the \emph{central limit theorem}.
The comparison of our numerical method result and the error probability obatined by
\emph{central limit theorem} is shown in \figurename~\ref{fig:error-clt}.

\begin{figure}[H]
\centering
\includegraphics[scale=0.7]{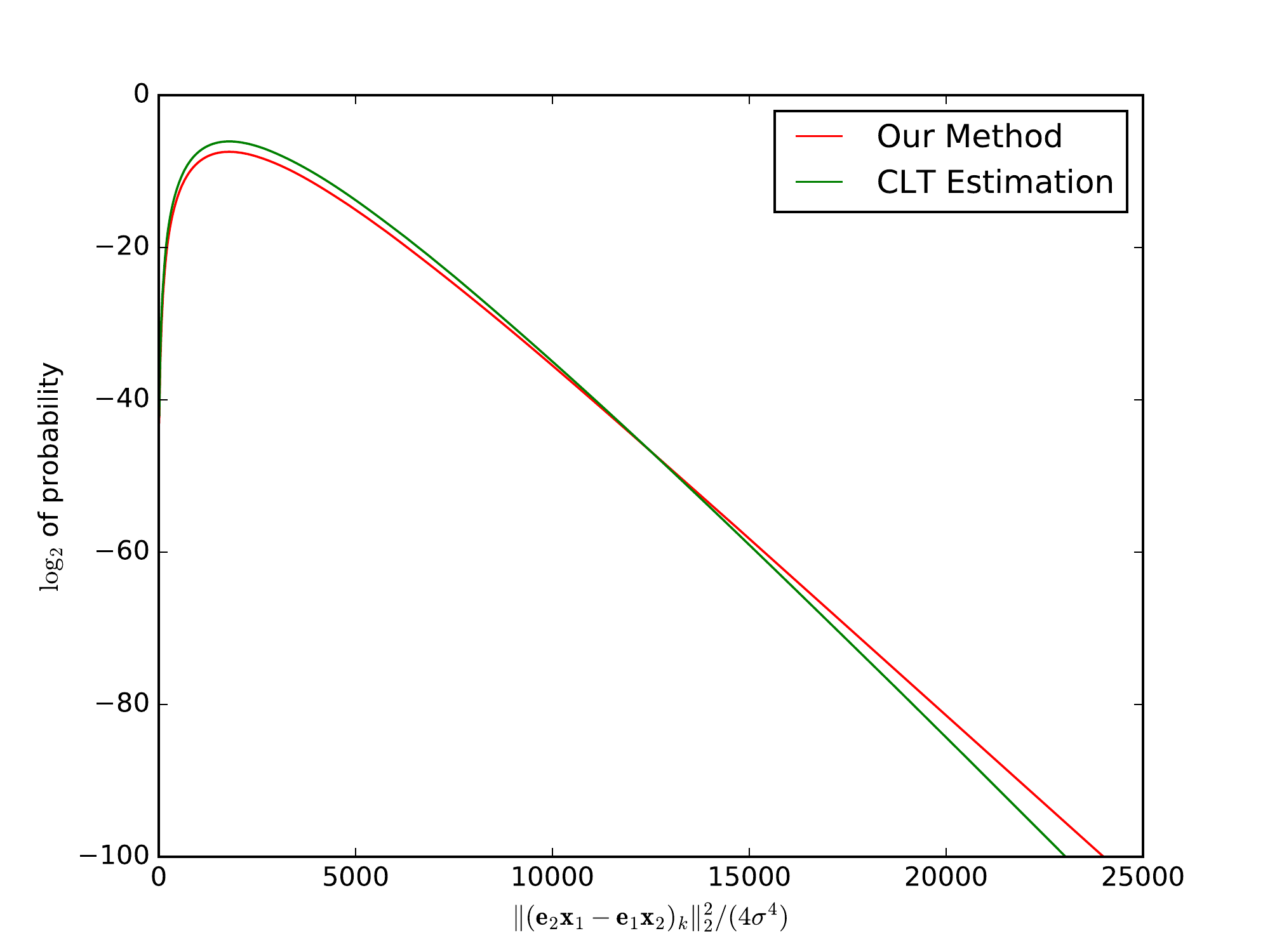}
\caption{Comparison of the probability estimated by our method and
the error probability obtained by \emph{central limit theorem}.} \label{fig:error-clt}
\end{figure}

\begin{theorem}\label{th-partition}
For two polynomials $f(x), g(x) \in \mathbb{Z}_q[x] / (x^{512} + 1)$,
we can represent $f(x)$ as $f(x) = f_0(x^{64}) + f_1(x^{64}) x + \dots + f_{63}(x^{64}) x^{63}$,
where $f_i(x) (i = 0, 1, \dots, 63)$ are in $R$. $g(x)$ can be represented in the same way as
$g(x) = g_0(x^{64}) + g_1(x^{64}) x + \dots + g_{63}(x^{64}) x^{63}$. Let $h(x) = f(x) g(x)$,
and $h(x)$ is represented as $h(x) = h_0(x^{64}) + h_1(x^{64}) x + \dots + h_{63}(x^{64}) x^{63}$.
Then we have
\begin{equation}
    h_k(x) = \sum_{i + j = k} f_i(x) g_j(x) + \sum_{i + j = k + 64} f_i(x) g_j(x) x
\end{equation}
\end{theorem}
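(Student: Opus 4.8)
The plan is to realize the ring $S=\mathbb{Z}_q[x]/(x^{512}+1)$ as a free module of rank $64$ over a copy of $R_q=\mathbb{Z}_q[x]/(x^8+1)$ sitting inside it, generated by $y\triangleq x^{64}$. First I would note that $y^8=x^{512}=-1$ in $S$, so the assignment $t\mapsto y$ extends to a ring homomorphism $R_q=\mathbb{Z}_q[t]/(t^8+1)\to S$; identifying $R_q$ with its image, I write $f_i(y)$ for the image of $f_i\in R_q$. Since the monomials $x^{64a+i}$ with $0\le a\le 7$ and $0\le i\le 63$ are exactly $x^{0},x^{1},\dots,x^{511}$, a $\mathbb{Z}_q$-basis of $S$, the set $\{1,x,\dots,x^{63}\}$ is an $R_q$-module basis of $S$. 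This gives existence and uniqueness of the decomposition $f(x)=\sum_{i=0}^{63}f_i(x^{64})x^i$ with $f_i\in R_q$ (and likewise for $g$ and $h$), which is precisely the representation assumed in the statement.

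Next I would simply expand the product in $S$ and regroup by the module basis. Using the decompositions,
\[
h(x)=f(x)g(x)=\sum_{i=0}^{63}\sum_{j=0}^{63}f_i(y)\,g_j(y)\,x^{i+j}.
\]
Because $0\le i,j\le 63$, the exponent $i+j$ ranges over $[0,126]$; for a fixed $k\in\{0,\dots,63\}$ the contributing pairs split into those with $i+j=k$ (where $x^{i+j}=x^k$) and those with $i+j=k+64$ (where $x^{i+j}=x^{64}x^{k}=y\,x^{k}$). Hence
\[
h(x)=\sum_{k=0}^{63}\Bigl(\sum_{i+j=k}f_i(y)g_j(y)+y\sum_{i+j=k+64}f_i(y)g_j(y)\Bigr)x^{k}.
\]
Since $f_i(y)g_j(y)$ is the image of $f_ig_j\in R_q$ and multiplication by $y$ corresponds to multiplication by $x$ in $R_q$, comparing the coefficients of $x^k$ against the $R_q$-basis yields $h_k=\sum_{i+j=k}f_ig_j+\sum_{i+j=k+64}f_ig_j\,x$ in $R_q$, which is the claim.

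The step needing genuine care — more a bookkeeping subtlety than a real obstacle — is the interaction of the two quotient relations $x^{512}+1$ and $x^8+1$. The products $f_ig_j$ have degree up to $14$ before reduction, and the extra factor $x$ pushes this to $15$, so reduction modulo $x^8+1$ is required for the right-hand side to lie in $R_q$. I would make this transparent by working at the level of a single monomial: $x^{64a+i}\cdot x^{64b+j}=x^{64(a+b)+(i+j)}$, and then reduce the two exponent parts separately, namely the ``$64$-part'' $a+b\in[0,14]$ using $y^8=x^{64\cdot 8}=-1$ (this is exactly the defining relation of $R_q$ transported through $t\mapsto y$, so it produces no spurious inner-degree-$\ge 8$ terms), and the ``remainder part'' $i+j\in[0,126]$ giving the dichotomy $k$ versus $k+64$ described above. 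Checking that the homomorphism $R_q\to S$ is well-defined is what makes the first reduction automatic; once that is in place, the regrouping and the matching of coefficients are immediate.
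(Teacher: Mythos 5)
Your proof is correct and follows essentially the same route as the paper's: expand the double sum $\sum_{i,j} f_i(x^{64})g_j(x^{64})x^{i+j}$ and regroup the terms according to whether $i+j=k$ or $i+j=k+64$, using $x^{i+j}=x^{i+j-64}\cdot x^{64}$ in the latter case. The extra care you take in justifying that $\{1,x,\dots,x^{63}\}$ is an $R_q$-module basis (so the decomposition exists and is unique, and the regrouped coefficients really are the $h_k$) is a welcome tightening of a step the paper leaves implicit, but it does not change the substance of the argument.
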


\begin{proof}
    We have $f(x) g(x) = \sum_{i, j} f_i(x^{64}) g_j(x^{64}) x^{i + j}$.
    For each $(i, j)$ such that $i + j < 64$,
    the term $f_i(x^{64}) g_j(x^{64})$ is added into $h_{i + j}(x^{64})$.
    For each $(i, j)$ such that $i + j \ge 64$, we have $i + j - 64 < 64$,
    and $f_i(x^{64}) g_j(x^{64}) x^{i + j} = f_i(x^{64}) g_j(x^{64}) x^{i + j - 64} x^{64}$.
    Hence, the term $f_i(x^{64}) g_j(x^{64})x^{64}$ is added into $h_{i + j - 64}(x^{64})$.
    Now we derive $h_k(x^{64}) = \sum_{i + j = k} f_i(x^{64}) g_j(x^{64}) + \sum_{i + j = k + 64} f_i(x^{64}) g_j(x^{64}) x^{64}$.
\end{proof}

By applying Theorem~\ref{th-partition} to the polynomial $\mathbf{e}_2 \mathbf{x}_1 - \mathbf{x}_2\mathbf{e}_1$,
we know that for every fixed $k$, there are $a_i(x), b_i(x) \in R, i = 1, 2, \dots, 128$ whose coefficients are subjected to rounded Gaussian,
such that $(\mathbf{e}_2 \mathbf{x}_1 - \mathbf{x}_2\mathbf{e}_1)_k = \sum_{i = 1}^{128} a_i(x) b_i(x)$.

\begin{theorem}\label{th-simple}
If $a_i(x), b_i(x), i = 1, 2, \dots, 128$ are polynomials in $K_{\mathbb{R}}$
whose coefficients are drawn from continues Gaussian of variance $\sigma^2$ independently,
then the following two distributions are identical.
\begin{itemize}
\item $\left\|\sum_{i = 1}^{128} a_i(x) b_i(x)\right\|_2^2$,
    where the 2-norm here refers to treat the polynomials in $K_{\mathbb{R}}$ as a vector in $\mathbb{R}^8$,
    and take the 2-norm in $\mathbb{R}^8$.

\item $4\sigma^4\sum_{i = 1}^{4} x_i y_i$,
    where $x_i$ are drawn from chi-square distribution $\chi^2(256)$ independently,
    and $y_i$ are drawn from $\chi^2(2)$ independently.
\end{itemize}
\end{theorem}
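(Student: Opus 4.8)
The plan is to exploit the rotational invariance of the isotropic Gaussian on $\mathbb{R}^8 \cong K_{\mathbb{R}}$ together with the multiplicative structure of $K_{\mathbb{R}} \simeq \mathbb{R}[x]/(x^8+1)$, which splits over $\mathbb{C}$ into four conjugate pairs of one-dimensional complex factors (the complex embeddings corresponding to the primitive $16$-th roots of unity). First I would identify $K_{\mathbb{R}}$, via the complex canonical embedding, with $\mathbb{C}^4$ in such a way that multiplication becomes coordinatewise multiplication, and the coefficient-wise Gaussian of variance $\sigma^2$ on the $8$ real coefficients pushes forward to an isotropic complex Gaussian on $\mathbb{C}^4$ (each coordinate a complex Gaussian with i.i.d. real and imaginary parts of variance $\sigma^2$, up to the fixed normalization constant coming from the unitary change of basis — which I would track carefully, as it produces the factor $4\sigma^4$ at the end). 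Crucially, the $\mathbb{R}^8$-norm we care about (the ``coefficient'' $2$-norm) equals, up to this same unitary constant, the $\mathbb{C}^4$ Hermitian norm, so $\|\sum_i a_i b_i\|_2^2$ becomes a constant times $\sum_{j=1}^4 \bigl|\sum_{i=1}^{128} \alpha_{i,j}\beta_{i,j}\bigr|^2$, where $\alpha_{i,j},\beta_{i,j}$ are the $j$-th embedding coordinates of $a_i,b_i$.

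Next I would analyze each of the four terms $T_j = \bigl|\sum_{i=1}^{128}\alpha_{i,j}\beta_{i,j}\bigr|^2$ separately, conditioning first on the vector $\mathbf{a}_{\cdot,j} = (\alpha_{1,j},\dots,\alpha_{128,j}) \in \mathbb{C}^{128}$. Conditioned on this vector, $\sum_i \alpha_{i,j}\beta_{i,j}$ is a complex Gaussian (a linear combination of the i.i.d. complex Gaussians $\beta_{i,j}$) with variance proportional to $\|\mathbf{a}_{\cdot,j}\|^2$; hence $T_j$ is distributed as $\|\mathbf{a}_{\cdot,j}\|^2$ times an (independent) $\chi^2(2)$ variable (the squared modulus of a standard complex Gaussian, which has $2$ real degrees of freedom). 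Then $\|\mathbf{a}_{\cdot,j}\|^2$ itself — the squared Hermitian norm of $128$ i.i.d. complex Gaussians, i.e. $256$ i.i.d. real Gaussians — is a $\chi^2(256)$ variable up to scaling. Collecting the scaling constants from the unitary normalization, from the variance $\sigma^2$ on each factor, and from the two Gaussian layers yields exactly $4\sigma^4\sum_{j=1}^4 x_j y_j$ with $x_j \sim \chi^2(256)$ and $y_j \sim \chi^2(2)$; independence across $j$ follows because the four embedding coordinates are independent (the pushforward of the isotropic Gaussian is isotropic), and independence of $x_j$ from $y_j$ follows from the conditioning argument (rotational invariance of the complex Gaussian $\beta$-vector makes the direction independent of the length $\|\mathbf{a}_{\cdot,j}\|$).

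The main obstacle I anticipate is bookkeeping the normalization constants so that the final answer lands precisely on the stated $4\sigma^4$ rather than some other multiple: the canonical-embedding map is only a \emph{scaled} isometry (each complex embedding contributes a factor, and there is a global $\sqrt{2}$ or similar from pairing conjugate embeddings), and one must verify that ``variance $\sigma^2$ on coefficients'' translates to the right per-embedding variance. A secondary subtlety is making the ``$\chi^2(2)$ from $|{\cdot}|^2$ of a complex Gaussian'' step clean: one should state that if $z = u + iv$ with $u,v$ independent $\mathcal{N}(0,\tau^2)$, then $|z|^2/\tau^2 \sim \chi^2(2)$, and track $\tau^2$ through the conditioning. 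Once these constants are pinned down, the distributional identity is immediate from the two-step conditioning and the independence of the four canonical coordinates; I would present it as: (i) reduce to $\mathbb{C}^4$ via the embedding, (ii) condition on the $a_i$'s to peel off the $\chi^2(2)$ factors, (iii) recognize the remaining squared norms as $\chi^2(256)$, (iv) reconcile constants.
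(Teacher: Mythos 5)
Your proposal is correct and follows essentially the same route as the paper's proof: pass to the canonical (Fourier) embedding so that multiplication becomes coordinatewise on four independent complex coordinates, use that the coefficient $2$-norm is a scaled Hermitian norm there, condition on one family of polynomials to peel off a $\chi^2(2)$ factor per coordinate, and recognize the conditional variance as a scaled $\chi^2(256)$. The constant bookkeeping you flag is exactly how the paper lands on $4\sigma^4$ (a $\tfrac{1}{4}$ from the norm identity times $(2\sigma)^4$ from the per-embedding variance), so no further changes are needed.
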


To prove this theorem, we introduce the \emph{canonical embedding},
which is a commonly used tool in ideal lattice analysis.
The \emph{canonical embedding} $\sigma$ is a map from $K_{\mathbb{R}}$ to
$\mathbb{C}^8$, defined as $\sigma(f) = (\sigma_i(f))_{i \in \mathbb{Z}_{16}^*}$,
where $\sigma_i(f) = f(\omega_{16}^i)$, and $\omega_{16} \in \mathbb{C}$ is the root of equation $x^{16} - 1 = 0$.
Note that $\sigma(f)$ is essentially the Fourier transformation of the coefficients of the polynomial $f$.
Let $\odot$ denote the component-wise multiplication over $\mathbb{C}^8$.
Then $\sigma$ is a ring homomorphism. In addition, one half of $\sigma_i$ is the complex conjugation of the other half, i.e.
$\sigma_i(f) = \overline{\sigma_{16 - i}(f)}$. If we constrain $\sigma$ on the first half, i.e.
$(\sigma_i)_{i\in\mathbb{Z}_{16}^*, i\le8} \in \mathbb{C}^4$, and regard $\mathbb{C}^4$ as a real space $\mathbb{R}^8$,
then $\sigma$ is a scaled rotation.

\begin{proof}[Proof of Theorem~\ref{th-simple}]
From the scaling property and the homomorphism property of $\sigma$,
we have
\begin{align}
\left\|\sum_{i = 1}^{128} a_i(x) b_i(x)\right\|^2 &=
\frac{1}{8} \left\|\sigma\left(\sum_{i = 1}^{128} a_i(x) b_i(x)\right)\right\|^2 =
\frac{1}{8}\left\|\sum_{i = 1}^{128} \sigma(a_i) \odot \sigma(b_i)\right\|^2 \\
&=\frac{1}{4} \sum_{j\in\mathbb{Z}_{16}^{*}, i \le 8} \left|\sum_{i = 1}^{128} \sigma_{j}(a_i) \odot \sigma_{j}(b_i)\right|^2
\end{align}
The last equality comes from the complex conjugation property of $\sigma_j$.
As $\sigma$ is a rotation, it transforms a Gaussian to a Gaussian.
Hence, $\sigma_j(a_i), {j \in \mathbb{Z}_{16}^*}$ are subjected to Gaussian of deviation $2\sigma$,
and are indepedent.

Then, conditioned on $b_i, i = 1...128$,
$\sum_{i = 1}^{128} \sigma_j(a_i) \odot \sigma_j(b_i)$ is subjected to a 2-dimensional Gaussian of variance $(2\sigma)^2 \sum_{i = 1}^{128} |\sigma_j(b_i)|^2$.
This deviation is subjected to $(2\sigma)^4\chi^2(256)$.
Note that the Gaussian of variance $(2\sigma)^4\chi^2(256)$ can be generated by multiplying the standard deviation to the standard Gaussian.
Hence, $\left|\sum_{i = 1}^{128} \sigma_{j}(a_i) \odot \sigma_{j}(b_i)\right|^2$ is subjected to
$(2\sigma)^4 x_j y_j$, where $x_j$ are subjected to $\chi^2(256)$ independently,
and $y_j$ are the squared norm of the Gaussian, which are subjected to $\chi^2(2)$.
\end{proof}

By Theorem~\ref{th-simple}, we only need to numerically calculate the distribution of $\sum_{i = 1}^4 x_i y_i$,
where $x_i$'s are subjected to $\chi^2(256)$ and $y_i$'s are taken from $\chi^2(2)$ independently.

We design a numerical algorithm to calculate the distribution.
In detail, we use $\mathsf{discretization}$ to transform the $\chi^2$-distribution into discrete distribution.
Then we represent the discrete distributions as dictionaries, which are denote as $distr_1, distr_2$ and $distr$ in the following.

\begin{itemize}
    \item $\mathsf{discretization}(m, s)$ outputs a dictionary that represent the distribution of $s\lfloor \chi^2(m) / s\rceil$.
    \item $\mathsf{multiply}(distr_1, distr_2)$ outputs a dictionary that represent the distribution of $x \cdot y$,
        where $x$ is subjected to distribution $distr_1$ and $y$ is subjected to distribution $distr_2$.
    \item $\mathsf{add}(distr_1, distr_2)$ outputs a dictionary that represent the distribution of $x + y$,
        where $x$ is subjected to distribution $distr_1$ and $y$ is subjected to distribution $distr_2$.
    \item $\mathsf{merge}(distr, s)$ outputs a dictionary that represent the distribution of $s\lfloor x / s\rceil$,
        where $x$ is subjected to the discrete distribution $distr$. This subroutine is used to reduce the number of entries
        in the dictionary $distr$, and thus speeds up the calculation.
\end{itemize}
This four subroutines are straightforwardly implemented.
$\mathsf{discretization}$ need to uses the cumulative function of the distribution of $\chi^2$.
This function can be numerically calculated preciously and is provided by many open libraries.

\noindent
\begin{algorithm}
    \caption{Calculate the distribution of $\sum_{i = 1}^4 x_i y_i$} \label{algo-distr}
\begin{algorithmic}[1]
    \State{$distr_{\chi^2(256)} \gets \mathsf{discretization}(256, 0.1)$}
    \State{$distr_{\chi^2(2)} \gets \mathsf{discretization}(2, 0.02)$}
    \State{$distr \gets \mathsf{multiply}(distr_{\chi^2(2)}, distr_{\chi^2(256)})$}
    \State{$distr \gets \mathsf{merge}(distr, 4)$}
    \State{$distr \gets \mathsf{add}(distr, distr)$}
    \State{$distr \gets \mathsf{add}(distr, distr)$}
\end{algorithmic}
\end{algorithm}

\begin{theorem}\label{th-error}
    Let $distr$ be the distribution outputed by Algorithm~\ref{algo-distr}.
    For every positive real number $T$,
    $\Pr[\sum_{i = 1}^4 x_i y_i > T] < \Pr[distr > T - 64]$.
\end{theorem}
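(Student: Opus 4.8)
The plan is to show that the discretization procedure in Algorithm~\ref{algo-distr} only shifts probability mass \emph{downward} by a controlled amount, so that the tail of the discretized distribution dominates the tail of the true random variable $\sum_{i=1}^4 x_i y_i$ after subtracting the accumulated rounding error. First I would record the elementary fact about each rounding step: if $Z$ is a nonnegative random variable and $Z' = s\lfloor Z/s \rceil$, then $|Z' - Z| \le s/2$, so in particular $Z' \ge Z - s/2$ pointwise; hence for any threshold $t$ we have $\Pr[Z' > t] \ge \Pr[Z > t + s/2]$, or equivalently $\Pr[Z > t] \le \Pr[Z' > t - s/2]$. The subroutines $\mathsf{discretization}$, $\mathsf{multiply}$, $\mathsf{add}$, and $\mathsf{merge}$ only ever replace a variable by its rounded-to-grid version or compute exact products/sums of the (already discrete) distributions; the exact arithmetic operations introduce no error, so the entire loss is a sum of the per-step rounding biases.

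Next I would track the rounding budget through the six lines. Line~1 rounds $\chi^2(256)$ to a grid of width $0.1$ (loss $\le 0.05$ on that factor); line~2 rounds $\chi^2(2)$ to width $0.02$ (loss $\le 0.01$); these errors propagate through the multiplication in line~3, and since the variables are nonnegative and bounded in the relevant range, the product of the rounded factors stays within an additive $O(1)$ of the true product over the support that matters. Line~4 applies $\mathsf{merge}$ with $s=4$, adding $\le 2$; lines~5 and~6 are exact additions (no new rounding). Multiplying out the four independent $x_iy_i$ terms, each carrying its own copy of the per-term error, and adding the merge error, the total downward shift is comfortably bounded by $64$; I would simply verify that $4 \times (\text{per-term bound}) + (\text{merge bound}) \le 64$ with the concrete grid widths, which is a routine arithmetic check. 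Combining the pointwise inequality $\bigl(\text{discretized sum}\bigr) \ge \sum x_iy_i - 64$ with monotonicity of tail probabilities yields $\Pr[\sum x_iy_i > T] \le \Pr[distr > T - 64]$ for every $T>0$, as claimed.

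The main obstacle I anticipate is making the error-propagation through $\mathsf{multiply}$ fully rigorous: rounding commutes cleanly with addition but \emph{not} with multiplication, so I need to argue that replacing $x$ by $x'$ with $|x-x'|\le \epsilon_1$ and $y$ by $y'$ with $|y-y'|\le\epsilon_2$ gives $|xy - x'y'| \le \epsilon_1 y' + \epsilon_2 x' + \epsilon_1\epsilon_2$, and then bound $x', y'$ on the support region that actually contributes to the tail at threshold $T$ (equivalently, truncate the negligible far tail and absorb it into the ``$<$'' rather than ``$\le$''). A clean way to handle this is to note that for the tails of interest $y'$ (a rounded $\chi^2(2)$ value) and $x'$ (a rounded $\chi^2(256)$ value) are themselves being summed four times, so the $\chi^2(2)$-factor contributes at most a small constant to the error per term while the $\chi^2(256)$-factor's contribution is dampened by its small grid width $0.1$; quantifying this carefully, together with the explicit choice of $\mathsf{merge}$ granularity, is where the constant $64$ is pinned down, and I expect it to be generously slack rather than tight.
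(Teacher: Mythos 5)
Your proposal is correct and follows essentially the same route as the paper's proof: pointwise tracking of the per-step rounding errors ($s/2$ for each $\mathsf{discretization}$/$\mathsf{merge}$, none for $\mathsf{multiply}$/$\mathsf{add}$), the product-error decomposition $|xy-x'y'|\le \epsilon_1 y + \epsilon_2 x + \epsilon_1\epsilon_2$, and a high-probability truncation of $x_i$ and $y_i$ (the paper uses $x_i<600$, $y_i<150$ with tail probabilities below $2^{-180}$) to pin down the total budget of $64$. The one subtlety you flag — that rounding does not commute with multiplication and the far tail must be absorbed into the strict inequality — is exactly how the paper resolves it.
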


\begin{proof}
    We analyze the errors of each of the four operations. $\mathsf{multiply}$ and $\mathsf{add}$ have no truncation errors.
    For $\mathsf{discretization}$ and $\mathsf{merge}$, we have $|x - s\lfloor x / s\rceil| = s|x/s - \lfloor x/s\rceil| < s/2$.
    Let $e^d_{x_i}, e^d_{y_i}$ be the errors produced by $\mathsf{discretization}$,
    and $e^m_{k}$ be the errors produced by $\mathsf{merge}$.
    Then we have
    \begin{equation*}
        \left|\sum_{i=0}^3 ((x_i + e^d_{x_i})(y_i + e^d_{y_i}) + e^m_{k}) - \sum_{i=0}^3 x_i y_i\right|
        \le \sum_{i = 0}^3 |e^d_{x_i}| y_i + |e^d_{y_i}| x_i + |e^d_{x_i} e^d_{y_i}| + |e^m_{k}|
    \end{equation*}
    In our Algorithm~\ref{algo-distr}, we have $|e^d_{y_i}| < 0.01$ and $|e^d_{x_i}| < 0.1$.
    In addition, $\Pr[x_i > 600] < 2^{-180}$ and $\Pr[y_i > 150] < 2^{-190}$.
    So we can simply assume $x_i < 600$ and $y_i < 150$.
    Hence, the right hand side is less than $4\cdot(600\cdot0.01 + 150\cdot0.05 + 0.01\cdot0.05 + 2) < 64$.
\end{proof}

\begin{claim}
    For $n = 512, \sigma^2 = 22, g = 2^6$, the \emph{over all} error probability of the scheme is less than $2^{-58}$.
\end{claim}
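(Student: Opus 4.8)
\medskip
\noindent\textbf{Proof plan.}
The plan is to reduce the overall failure event to a per-block tail bound, control that tail through the chain Theorem~\ref{th-partition} $\to$ Theorem~\ref{th-simple} $\to$ Theorem~\ref{th-error}, evaluate the resulting $\chi^2$ tail numerically with Algorithm~\ref{algo-distr}, and finish with a union bound. Recall from the depiction in \figurename~\ref{rlwe-ke} that $\boldsymbol{\sigma}_1-\boldsymbol{\sigma}_2=\mathbf{e}_2\mathbf{x}_1-\mathbf{e}_1\mathbf{x}_2-\mathbf{e}_\sigma$, and that Con/Rec decode each of the $64$ length-$8$ blocks independently via $E_8$ (Algorithms~\ref{zarzar-con}--\ref{zarzar-rec}). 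First I would isolate the $\mathbf{e}_\sigma$ contribution: let $G$ be the event that every one of the $n=512$ coefficients of $\mathbf{e}_\sigma$ has absolute value at most $10\sigma$; Claim~\ref{claim-bnd} and a union bound give $\Pr[\neg G]\le 512\cdot 2^{-70}$, already far below $2^{-58}$, so it suffices to bound the failure probability conditioned on $G$. Under $G$, in each block $k$ one has $\|(\mathbf{e}_\sigma)_k\|_{q,2}\le\sqrt{8}\,(10\sigma)$, so by the triangle inequality for $\|\cdot\|_{q,2}$ and Theorem~\ref{th-correct}, block $k$ decodes incorrectly only if
\begin{equation*}
\big\|(\mathbf{e}_2\mathbf{x}_1-\mathbf{e}_1\mathbf{x}_2)_k\big\|_{q,2}^{2}\;>\;B^2,\qquad B:=\frac{q-1}{2}-\sqrt{2}\Big(\frac{q}{g}+1\Big)-\sqrt{8}\,(10\sigma),
\end{equation*}
a completely explicit number once $q=12289$, $g=2^6$, $\sigma^2=22$ are substituted.

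Next I would compute the law of $\|(\mathbf{e}_2\mathbf{x}_1-\mathbf{e}_1\mathbf{x}_2)_k\|_{q,2}^2$. By Theorem~\ref{th-partition} this block equals $\sum_{i=1}^{128}a_i(x)b_i(x)$ with the $a_i,b_i\in R$ having independent (continuous) Gaussian coefficients of variance $\sigma^2$; Theorem~\ref{th-simple} then identifies $\|\sum_i a_ib_i\|^2$ in distribution with $4\sigma^4\sum_{i=1}^{4}x_iy_i$, where $x_i\sim\chi^2(256)$ and $y_i\sim\chi^2(2)$ are all independent. Hence block $k$ fails with probability at most $\Pr[\sum_{i=1}^4 x_iy_i>B^2/(4\sigma^4)]$, and Theorem~\ref{th-error} bounds this by $\Pr[distr>B^2/(4\sigma^4)-64]$ for the rigorously error-controlled discrete law $distr$ produced by Algorithm~\ref{algo-distr}. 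Running that algorithm and reading off this right tail yields an explicit per-block bound $p_{\mathrm{blk}}$; a union bound over the $64$ blocks together with $\Pr[\neg G]$ gives an overall failure probability of at most $64\,p_{\mathrm{blk}}+512\cdot 2^{-70}$, which I expect to evaluate below $2^{-58}$.

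Two points require care, and the second is the main obstacle. The theorems above are stated for continuous Gaussian noise, whereas the scheme actually samples from $B^{24,16}$; the transfer is done exactly as in Table~\ref{table:b24-16} via the R\'enyi-divergence argument (Lemma~\ref{lemma-bai15}), which multiplies the bound by $R_a(B^{24,16}\,\|\,\text{rounded Gaussian})^{5n}\approx 65.7$ and raises it to a power $\tfrac{a}{a-1}$ close to $1$, so the continuous-noise estimate must be driven a few bits below $2^{-58}$ to absorb this. The genuinely delicate step, however, is the numerical tail itself: $B^2/(4\sigma^4)$ sits at roughly eight to nine times the mean of $\sum x_iy_i$, and $B$ is a difference of numbers of size several thousand, so the slack from discretisation in Algorithm~\ref{algo-distr} (the additive $64$ of Theorem~\ref{th-error}) and from the tail truncations used there ($\Pr[x_i>600]<2^{-180}$, $\Pr[y_i>150]<2^{-190}$) must be shown to be negligible at the $2^{-58}$ scale, with every rounding taken in the conservative direction; verifying that the computed tail of $distr$ at the shifted threshold really clears the target is where the actual work lies.
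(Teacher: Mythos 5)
Your proposal follows essentially the same route as the paper's proof: condition on the coefficients of $\mathbf{e}_\sigma$ being at most $10\sigma$ via Claim~\ref{claim-bnd}, reduce per-block failure to a tail event for $\sum_{i=1}^{4}x_iy_i$ via Theorems~\ref{th-correct}, \ref{th-partition} and \ref{th-simple}, bound that tail numerically through Theorem~\ref{th-error} and Algorithm~\ref{algo-distr} (the paper computes $\Pr[distr>17456]<2^{-64.6}$ at the threshold $5824^2/(4\sigma^4)$), and finish with a union bound over the $64$ blocks. You are in fact slightly more conservative than the paper on two points---you bound the block norm of $\mathbf{e}_\sigma$ by $\sqrt{8}\cdot 10\sigma$ rather than the paper's $10\sigma$, and you explicitly charge the $512\cdot 2^{-70}$ conditioning cost and flag the R\'enyi-divergence transfer from $B^{24,16}$ to the continuous Gaussian, none of which the paper's proof makes explicit---but the per-block margin of $2^{-64.6}$ against the $2^{-58}$ target absorbs all of these corrections.
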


\begin{proof}
    From Claim~\ref{claim-bnd} and Theorem~\ref{th-correct}, we need to calculate the probability of
    $\|(\mathbf{e}_2 \mathbf{x}_1 - \mathbf{e}_1 \mathbf{x}_2)_k\|_{q, 2} > (q - 1)/2 - \sqrt{2}(q/g - 1) - 10\sigma > 5824$.
    From Theorem~\ref{th-simple}, this equals to the probability of $\sum_{i = 0}^3 x_i y_i > 5824^2/(4\cdot22^2) = 17520$.
    From Theorem~\ref{th-error}, this probability is less than $\Pr[distr > 17520 - 64 = 17456]$.
    The computer program numerically calculates $\Pr[distr > 17456] < 2^{-64.6}$.
    We use the \emph{union bound} to derive the \emph{over all} error probability is less than $64 \cdot 2^{-64.6} < 2^{-58}$.
\end{proof}

\newcommand{\etalchar}[1]{$^{#1}$}

\appendix

\section{Consensus Mechanism of Frodo}\label{app:bcd}
Let the modulo $q$  be power of $2$, which can be generalized to arbitrary modulo using the techniques in \cite{peikert2014lattice}.
Let integer $B$ be a power of $2$.
$B < (\log{q}) - 1, \bar{B} = (\log{q}) - B$
(note that $m = 2^{B}$ in our notations).
The underlying KC mechanism implicitly in Frodo is presented in Figure \ref{alg:con-BCD-power2}. 
\begin{algorithm}[H]
    \caption{Key consensus scheme in Frodo}\label{alg:con-BCD-power2}
\begin{algorithmic}[1]
\Procedure{Con}{$\sigma_1, \mathsf{params}$}\Comment{$\sigma_1\in [0, q)$}
    \State{$v = \left\lfloor 2^{-\bar{B}+1}\sigma_1 \right\rfloor \bmod 2$}
    \State{$k_1 = \left\lfloor 2^{-\bar{B}}\sigma_1 \right\rceil \bmod 2^{B}$}
    \Return{$(k_1, v)$}
\EndProcedure

\Procedure{Rec}{$\sigma_2, v, \mathsf{params}$}\Comment{$\sigma_2\in [0, q)$}
    \State{find $x \in \mathbb{Z}_q$ closest to $\sigma_2$ s.t.
          $\left\lfloor 2^{-\bar{B} + 1} x\right\rfloor \bmod 2 = v$}
    \State{$k_2 = \left\lfloor 2^{-\bar{B}}x \right\rceil \bmod 2^{B}$}
    \Return{$k_2$}
\EndProcedure
\end{algorithmic}
\end{algorithm}

\begin{claim}[\cite{bcd16}, Claim 3.2]
    If $|\sigma_1 - \sigma_2|_q < 2^{\bar{B} - 2}$, then $\textsf{Rec}(\sigma_2, v) =
    k_1$. i.e. the scheme in Algorithm  \ref{alg:con-BCD-power2} is correct.
\end{claim}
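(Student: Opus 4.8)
The plan is to translate the statement into the geometry of a discrete circle. Write $L = 2^{\bar B}$ and $M = 2^B = m$, so that $q = ML$; the hypotheses $m \ge 2$ and $B < (\log q) - 1$ give $M \ge 2$ and $\bar B \ge 2$, so $L/4 = 2^{\bar B - 2}$ is a positive integer. I will view $\mathbb{Z}_q$ as a circle of circumference $q$ equipped with the metric $|\cdot|_q$.

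First I would describe the fibres of $\mathsf{Con}$. A short computation with the floor and round functions shows that, setting $t = \lfloor 2\sigma_1/L\rfloor$, one has $v = t \bmod 2$ and $k_1 = \lceil t/2\rceil \bmod M$. Hence $\mathbb{Z}_q$ splits into $2M$ consecutive half-open arcs $A_0,\dots,A_{2M-1}$ of length $L/2$, where $A_t = \{\sigma : \lfloor 2\sigma/L\rfloor \equiv t \pmod{2M}\}$, and for every $j$ the pair $A_{2j-1}, A_{2j}$ (indices mod $2M$) forms the rounding cell $[(j-\tfrac12)L,(j+\tfrac12)L)$ on which $k_1 \equiv j \pmod M$; on this cell the signal $v$ equals $1$ on the first half $A_{2j-1}$ and $0$ on the second half $A_{2j}$. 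The point worth emphasizing is that $v$ splits each rounding cell into its two halves in a way that is independent of the parity of $j$, which is exactly what allows $v$ to be a single bit while $k_1$ ranges over $\mathbb{Z}_M$.

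Next I would unwind the definition of $\mathsf{Rec}$: it outputs $k_1(x)$ for the point $x \in \mathbb{Z}_q$ closest to $\sigma_2$ among those with signal value $v$, i.e.\ among the set $S_v$, which is the union of every second arc: $\bigcup_j A_{2j-1}$ if $v=1$, and $\bigcup_j A_{2j}$ if $v=0$. The arcs constituting $S_v$ have length $L/2$ and consecutive ones are separated by gaps of length $L/2$. Now suppose $\sigma_1$ lies in one such arc $A^\ast$, so that $v(\sigma_1)=v$ and $k_1 \equiv j \pmod M$ for the corresponding $j$. Every arc of $S_v$ other than $A^\ast$ is at distance $\ge L/2$ from $A^\ast$ (here $M \ge 2$ is used so the cyclic neighbours of $A^\ast$ in $S_v$ are genuinely distinct arcs). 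Since $\sigma_1 \in A^\ast$, the hypothesis $|\sigma_1 - \sigma_2|_q < L/4 = 2^{\bar B-2}$ gives $\mathrm{dist}_q(\sigma_2, A^\ast) < L/4$, while the triangle inequality gives $\mathrm{dist}_q(\sigma_2, A') \ge L/2 - |\sigma_1-\sigma_2|_q > L/4$ for every other arc $A' \subseteq S_v$. Hence the minimiser $x$ lies in $A^\ast$, so $\mathsf{Rec}(\sigma_2,v) = k_1(x) = j \bmod M = k_1$.

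Conceptually the argument is immediate once the circle picture is set up; I expect the only real work — and the main source of potential slips — to be the elementary but fiddly bookkeeping in the first step: verifying the floor/round identities, getting the half-open arc boundaries right at the endpoints, and checking that the closest \emph{integer} of $\mathbb{Z}_q$ selected by $\mathsf{Rec}$ (rather than a real number) still falls inside $A^\ast$, which is where the strict inequality in the hypothesis is needed to rule out boundary ties.
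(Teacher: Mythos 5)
Your argument is correct. The key reductions all check out: with $L=2^{\bar B}$ and $M=2^B$, one indeed has $v=t\bmod 2$ and $k_1=\lceil t/2\rceil \bmod M$ for $t=\lfloor 2\sigma_1/L\rfloor$ (the even/odd case split of $\lfloor \sigma_1/L+\tfrac12\rfloor$ confirms this), the arcs $A_t=[tL/2,(t+1)L/2)$ tile $\mathbb{Z}_q$, $S_v$ is every second arc with gaps of length $L/2$, and since $|\sigma_1-\sigma_2|_q<L/4$ while every arc of $S_v$ other than $A^\ast$ is at distance at least $L/2$ from $\sigma_1$, the triangle inequality forces the minimiser in $\mathsf{Rec}$ into $A^\ast$ with no possibility of a tie (the integrality of $L/4=2^{\bar B-2}$, which you note requires $\bar B\ge 2$, even gives you a margin of $+1$ on each side). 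Be aware, though, that the paper itself does not prove this claim: it is imported verbatim from \cite{bcd16} (Claim 3.2 there) and merely restated in the appendix to record that Frodo's correctness condition amounts to $4md<q$, in contrast with the $2md<q$ achieved by OKCN. So you are supplying a proof the paper delegates to its citation. Compared with the computation in \cite{bcd16}, which manipulates the floor/round expressions directly, your circle-partition formulation is equivalent but makes the provenance of the bound transparent: $2^{\bar B-2}$ is exactly half the gap $L/2$ between consecutive arcs of $S_v$, which is the same ``half the slack'' phenomenon that underlies Theorem 3.1 and its matching constructions in the body of the paper.
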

This claim is equivalence to require $4md < q$.

\section{Consensus Mechanism of NewHope}\label{sec-nh}
Note that, for the consensus mechanism of NewHope, the $rec$ procedure is  run both in $\textsf{Con}$ and in $\textsf{Rec}$, and a random bit $b$ is used in $\textsf{Con}$  corresponding to the dbl trick in \cite{peikert2014lattice}.

\begin{algorithm}[H]
\caption{NewHope  Consensus Mechanism}\label{kc-nh}
\begin{algorithmic}[1]

\Procedure{Decode}{$\mathbf{x} \in \mathbb{R}^4/\mathbb{Z}^4$}
\Comment{Return a bit $k$ such that $k\mathbf{g}$ is closest to $\mathbf{x} + \mathbb{Z}^4$}
    \State{$\mathbf{v} = \mathbf{x} - \lfloor \mathbf{x} \rceil$}
    \Return{$k=0$ if $\|\mathbf{v}\|_1 \le 1$, and $1$ otherwise}
\EndProcedure\\

\State{$\textsf{HelpRec}(\mathbf{x}, b) =
\mathsf{CVP}_{\tilde{D}_4}\left(\frac{2^r}{q}(\mathbf{x} + b\mathbf{g})\right) \bmod 2^r$}
\Comment{$b$ corresponds to the dbl trick \cite{peikert2014lattice}}

\State{$rec\left(\mathbf{x} \in \mathbb{Z}_q^4, \mathbf{v}\in\mathbb{Z}_{2^r}^4\right) =
\mathsf{Decode}\left(\frac{1}{q}\mathbf{x} - \frac{1}{2^r}\mathbf{B}\mathbf{v}\right)$}
\\

\Procedure{\textsf{Con}}{$\boldsymbol{\sigma}_1 \in \mathbb{Z}_q^4, \textsf{params}$}
    \State{$b \gets \{0, 1\} $}
    \State{$\mathbf{v} \gets \textsf{HelpRec}(\boldsymbol{\sigma}_1, b) $}
    \State{$k_1 \gets rec(\boldsymbol{\sigma}_1, \mathbf{v}) $} 
    \Return{$(k_1, \mathbf{v})$}
\EndProcedure\\


\Procedure{\textsf{Rec}}{$\boldsymbol{\sigma}_2 \in \mathbb{Z}_q^4, \mathbf{v}\in\mathbb{Z}_{2^r}^4, \textsf{params}$}
    \State{$k_2 \gets rec(\boldsymbol{\sigma}_2, \mathbf{v}) $}
\EndProcedure\\


\end{algorithmic}
\end{algorithm}

\section{Security Analysis of LWE-Based Key Exchange}\label{App-LWE}

\begin{definition} A KC or AKC based key exchange protocol from LWE  is \emph{secure},
    if for any sufficiently large security parameter $\lambda$ and any PPT adversary $\mathcal{A}$,
    $\left|\Pr[b^\prime=b]-\frac{1}{2}\right|$ is negligible,  as defined w.r.t. game  $G_0$ specified in Algorithm \ref{G0}.

\begin{algorithm}[H]
\caption{Game $G_0$}\label{G0}
\begin{algorithmic}[1]
\State{$\mathbf{A} \gets \mathbb{Z}^{n \times n}_q$}

\State{$\mathbf{X}_1, \mathbf{E}_1 \gets \chi^{n \times l_A}$}
\State{$\mathbf{Y}_1 = \mathbf{A} \mathbf{X}_1 + \mathbf{E}_1$}

\State{$\mathbf{X}_2, \mathbf{E}_2 \gets \chi^{n \times l_B}$}
\State{$\mathbf{Y}_2 = \mathbf{A}^T \mathbf{X}_2 + \mathbf{E}_2$}
\State{$\mathbf{E}_\sigma \gets \chi^{l_A \times l_B}$}
\State{$\boldsymbol{\Sigma}_2 = \mathbf{Y}_1^T \mathbf{X}_2 + \mathbf{E}_\sigma$}

\State{$\left(\mathbf{K}_2^0, \mathbf{V}\right) \gets \textsf{Con}(\boldsymbol{\Sigma}_2, \textsf{params})$}
\State{$\mathbf{K}_2^1 \gets \mathbb{Z}_m^{l_A \times l_B}$}
\State{$b \gets \{0, 1\}$}
\State{$b' \gets \mathcal{A}(\mathbf{A}, \mathbf{Y}_1, \lfloor \mathbf{Y}_2/2^t \rfloor, \mathbf{K}_2^b, \mathbf{V})$}
\end{algorithmic}
\end{algorithm}
\end{definition}

Before starting to prove the security, we first recall some basic properties of the LWE assumption.  The following lemma is derived by a direct hybrid argument \cite{PVW08,bcd16}.


\begin{lemma}[LWE in the matrix form]
    For positive integer parameters $(\lambda,n, q \ge 2, l, t)$, where $n,q,l,t$ all are polynomial in $\lambda$,
    and a distribution $\chi$ over $\mathbb{Z}_q$,
    denote by $L_{\chi}^{(l, t)}$ the distribution over
    $\mathbb{Z}_q^{t \times n} \times \mathbb{Z}_q^{t \times l}$
    generated by taking
    $\mathbf{A} \gets \mathbb{Z}_q^{t \times n},
    \mathbf{S} \gets \chi^{n \times l},
    \mathbf{E} \gets \chi^{t \times l}$ and outputting $(\mathbf{A}, \mathbf{A}\mathbf{S} + \mathbf{E})$.
    Then, under the standard LWE assumption on indistinguishability between $A_{q, \mathbf{s}, \chi}$ (with $\mathbf{s} \gets \chi^n$) and
    $\mathcal{U}(\mathbb{Z}_q^{n} \times \mathbb{Z}_q)$,  no PPT distinguisher $\mathcal{D}$ can distinguish, with non-negligible probability,
    between the  distribution $L_{\chi}^{(l, t)}$ and
    $\mathcal{U}(\mathbb{Z}_q^{t \times n} \times \mathbb{Z}_q^{t \times l})$ for sufficiently large $\lambda$.
\end{lemma}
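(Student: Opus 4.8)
The plan is to establish this by a standard hybrid argument over the $l$ columns of the secret matrix, reducing column-by-column to the $t$-sample indistinguishability of $A_{q,\mathbf{s},\chi}$ (with $\mathbf{s}\gets\chi^n$) from $\mathcal{U}(\mathbb{Z}_q^n\times\mathbb{Z}_q)$, which is exactly the assumption hypothesized in the statement. The key observation is that if we write $\mathbf{S}=(\mathbf{s}_1,\dots,\mathbf{s}_l)$ and $\mathbf{E}=(\mathbf{e}_1,\dots,\mathbf{e}_l)$ column-wise, then the $j$-th column of $\mathbf{A}\mathbf{S}+\mathbf{E}$ is $\mathbf{A}\mathbf{s}_j+\mathbf{e}_j$ with $\mathbf{s}_j\gets\chi^n$, $\mathbf{e}_j\gets\chi^t$ drawn independently across $j$; that is, each column is an independent batch of $t$ LWE samples, all sharing the \emph{same} matrix $\mathbf{A}$.

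First I would define, for $0\le i\le l$, the hybrid distribution $H_i$ over $\mathbb{Z}_q^{t\times n}\times\mathbb{Z}_q^{t\times l}$ obtained by sampling $\mathbf{A}\gets\mathbb{Z}_q^{t\times n}$, taking the first $i$ columns of the second component uniformly at random from $\mathbb{Z}_q^t$, and taking the $j$-th column for $j>i$ to be $\mathbf{A}\mathbf{s}_j+\mathbf{e}_j$ with freshly sampled $\mathbf{s}_j\gets\chi^n$ and $\mathbf{e}_j\gets\chi^t$. Then $H_0=L_\chi^{(l,t)}$ and $H_l=\mathcal{U}(\mathbb{Z}_q^{t\times n}\times\mathbb{Z}_q^{t\times l})$, so it suffices to bound the distinguishing advantage between consecutive hybrids.

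Next, for each $1\le i\le l$, I would argue that any PPT distinguisher $\mathcal{D}_i$ separating $H_{i-1}$ from $H_i$ yields a PPT distinguisher between the matrix form $(\mathbf{A},\mathbf{b})\in\mathbb{Z}_q^{t\times n}\times\mathbb{Z}_q^t$ of $t$ samples of $A_{q,\mathbf{s},\chi}$ and $\mathcal{U}(\mathbb{Z}_q^{t\times n}\times\mathbb{Z}_q^t)$: on input $(\mathbf{A},\mathbf{b})$, the reduction builds the $t\times l$ matrix whose columns $1,\dots,i-1$ are uniform in $\mathbb{Z}_q^t$, whose column $i$ is $\mathbf{b}$, and whose columns $i+1,\dots,l$ are $\mathbf{A}\mathbf{s}_j+\mathbf{e}_j$ for fresh $\mathbf{s}_j\gets\chi^n$, $\mathbf{e}_j\gets\chi^t$ — crucially using the same $\mathbf{A}$ from the challenge — and then runs $\mathcal{D}_i$ on the resulting pair. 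When $(\mathbf{A},\mathbf{b})$ is an LWE tuple this perfectly simulates $H_{i-1}$, and when it is uniform it perfectly simulates $H_i$, so the reduction's advantage equals that of $\mathcal{D}_i$. A triangle inequality over $i=1,\dots,l$ then shows that the advantage of any distinguisher between $H_0$ and $H_l$ is at most $l$ times the maximal advantage against the $t$-sample LWE problem; since $l$ is polynomial in $\lambda$ and each summand is negligible, the total is negligible, which is the claimed conclusion.

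\textbf{Main obstacle.} The only real point of care is that the matrix $\mathbf{A}$ is common to all $l$ output columns, so one cannot naively invoke single-sample LWE per column; the reduction must re-use the challenge matrix $\mathbf{A}$ across all simulated columns, which is precisely why the $t$-sample form of the assumption (rather than a single sample) is invoked in the statement. Beyond this, the argument is bookkeeping: checking that the simulated marginals of $\mathbf{s}_j$ and $\mathbf{e}_j$ are exactly $\chi^n$ and $\chi^t$, and verifying the two boundary cases of the reduction.
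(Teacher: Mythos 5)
Your proposal is correct and is exactly the standard column-by-column hybrid argument that the paper itself invokes (it gives no detailed proof, merely stating that the lemma "is derived by a direct hybrid argument" citing \cite{PVW08,bcd16}). Your reduction, including the reuse of the challenge matrix $\mathbf{A}$ across all simulated columns and the factor-$l$ loss from the triangle inequality, is precisely what that citation refers to.
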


\begin{theorem}\label{LWE-security}
    If $(\textsf{params}, \textsf{Con}, \textsf{Rec})$ is a \emph{correct} and  \emph{secure} KC or AKC scheme, the key exchange protocol described in \figurename~\ref{kex:lwe} is \emph{secure} under the (matrix form of) LWE assumption.
\end{theorem}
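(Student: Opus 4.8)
The plan is to follow the game-hopping strategy used in the proof of Theorem~\ref{LWR-security} (and in \cite{peikert2014lattice,bcd16}), only now over $\mathbb{Z}_q$ rather than over the rounded modulus $\mathbb{Z}_p$, and with the extra twist of the $t$-bit truncation of $\mathbf{Y}_2$. Concretely, I would define a sequence $G_0 \to G_1 \to G_2$, where $G_0$ is the game in Algorithm~\ref{G0}, and denote by $T_i$ the event $b = b'$ in $G_i$; the goal is $\Pr[T_0] \le 1/2 + negl$. As in the LWR case, I assume for presentation simplicity that the underlying KC/AKC is \emph{correct} (the extension to overwhelming correctness is routine), and use the matrix form of the LWE assumption stated just above: for any polynomial $(l,t)$, $L_\chi^{(l,t)}$ is computationally indistinguishable from $\mathcal{U}(\mathbb{Z}_q^{t\times n}\times\mathbb{Z}_q^{t\times l})$.

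First I would move from $G_0$ to $G_1$ by replacing the honestly generated $\mathbf{Y}_1 = \mathbf{A}\mathbf{X}_1 + \mathbf{E}_1$ with a uniformly random $\mathbf{Y}_1 \gets \mathbb{Z}_q^{n\times l_A}$. A distinguisher $\mathcal{D}$ that is fed $(\mathbf{A},\mathbf{B})$ — either from $L_\chi^{(l_A,n)}$ or uniform — sets $\mathbf{Y}_1 = \mathbf{B}$ and simulates the rest of the game exactly (sampling $\mathbf{X}_2,\mathbf{E}_2,\mathbf{E}_\sigma$, forming $\mathbf{Y}_2$, $\boldsymbol{\Sigma}_2$, running \textsf{Con}, truncating $\mathbf{Y}_2$, and invoking $\mathcal{A}$); this shows $|\Pr[T_0]-\Pr[T_1]| \le negl$. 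Next, from $G_1$ to $G_2$ I would replace $\mathbf{Y}_2$ and $\boldsymbol{\Sigma}_2$ simultaneously with uniformly random matrices. The key observation is that in $G_1$ the pair $\left(\mathbf{Y}_2,\boldsymbol{\Sigma}_2\right) = \left(\mathbf{A}^T\mathbf{X}_2 + \mathbf{E}_2,\ \mathbf{Y}_1^T\mathbf{X}_2 + \mathbf{E}_\sigma\right)$ has exactly the form of an LWE sample with coefficient matrix $\binom{\mathbf{A}^T}{\mathbf{Y}_1^T}\in\mathbb{Z}_q^{(n+l_A)\times n}$, secret $\mathbf{X}_2\gets\chi^{n\times l_B}$ and noise $\binom{\mathbf{E}_2}{\mathbf{E}_\sigma}\gets\chi^{(n+l_A)\times l_B}$; hence indistinguishability between $L_\chi^{(l_B,\,n+l_A)}$ and uniform gives $|\Pr[T_1]-\Pr[T_2]| \le negl$. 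Here the $t$-bit truncation causes no trouble: $\lfloor \mathbf{Y}_2/2^t\rfloor$ is a fixed deterministic function of $\mathbf{Y}_2$, so it is simulated correctly in both branches.

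Finally, in $G_2$ the matrix $\boldsymbol{\Sigma}_2$ is uniform over $\mathbb{Z}_q^{l_A\times l_B}$, and since \textsf{Con}, \textsf{Rec} act coordinate-wise, for each $(i,j)$ the pair $\left(\mathbf{K}_2^0[i,j],\mathbf{V}[i,j]\right)$ is produced by running \textsf{Con} on a uniform $\boldsymbol{\Sigma}_2[i,j]$. By the \emph{security} of the KC scheme, $\mathbf{K}_2^0[i,j]$ and $\mathbf{V}[i,j]$ are independent with $\mathbf{K}_2^0[i,j]$ uniform over $\mathbb{Z}_m$ (for AKC, the security definition directly gives that $\mathbf{V}[i,j]$ is independent of $\mathbf{K}_2^0[i,j]$, which is chosen uniformly at random in the KE instantiation). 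Therefore $\mathbf{K}_2^0$ is uniform and independent of $\left(\mathbf{A},\mathbf{Y}_1,\lfloor\mathbf{Y}_2/2^t\rfloor,\mathbf{V}\right)$, so it is distributed identically to $\mathbf{K}_2^1$ conditioned on the adversary's entire view; hence $b$ is information-theoretically hidden and $\Pr[T_2] = 1/2$. Combining the three bounds yields $\Pr[T_0] \le 1/2 + negl$.

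I expect the main (and rather minor) obstacle to be purely a matter of careful bookkeeping in the second game hop — verifying that stacking $\mathbf{A}^T$ and $\mathbf{Y}_1^T$ really does recast $(\mathbf{Y}_2,\boldsymbol{\Sigma}_2)$ as a single matrix-LWE instance with the correct noise shape, and checking that replacing $\boldsymbol{\Sigma}_2$ itself (rather than an object derived from it) is legitimate — together with tracking the distinction between the KC and AKC security notions in the final step, and confirming that the $t$-bit truncation, applied to an already-uniform $\mathbf{Y}_2$ in $G_2$, leaks nothing about $\boldsymbol{\Sigma}_2$.
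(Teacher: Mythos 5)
Your proposal is correct and follows essentially the same route as the paper's proof in Appendix~\ref{App-LWE}: the identical three-game sequence, the first hop replacing $\mathbf{Y}_1$ via $L_\chi^{(l_A,n)}$, the second hop stacking $\mathbf{A}^T$ and $\mathbf{Y}_1^T$ to recast $(\mathbf{Y}_2,\boldsymbol{\Sigma}_2)$ as an $L_\chi^{(l_B,\,n+l_A)}$ instance, and the final appeal to KC/AKC security on a uniform $\boldsymbol{\Sigma}_2$. Your observation that the $t$-bit truncation is a deterministic function of $\mathbf{Y}_2$ and hence harmless in every game is exactly how the paper treats it as well.
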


\begin{proof} 
The proof is similar to, but actually simpler than,  that in \cite{peikert2014lattice,bcd16}.
The general idea  is that we construct a sequence of games: $G_0$,
$G_1$ and $G_2$, where $G_0$ is the original game for defining security. In every move from game $G_i$
to $G_{i+1}$, $0\leq i\leq 1$,  we change a little.
All games $G_i$'s share the  same PPT adversary $\mathcal{A}$, whose goal is to distinguish between the  matrices chosen uniformly at random and the matrices generated in the actual  key exchange protocol.
Denote  by $T_i$, $0\leq i\leq 2$, the event that $b = b'$ in   Game $G_i$. Our goal is to prove that  $\Pr[T_0] < 1/2 + negl$, where $negl$ is a negligible function in $\lambda$.  For ease of readability, we re-produce game $G_0$ below. For presentation simplicity, in the subsequent analysis, we always assume the underlying KC or AKC is correct. The proof can be trivially extended to the case that correctness holds with overwhelming probability (i.e., failure occurs with negligible probability).

\scalebox{0.9} {\Large \centering \hspace{-0.35cm}
\begin{minipage}[H]{0.5\textwidth}
\null
\begin{algorithm}[H]
\caption{Game $G_0$}
\begin{algorithmic}[1]
\State{$\mathbf{A} \gets \mathbb{Z}^{n \times n}_q$}

\State{$\mathbf{X}_1, \mathbf{E}_1 \gets \chi^{n \times l_A}$}
\State{$\mathbf{Y}_1 = \mathbf{A} \mathbf{X}_1 + \mathbf{E}_1$}

\State{$\mathbf{X}_2, \mathbf{E}_2 \gets \chi^{n \times l_B}$}
\State{$\mathbf{Y}_2 = \mathbf{A}^T \mathbf{X}_2 + \mathbf{E}_2$}

\State{$\mathbf{E}_\sigma \gets \chi^{l_A \times l_B}$}
\State{$\boldsymbol{\Sigma}_2 = \mathbf{Y}_1^T \mathbf{X}_2 + \mathbf{E}_\sigma$}
\State{$\left(\mathbf{K}_2^0, \mathbf{V}\right) \gets \textsf{Con}(\boldsymbol{\Sigma}_2, \textsf{params})$}
\State{$\mathbf{K}_2^1 \gets \mathbb{Z}_m^{l_A \times l_B}$}
\State{$b \gets \{0, 1\}$}
\State{$b' \gets \mathcal{A}(\mathbf{A}, \mathbf{Y}_1, \lfloor \mathbf{Y}_2/2^t \rfloor, \mathbf{K}_2^b, \mathbf{V})$}
\end{algorithmic}
\end{algorithm}
\end{minipage}
\begin{minipage}[H]{0.5\textwidth}
\null
\begin{algorithm}[H]
\caption{Game $G_1$}
\begin{algorithmic}[1]
\State{$\mathbf{A} \gets \mathbb{Z}^{n \times n}_q$}

\State{$\mathbf{X}_1, \mathbf{E}_1 \gets \chi^{n \times l_A}$}
\State{{\color{red} $\mathbf{Y}_1 \gets \mathbb{Z}_q^{n \times l_A}$}}

\State{$\mathbf{X}_2, \mathbf{E}_2 \gets \chi^{n \times l_B}$}
\State{$\mathbf{Y}_2 = \mathbf{A}^T \mathbf{X}_2 + \mathbf{E}_2$}

\State{$\mathbf{E}_\sigma \gets \chi^{l_A \times l_B}$}
\State{$\boldsymbol{\Sigma}_2 = \mathbf{Y}_1^T \mathbf{X}_2 + \mathbf{E}_\sigma$}
\State{$\left(\mathbf{K}_2^0, \mathbf{V}\right) \gets \textsf{Con}(\boldsymbol{\Sigma}_2, \textsf{params})$}
\State{$\mathbf{K}_2^1 \gets \mathbb{Z}_m^{l_A \times l_B}$}
\State{$b \gets \{0, 1\}$}
\State{$b' \gets \mathcal{A}(\mathbf{A}, \mathbf{Y}_1, \lfloor \mathbf{Y}_2 / 2^t \rfloor, \mathbf{K}_2^b, \mathbf{V})$}
\end{algorithmic}
\end{algorithm}
\end{minipage}
}

\begin{lemma}\label{th-reduction-1}
       $|\Pr[T_0] - \Pr[T_1]| < negl$, under the indistinguishability between $L_{\chi}^{(l_A, n)}$ and
    {$\mathcal{U}(\mathbb{Z}_q^{n \times n} \times \mathbb{Z}_q^{n \times l_A})$}.
\end{lemma}
\begin{proof}
    Construct a distinguisher $\mathcal{D}$, in Algorithm \ref{D-1}, who tries to distinguish  $L_{\chi}^{(l_A, n)}$ from
    $\mathcal{U}(\mathbb{Z}_q^{n \times n} \times \mathbb{Z}_q^{n \times l_A})$.
\begin{algorithm}[H]
\caption{Distinguisher $\mathcal{D}$}\label{D-1}
\begin{algorithmic}[1]
    \Procedure{$\mathcal{D}$}{$\mathbf{A}, \mathbf{B}$}
\Comment{$\mathbf{A} \in \mathbb{Z}_q^{n \times n}, \mathbf{B} \in \mathbb{Z}_q^{n \times l_A}$}
    \State{$\mathbf{Y}_1 = \mathbf{B}$}

    \State{$\mathbf{X}_2, \mathbf{E}_2 \gets \chi^{n \times l_B}$}
    \State{$\mathbf{Y}_2 = \mathbf{A}^T \mathbf{X}_2 + \mathbf{E}_2$}
    \State{$\mathbf{E}_\sigma \gets \chi^{l_A \times l_B}$}
    \State{$\boldsymbol{\Sigma}_2 = \mathbf{Y}_1^T \mathbf{X}_2 + \mathbf{E}_\sigma$}
    \State{$\left(\mathbf{K}_2^0, \mathbf{V}\right) \gets \textsf{Con}(\boldsymbol{\Sigma}_2, \textsf{params})$}
    \State{$\mathbf{K}_2^1 \gets \mathbb{Z}_m^{l_A \times l_B}$}
    \State{$b \gets \{0, 1\}$}
    \State{$b' \gets \mathcal{A}(\mathbf{A}, \mathbf{Y}_1, \lfloor \mathbf{Y}_2/2^t \rfloor, \mathbf{K}_2^b, \mathbf{V})$}
    \If {$b' = b$}
        \Return $1$
    \Else
        \Return $0$
    \EndIf
\EndProcedure
\end{algorithmic}
\end{algorithm}

If $(\mathbf{A}, \mathbf{B})$ is subject to $L_\chi^{(l_A, n)}$, then
$\mathcal{D}$ perfectly simulates $G_0$. Hence,
$\Pr\left[\mathcal{D}\left(L_\chi^{(l_A, n)}\right) = 1\right] = \Pr[T_0]$.
On the other hand, if $(\mathbf{A}, \mathbf{B})$ is chosen uniformly at random  from
$\mathbb{Z}_q^{n \times n} \times \mathbb{Z}_q^{n \times l_A}$, which are denoted  as $(\mathbf{A}^\mathcal{U},
\mathbf{B}^\mathcal{U})$,
then $\mathcal{D}$ perfectly  simulates $G_1$. So, $\Pr[\mathcal{D}(\mathbf{A}^\mathcal{U}, \mathbf{B}^\mathcal{U}) = 1] = \Pr[T_1]$.
Hence,
$\left|\Pr[T_0] - \Pr[T_1]\right| =
\left|\Pr[\mathcal{D}(L_\chi^{(l_A, n)}) = 1] -
    \Pr[\mathcal{D}(\mathbf{A}^\mathcal{U}, \mathbf{B}^\mathcal{U}) = 1]\right| < negl$.
  \end{proof}

\noindent
\begin{minipage}[H]{0.5\textwidth}
\null
\begin{algorithm}[H]
\caption{Game $G_1$}
\begin{algorithmic}[1]
\State{$\mathbf{A} \gets \mathbb{Z}^{n \times n}_q$}
\State{$\mathbf{X}_1, \mathbf{E}_1 \gets \chi^{n \times l_A}$}
\State{$\mathbf{Y}_1 \gets \mathbb{Z}_q^{n \times l_A}$}
\State{$\mathbf{X}_2, \mathbf{E}_2 \gets \chi^{n \times l_B}$}
\State{$\mathbf{Y}_2 = \mathbf{A}^T \mathbf{X}_2 + \mathbf{E}_2$}
\State{$\mathbf{E}_\sigma \gets \chi^{l_A \times l_B}$}
\State{$\boldsymbol{\Sigma}_2 = \mathbf{Y}_1^T \mathbf{X}_2 + \mathbf{E}_\sigma$}
\State{$\left(\mathbf{K}_2^0, \mathbf{V}\right) \gets \textsf{Con}(\boldsymbol{\Sigma}_2, \textsf{params})$}
\State{$\mathbf{K}_2^1 \gets \mathbb{Z}_m^{l_A \times l_B}$}
\State{$b \gets \{0, 1\}$}
\State{$b' \gets \mathcal{A}(\mathbf{A}, \mathbf{Y}_1, \lfloor \mathbf{Y}_2/2^t \rfloor, \mathbf{K}_2^b, \mathbf{V})$}
\end{algorithmic}
\end{algorithm}
\end{minipage}
\begin{minipage}[H]{0.5\textwidth}
\null
\begin{algorithm}[H]
\caption{Game $G_2$}
\begin{algorithmic}[1]
\State{$\mathbf{A} \gets \mathbb{Z}^{n \times n}_q$}
\State{$\mathbf{X}_1, \mathbf{E}_1 \gets \chi^{n \times l_A}$}
\State{$\mathbf{Y}_1 \gets \mathbb{Z}_q^{n \times l_A}$}
\State{$\mathbf{X}_2, \mathbf{E}_2 \gets \chi^{n \times l_B}$}
\State{{\color{red}$\mathbf{Y}_2 \gets \mathbb{Z}_q^{n \times l_B}$}}
\State{$\mathbf{E}_\sigma \gets \chi^{l_A \times l_B}$}
\State{{\color{red}$\boldsymbol{\Sigma}_2 \gets \mathbb{Z}_q^{l_A \times l_B}$}}
\State{$\left(\mathbf{K}_2^0, \mathbf{V}\right) \gets \textsf{Con}(\boldsymbol{\Sigma}_2, \textsf{params})$}
\State{$\mathbf{K}_2^1 \gets \mathbb{Z}_m^{l_A \times l_B}$}
\State{$b \gets \{0, 1\}$}
\State{$b' \gets \mathcal{A}(\mathbf{A}, \mathbf{Y}_1, \lfloor \mathbf{Y}_2/2^t \rfloor, \mathbf{K}_2^b, \mathbf{V})$}
\end{algorithmic}
\end{algorithm}
\end{minipage}

\begin{lemma}\label{th-reduction-2}
$|\Pr[T_1] - \Pr[T_2]| < negl$, under the indistinguishability between $L_\chi^{(l_B, n + l_A)}$ and
$\mathcal{U}(\mathbb{Z}_q^{(n+l_A) \times n} \times \mathbb{Z}_q^{(n+l_A) \times l_B})$.
\end{lemma}
\begin{proof}
    As $\mathbf{Y}_1$ is subject to uniform distribution in $G_1$,
    $(\mathbf{Y}_1^T, \boldsymbol{\Sigma}_2)$ can be regarded as an $L_{\chi}^{(l_B, l_A)}$
    sample of secret $\mathbf{X}_2$ and noise $\mathbf{E}_{\sigma}$.
    Based on this observation, we construct the following distinguisher $\mathcal{D}^\prime$.
\noindent
\begin{algorithm}[H]
    \caption{Distinguisher $\mathcal{D}^\prime$} \label{D-12}
\begin{algorithmic}[1]
    \Procedure{$\mathcal{D}^\prime$}{$\mathbf{A'}, \mathbf{B}$}
where {$\mathbf{A'} \in \mathbb{Z}_q^{(n + l_A) \times n}, \mathbf{B} \in \mathbb{Z}_q^{(n + l_A) \times l_B}$}
\State{Denote $\mathbf{A'} = \left(
      \begin{array}{c}
        \mathbf{A}^T \\
        \mathbf{Y}_1^T
      \end{array}
    \right)
    $}
    \Comment{$\mathbf{A} \in \mathbb{Z}_q^{n \times n}, \mathbf{Y}_1^T \in \mathbb{Z}_q^{l_A \times n}$}
    \State{Denote $\mathbf{B} = \left(
      \begin{array}{c}
        \mathbf{Y}_2 \\
        \boldsymbol{\Sigma}_2
      \end{array}
        \right)$}
        \Comment{$\mathbf{Y}_2 \in \mathbb{Z}_q^{n \times l_B},
            \boldsymbol{\Sigma}_2 \in \mathbb{Z}_q^{l_A \times l_B}$}
        \State{$\left(\mathbf{K}_2^0, \mathbf{V}\right) \gets \textsf{Con}(\Sigma_2, \textsf{params})$}
        \State{$\mathbf{K}_2^1 \gets \mathbb{Z}_m^{l_A \times l_B}$}
        \State{$b \gets \{0, 1\}$}
        \State{$b' \gets \mathcal{A}(\mathbf{A}, \mathbf{Y}_1, \lfloor\mathbf{Y}_2/2^t\rfloor, \mathbf{K}_2^b, \mathbf{V})$}
        \If {$b' = b$}
            \Return $1$
        \Else
            \Return $0$
        \EndIf
    \EndProcedure
\end{algorithmic}
\end{algorithm}

{If $(\mathbf{A'}, \mathbf{B})$ is subject to $L_\chi^{(l_B, n + l_A)}$,
$\mathbf{A'} \gets \mathbb{Z}_q^{(n+l_A) \times n}$ 
corresponds to $\mathbf{A} \gets \mathbb{Z}_q^{n \times n}$ and $\mathbf{Y}_1 \gets \mathbb{Z}_q^{n \times l_A}$
in $G_1$;  and $\mathbf{S}\gets \chi^{n \times l_B}$ (resp., $\mathbf{E} \gets \chi^{(n+l_A) \times l_B}$)   in generating $(\mathbf{A'}, \mathbf{B})$ 
corresponds to $\mathbf{X}_2 \gets \chi^{n \times l_B}$ (resp.,  $\mathbf{E}_2 \gets \chi^{n \times l_B}$ and $\mathbf{E}_\sigma \gets \chi^{l_A \times l_B}$) in $G_1$.
 In this case, we have}
\begin{align*}
\mathbf{B} &= \mathbf{A'} \mathbf{S} + \mathbf{E} =
\left(
  \begin{array}{c}
    \mathbf{A}^T \\
    \mathbf{Y}_1^T
  \end{array}
\right)
\mathbf{X}_2 +
\left(
  \begin{array}{c}
    \mathbf{E}_2 \\
    \mathbf{E}_\sigma
  \end{array}
\right) \\
&=
\left(
  \begin{array}{c}
    \mathbf{A}^T \mathbf{X}_2 + \mathbf{E}_2 \\
    \mathbf{Y}_1^T \mathbf{X}_2 + \mathbf{E}_\sigma
  \end{array}
\right)
=
\left(
  \begin{array}{c}
    \mathbf{Y}_2 \\
    \boldsymbol{\Sigma}_2
  \end{array}
\right)
\end{align*}
Hence $\Pr\left[\mathcal{D}^\prime\left(L_\chi^{(l_B, n + l_A)}\right) = 1\right] = \Pr[T_1]$.

{On the other hand, if $(\mathbf{A'}, \mathbf{B})$ is subject to uniform distribution $\mathcal{U}(\mathbb{Z}_q^{(n+l_A) \times n} \times \mathbb{Z}_q^{(n+l_A) \times l_B})$,
then $\mathbf{A}, \mathbf{Y}_1, \mathbf{Y}_2, \boldsymbol{\Sigma}_2$ 
all are also
uniformly random; So,  the view of  $\mathcal{D}^\prime$ in this case is the same as that in  game $G_2$. Hence,
$\Pr\left[\mathcal{D}^\prime\left(\mathbf{A'},
        \mathbf{B}\right) = 1\right] = \Pr[T_2]$ in this case.
%
Then $|\Pr[T_1] - \Pr[T_2]| =|\Pr[\mathcal{D}^\prime(L_\chi^{(l_B, n + l_A)}) = 1] -
\Pr[\mathcal{D}^\prime(\mathcal{U}(\mathbb{Z}_q^{(n+l_A) \times n} \times \mathbb{Z}_q^{(n+l_A) \times l_B})) = 1]| < negl$.}
  \end{proof}

\begin{lemma}
    If the underlying KC or AKC is \emph{secure}, $\Pr[T2]=\frac{1}{2}$.
\end{lemma}

\begin{proof}
    Note that, in Game $G_2$,  for any
    $1 \le i \le l_A$ and $1 \le j \le l_B$, $\left(\mathbf{K}^0_2[i, j], \mathbf{V}[i, j]\right)$
    only depends on $\boldsymbol{\Sigma}_2[i, j]$, and $\boldsymbol{\Sigma}_2$
    is subject to uniform distribution.
    By the \emph{security} of KC, we have that, for each pair $(i,j)$,  $\mathbf{K}^0_2[i, j]$ and $\mathbf{V}[i, j]$ are independent, and
    $\mathbf{K}^0_2[i, j]$ is uniform distributed. Hence,  $\mathbf{K}^0_2$ and
    $\mathbf{V}$ are independent, and $\mathbf{K}^0_2$ is uniformly distributed, which implies that
    $\Pr[T_2] = 1/2$.
  \end{proof}

    This finishes the proof of Theorem \ref{LWE-security}.
  \end{proof}

\section{Construction and Analysis of AKCN-4:1}\label{app-4:1}

\subsection{Overview of NewHope}
By extending the technique of \cite{poppelmann2013towards}, in NewHope the coefficients of {$\boldsymbol{\sigma}_1$} (i.e., the polynomial of degree $n$) are divided into $n/4$ groups, where each group contains four coordinates. On the input of four coordinates, only one bit (rather than four bits) consensus is reached, which reduces the error probability to about $2^{-61}$ which is viewed to be negligible in practice.

Specifically, suppose Alice and Bob have $\boldsymbol{\sigma}_1$ and $\boldsymbol{\sigma}_2$
in $\mathbb{Z}_q^4$ respectively, and they are close to each other.
One can regard the two vectors as elements in $\mathbb{R}^4/\mathbb{Z}^4$,
by treating them as $\frac{1}{q}\boldsymbol{\sigma}_1$ and $\frac{1}{q}\boldsymbol{\sigma}_2$.
Consider the matrix $\mathbf{B} = (\mathbf{u}_0, \mathbf{u}_1, \mathbf{u}_2, \mathbf{g})\in {\mathbb{R}^{4\times 4}}$,
where $\mathbf{u}_i$, $0\leq i\leq 2$, is the canonical unit vector whose $i$-th coordinate is $1$, and $\mathbf{g} = (1/2, 1/2, 1/2, 1/2)^T$.
Denote by $\tilde{D}_4$ the lattice generated by $\mathbf{B}$. 
Note that $\mathbb{Z}^4 \subset \tilde{D}_4 \subset \mathbb{R}^4$.
{ Denote by $\mathcal{V}$ the close Voronoi  cell of the origin in $\tilde{D}_4$.
In fact, $\mathcal{V}$ is the intersection of the unit ball in norm 1 and the unit ball in infinity norm (the reader is referred to  NewHope~\cite[Appendix C]{newhope15} for details).
The following procedure  $\mathsf{CVP}_{\tilde{D}_4}(\mathbf{x})$ returns
the vector $\mathbf{v}$ such that $\mathbf{B}\mathbf{v}$ is closest to $\mathbf{x}$,  i.e., $\mathbf{x} \in \mathbf{B} \mathbf{v} + \mathcal{V}$,
where the distance is measured in the Euclidean norm.

\begin{algorithm}[H]
\caption{CVP$_{\tilde{D}_4}$ in NewHope \cite{newhope15}}\label{alg-cvp}
\begin{algorithmic}[1]
\Procedure{CVP$_{\tilde{D}_4}$}{$\mathbf{x} \in \mathbb{R}^4$}
    \State{$\mathbf{v}_0 = \lfloor \mathbf{x} \rceil$}
    \State{$\mathbf{v}_1 = \lfloor \mathbf{x} - \mathbf{g} \rceil$}
    \State{$k = 0$ if $\|\mathbf{x} - \mathbf{v}_0\|_1 < 1$ and $1$ otherwise}
    \State{$(v_0, v_1, v_2, v_3)^T = \mathbf{v}_k$}
    \Return{$\textbf{v}=(v_0, v_1, v_2, k)^T + v_3 \cdot (-1, -1, -1, 2)^T$}
\EndProcedure
\end{algorithmic}
\end{algorithm}

If $\boldsymbol{\sigma}_1$ is in the Voronoi cell of $\mathbf{g}$, then the consensus bit is set to be $1$, and $0$ otherwise.
Hence, 
Alice finds the closest lattice vector of
$\boldsymbol{\sigma}_1$ by running the $\mathsf{CVP}_{\tilde{D}_4}$ procedure described in Algorithm \ref{alg-cvp}, and calculates their difference which is set to be the hint signal $\mathbf{v}$.  Upon receiving $\mathbf{v}$, Bob  subtracts the difference from $\boldsymbol{\sigma}_2$. Since $\boldsymbol{\sigma}_1$ and $\boldsymbol{\sigma}_2$ are very close,
the subtraction moves $\frac{1}{q}\boldsymbol{\sigma}_2$ towards a lattice point in $\tilde{D}_4$.
Then Bob checks whether or not the point after the move is in the Voronoi cell of $\boldsymbol{g}$, and so the  consensus is reached.
Furthermore, to save bandwidth,  NewHope chooses an integer $r$, and discretizes the Voronoi cell of $\boldsymbol{g}$ to $2^{4r}$ blocks, so that only $4r$ bits are needed to transfer the hint information.
To make the distribution of consensus bit uniform, NewHope adds a small noise to $\boldsymbol{\sigma}_1$, similar to the dbl trick used in \cite{peikert2014lattice}.
The \textsf{Con} and $\textsf{Rec}$ procedures, distilled from NewHope, are presented in Algorithm \ref{kc-nh} in Appendix \ref{sec-nh}.

\subsection{Construction and Analysis of AKCN-4:1}

For any integer $q$ and vector $\mathbf{x} = (x_0,x_1, x_2, x_3)^T \in \mathbb{Z}_q^4$, denote by $\|\mathbf{x}\|_{q, 1}$ the sum $|x_0|_q+|x_1|_q + |x_2|_q + |x_3|_q$.
For two vectors $\mathbf{a} = (a_0, a_1, a_2, a_3)^T, \mathbf{b} = (b_0, b_1, b_2, b_3)^T \in \mathbb{Z}^4$,
let $\mathbf{a} \bmod \mathbf{b}$ denote the vector $(a_0 \bmod b_0, a_1 \bmod b_1, a_2 \bmod b_2, a_3 \bmod b_3)^T \in \mathbb{Z}^4$.
The scheme of AKCN-4:1 is presented in Algorithm \ref{kcs:4D}.

Compared with the consensus mechanism of NewHope presented in Appendix \ref{sec-nh}, AKCN-4:1 can be  simpler and computationally more efficient.
    In specific, the uniformly random bit $b$ used in NewHope (corresponding the dbl trick in \cite{peikert2014lattice}) is eliminated with AKCN-4:1, which saves 256 (resp., 1024) random bits in total  when reaching 256 (resp., 1024) consensus bits. In addition, as $k_1$, as well as $k_1(q+1)\mathbf{g}$,  can be offline computed and used (e.g., for encryption, in parallel with the protocol run),  AKCN-4:1  enjoys online/offline speeding-up and parallel computing.


\begin{theorem}\label{th:correct-AKCN41}
    If $\|\boldsymbol{\sigma}_1 - \boldsymbol{\sigma}_2\|_{q, 1} < q\left(1 - \frac{1}{g}\right) - 2$,
    then the AKCN-4:1 scheme depicted in Algorithm~\ref{kcs:4D} is \emph{correct}.
\end{theorem}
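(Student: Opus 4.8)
The plan is to mirror the correctness proofs already given for AKCN (Algorithm~\ref{kcs:2}) and for the NewHope-style $\tilde{D}_4$ reconciliation, combining the error-propagation bookkeeping of the former with the lattice-decoding geometry of the latter. First I would unwind the definition of $\mathbf{v}$ from $\mathsf{Con}$: since $\mathbf{v}$ is produced by rounding $\frac{g}{q}\bigl(\boldsymbol{\sigma}_1 + k_1(q+1)\mathbf{g}\bigr)$ (or whatever the precise line in Algorithm~\ref{kcs:4D} is) and then reducing modulo $g$, there exist a rounding error vector $\boldsymbol{\epsilon}\in[-1/2,1/2]^4$ and an integer vector $\boldsymbol{\theta}$ such that $\mathbf{v} = \frac{g}{q}\bigl(\boldsymbol{\sigma}_1 + k_1(q+1)\mathbf{g}\bigr) + \boldsymbol{\epsilon} + g\boldsymbol{\theta}$. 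Then in $\mathsf{Rec}$, when Bob scales $\mathbf{v}$ back by $q/g$ and subtracts $\boldsymbol{\sigma}_2$, a second rounding error $\boldsymbol{\epsilon}'\in[-1/2,1/2]^4$ enters. Collecting terms, the point fed to $\mathsf{CVP}_{\tilde D_4}$ (after dividing by $q$) differs from the intended lattice point $\frac{q+1}{q}k_1\mathbf{g} \equiv k_1\mathbf{g} \pmod{\mathbb{Z}^4}$ by a perturbation whose Euclidean, or more naturally $\ell_1$, norm I can bound by $\tfrac1q\|\boldsymbol{\sigma}_1-\boldsymbol{\sigma}_2\|_{q,1} + \tfrac1g\|\boldsymbol{\epsilon}\|_1 + \tfrac1q\|\boldsymbol{\epsilon}'\|_1 \le \tfrac1q\|\boldsymbol{\sigma}_1-\boldsymbol{\sigma}_2\|_{q,1} + \tfrac{2}{g} + \tfrac{2}{q}$.

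Next I would invoke the geometry of $\tilde D_4$: decoding succeeds — i.e. $\mathsf{CVP}_{\tilde D_4}$ returns the coset representative corresponding to $k_1\mathbf{g}$ rather than $\mathbf{0}$ — whenever the perturbation lies strictly inside the Voronoi cell $\mathcal V$ of the origin. As recalled in the excerpt, $\mathcal V$ is the intersection of the $\ell_1$-unit ball and the $\ell_\infty$-unit ball, so it certainly contains the open $\ell_1$-ball of radius $1$. Hence it suffices that the total perturbation has $\ell_1$-norm strictly less than $1$, i.e. $\tfrac1q\|\boldsymbol{\sigma}_1-\boldsymbol{\sigma}_2\|_{q,1} + \tfrac{2}{g} + \tfrac{2}{q} < 1$; clearing denominators and rearranging gives exactly $\|\boldsymbol{\sigma}_1-\boldsymbol{\sigma}_2\|_{q,1} < q(1 - 1/g) - 2$, which is the hypothesis of Theorem~\ref{th:correct-AKCN41}. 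I should double-check the constant: the two rounding vectors each contribute at most $1/2$ per coordinate across $4$ coordinates, but scaled by $1/g$ and $1/q$ respectively, so $\tfrac{4}{2g} = \tfrac{2}{g}$ and $\tfrac{4}{2q} = \tfrac{2}{q}$; the ``$-2$'' in the statement corresponds to the $q\cdot\tfrac{2}{q}$ term after clearing, which matches.

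The one delicate point — and the step I expect to be the main obstacle — is bookkeeping the discretization modulo $g$ and the factor $(q+1)$ correctly, so that the coset of $\tilde D_4/\mathbb{Z}^4$ selected by the rounded-and-reduced $\mathbf{v}$ is genuinely the one encoding $k_1$, independent of the integer vectors $\boldsymbol{\theta}$ absorbed by the modular reductions. This is the analogue of verifying ``$v = \sigma_1 + k_1 2^{\bar q-\bar m} + \theta' q$'' in Corollary~\ref{claim:2-power2-correct}, but here it lives in $\mathbb{R}^4/\mathbb{Z}^4$ rather than $\mathbb{Z}_q$, so I need to be careful that $\frac{q+1}{q}\mathbf{g} \equiv \mathbf{g} \pmod{\mathbb{Z}^4}$ (true since $\frac{q+1}{q}\mathbf{g} - \mathbf{g} = \frac1q\mathbf{g}$ is not integral in general — so actually the factor $q+1$ rather than $q$ is chosen precisely so that $k_1(q+1)\mathbf{g}$ lands on a point congruent to $k_1\mathbf{g}$ modulo the lattice, and I must confirm $k_1(q+1)\mathbf{g} = k_1\mathbf{g} + k_1\mathbf{g}\cdot q$, with $q\mathbf{g}=(q/2,\dots)$ contributing an integer shift when $q$ is even). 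Once this coset-tracking is pinned down, the rest is the straightforward triangle-inequality estimate above, and I would present the proof in the same compact style as the earlier correctness theorems: expand $\mathbf{v}$, substitute into $\mathsf{Rec}$, collect errors, bound the $\ell_1$-norm, and conclude by the Voronoi-cell containment.
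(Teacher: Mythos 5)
Your overall strategy (unwind \textsf{Con}, substitute into \textsf{Rec}, collect the perturbation around $k_1\mathbf{g}$ modulo $\mathbb{Z}^4$, bound its $\ell_1$-norm, and conclude from the $\ell_1$ decision rule) is exactly the paper's, but your error accounting is wrong in two places, and as a result your argument does not yield the stated constant. First, the quantization in \textsf{Con} is not a componentwise rounding: it is $\mathsf{CVP}_{\tilde D_4}$, so the error $\boldsymbol{\epsilon}$ lies in the Voronoi cell $\mathcal{V}$ of $\tilde D_4$, which satisfies $\|\boldsymbol{\epsilon}\|_1\le 1$ (it is \emph{strictly smaller} than $[-1/2,1/2]^4$, whose $\ell_1$-diameter from the origin is $2$). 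Using $[-1/2,1/2]^4$ gives you $\|\boldsymbol{\epsilon}\|_1\le 2$ and hence a contribution of $2/g$ instead of the correct $1/g$. Second, there is no second rounding in \textsf{Rec}: the line $\mathbf{x}=\mathbf{B}\mathbf{v}/g-\boldsymbol{\sigma}_2/q$ is exact, so your $\boldsymbol{\epsilon}'/q$ term does not exist. The term that actually produces the ``$-2$'' is the one you half-noticed at the end: $k_1(q+1)\mathbf{g}/q = k_1\mathbf{g} + k_1\mathbf{g}/q$, and $\|k_1\mathbf{g}/q\|_1 \le 4\cdot\tfrac{1}{2q}=\tfrac{2}{q}$. (Also, $q$ is an odd prime here; the factor $q+1$ is chosen so that $(q+1)\mathbf{g}$ is integral, the opposite of your parity remark.)

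With your bounds the total perturbation is $\tfrac1q\|\boldsymbol{\sigma}_1-\boldsymbol{\sigma}_2\|_{q,1}+\tfrac2g+\tfrac2q$, and clearing denominators gives $\|\boldsymbol{\sigma}_1-\boldsymbol{\sigma}_2\|_{q,1}<q\left(1-\tfrac2g\right)-2$, \emph{not} $q\left(1-\tfrac1g\right)-2$; the claim that your inequality ``rearranges exactly'' to the theorem's hypothesis is an arithmetic slip. So as written you prove only a weaker theorem. The fix is to replace the $[-1/2,1/2]^4$ bound by the Voronoi-cell bound $\|\boldsymbol{\epsilon}\|_1\le 1$ and to drop $\boldsymbol{\epsilon}'$ in favour of the $k_1\mathbf{g}/q$ term; then the total is $\tfrac1q\|\boldsymbol{\sigma}_1-\boldsymbol{\sigma}_2\|_{q,1}+\tfrac1g+\tfrac2q<1$, which is exactly the hypothesis. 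One further small point: the decoder in \textsf{Rec} is the $\ell_1$-norm test, so what you need is that a perturbation of $\ell_1$-norm $<1$ keeps $\mathbf{x}$ on the correct side of that test (using that $\mathbf{g}+\mathbb{Z}^4$ is at $\ell_1$-distance $2$ from $\mathbb{Z}^4$); your assertion that $\mathcal{V}$ ``contains the open $\ell_1$-ball of radius $1$'' is false (that ball is not contained in the $\ell_\infty$-constraint of $\mathcal{V}$), though the conclusion you want at that step is still true for the $\ell_1$ decoder actually used.
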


\begin{proof}
    Suppose $\mathbf{v}' = \mathsf{CVP}_{\tilde{D}_4}(g(\boldsymbol{\sigma}_1 + k_1 (q+1) \mathbf{g})/q)$.
   Then, $\mathbf{v} = \mathbf{v}' \bmod (g, g, g, 2g)$,
    and so there exits $\boldsymbol{\theta} = (\theta_0, \theta_1, \theta_2, \theta_3) \in \mathbb{Z}^4$
    such that $\mathbf{v} = \mathbf{v}' + g(\theta_0, \theta_1, \theta_2, 2\theta_3)^T.$
    From the formula calculating $\mathbf{v}'$, we know there exits $\boldsymbol{\epsilon} \in \mathcal{V}$,
    such that $g(\boldsymbol{\sigma}_1 + k_1 (q+1)\mathbf{g})/q = \boldsymbol{\epsilon} + \mathbf{B} \mathbf{v}'$.
    Hence, $\mathbf{B} \mathbf{v}' = g(\boldsymbol{\sigma}_1 + k_1 (q+1)\mathbf{g})/q - \boldsymbol{\epsilon}$.

    From the formula computing $\mathbf{x}$ in $\mathsf{Rec}$,
    we have $\mathbf{x} = \mathbf{B}\mathbf{v} / g - \boldsymbol{\sigma}_2 / q
    = \mathbf{B} \mathbf{v}' / g - \boldsymbol{\sigma}_2 / q + \mathbf{B}(\theta_0, \theta_1, \theta_2, 2\theta_3)^T
    = k_1 \mathbf{g} + k_1 \mathbf{g} / q - \boldsymbol{\epsilon} / g + (\boldsymbol{\sigma}_1 - \boldsymbol{\sigma}_2) / q
    + \mathbf{B}(\theta_0, \theta_1, \theta_2, 2\theta_3)^T$.
    Note that the last term $\mathbf{B}(\theta_0, \theta_1, \theta_2, 2\theta_3)^T \in \mathbb{Z}^4$,
    and in line~\ref{AKCN41:x} of Algorithm~\ref{kcs:4D} we subtract $\lfloor \mathbf{x} \rceil \in \mathbb{Z}^4$ from $\mathbf{x}$,
    so the difference between $\mathbf{x} - \lfloor \mathbf{x} \rceil $ and $k_1\mathbf{g}$ in norm 1 is no more than
    $2/q + 1 / g + \|\boldsymbol\sigma_1 - \boldsymbol\sigma_2\|_{q, 1} / q < 1$.
    Hence, $k_2 = k_1$.
  \end{proof}

\begin{algorithm}[H]
    \caption{AKCN-4:1}
    \label{kcs:4D}
\begin{algorithmic}[1]
    \Procedure{Con}{$\boldsymbol{\sigma}_1 \in \mathbb{Z}_q^4, k_1 \in \{0, 1\}, \textsf{params}$}
    \State{$\mathbf{v} = \mathsf{CVP}_{\tilde{D}_4}(g(\boldsymbol{\sigma}_1 + k_1 (q+1) \mathbf{g})/q) \bmod (g, g, g, 2g)^T$}
    \Return{$\mathbf{v}$}
    \EndProcedure
    \Procedure{Rec}{$\boldsymbol{\sigma}_2 \in \mathbb{Z}_q^4, \boldsymbol{v} \in
        \mathbb{Z}_g^3 \times \mathbb{Z}_{2g}, \textsf{params}$}
    \State{$\mathbf{x} = \mathbf{B}\mathbf{v}/g - \boldsymbol{\sigma}_2/q$}
    \Return{$k_2 = 0$ if $\|\mathbf{x} - \lfloor \mathbf{x} \rceil\|_1 < 1$, 1 otherwise.} \label{AKCN41:x}
    \EndProcedure
\end{algorithmic}
\end{algorithm}

\begin{theorem}
    AKCN-4:1  depicted in Algorithm~\ref{kcs:4D} is \emph{secure}. Specifically,
    if $\boldsymbol{\sigma}_1$ is subject to uniform distribution over
    $\mathbb{Z}_q^4$, then $\mathbf{v}$ and $k_1$ are independent.
\end{theorem}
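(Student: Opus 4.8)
The plan is to mimic the structure of the security proof of AKCN (Theorem~\ref{th-akcn-secure}), adapting it to the four-dimensional lattice setting. Concretely, I want to show that for every fixed signal value $\tilde{\mathbf{v}}\in\mathbb{Z}_g^3\times\mathbb{Z}_{2g}$, the conditional probability $\Pr[\mathbf{v}=\tilde{\mathbf{v}}\mid k_1=\tilde k]$ is the same for $\tilde k=0$ and $\tilde k=1$ when $\boldsymbol{\sigma}_1\gets\mathbb{Z}_q^4$. As in the AKCN proof, the key is to exhibit, for each candidate $\tilde{\mathbf{v}}$, a bijection between the set of $\boldsymbol{\sigma}_1$ that produce $\tilde{\mathbf{v}}$ under $k_1=0$ and the corresponding set under $k_1=1$, so that the two preimage sets have equal cardinality; since $\boldsymbol{\sigma}_1$ is uniform over $\mathbb{Z}_q^4$, equal cardinality gives equal probability, and then summing is unnecessary — independence of $\mathbf{v}$ and $k_1$ follows immediately.

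First I would unwind the definition of $\mathsf{Con}$: $\mathbf{v}=\mathsf{CVP}_{\tilde D_4}\!\left(g(\boldsymbol{\sigma}_1+k_1(q+1)\mathbf{g})/q\right)\bmod(g,g,g,2g)^T$. The natural candidate bijection is the translation $\phi:\boldsymbol{\sigma}_1\mapsto(\boldsymbol{\sigma}_1+(q+1)\mathbf{g})\bmod q$ on $\mathbb{Z}_q^4$ — note $(q+1)\mathbf{g}$ has integer coordinates $((q+1)/2,\dots)$ only when $q$ is odd, which holds since $q=12289$ is prime, so $\phi$ is a well-defined permutation of $\mathbb{Z}_q^4$. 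I would check that shifting $\boldsymbol{\sigma}_1$ by $(q+1)\mathbf{g}$ exactly turns the "$k_1=0$" argument of $\mathsf{CVP}_{\tilde D_4}$ into the "$k_1=1$" argument, modulo the ambiguity introduced by the $\bmod q$ reduction on $\boldsymbol{\sigma}_1$. The point is that changing $\boldsymbol{\sigma}_1$ by a multiple of $q$ changes the input to $\mathsf{CVP}_{\tilde D_4}$ by an element of $g\mathbb{Z}^4$ (since the argument is $g(\cdot)/q$), which is absorbed by the final reduction $\bmod(g,g,g,2g)^T$ — here I would lean on the fact, used in the correctness proof (Theorem~\ref{th:correct-AKCN41}), that $\mathbb{Z}^4\subset\tilde D_4$ and that $\mathsf{CVP}_{\tilde D_4}$ commutes appropriately with integer translations, so that the output signal $\mathbf{v}$ is unchanged. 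Hence $\phi$ maps $\{\boldsymbol{\sigma}_1:\mathsf{Con}(\boldsymbol{\sigma}_1,0)=\tilde{\mathbf{v}}\}$ bijectively onto $\{\boldsymbol{\sigma}_1:\mathsf{Con}(\boldsymbol{\sigma}_1,1)=\tilde{\mathbf{v}}\}$.

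The main obstacle I expect is the bookkeeping around the $\bmod(g,g,g,2g)^T$ reduction and the $2g$ in the last coordinate: the output of $\mathsf{CVP}_{\tilde D_4}$ lives in a sublattice coset structure (the last basis vector is $(-1,-1,-1,2)^T$), so I must verify carefully that an integer shift of the pre-$\mathsf{CVP}$ argument by a vector in $g\mathbb{Z}^4$ really does reduce to the identity after the componentwise modular reduction with the nonuniform modulus vector. This requires understanding precisely how $\mathsf{CVP}_{\tilde D_4}(\mathbf{x}+\mathbf{w})$ relates to $\mathsf{CVP}_{\tilde D_4}(\mathbf{x})$ for $\mathbf{w}\in g\mathbb{Z}^4$ — essentially that it shifts the returned $\mathbf{v}$ by an element of the lattice $g\tilde D_4$ expressed in the $\mathbf{B}$-basis, which is exactly what the modular reduction kills. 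Once that lemma is nailed down, the rest is a one-line cardinality argument. I would state and prove this translation-invariance as a small sublemma, then conclude: $\Pr[\mathbf{v}=\tilde{\mathbf{v}}\mid k_1=0]=\Pr[\mathbf{v}=\tilde{\mathbf{v}}\mid k_1=1]$ for all $\tilde{\mathbf{v}}$, which is precisely the security (independence) condition in the AKC definition.
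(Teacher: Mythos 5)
Your proposal is correct and follows essentially the same route as the paper: the paper also reduces everything to the translation-invariance lemma $\phi(\mathbf{w}+q\boldsymbol{\theta})=\phi(\mathbf{w})$, proved via $\mathsf{CVP}_{\tilde D_4}(\mathbf{x}+g\boldsymbol{\theta})=\mathsf{CVP}_{\tilde D_4}(\mathbf{x})+\mathbf{B}^{-1}(g\boldsymbol{\theta})$ and $\mathbf{B}^{-1}(g\boldsymbol{\theta})\bmod(g,g,g,2g)^T=\mathbf{0}$, which is exactly the sublemma you flag as the main obstacle. The only cosmetic difference is that the paper phrases the counting step by observing that $\mathbf{y}=(\boldsymbol{\sigma}_1+k_1(q+1)\mathbf{g})\bmod q$ is uniform and independent of $k_1$ and that $\mathbf{v}$ is a function of $\mathbf{y}$ alone, whereas you state the equivalent fact as an explicit bijection between preimage sets.
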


\begin{proof}
    Let $\mathbf{y} = (\boldsymbol{\sigma}_1 + k_1 (q + 1) \mathbf{g}) \bmod q \in \mathbb{Z}_q^4$.
    First we prove that $\mathbf{y}$ is independent of $k_1$, when $\boldsymbol{\sigma}_1 \gets \mathbb{Z}_q^4$.
    Specifically, for arbitrary $\tilde{\mathbf{y}} \in \mathbb{Z}_q^4 $ and arbitrary $\tilde{k}_1 \in \{0, 1\}$,
    we want to prove that $\Pr[\mathbf{y} = \tilde{\mathbf{y}} \mid k_1 = \tilde{k}_1]
    = \Pr[\boldsymbol{\sigma}_1 = (\tilde{\mathbf{y}} - k_1(q + 1)\mathbf{g}) \bmod q \mid k_1 = \tilde{k}_1]
    = 1/q^4$. Hence, $\mathbf{y}$ and $k_1$ are independent.

    For simplicity, denote by $\mathbf{G}$ the vector $(g, g, g, 2g)$.
    Map $\phi: \mathbb{Z}^4 \rightarrow \mathbb{Z}_g^3 \times \mathbb{Z}_{2g}$ is defined by
    $\phi(\mathbf{w}) = \mathsf{CVP}_{\tilde{D}_4}(g \mathbf{w} / q) \bmod \mathbf{G}$.
    We shall prove that, for any $\boldsymbol{\theta} \in \mathbb{Z}^4$,
    $\phi(\mathbf{w} + q\boldsymbol{\theta}) = \phi(\mathbf{w})$.
    By definition of $\phi$,
    $\phi(\mathbf{w} + q\boldsymbol{\theta}) = \mathsf{CVP}_{\tilde{D}_4}(g\mathbf{w}/q + g\boldsymbol{\theta}) \bmod \mathbf{G}$.
    Taking $\mathbf{x} = g\mathbf{w}/q + g\boldsymbol{\theta}$ into Algorithm~\ref{alg-cvp}, we have $\mathsf{CVP}_{\tilde{D}_4}(g\mathbf{w}/q + g\boldsymbol{\theta})
    =\mathsf{CVP}_{\tilde{D}_4}(g\mathbf{w}/q) + \mathbf{B}^{-1}(g\boldsymbol{\theta})$.
    It is easy to check that the last term $\mathbf{B}^{-1}(g\boldsymbol{\theta})$
    always satisfies $\mathbf{B}^{-1}(g\boldsymbol{\theta}) \bmod \mathbf{G} = 0$.

    From the above property of $\phi$, we have $\phi(\mathbf{y}) = \phi((\boldsymbol{\sigma}_1 + k_1 (q + 1) \mathbf{g}) \bmod q)
    = \phi(\boldsymbol{\sigma}_1 + k_1 (q + 1) \mathbf{g}) = \mathbf{v}$.
    As $k_1$ is independent of $\mathbf{y}$, and $\mathbf{v}$ only depends on $\mathbf{y}$, $k_1$ and $\mathbf{v}$ are independent.
  \end{proof}

\section{Implementing $\mathbf{H} \mathbf{x}^T$ in SEC with Simple Bit Operations}\label{app-Hxcode}
\lstset{basicstyle=\ttfamily,breaklines=true}
\begin{lstlisting}[language=C, caption=An implementation of $\mathbf{H} \mathbf{x}^T$ with C language]
uint16_t getCode(uint16_t x)
{
    uint16_t c, p;
    c = (x >> 4) ^ x;
    c = (c >> 2) ^ c;
    p = ((c >> 1) ^ c) & 1;
    x = (x >> 8) ^ x;
    c = (x >> 2) ^ x;
    p = (((c >> 1) ^ c) & 1) | (p << 1);
    x = (x >> 4) ^ x;
    p = (((x >> 1) ^ x) & 1) | (p << 1);
    x = (x >> 2) ^ x;
    p = (x & 1) | (p << 1);

    return p;
}
\end{lstlisting}

\section{A Note on  Lizard}\label{sec-lizard}
We note that the CPA-secure PKE scheme, Lizard, proposed in \cite{CKLS16} is actually  instantiated from our AKCN scheme presented in Algorithm \ref{kcs:2}, where the two close values are derived from  generating and exchanging   spLWE  and spLWR samples in an asymmetric way. Specifically, the public key is generated with spLWE samples, while ciphertext is generated with spLWR samples. However, the underlying  AKC mechanism in the spLWE/spLWR based PKE scheme analyzed in \cite{CKLS16} is actually an instantiation of our  AKCN scheme for the special case of $m|g|q$, where  $g$  (resp., $m$) in AKCN corresponds to $p$ (resp., $t$) in \cite{CKLS16}.

Let $X_{n, \rho, \theta}$ denote the set containing all
$n$ dimension vectors that have exactly $\theta$ non-zero components,
and each non-zero component is in $\{\pm{1}, \pm{2}, \dots, \pm{\rho}\}$.
spLWE problem is the LWE problem whose secret vector is drawn uniformly randomly from the set $X_{n, \rho, \theta}$.
\end{document}